\newif\ifnotes
\newif\iffocs
\def\showtableofcontents{1}
  \crefname{step}{Step}{Steps}
  \crefname{claim}{Claim}{Claims}
\newtheorem{theorem}{Theorem}[section]
\newtheorem{claim}[theorem]{Claim}
\newtheorem{lemma}[theorem]{Lemma}
\newtheorem{corollary}[theorem]{Corollary}
\newtheorem{remark}[theorem]{Remark}
\theoremstyle{definition}
\newtheorem{definition}[theorem]{Definition}
\newtheorem{construction}[theorem]{Construction}
\theoremstyle{remark}
\newcommand{\FormatAuthor}[3]{
  \begin{tabular}{c}
  #1 \\ {\small\texttt{#2}} \\ {\small #3}
  \end{tabular}
}
\newcommand{\textdef}[1]{\textnormal{\textsf{#1}}}
\newcommand{\defemph}[1]{\textbf{#1}}
\newcommand{\secp}{\lambda}
\newcommand{\eqdef}{\coloneqq}
\newcommand{\Naturals}{\mathbb{N}}
\newcommand{\C}{\mathbb{C}}
\newcommand{\Bits}{\{0,1\}}
\DeclareMathOperator*{\Expectation}{\mathbb{E}}
\newcommand{\NP}{\mathsf{NP}}
\def\poly{{\rm poly}}
\def\negl{{\rm negl}}
\let\oldket\ket
\renewcommand{\ket}[1]{\oldket*{#1}}
\let\oldketbra\ketbra
\renewcommand{\ketbra}[1]{\oldketbra*{#1}}
\newcommand{\tensor}{\otimes}
\newcommand{\sC}{\mathsf{C}}
\newcommand{\sD}{\mathsf{D}}
\newcommand{\sG}{\mathsf{G}}
\newcommand{\sP}{\mathsf{P}}
\newcommand{\sS}{\mathsf{S}}
\newcommand{\sT}{\mathsf{T}}
\newcommand{\sU}{\mathsf{U}}
\newcommand{\sV}{\mathsf{V}}
\newcommand{\RegA}{\mathcal{A}}
\newcommand{\RegB}{\mathcal{B}}
\newcommand{\RegH}{\mathcal{H}}
\newcommand{\RegK}{\mathcal{K}}
\newcommand{\RegM}{\mathcal{M}}
\newcommand{\RegO}{\mathcal{O}}
\newcommand{\RegQ}{\mathcal{Q}}
\newcommand{\RegR}{\mathcal{R}}
\newcommand{\RegS}{\mathcal{S}}
\newcommand{\RegV}{\mathcal{V}}
\newcommand{\RegW}{\mathcal{W}}
\newcommand{\RegX}{\mathcal{X}}
\newcommand{\RegY}{\mathcal{Y}}
\newcommand{\RegZ}{\mathcal{Z}}
\newcommand{\contra}[1]{#1^{\dagger}}
\newcommand{\Transform}{\mathsf{Transform}}
\newcommand{\Estimate}{\mathsf{Estimate}}
\newcommand{\VarEstimate}{\mathsf{VarEstimate}}
\newcommand{\Flip}{\mathsf{Flip}}
\newcommand{\Threshold}{\mathsf{Threshold}}
\newcommand{\Id}{\mathbf{I}}
\newcommand{\Meas}[1]{\mathsf{M}_{#1}}
\newcommand{\BProj}[1]{\Pi_{#1}}
\newcommand{\BMeas}[1]{\left( #1, \Id - #1 \right)}
\newcommand{\SProj}[2][]{\Pi_{#2}^{#1}}
\newcommand{\Hermitians}[1]{\mathbf{S}(#1)}
\newcommand{\Projector}{\Pi}
\newcommand{\Measurement}{\mathsf{M}}
\newcommand{\ProjMeasurement}{\mathsf{P}}
\newcommand{\MeasA}{\mathsf{A}}
\newcommand{\MeasB}{\mathsf{B}}
\newcommand{\MeasC}{\mathsf{C}}
\newcommand{\DMatrix}{\bm{\rho}}
\newcommand{\brho}{\bm{\rho}}
\newcommand{\dm}{\DMatrix}
\newcommand{\dmr}{\dm}
\newcommand{\DMatrixW}{\bm{\sigma}}
\newcommand{\dmw}{\DMatrixW}
\newcommand{\DMatrixH}{\bm{\phi}}
\newcommand{\dmh}{\DMatrixH}
\newcommand{\CPTPUnitary}{U_{T}}
\newcommand{\DMatrixT}{\bm{\tau}}
\newcommand{\dmwr}{\DMatrixT}
\newcommand{\ProjA}{\BProj{\MeasA}}
\newcommand{\ProjB}{\BProj{\MeasB}}
\newcommand{\ProjP}{\BProj{p,\eps,\delta}}
\newcommand{\eps}{\varepsilon}
\newcommand{\Co}{a}
\newcommand{\Ch}{r}
\newcommand{\Resp}{z}
\newcommand{\vk}{\mathsf{vk}}
\renewcommand{\epsilon}{\varepsilon}
\newcommand{\SHVZK}{\mathsf{SHVZK}}
\newcommand{\Sim}{\mathsf{Sim}}
\newcommand{\Extract}{\mathsf{Extract}}
\newcommand{\FindWitness}{\mathsf{FindWitness}}
\newcommand{\Consistent}{\mathsf{Consistent}}
\newcommand{\Samp}{\mathsf{Samp}}
\renewcommand{\Check}{\mathsf{Check}}
\newcommand{\prefix}{\tau_{\mathrm{pre}}}
\newcommand{\EQPTC}{\mathsf{EQPT}_{c}}
\newcommand{\EQPTM}{\mathsf{EQPT}_{m}}
\newcommand{\Subspace}{\RegS}
\DeclareMathOperator{\spanset}{span}
\DeclareMathOperator{\image}{image}
\newcommand{\Jor}{\mathsf{Jor}}
\newcommand{\dan}{\Jor_r}
\newcommand{\MeasJor}{\Meas{\Jor}}
\newcommand{\JorSymb}[3]{#1_{#2,#3}}
\newcommand{\JorKetA}[2]{\ket{\JorSymb{v}{#1}{#2}}}
\newcommand{\JorKetB}[2]{\ket{\JorSymb{w}{#1}{#2}}}
\newcommand{\JorBraA}[2]{\bra{\JorSymb{v}{#1}{#2}}}
\newcommand{\JorBraB}[2]{\bra{\JorSymb{w}{#1}{#2}}}
\newcommand{\JorBraKetAB}[2]{\braket{\JorSymb{v}{#1}{#2}}{ \JorSymb{w}{#1}{#2}}}
\newcommand{\JorKetBraA}[2]{\ketbra{\JorSymb{v}{#1}{#2}}}
\newcommand{\JorKetBraB}[2]{\ketbra{\JorSymb{w}{#1}{#2}}}
\newcommand{\Adversary}{\mathsf{Adv}}
\newcommand{\Relation}{\mathfrak{R}}
\newcommand{\Alphabet}{\Sigma}
\newcommand{\PSSExtract}{\mathsf{PSSExtract}}
\newcommand{\SSExtract}{\mathsf{SSExtract}}
\newcommand{\Com}{\mathsf{Com}}
\newcommand{\ck}{\mathsf{ck}}
\newcommand{\com}{\mathsf{com}}
\newcommand{\Gen}{\mathsf{Gen}}
\newcommand{\Commit}{\mathsf{Commit}}
\newcommand{\Adv}{\mathsf{Adv}}
\newcommand{\RegState}{\mathcal{Q}}
\newcommand{\RegHead}{\mathcal{I}}
\newcommand{\RegTape}{\mathcal{T}}
\newcommand{\RegInput}{\mathcal{A}}
\newcommand{\RegInpHead}{\mathcal{I}_{\mathsf{in}}}
\newcommand{\Tx}{\delta}
\newcommand{\TStates}{Q}
\newcommand{\TState}{q}
\newcommand{\TStateInit}{q_0}
\newcommand{\TStateFinal}{q_f}
\begin{document}
\title{Post-Quantum Zero Knowledge, Revisited \\ {\large or: How to Do Quantum Rewinding Undetectably}}
\author{
  \begin{tabular}[h!]{ccc}
    \FormatAuthor{Alex Lombardi}{alexjl@mit.edu}{MIT} &
    \FormatAuthor{Fermi Ma}{fermima@alum.mit.edu}{Simons Institute \& UC Berkeley} &
    \FormatAuthor{Nicholas Spooner}{nspooner@bu.edu}{Boston University}
  \end{tabular}
}
\date{November 22, 2021}
\maketitle

\begin{abstract}
    
    \noindent 
    
    A major difficulty in quantum rewinding is the fact that measurement is destructive: extracting information from a quantum state irreversibly changes it. This is especially problematic in the context of zero-knowledge simulation, where preserving the adversary's state is essential.
    
    In this work, we develop new techniques for quantum rewinding in the context of extraction and zero-knowledge simulation:
    \begin{enumerate}
        \item We show how to extract information from a quantum adversary by rewinding it \emph{without disturbing its internal state}. We use this technique to prove that important interactive protocols, such as the Goldreich-Micali-Wigderson protocol for graph non-isomorphism and the Feige-Shamir protocol for $\mathsf{NP}$,  are zero-knowledge against quantum adversaries. 
        \item We prove that the Goldreich-Kahan protocol for $\mathsf{NP}$ is post-quantum zero knowledge using a simulator that can be seen as a natural quantum extension of the classical simulator. 
    \end{enumerate}

Our results achieve (constant-round) black-box zero-knowledge with \emph{negligible} simulation error, appearing to contradict a recent impossibility result due to Chia-Chung-Liu-Yamakawa (FOCS 2021). This brings us to our final contribution:

\begin{enumerate}

\item[3.] We introduce \emph{coherent-runtime} expected quantum polynomial time, a computational model that (1) captures all of our zero-knowledge simulators, (2) cannot break any polynomial hardness assumptions, and (3) is not subject to the CCLY impossibility. In light of our positive results and the CCLY negative results, we propose coherent-runtime simulation to be the right quantum analogue of classical expected polynomial-time simulation.

\end{enumerate}

\end{abstract}

\ifnum\showtableofcontents=1
{
\thispagestyle{empty}
\newpage
\setcounter{tocdepth}{2}
{\thispagestyle{empty}
\tableofcontents}
\thispagestyle{empty}
 }
\clearpage
\setcounter{page}{1}
\fi

\section{Introduction}

Zero-knowledge protocols \cite{STOC:GolMicRac85} are a fundamental tool in modern cryptography in which a prover convinces a verifier that some statement is true without revealing any additional information. This security property is formalized via \emph{simulation}: the view of any (even malicious) verifier $V^*$ can be simulated in polynomial time (without access to, e.g., an $\mathsf{NP}$ witness for the statement). 

Although the zero-knowledge property sounds almost paradoxical, it is achieved by designing a simulator $S^{V^*}$ that makes use of $V^*$ in ways that the honest protocol execution cannot, thereby resolving the apparent paradox. In the simplest and most common setting, the key simulation technique is \emph{rewinding}. Given an interactive adversary $A$, an oracle algorithm $S^{A}$ is said to rewind the adversary if it saves the state of $A$ midway through an execution in order to run $A$ \emph{multiple times} on different inputs. Rewinding is ubiquitous in the analysis of interactive proof systems, establishing properties such as zero-knowledge \cite{STOC:GolMicRac85,FOCS:GolMicWig86}, soundness \cite{STOC:Kilian92}, and knowledge-soundness \cite{STOC:GolMicRac85,C:BelGol92}. 

However, since the foundational techniques of interactive proof systems were established, our conception of what constitutes efficient computation has fundamentally changed. Both in theory~\cite{FOCS:Shor94} and in practice~\cite{Google}, quantum computers appear to have capabilities beyond that of any efficient classical computer. Thus, it is imperative to analyze security against quantum adversaries. In this work, we consider this question for zero-knowledge protocols.

\begin{center}
    \emph {When do classical zero-knowledge protocols remain secure against quantum adversaries?}
\end{center}

At a minimum, such protocols must be based on post-quantum cryptographic assumptions. However, since zero-knowledge is typically proved via rewinding, resolving this question also entails understanding \emph{to what extent we can rewind quantum adversaries}. Unfortunately, rewinding quantum adversaries is notoriously difficult because an adversary's internal state may be disturbed if any classical information is recorded about its response, potentially rendering it useless for subsequent executions~\cite{vandeGraaf97,STOC:Watrous06,EC:Unruh12,FOCS:AmbRosUnr14}. 

By now, a few techniques exist to rewind quantum adversaries~\cite{STOC:Watrous06,EC:Unruh12,EC:Unruh16,C:ChiChuYam21,C:AnaChuLap21,FOCS:CMSZ21}, but the range of protocols to which these techniques apply remains quite limited. As a basic example, Watrous's zero-knowledge simulation technique \cite{STOC:Watrous06} applies to the standard \cite{FOCS:GolMicWig86} zero-knowledge proof system for graph isomorphism but (as noted in \cite{EC:Unruh12,FOCS:AmbRosUnr14}) does \emph{not} apply to the related~\cite{FOCS:GolMicWig86} zero-knowledge proof system for graph \emph{non}-isomorphism (GNI). Recall that in the GNI protocol, the prover $P$ wants to convince the verifier $V$ that two graphs $G_0, G_1$ are not isomorphic. To do so, the verifier sends a random isomorphic copy $H$ of $G_b$ for a uniformly random bit $b$, to which the prover returns $b$.\footnote{For this overview, we focus on the soundness $1/2$ case, but appropriate parallel repetition of this step reduces the soundness error.} However, to ensure zero-knowledge, the verifier first gives a proof of knowledge (PoK) that $H$ is isomorphic to either $G_0$ or $G_1$ via a variant of the parallel-repeated graph \emph{isomorphism} $\Sigma$-protocol. Intuitively, this ensures that a malicious verifier $V^*$ already \emph{knows} $b$ and hence does not learn anything new from the interaction.

The classical zero-knowledge simulator for the GNI protocol has two steps:
\begin{enumerate}
    \item \textbf{Extract} an isomorphism $\pi$ satisfying $\pi(H) = G_b$ for some $b$ using \emph{multiple} valid PoK responses from the malicious verifier $V^*$.
    \item \textbf{Simulate} the view of $V^*$ in an real interaction by returning $b$ (computed efficiently from $\pi$).
\end{enumerate}

It turns out that this kind of extract-and-simulate approach is beyond the reach of existing quantum rewinding techniques, because all known techniques for extracting information from \emph{multiple executions} of the adversary~\cite{EC:Unruh12,EC:Unruh16,FOCS:CMSZ21} fundamentally disturb the state. While this particular example just concerns the GNI protocol, this extract-and-simulate approach is very widely applicable, especially in the context of \emph{composition}, including protocols that follow the ``FLS trapdoor paradigm'' \cite{STOC:FeiSha90} or make use of extractable commitments \cite{EC:RicKil99,FOCS:PraRosSah02,TCC:Rosen04}. Given this state of affairs, we ask:

\begin{center}
  \em When is it possible to \emph{undetectably} extract information from a quantum adversary?
\end{center}

As per the above discussion, if it is possible to \emph{undetectably} extract from the proof-of-knowledge subroutine in the \cite{FOCS:GolMicWig86} GNI protocol, then the full protocol is zero-knowledge against quantum adversaries. 

\subsection{This Work}

    In this work, we develop new techniques for quantum rewinding in the context of extraction and zero-knowledge simulation:
    \begin{enumerate}
        \item[(1)] Our first contribution is to give a quantum analogue of the \emph{extract-and-simulate} paradigm used in many classical zero-knowledge protocols, in which a simulator uses information extracted from \emph{multiple protocol transcripts} to simulate the verifier's view. The key difficulty in the quantum setting is to extract this information without causing any noticeable disturbance to the verifier's quantum state --- beyond the disturbance caused by a single protocol execution.
        
        While the recent techniques of~\cite{FOCS:CMSZ21} allow extracting from multiple protocol transcripts, a major problem is that their extractor noticeably disturbs the adversary's state. We revisit the~\cite{FOCS:CMSZ21} approach for extraction and, using several additional ideas, construct an \emph{undetectable} extractor for a broad class of protocols. Using this extraction technique, we prove that the original~\cite{FOCS:GolMicWig86} protocol for graph non-isomorphism and some instantiations of the \cite{STOC:FeiSha90} protocol for $\mathsf{NP}$ are zero-knowledge against quantum adversaries. 
        \item[(2)] We next turn our attention to the Goldreich-Kahan \cite{JC:GolKah96} zero-knowledge proof system for $\mathsf{NP}$. Informally, analyzing the \cite{JC:GolKah96} proof system presents different challenges as compared to \cite{FOCS:GolMicWig86,STOC:FeiSha90} because in the latter protocols, rewinding is used for \emph{extraction} (after which simulation is straight-line), while in the \cite{JC:GolKah96} protocol, rewinding is used for the \emph{simulation} step (while extraction is trivial/straight-line). 
        
        Nevertheless, we show that some of our techniques are also applicable in this setting. We prove that the \cite{JC:GolKah96} protocol is zero-knowledge against quantum adversaries. Our simulator can be viewed as a natural quantum extension of the classical simulator.
        
        Previously,~\cite{C:ChiChuYam21} used different techniques to show that the \cite{JC:GolKah96} protocol is $\eps$-zero-knowledge against quantum adversaries, but their simulation strategy cannot achieve negligible accuracy. 
    \end{enumerate}

\paragraph{Isn't this impossible?}

As stated above, our results (both (1) and (2)) achieve constant-round black-box zero-knowledge with negligible simulation accuracy. Recently,~\cite{FOCS:CCLY21} showed that there do not exist black-box expected quantum polynomial time (EQPT) simulators for constant-round protocols for any language $L \not\in \mathsf{BQP}$. The source of the disconnect between our results and \cite{FOCS:CCLY21} is an ambiguity in the definition of EQPT. This brings us to our final contribution.

\begin{enumerate}
\item[(3)] We formally study the notion of expected runtime for quantum machines and formulate a model of expected quantum polynomial time simulation that avoids the~\cite{FOCS:CCLY21} impossibility result. 
\end{enumerate}

We now discuss these contributions in more detail. To avoid confusion about the formal statements of (1) and (2), we begin by describing (3). 

\subsection{Coherent-Runtime Expected Quantum Polynomial Time}
\label{sec:intro-creqpt}

While~\cite{FOCS:CCLY21} do not formally define EQPT,\footnote{When defining quantum zero-knowledge simulation, \cite[Page~12]{FOCS:CCLY21} requires that the simulator is a quantum Turing machine with \emph{expected} polynomial runtime, and refers to \cite{BBBV97} (which uses the~\cite{BV97} definition of a quantum Turing machine) for the quantum Turing machine model. However, as we discuss in \cref{sec:tech-eqpt},~\cite{BV97} restricts quantum Turing machines to have a \emph{fixed} running time (see~\cite[Def~3.11]{BV97}) in order to avoid difficult-to-resolve subtleties about quantum Turing machines with variable running time~\cite{Myers97,Ozawa98a,LindenP98,Ozawa98b}.} implicit in their result is the following computational model, which we call \emph{measured-runtime} EQPT ($\EQPTM$). In this model, a computation on input $\ket{\psi}_\RegA$ is the following process, for a fixed ``transition'' unitary $U_{\delta}$ (corresponding to a quantum Turing machine transition function $\delta$) and time bound $T = \exp(\secp)$:\footnote{The exponential time bound is simply for convenience; by Markov's inequality, for any expected polynomial time computation truncating the computation after an exponential number of steps has only a negligible effect on the output state.}
\begin{enumerate}[noitemsep]
    \item Initialize a fresh memory/work register $\RegW$ to $\ket{0^{T}}_{\RegW}$ and a designated ``halt'' qubit $\RegQ$ to $\ket{0}$;
    
    \item Repeat for at most $T$ steps:
    \begin{enumerate}[nolistsep]
        \item measure $\RegQ$ and halt if it is $1$;
        \item
        \label[step]{step:apply-unitary} apply $U_{\delta}$ to $\RegA \otimes \RegQ \otimes \RegW$.
    \end{enumerate}
\end{enumerate}
The result of the computation is the residual state on $\RegA$ once the computation has halted. We say that a computation is $\EQPTM$ if for \emph{all} states $\ket{\psi}_{\RegA}$, the expected running time of this computation is $\poly(\secp)$. Using this model, we can give a more precise formulation of the \cite{FOCS:CCLY21} theorem: black-box $\EQPTM$ zero-knowledge simulators for constant-round protocols do not exist. The key feature of the $\EQPTM$ model that enables the \cite{FOCS:CCLY21} result is that the runtime is \emph{measured}.

In this work, we consider a different computational model for EQPT simulation called \emph{coherent-runtime EQPT} ($\EQPTC$). In our model, simulators have the ability to run $\EQPTM$ procedures \emph{coherently} --- which yields a superposition over computations with different runtimes --- and then later \emph{uncompute} the runtime by running the same computation in reverse. 

Our notion of $\EQPTC$ (see \cref{def:cr-eqpt}) captures (as a special case) computations of the form depicted in \cref{fig:cr-eqpt-simple} on an input $\ket{\phi}_{\RegX}$, where the result of the computation is the residual state on $\RegX$. In~\cref{fig:cr-eqpt-simple}, $C_1, C_2, C_3$ are arbitrary polynomial-size quantum circuits and $U$ is a unitary that \emph{coherently implements} an $\EQPTM$ computation.\footnote{Our actual definition is for \emph{uniform} computation, so $(U, C_1, C_2, C_3)$ will have a uniform description.} In slightly more detail, any $\EQPTM$ computation with transition unitary $U_{\delta}$ and runtime bound $T$ can be expressed as a unitary circuit $U = V_T \cdots V_2 V_1$ where each $V_i$ consists of two steps: (1) $\mathsf{CNOT}$ the halt qubit onto a register $\RegB_i$, and then (2) apply $U_{\delta}$ controlled on $\RegB_i = 0$. The unitary $U$ acts on $\RegA \otimes \RegQ \otimes \RegW \otimes \RegB$ where $\RegB \eqdef \RegB_1 \otimes \cdots \otimes \RegB_T$. While our full definition of $\EQPTC$ is more general, all the simulators we give can be written in the form of~\cref{fig:cr-eqpt-simple}. In~\cref{fig:cr-eqpt-simple}, the input register is of the form $\RegX = \RegX_1 \tensor \RegX_2$ where $\RegX_2$ is isomorphic to $\RegA$. 

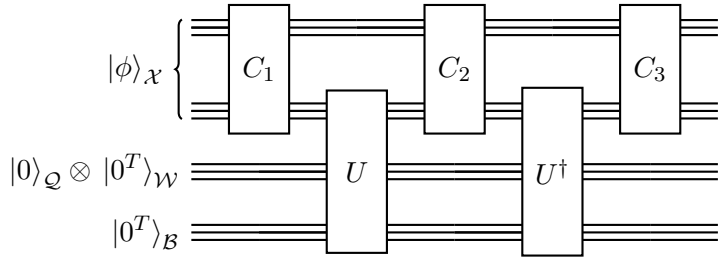
\begin{figure}[h!]
\centering
\begin{quantikz}
\lstick[wires=2]{$\ket{\phi}_{\RegX}$} & \gate[wires=2][0.8cm]{C_1}\qwbundle[alternate]{} & \qwbundle[alternate]{} & \gate[wires=2][0.8cm]{C_2}\qwbundle[alternate]{} & \qwbundle[alternate]{} & \gate[wires=2][0.8cm]{C_3}\qwbundle[alternate]{} & \qwbundle[alternate]{} \\
\lstick[wires=1]{} & \qwbundle[alternate]{} & \gate[wires=3][0.8cm]{U}\qwbundle[alternate]{} & \qwbundle[alternate]{} & \gate[wires=3][0.8cm]{U^\dagger}\qwbundle[alternate]{} & \qwbundle[alternate]{} & \qwbundle[alternate]{} \\
\lstick[wires=1]{$\ket{0}_{\RegQ} \otimes \ket{0^T}_{\RegW}$} & \qwbundle[alternate]{} & \qwbundle[alternate]{} & \qwbundle[alternate]{} & \qwbundle[alternate]{} & \qwbundle[alternate]{} & \qwbundle[alternate]{} \\
 \lstick[wires=1]{$\ket{0^T}_{\RegB}$} \qwbundle[alternate]{} & \qwbundle[alternate]{} &  \qwbundle[alternate]{}  & \qwbundle[alternate]{} & \qwbundle[alternate]{} & \qwbundle[alternate]{} & \qwbundle[alternate]{} 
\end{quantikz}
\caption{An example of an $\EQPTC$ circuit.}\label{fig:cr-eqpt-simple}
\end{figure}

\noindent

We discuss and motivate the definition of $\EQPTC$ in detail in \cref{sec:tech-eqpt}. For now, we note two key properties of the model. First, the ability to apply $U^{\dagger}$ is what enables us to circumvent the \cite{FOCS:CCLY21} impossibility for $\EQPTM$. Second, we show a result analogous to the statement that any expected \emph{classical} polynomial time computation can be truncated to \emph{fixed} polynomial-time with small error.
\begin{lemma}[informal, see \cref{lemma:truncation}]\label{lemma:intro-truncation}
    Any $\EQPTC$ computation can be approximated with $\varepsilon$ accuracy by a quantum circuit of size $\poly(\secp,1/\varepsilon)$.
\end{lemma}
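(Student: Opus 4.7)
The plan is to approximate the exponentially deep coherent unitary $U$ by a truncated polynomial-size version $U'$ and propagate the resulting operator-norm error through the remainder of the $\EQPTC$ circuit. The central tool is Markov's inequality applied to the runtime distribution of the underlying $\EQPTM$ computation.

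First, I would formalize that the distribution over halt times encoded by the $\RegB$ register in the coherent $U$ matches the measured-runtime distribution used to define $\EQPTM$: once the halt qubit $\RegQ$ flips to $1$, every subsequent $V_i$ merely copies $1$ into $\RegB_i$ and suppresses the controlled-$U_\delta$, so the ``coherent halt time'' is well-defined and faithful to the measured one. By the $\EQPTM$ hypothesis (expected runtime $\poly(\secp)$ on all input states) combined with Markov's inequality, the probability of not yet having halted after $T'$ coherent steps is at most $\poly(\secp)/T'$, uniformly over input states.

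The key technical step is to construct $U'$ of depth $O(T')$ that agrees with $U$ on every halted-in-time component. I would do this by running the first $T'$ full iterations $V_{T'} \cdots V_1$ and then applying only the trailing ``bookkeeping'' CNOTs (copying the halt qubit into $\RegB_{T'+1}, \ldots, \RegB_T$) without their associated transition-function steps. On the runtime-$\leq T'$ subspace the outputs of $U$ and $U'$ coincide exactly; on the complementary subspace, each output has norm at most $1$, so their difference has norm at most $2$. Taking $T' = \Theta(\poly(\secp)/\varepsilon^2)$ and invoking Markov yields $\|U - U'\|_{\mathrm{op}} \leq \varepsilon$, and since $U'$ is itself unitary we automatically get $\|U^\dagger - (U')^\dagger\|_{\mathrm{op}} \leq \varepsilon$.

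Substituting $U'$ for $U$ and $(U')^\dagger$ for $U^\dagger$ in the $\EQPTC$ circuit of \cref{fig:cr-eqpt-simple}, a hybrid argument using the triangle inequality together with unitarity of $C_1, C_2, C_3$ gives overall operator-norm error $O(\varepsilon)$, and hence trace-distance error $O(\varepsilon)$ on the residual state of $\RegX$. The truncated circuit has size $O(T') + |C_1| + |C_2| + |C_3| = \poly(\secp, 1/\varepsilon)$. The main obstacle I anticipate is extending the argument to the fully general $\EQPTC$ definition alluded to in the excerpt, which presumably permits richer nesting of polynomial-size circuits with coherent $\EQPTM$ subroutines; this should go through by an inductive application of the above truncation, being careful to invoke the ``all input states'' form of the $\EQPTM$ expected-runtime hypothesis for each nested subroutine on its actual (possibly entangled) input produced by earlier stages.
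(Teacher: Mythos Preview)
Your argument has a real gap at the operator-norm step. The claim $\|U - U'\|_{\mathrm{op}} \leq \varepsilon$ is false: the Markov bound on the not-yet-halted amplitude applies only to inputs with correctly initialized ancillas $\ket{q_0}_{\RegQ}\ket{0}_{\RegW}\ket{0^T}_{\RegB}$, not to arbitrary vectors in the full space. For a malformed input---say one with $\RegB_1,\ldots,\RegB_{T'}$ already set to $1$, $\RegB_{T'+1},\ldots,\RegB_T$ set to $0$, and $\RegQ$ not in the halt state---your $U'$ applies no transition steps, whereas the full $U$ applies $U_\delta$ at every step from $T'+1$ onward; hence $\|U - U'\|_{\mathrm{op}} = \Theta(1)$. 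The inference $\|U^\dagger - (U')^\dagger\|_{\mathrm{op}} \leq \varepsilon$ therefore fails too, and you cannot recover by reverting to a state-specific bound: the state entering $U^\dagger$ (after $C_2$) is no longer of the well-formed-ancilla type, so neither the operator-norm route nor the input-subspace Markov argument applies to it.

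The paper's proof (\cref{claim:qtm-truncation}) circumvents this by bounding $\|(U^\dagger(\Id\otimes A)U - V^\dagger(\Id\otimes A)V)\ket{\psi}\ket{\mathrm{init}}\ket{0^T}\|$ as a single quantity rather than splitting it into separate $U$-error and $U^\dagger$-error terms. The structural observation you are missing is that after $U$ acts on a well-formed input, every branch has $\RegB$ of the form $0^t 1^{T-t}$; since the interleaved operation $A$ (your $C_2$) does not touch $\RegB$, this pattern survives, and on any state with $\RegB = z\,1^{T-\tau}$ and the halt qubit set, $U^\dagger$ coincides with the truncated $V^\dagger$ because its last $T-\tau$ inverse steps merely flip the trailing $\RegB$ bits back to $0$. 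That structural tracking is what justifies replacing $U^\dagger$ by $V^\dagger$. A secondary point: your $U'$ still carries $T - T'$ (exponentially many) trailing CNOTs and the full $T$-bit $\RegB$ register, so it is not polynomial-size as written; the paper's $V$ simply truncates at step $\tau$ and uses only a $\tau$-bit $\RegB$.
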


Importantly, this lemma ensures that $\EQPTC$ computations cannot break post-quantum polynomial hardness assumptions (unless the assumptions are false). We also note that this lemma implies that black-box zero-knowledge with $\EQPTC$ simulation implies $\varepsilon$-zero-knowledge with strict quantum polynomial time simulation.\footnote{We also note that all of our simulators can be truncated to run in time $\mathsf{poly}(\secp) \cdot 1/\eps$ and achieve $\eps$-ZK, matching the $\eps$-dependence of the classical simulator.}

\paragraph{What does this mean for post-quantum zero knowledge?} Since the introduction of zero-knowledge protocols~\cite{STOC:GolMicRac85}, expected polynomial-time simulation has been the \emph{de facto} model for classical zero knowledge. Although expected polynomial-time simulators cannot actually be run ``in real life'' (to negligible accuracy), the security notion captures negligible accuracy simulation in a computational model that \emph{cannot break polynomial hardness assumptions}. Moreover, since~\cite{STOC:BarLin02} rules out \emph{strict} polynomial-time simulation for constant-round protocols, expected polynomial-time simulation captures the strongest\footnote{Other models of efficient simulation \cite{Levin,eprint:Klooss20} have been proposed (in the classical setting), but they are \emph{relaxations} of expected polynomial time simulation. It should be possible to define similar relaxations in the quantum setting, but we focus on obtaining an analogue to ``standard'' expected polynomial time simulation.} provable zero-knowledge properties of many fundamental protocols such as quadratic non-residuosity~\cite{STOC:GolMicRac85}, graph non-isomorphism~\cite{FOCS:GolMicWig86}, Goldreich-Kahan~\cite{JC:GolKah96}, and Feige-Shamir~\cite{STOC:FeiSha90}.

The combination of our positive results and the~\cite{FOCS:CCLY21} negative results transports this state of affairs entirely to the post-quantum setting. In particular, the conclusion we draw from the~\cite{FOCS:CCLY21} negative result is that one must go beyond the $\EQPTM$ model in order to find the right quantum analogue of classical expected polynomial-time zero knowledge simulation. We propose $\EQPTC$ to be that quantum analogue.

\vspace{10pt}

With this discussion in mind, we proceed to describe our results on post-quantum zero-knowledge and extraction in more detail.

\subsection{Results on Zero Knowledge} 
Our main results regarding post-quantum zero knowledge are as follows. First, we show that the \cite{FOCS:GolMicWig86} graph non-isomorphism protocol is zero knowledge against quantum verifiers. 

\begin{theorem}\label{thm:szk}
    The \cite{FOCS:GolMicWig86} $4$-message proof system for graph non-isomorphism is a post-quantum statistical zero knowledge proof system. The zero-knowledge simulator is black-box and runs in $\EQPTC$.

\end{theorem}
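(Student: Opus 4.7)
The plan is to exhibit a black-box simulator $S^{V^*}$ that fits the $\EQPTC$ template in \cref{fig:cr-eqpt-simple} and whose output view is statistically close to $\langle P, V^* \rangle$. Recall that the four messages are $(H, \mathsf{com})$ from $V^*$ (where $\mathsf{com}$ is the first message of a parallel-repeated graph isomorphism PoK that $H$ is isomorphic to $G_0$ or $G_1$), a PoK challenge $c$ from the prover, the PoK response $z$ from $V^*$, and the bit $b$ from the prover. In the simulator, the initial circuit $C_1$ runs $V^*$ coherently on its input to produce $(H,\mathsf{com})$ and a challenge $c$; the unitary $U$ is an undetectable extractor that, conditioned on the PoK sub-transcript being accepting, writes a witness $\pi$ with $\pi(H)=G_{b(\pi)}$ into a workspace register $\RegW$; the middle circuit $C_2$ reads $b(\pi)$ from $\RegW$, feeds it to $V^*$ as the prover's final message, and snapshots $V^*$'s output view; $U^{\dagger}$ then uncomputes $\RegW$; and $C_3$ performs final bookkeeping. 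Applying $U$ and $U^{\dagger}$ symmetrically around fixed polynomial-size circuits places $S^{V^*}$ squarely in $\EQPTC$, provided $U$ admits an $\EQPTM$ coherent implementation.

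\paragraph{Undetectable extraction.} The key technical ingredient is a modification of the \cite{FOCS:CMSZ21} witness extractor for the parallel-repeated graph isomorphism PoK that is \emph{undetectable}, i.e., close to a unitary on the verifier's register together with the workspace. Following \cite{FOCS:CMSZ21}, $U$ alternates coherent value estimation of the verifier's accepting probability (via Jordan-lemma-based projective implementation) with coherent sampling of accepting challenge-response transcripts; a second accepting transcript on the same $\mathsf{com}$ yields a witness by special soundness, which is then written to $\RegW$. The crucial new ingredient is to keep every intermediate outcome in the workspace, rather than measuring destructively and ``repairing'' via a non-invertible channel, so that the entire procedure is (close to) a unitary that $U^{\dagger}$ can invert, returning the verifier's register to a state statistically close to what a single honest interaction would produce.

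\paragraph{Hybrid analysis and main obstacle.} Statistical zero knowledge follows by two hybrids. Hybrid $\Hybrid_1$ modifies the real interaction by applying $U$ after the PoK phase (writing $\pi$ to $\RegW$) and then $U^{\dagger}$; undetectability of $U$ makes $\Hybrid_1$ statistically close to $\langle P, V^* \rangle$. Hybrid $\Hybrid_2$ replaces the honest prover's message $b$ with the bit $b(\pi)$ read from $\RegW$ before $U^{\dagger}$ is applied; on the accepting branch $\pi$ is a genuine isomorphism witness, so $b(\pi)$ equals the honest prover's bit and the distribution is unchanged. Since $\Hybrid_2$ no longer uses the prover's secret, it is exactly $S^{V^*}$. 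The main obstacle is establishing undetectability: the vanilla \cite{FOCS:CMSZ21} extractor uses non-invertible repair and intermediate measurements, both of which disturb $V^*$'s state beyond what a single honest execution would. Overcoming this requires executing all repair and estimation steps coherently, while carefully controlling the accumulated error within each Jordan two-dimensional subspace so that the total deviation of $U$ from a unitary is negligible. Once undetectability is in hand, the expected-polynomial runtime of the extractor together with the $\EQPTC$ structure of $S^{V^*}$ and \cref{lemma:intro-truncation} yields a strict $\poly(\secp,1/\eps)$-time $\eps$-simulator, sufficient for statistical zero knowledge.
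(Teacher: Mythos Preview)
There is a genuine conceptual gap in your identification of ``the main obstacle.'' You write that the issue is making $U$ ``close to a unitary'' by running all repair and estimation steps coherently and controlling error in each Jordan block. But this is not where the difficulty lies: \emph{any} $\EQPTM$ procedure has an exact unitary dilation (its coherent implementation, as in \cref{fig:coherent-qtm}), so $U$ is exactly unitary for free. Purifying the CMSZ extractor does not by itself buy you state preservation, because the moment you read anything out of $\RegW$ in $C_2$ you perform a measurement, and after that $U^{\dagger}$ no longer returns the verifier register to its post-single-execution state unless the measured value was already (nearly) deterministic.

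The paper's route instead rests on two separate ingredients you have not isolated. First, \emph{guaranteed extraction} (\cref{def:high-probability-extraction-body}): conditioned on the initial $\sC$-measurement accepting, the extractor must output a valid witness with probability $1-\negl(\secp)$. The CMSZ extractor does not have this property---it extracts only with probability $\approx\gamma$ and needs an a priori lower bound $\gamma$ on the success probability---and obtaining it requires the new template of \cref{sec:high-probability-extractor} (variable-length $\Estimate$, $\Transform^{\sU,\sC}$-based transcript generation, the pseudoinverse analysis, and fast singular-vector algorithms to make the expected runtime bounded). Without guaranteed extraction, your simulator fails on a non-negligible fraction of accepting executions. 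Second, \emph{uniqueness of the extracted output}: when $G_0\not\simeq G_1$ the bit $b$ (not the permutation $\pi$) is information-theoretically determined by $H$, so measuring $b$ in $C_2$ causes negligible disturbance, and then $U^{\dagger}$ uncomputes exactly. This is the content of \cref{lemma:state-preserving-high-probability} and \cref{cor:state-preserving-partial-witness}; your sketch mentions that $b(\pi)$ equals the honest prover's bit but does not connect this to why the $U,U^{\dagger}$ sandwich preserves the state. Finally, note that the one-out-of-two isomorphism PoK is only \emph{partially} collapsing (on the $0$-challenge bits $b_i$), so the guaranteed extractor for it needs the $(2,g')$-PSS analysis of \cref{claim:gni-pss} rather than a direct application of special soundness.
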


The \cite{FOCS:GolMicWig86} GNI protocol follows a somewhat general template using instance-dependent commitments \cite{BelMicOst90,JC:ItoOhtShi97,C:MicVad03}; we believe \cref{thm:szk} should extend to other instantiations of this paradigm (e.g. for lattice problems). 

With some additional work, we use similar techniques to show how to instantiate the ``extract-and-simulate'' paradigm of Feige-Shamir \cite{STOC:FeiSha90} in the post-quantum setting. 

\begin{theorem}\label{thm:feige-shamir}
    Assuming super-polynomially secure non-interactive commitments, a particular instantiation of the \cite{STOC:FeiSha90} $4$-message argument system for $\mathsf{NP}$ is (sound and) zero-knowledge against quantum verifiers. The zero-knowledge simulator is black-box and runs in $\EQPTC$. 
\end{theorem}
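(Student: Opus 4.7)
The plan is to combine the classical Feige-Shamir template with the undetectable quantum extractor underlying \cref{thm:szk}. Recall the structure of the protocol: in the first flow the verifier sends commitments $c_1, c_2$ to random ``trapdoor'' strings $r_1, r_2$ together with the first message of a witness-indistinguishable proof of knowledge (WI-PoK) that the commitments are well-formed; the prover replies with the first message of a WI argument for the compound statement ``$\Instance \in \Language$ OR I know an opening of $c_1$ or $c_2$''; the verifier completes its WI-PoK; and the prover completes its WI argument. Classically, the simulator extracts a trapdoor from the verifier's WI-PoK by rewinding and then uses it in place of an $\mathsf{NP}$ witness in the WI argument.

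First I would construct the quantum simulator. It applies the undetectable extractor of \cref{thm:szk} to the verifier's WI-PoK subroutine to recover an opening $r_b$ of one of the commitments while leaving the verifier's residual state essentially undisturbed, then coherently rewinds by applying the inverse of the extractor's implementation and continues the interaction using $r_b$ as the WI witness. The whole procedure fits the template of \cref{fig:cr-eqpt-simple}: the extractor plays the role of $U$ (with its inverse $U^{\dagger}$), while $C_1, C_2, C_3$ sample the prover's first WI message, pick the challenge for the verifier's WI-PoK, and compute the final WI response respectively. Consequently the simulator lies in $\EQPTC$.

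The main obstacle is the hybrid argument showing that the simulated view is computationally indistinguishable from the real one. Because the extraction is undetectable, the verifier's post-extraction state is statistically close to its state after an honest partial execution, so indistinguishability reduces to the post-quantum WI property of the prover's argument with the witness switched from the $\mathsf{NP}$ witness to $r_b$. The subtlety is that the WI distinguisher built from this hybrid must internally execute the $\EQPTC$ extractor; by \cref{lemma:intro-truncation} we can truncate it to a fixed-size circuit at the cost of $\varepsilon$ error, but obtaining negligible simulation accuracy forces $\varepsilon$ to be negligible, so the reduction runs in super-polynomial (albeit bounded) time. This is precisely why the non-interactive commitment scheme used inside the prover's WI argument is required to be \emph{super-polynomially} hiding.

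Soundness follows the standard Feige-Shamir route: any cheating quantum prover on a false instance must implicitly be using the trapdoor branch of the WI statement, so applying the post-quantum proof-of-knowledge extractor developed earlier in the paper to its WI argument should yield an opening of $c_1$ or $c_2$, contradicting the hiding of the verifier's commitment.
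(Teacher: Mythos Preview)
Your high-level plan---run the state-preserving extractor on the verifier's WI-PoK, recover a trapdoor, and use it as the alternate witness in the prover's WI argument---matches the paper. But you have misdiagnosed where the super-polynomial assumption enters, and this matters because it reflects a misunderstanding of how the truncation lemma is used in reductions.

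In the hybrid step where you switch the prover's WI witness, the reduction does \emph{not} need to truncate the extractor to negligible error. The argument is by contradiction: suppose a distinguisher achieves non-negligible advantage $\varepsilon$; then truncate the $\EQPTC$ extractor to a circuit of size $\poly(\secp,1/\varepsilon)$, losing only $\varepsilon/4$ accuracy. Since $\varepsilon$ is non-negligible, this circuit has polynomial size for infinitely many $\secp$, and the resulting WI distinguisher still has advantage $\Omega(\varepsilon)$. So the prover's delayed-witness WI argument only needs to be \emph{polynomially} witness-indistinguishable, which follows from Watrous's technique applied slot-by-slot to the Lapidot--Shamir protocol.

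The super-polynomial assumption is needed for a different building block: the verifier's WI-PoK must be simultaneously \emph{state-preserving} (so the ZK simulator can extract undetectably) and \emph{witness-indistinguishable} (used in the \emph{soundness} proof, not the ZK proof). The paper instantiates this via the commit-and-prove compiler of \cref{thm:state-preserving-wi}, and proving WI of that compiler requires simulating the inner $\Sigma$-protocol in super-polynomial time against the outer commitment---forcing the commitment to be super-polynomially hiding. Relatedly, your soundness sketch skips a step: extracting some commitment opening from the cheating prover's WI argument does not by itself contradict hiding; you must first invoke WI of the verifier's PoK to argue that, with non-negligible probability, the extracted opening is for the commitment whose randomness the honest verifier did \emph{not} use as its witness.
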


Finally, using a different approach, we show that the Goldreich-Kahan \cite{JC:GolKah96} zero-knowledge proof system remains ZK against quantum adversaries. 

\begin{theorem}\label{thm:gk}
    When instantiated using a collapse-binding and statistically-hiding commitment scheme, the \cite{JC:GolKah96} protocol is zero-knowledge with a black-box $\EQPTC$ simulator. 
\end{theorem}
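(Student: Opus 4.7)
The plan is to construct a simulator $\Sim$ that is a natural quantum analogue of the classical Goldreich--Kahan simulator, fitting the $\EQPTC$ template of \cref{fig:cr-eqpt-simple}. Recall that in the GK protocol, $V$ first commits (using the statistically-hiding, collapse-binding scheme $\Com$) to a challenge $e$ for an underlying $\Sigma$-protocol; the prover sends first message $a$; $V$ opens to $e$; the prover finishes with response $z$. The classical simulator runs $V^*$ on a dummy first message $a_0$ to extract $e$ from the opening, then rewinds $V^*$ and re-runs it with $a^* \leftarrow \SHVZK(e)$ produced by the honest-verifier simulator of the $\Sigma$-protocol, outputting the resulting transcript.

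The quantum simulator proceeds analogously. First, $C_1$ coherently runs $V^*$ to produce the commitment register and feeds it the dummy message $a_0$, so that $V^*$'s ``opening'' unitary is queued up. Next, the simulator treats opening as a binary measurement (``opens to a valid $(e,\text{decommitment})$'' vs.\ ``does not open'') and applies $U$: a state-preserving extraction procedure built from coherent value-estimation $\ValEst$ followed by fixed-point amplitude amplification, in the style of \cite{FOCS:CMSZ21} but adapted to be undetectable. Because amplification has expected-polynomial but worst-case variable runtime, $U$ is an $\EQPTM$ unitary run coherently. In the middle of the $\EQPTC$ computation, $C_2$ measures the opened challenge $e$, computes $(a^*,z^*)\leftarrow \SHVZK(e)$, and overwrites the prover-message register from $a_0$ to $a^*$. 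Finally $U^{\dagger}$ uncomputes the phase-1 amplification, restoring $V^*$ to essentially its post-commit state, and $C_3$ runs $V^*$ on $a^*$ together with a second amplification of the ``$V^*$ opens'' event, measuring at the end to obtain $V^*$'s view. The full construction fits the $\EQPTC$ template (with at most two nested $U/U^{\dagger}$ pairs).

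Correctness will rest on two inputs. Statistical hiding of $\Com$ ensures that the challenge $e$ extracted by phase-1 amplification on $a_0$ is statistically close to the distribution of $e$ that $V^*$ would open to on any other first message $a^*$, so $\SHVZK(e)$ is an honestly-distributed transcript. Collapse-binding ensures that in phase 2, $V^*$ cannot open to any $e'\neq e$, so the transcript $(a^*,e,z^*)$ is accepted. The expected runtime follows from the classical amortization: if $V^*$ opens on $a_0$ with probability $p$, each round of amplification costs $O(1/\sqrt{p})$ coherent steps weighted by a $\Theta(p)$ acceptance probability, giving $\poly(\secp)$ overall, which by \cref{lemma:intro-truncation} is $\EQPTC$.

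The main obstacle will be showing that the two rounds of amplification and the intermediate measurement of $e$ do not noticeably disturb $V^*$'s state in the Jordan decomposition of the ``valid opening'' projector. Classically the verifier's random tape can be cloned, so this is a non-issue; here we must argue, via Jordan's lemma and a hybrid argument across singular-value blocks, that phase-1 extraction followed by $U^{\dagger}$ restores $V^*$ to a state that supports the same opening probability on $a^*$ that it would have had in a fresh post-commit execution. The one dangerous case is when $V^*$'s state spreads across distinct Jordan blocks corresponding to different potential openings $e\neq e'$; this can be ruled out by using amplification in two such blocks as a collapse-distinguisher, contradicting collapse-binding of $\Com$. Combining these pieces with the analysis of $\EQPTC$ from \cref{sec:intro-creqpt} yields a negligibly accurate black-box $\EQPTC$ simulator, proving the theorem.
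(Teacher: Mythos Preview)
Your proposal has the right classical intuition but diverges from the paper's approach and contains genuine gaps. First, the paper explicitly does \emph{not} route through state-preserving extraction for Goldreich--Kahan; instead the simulator initializes $\RegA,\RegZ,\RegM$ to a purified SHVZK state $\ket{\Sim_0}$, applies a coherent ``estimate-disturb-transform'' (EDT) procedure with respect to the projector pair $(\BProj{\com},\ketbra{\Sim_0})$, measures the opened challenge $r$, \emph{swaps} $\ket{\Sim_0}$ for $\ket{\Sim_r}$ in place, and then applies $\widehat{\mathsf{EDT}}^{\dagger}$. Indistinguishability from the real view is proved via hybrids that replace $\ket{\Sim_0}$ by the honest-prover purification $\ket{P}$ and $\ket{\Sim_r}$ by $\ket{P_r}$; the crucial point is that these hybrids use projections \emph{onto} the states being switched, so SHVZK alone does not suffice. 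The paper proves a separate ``projection indistinguishability'' lemma (\cref{lemma:proj-indist}) showing that oracle access to $\BMeas{\ketbra{\tau_b}}$ does not help distinguish purifications of computationally close distributions. Your proposal has no analogue of this step, and your Jordan-block argument about ``different potential openings $e\neq e'$'' does not address it.

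Two further gaps. You invoke statistical hiding to argue that the extracted $e$ is independent of which first message is sent; this is a binding statement, not a hiding one (hiding is used only for soundness). More seriously, your runtime argument ``$O(1/\sqrt{p})$ weighted by $\Theta(p)$'' applies only to the first amplification on $a_0$. Your ``second amplification'' in $C_3$ runs after the computational switch to $a^*=\SHVZK(e)$, where $V^*$'s opening probability is only $p' = p - \negl$; the resulting expected cost $p/\sqrt{p'}$ can be unbounded when $p$ is itself negligible. This is exactly the obstruction the paper isolates (see \cref{subsubsec:gk-issue} and the discussion of the na\"ive simulator in \cref{sec:tech-overview-pqzk-gk}), and it is why the paper's EDT procedure runs $\VarEstimate$ \emph{before} the disturbing measurement of $r$, so that the subsequent $\Transform$ has a runtime bound fixed prior to the computational switch. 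Your proposal does not implement this ordering, and without it the simulator is not $\EQPTC$.
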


As a bonus, the simulator we construct in \cref{thm:gk} bears a strong resemblance to the \emph{classical} Goldreich-Kahan simulator, giving a clean conceptual understanding of constant-round zero knowledge in the quantum setting. 

\subsection{Results on Extraction}

As alluded to in the introduction, \cref{thm:szk} and \cref{thm:feige-shamir} are proved using new results on post-quantum \emph{extraction}. We achieve ``undetectable extraction'' under the following definition of a \emph{state-preserving proof of knowledge}.\footnote{This is a quantum analogue of \emph{witness-extended emulation}~\cite{CCC:BarGol02}. Our definition is also similar to a definition appearing in~\cite{C:AnaChuLap21}, although they only consider the setting of \emph{statistical} state preservation.}

\begin{definition}\label{def:state-preserving-extraction}
   An interactive protocol $\Pi$ is defined to be a \textdef{state-preserving argument (resp. proof) of knowledge} if there exists an extractor $\mathsf{Ext}^{(\cdot)}$ with the following properties:
   
   \begin{itemize}
        \item \textbf{Syntax}: For any quantum algorithm $P^*$ and auxiliary state $\ket{\psi}$, $\mathsf{Ext}^{P^*, \ket{\psi}}$ outputs a protocol transcript $\tau$, prover state $\ket{\psi'}$, and witness $w$. 
        \item \textbf{Extraction Efficiency}: If $P^*$ is a QPT algorithm, $E^{P^*(\cdot), \ket{\psi}}$ runs in expected quantum polynomial time ($\EQPTC$).
        \item \textbf{Extraction Correctness}: the probability that $\tau$ is an accepting transcript but $w$ is an invalid $\mathsf{NP}$ witness is negligible.
        \item \textbf{State-Preserving}: the pair $(\tau, \ket{\psi'})$ is computationally (resp. statistically) indistinguishable from a transcript-state pair $(\tau^*, \ket{\psi^*})$ obtained through an honest one-time interaction with $P^*(\cdot, \ket{\psi})$ (where $\ket{\psi^*}$ is the prover's residual state). 
   \end{itemize}
\end{definition}

Proofs/arguments of knowledge are typically used (rather than just sending an $\mathsf{NP}$ witness) to achieve either \emph{succinctness} \cite{STOC:Kilian92} or \emph{security against the verifier} (e.g., witness indistinguishability) \cite{FOCS:GolMicWig86,STOC:FeiSha90}. We show that standard $3$- and $4$-message protocols in both of these settings are state-preserving proofs/arguments of knowledge. 

\begin{theorem}[State-preserving succinct arguments] \label{thm:succinct-state-preserving}
Assuming collapsing hash functions exist, there exists a 4-message public-coin state-preserving succinct argument of knowledge for $\mathsf{NP}$.
\end{theorem}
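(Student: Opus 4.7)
The plan is to instantiate Kilian's $4$-message succinct argument of knowledge using a PCP of knowledge for $\NP$ whose verifier makes polylogarithmically many queries, together with a Merkle tree built from a collapsing hash family. Writing $h$ for the hash key, $\rt$ for the Merkle root, $Q$ for the PCP verifier's query set, and $z$ for the claimed answers with authentication paths, the four messages are $V\to P\colon h$, $P\to V\colon \rt$, $V\to P\colon Q$, $P\to V\colon z$; acceptance checks that the paths verify under $(h,\rt)$ and that the PCP verifier accepts on the revealed answers. Knowledge soundness in the post-quantum setting follows from the classical Kilian analysis combined with the collapsing property of $h$ as in~\cite{FOCS:CMSZ21}.

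For the extractor I would combine the Jordan-decomposition rewinding of~\cite{FOCS:CMSZ21} with the coherent extract-then-uncompute template of \cref{fig:cr-eqpt-simple}. Concretely: the extractor plays $V$'s opening message by sampling $h$, applies $P^*$ coherently, and measures its reply to obtain $\rt$, leaving $P^*$ in residual state $\ket{\psi_{h,\rt}}$. In the role of the middle unitary $U$ of \cref{fig:cr-eqpt-simple}, it then runs an $\EQPTM$ subroutine that iteratively applies a CMSZ-style rewinding extractor to the single-round game defined by the last two messages: each inner round coherently samples a fresh $Q_i$, runs $P^*$ on it, measures the Merkle openings inside the response, and applies the CMSZ repair to restore a state close to $\ket{\psi_{h,\rt}}$. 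After enough rounds the union of revealed PCP positions is complete, and the classical PCP knowledge extractor produces a witness $w$ which is copied to an output register. The extractor then applies $U^{\dagger}$ to uncompute all ancillae (except the copy of $w$), and finishes by playing the last two messages honestly to produce the output transcript $\tau = (h,\rt,Q,z)$ and residual prover state $\ket{\psi'}$.

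State preservation --- the main thing to prove --- reduces to two claims. First, after the $UU^{\dagger}$ sandwich, the prover's register is close to $\ket{\psi_{h,\rt}}$, so the honest final two messages produce a distribution on $(\tau,\ket{\psi'})$ close to that of an honest single execution; this is essentially a structural property of the $\EQPTC$ template. Second, the intermediate measurements inside $U$ --- both the success predicate $\Pred$ and the measured Merkle openings --- can be performed without disturbing $\ket{\psi_{h,\rt}}$ up to computational indistinguishability, which follows from a hybrid argument reducing to the collapsing property of $h$. Extraction correctness then reduces to knowledge soundness of the underlying PCP once a consistent set of openings has been collected, and the $\EQPTC$ runtime bound comes from the amortized $O(1/p)$ analysis of~\cite{FOCS:CMSZ21} composed with \cref{lemma:intro-truncation}.

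The principal obstacle is controlling the aggregate state disturbance over the $\poly(\secp)$ coherent rewinds needed to collect enough PCP answers. A single CMSZ-style rewind noticeably disturbs the state, and naive composition would blow up the error. The fix is twofold: each inner rewind-plus-repair is carried out coherently, so that in the unitary picture the per-round disturbance cancels once the collapsing-hash reduction lets us treat measured Merkle openings as classical; and the outer $U$-then-$U^{\dagger}$ structure of $\EQPTC$ erases any residual ancilla-borne disturbance that remains. The technical crux will be showing that the collapsing reductions compose cleanly across all rewinds with only negligible cumulative loss, so that the whole extractor fits into $\EQPTC$ and produces a state-preserving triple $(\tau,\ket{\psi'},w)$.
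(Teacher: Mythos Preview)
Your proposal has a genuine gap in the state-preservation argument, and the paper's construction is different from yours in a way that plugs exactly this gap.

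You propose to use plain Kilian and argue that the $U^\dagger (\text{measure }w) U$ sandwich restores the prover state. But this restoration is \emph{not} a structural property of the $\EQPTC$ template: $U^\dagger U = \Id$ only if nothing is measured in between. The whole point is that a measurement of $w$ is sandwiched there, and the state is restored only if that measurement is computationally undetectable. You argue undetectability only for the per-round Merkle openings (which is what collapsing of $h$ gives you); you never argue it for the final extracted witness $w$. The witness $w$ is produced by the PCP knowledge extractor from a collection of answers gathered across many coherent rewinds, and nothing in your argument rules out that the prover's superposition is over executions leading to \emph{different} valid witnesses $w$. Measuring $w$ would then genuinely collapse the state. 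Your ``per-round disturbance cancels'' claim is also off: the CMSZ repair is known \emph{not} to be state-preserving, and running it inside $U$ does not make the disturbance vanish; what matters is solely whether the single measurement you keep (that of $w$) is undetectable.

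The paper fixes this by changing the protocol: it does not use plain Kilian, but a commit-and-prove compilation of it. The prover additionally sends a succinct collapse-binding commitment $\com = h(w)$ to the NP witness (using the same collapsing hash), and Kilian is run on the statement ``$\com$ opens to a valid witness for $x$''. The guaranteed extractor (the paper's \cref{thm:high-probability-extraction}) then outputs both $w$ and an opening of $\com$ to $w$. Now the first prover message $\com$ computationally binds the prover to a single $w$, and collapse-binding of $h$ says precisely that measuring $w$ (given that a valid opening is present in superposition) is undetectable. This is the ``witness-binding'' hypothesis of \cref{lemma:state-preserving-high-probability}, and it is what makes the $U^\dagger(\text{measure }w)U$ sandwich state-preserving. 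The extra commitment costs nothing in round complexity or succinctness, so the result is still a 4-message public-coin succinct argument.
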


For witness indistinguishability (WI), we have three related constructions achieving slightly different properties under different computational assumptions. 

\begin{theorem}[State-preserving WI arguments] \label{thm:state-preserving-wi}
Assuming collapsing hash functions or \emph{super-polynomially secure} one-way functions, there exists a 4-message public-coin state-preserving witness-indistinguishable argument (in the case of collapsing)/proof (in the case of OWFs) of knowledge. Assuming \emph{super-polynomially secure} non-interactive commitments, there exists a 3-message PoK achieving the same properties.
\end{theorem}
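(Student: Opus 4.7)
The plan is to construct state-preserving WI proofs/arguments of knowledge by taking classical $\Sigma$-protocol-based constructions and upgrading their analysis using the extraction framework that underlies \cref{thm:succinct-state-preserving}. For the 3-message variant, I use parallel repetition of Blum's Hamiltonicity (or $3$-colorability) $\Sigma$-protocol with the prover's first-message commitments instantiated via the assumed non-interactive commitment scheme. For the 4-message variants, I prepend a 2-message commitment setup phase---using collapsing-hash-based commitments in the collapsing case and Naor-style commitments from OWFs in the OWF case---so that the protocol has the structure (V$\to$P setup; P$\to$V $\Sigma$-protocol first message; V$\to$P challenge; P$\to$V response). All three protocols are public-coin by construction.

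Three properties need to be established. Classical soundness follows from the binding of the commitments combined with the special soundness of the parallel-repeated $\Sigma$-protocol, giving the ``proof'' case for statistically binding Naor commitments and the ``argument'' case for collapsing commitments. For witness indistinguishability against quantum verifiers, I run the standard hybrid argument that switches the prover's commitments from encoding witness $w_0$ to encoding witness $w_1$. The key subtlety is that in the intended applications (e.g.\ the Feige-Shamir simulator of \cref{thm:feige-shamir}), the resulting WI distinguisher may itself be an $\EQPTC$ algorithm which, by \cref{lemma:intro-truncation}, only admits a strict polynomial-time approximation of size $\poly(\secp,1/\eps)$---not strict polynomial overall. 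This is precisely why the theorem assumes \emph{super-polynomially} secure commitments/OWFs: the hiding-to-WI reduction must remain secure against this slightly-super-polynomial distinguisher.

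The bulk of the work is the state-preserving PoK property. I adapt the extraction machinery developed for \cref{thm:succinct-state-preserving} to this $\Sigma$-protocol setting. At a high level, the extractor (i) coherently executes the protocol to obtain an accepting transcript $\tau$ while preserving the prover's state via coherent amplitude estimation and Jordan-subspace repair; (ii) coherently rewinds past the challenge round, samples a fresh independent challenge, and runs a second coherent execution to produce a second accepting transcript on a different challenge; (iii) classically runs the $\Sigma$-protocol's special-soundness extractor on the two response strings to recover an $\mathsf{NP}$ witness $w$; and (iv) applies the inverse of the rewinding step to return the joint state to its post-interaction form. This fits naturally into the $\EQPTC$ circuit template of \cref{fig:cr-eqpt-simple}, with the $U/U^{\dagger}$ pair implementing the amplitude estimation and repair, so the extractor is $\EQPTC$.

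The main obstacle is proving undetectability: the pair $(\tau,\ket{\psi'})$ output by the extractor must be indistinguishable (computationally for arguments, statistically for proofs) from an honest one-shot interaction with $P^*$. Naively, the second coherent execution used for witness recovery could leave a detectable imprint on $\ket{\psi'}$. The undetectable-extraction framework developed earlier in the paper is designed precisely to rule this out: a Jordan-subspace analysis shows that, conditioned on the first transcript being accepting, the rewind-then-uncompute operation acts as the identity on the component of the state that would survive an honest interaction, up to an error controlled by the deviation of coherent amplitude estimation from the true Jordan spectrum. I expect the formal proof to mostly consist of verifying that this analysis transfers from the Kilian/PCP setting of \cref{thm:succinct-state-preserving} to the $\Sigma$-protocol setting with essentially identical structure, with special soundness replacing PCP decoding as the classical post-processing step that converts two transcripts into a witness.
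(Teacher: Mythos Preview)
There are two genuine gaps in your proposal.

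\textbf{Missing the commit-and-prove compiler.} Your plan is to use the parallel-repeated $\Sigma$-protocol directly and argue state-preservation via a Jordan-subspace analysis showing that ``rewind-then-uncompute acts as the identity.'' This does not work. The state-preserving extractor must \emph{measure} the extracted witness (so it can be output classically), and for that measurement to be undetectable, the witness must be essentially unique given the prover's first message. A parallel-repeated $\Sigma$-protocol does \emph{not} have this property: each slot of the repetition may be consistent with a different witness, so the extractor has no canonical witness to output, and measuring one collapses the prover's superposition over witnesses in a way that an honest interaction would not. The paper resolves this by wrapping the $\Sigma$-protocol in a \emph{commit-and-prove} compiler: the prover first sends a (collapse-binding or statistically binding) commitment $\com = \Com(w)$ to its witness and then runs the $\Sigma$-protocol to prove knowledge of an opening of $\com$ to a valid witness. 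Now the guaranteed extractor outputs both $w$ and an opening of $\com$ to $w$, and collapse-binding of $\Com$ makes the measurement of $w$ computationally undetectable. This is exactly the ``witness-binding'' hypothesis of \cref{lemma:state-preserving-high-probability}; without it your Jordan-subspace argument cannot conclude undetectability.

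\textbf{Wrong reason for super-polynomial security.} The super-polynomial assumption is \emph{not} needed because the WI distinguisher might be $\EQPTC$ (WI is a standalone property against QPT adversaries, and \cref{lemma:intro-truncation} lets you truncate any $\EQPTC$ distinguisher to strict polynomial size). Rather, it arises from proving WI of the \emph{commit-and-prove} protocol: to switch from $\Com(w_0)$ to $\Com(w_1)$, you must first simulate the inner $\Sigma$-protocol (whose statement depends on $\com$) against the malicious verifier. Simulating a parallel-repeated $\Sigma$-protocol with challenge space $2^t$ via Watrous rewinding takes time $2^t \cdot \poly(\secp)$, so the commitment's hiding must survive a $2^t \cdot \poly(\secp)$-time adversary. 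This is why the collapsing-hash instantiation (which gives a \emph{statistically} hiding commitment) avoids the super-polynomial assumption, while the OWF and non-interactive commitment instantiations do not.
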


In fact, as we will explain in the technical overview, we give explicit conditions under which any proof/argument of knowledge is also state-preserving.

One special case of \cref{thm:state-preserving-wi} that we would like to highlight is that of \emph{extractable commitments} \cite{FOCS:PraRosSah02,TCC:PasWee09}. An extractable commitment scheme $\mathsf{ExtCom}$ is a commitment scheme with the property that a committed message $m$ can be extracted given black-box access to an adversarial sender (provided that the adversary is sufficiently convincing). Analogously to the setting of proofs-of-knowledge, we consider ``state-preserving'' extractable commitments (see, e.g., \cite{STOC:BitShm20,EC:GLSV21,C:BCKM21b}), in which the extractor must simulate the entire view of the adversarial committer in addition to extracting the message. This variant of extractable commitments is quite natural; for example, it is exactly the property necessary to prove the post-quantum security of the \cite{TCC:Rosen04} zero-knowledge proof system for $\mathsf{NP}$. An immediate corollary of \cref{thm:state-preserving-wi} is a new construction of state-preserving extractable commitments. 

\begin{corollary}[Extractable commitments]
   Assuming super-polynomially secure non-interactive commitments, there exists a $3$-message public-coin post-quantum statistically-binding \emph{extractable commitment scheme}. Assuming super-polynomially secure one-way functions, there exists a $4$-message scheme with the same properties. Finally, assuming (polynomially secure) collapsing hash functions, there exists a $4$-message public-coin collapse-binding extractable commitment scheme. 
\end{corollary}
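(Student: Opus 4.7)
The plan is to derive the corollary as a direct application of \cref{thm:state-preserving-wi} via the standard ``commit-and-prove'' paradigm. Given an underlying (possibly interactive) statistically-binding (or collapse-binding) commitment $\Commit$ and a state-preserving witness-indistinguishable proof/argument of knowledge $\Pi$ supplied by \cref{thm:state-preserving-wi}, I would define the extractable commitment as follows: to commit to $m$, the sender chooses fresh randomness $r$, computes $\com = \Commit(m; r)$, and then plays the prover in $\Pi$ to prove knowledge of $(m, r)$ satisfying $\com = \Commit(m; r)$. To achieve the stated round complexity, the $\com$ message is bundled with the first prover message of $\Pi$ whenever possible, and the underlying commitment's receiver message (if any) is bundled with the verifier's first message of $\Pi$.

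Hiding, binding, and state-preserving extractability would then be verified as follows. Computational hiding reduces to hiding of $\Commit$ via a standard hybrid that first replaces the $\Pi$-witness with an opening of a commitment to a different message (using witness indistinguishability of $\Pi$) and then appeals to hiding of $\Commit$. Statistical binding (in the first two cases) and collapse-binding (in the collapsing case) are inherited directly from $\Commit$, since the $\Pi$-transcript carries no additional binding content. For state-preserving extraction, I would run the knowledge extractor of $\Pi$ on the adversarial sender; by \cref{def:state-preserving-extraction}, this extractor runs in $\EQPTC$ and outputs a transcript $\tau$, a residual sender state $\ket{\psi'}$, and (whenever $\tau$ is accepting) a witness $(m, r)$ with $\com = \Commit(m; r)$, with $(\tau, \ket{\psi'})$ indistinguishable (computationally or statistically, depending on the instantiation) from an honest commit-phase interaction. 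Correctness of the extracted message $m$ follows from the statistical or collapse binding of $\Commit$.

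The round counts then work out as follows. For the $3$-message scheme, a non-interactive commitment from super-polynomially secure non-interactive commitments is combined with the $3$-message public-coin WI PoK of \cref{thm:state-preserving-wi}, bundling $\com$ with the prover's first message. For the OWF case, Naor's $2$-message commitment is interleaved with the $4$-message WI proof of knowledge: the receiver sends Naor's randomness alongside the verifier's first WI message, the sender sends $\com$ alongside the prover's first WI message, and the last two messages complete the WI PoK, for a total of $4$ messages; super-polynomial OWF security is precisely what \cref{thm:state-preserving-wi} requires in this case. For the collapsing case, the non-interactive collapse-binding commitment $\Commit(m; r) = H(m \| r)$ built from a collapsing hash $H$ is combined with the $4$-message WI argument of knowledge, again bundling $\com$ with the appropriate prover message.

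The main technical burden is entirely absorbed by \cref{thm:state-preserving-wi}; the remaining work is a syntactic rearrangement of messages together with a routine hybrid argument for hiding. The one point deserving care is that the ``state-preserving'' clause of \cref{def:state-preserving-extraction}, applied to $\Pi$, yields indistinguishability of the \emph{entire joint distribution} of transcript and residual prover state, which is exactly what the extractable-commitment security definition demands; prepending (or bundling) the commit-phase messages that are deterministic functions of $\tau$ preserves this indistinguishability. No further quantum rewinding analysis is needed beyond what is already carried out in the proof of \cref{thm:state-preserving-wi}.
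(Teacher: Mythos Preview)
There is a genuine gap in your hiding argument. In your construction the $\Pi$-statement is the string $\com$ and the witness is an opening $(m,r)$ with $\com = \Commit(m;r)$. When $\Commit$ is statistically binding (the first two cases), $\com$ information-theoretically determines $m$, so there is essentially a unique witness and witness indistinguishability is vacuous. You therefore cannot ``replace the $\Pi$-witness with an opening of a commitment to a different message'' as your hybrid proposes---no such witness exists for the same statement $\com$, and switching $\com$ from $\Commit(m_0)$ to $\Commit(m_1)$ changes the statement, about which WI says nothing. (The argument can be salvaged by invoking the super-polynomial zero-knowledge of the inner $\Sigma$-protocol used inside the \cref{thm:state-preserving-wi} construction, but that requires opening the box and is not the WI-based ``standard hybrid'' you describe.) A secondary issue: in the collapsing case, $H(m\Vert r)$ is not obviously hiding from collapsing alone; the paper's internal commitment is $(h(r),s,\langle r,s\rangle\oplus m)$, which is statistically hiding.

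The paper treats the corollary as immediate because your extra $\Commit$-layer is unnecessary: the protocol of \cref{thm:state-preserving-wi} is itself commit-and-prove (the prover's first message already contains a commitment to the witness). Apply it to the trivial relation where the instance is fixed and every string $m$ is a valid witness. Then hiding \emph{is} WI---two messages $m_0,m_1$ are two witnesses for the same instance---binding is inherited from the internal commitment, and the commit-and-prove state-preserving extractor (\cref{cor:commit-and-prove}) outputs $(m,\omega)$ opening the sent commitment, which by binding is the committed message. This also yields the stated round counts directly, with no message-bundling gymnastics.
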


We leave open the problem of using these techniques to achieve a statistically-binding extractable commitment scheme from polynomial assumptions.

More generally, we expect our state-preserving extraction results to be useful for future applications, both in the context of zero-knowledge and beyond.

\section{Technical Overview}

In this section, we describe our techniques for proving our results on state-preserving extraction (\cref{thm:succinct-state-preserving,thm:state-preserving-wi}) and post-quantum zero knowledge (\cref{thm:szk}, \cref{thm:feige-shamir}, and \cref{thm:gk}). Finally, we discuss related work in \cref{sec:related-work}.

\subsection{Defining Expected Quantum Polynomial Time Simulation}
\label{sec:tech-eqpt}

In order to clearly present our results on zero knowledge, we begin with a detailed discussion of our model of expected quantum polynomial time simulation and how it relates to the \cite{FOCS:CCLY21} impossibility result.

\paragraph{Why is EQPT simulation hard to define?}
Recall from \cref{sec:intro-creqpt} that \cite{FOCS:CCLY21} rules out zero-knowledge simulators in a class of computations that we formalize as measured-runtime expected quantum polynomial-time ($\EQPTM$). An $\EQPTM$ computation takes as input a state $\ket{\psi}_{\RegA}$, initializes a large ancilla/workspace register $\ket{0}_{\RegW, \RegB}$ and state register $\ket{q_0}_{\RegQ}$ (where $\ket{q_0}$ denotes the initial state of a quantum Turing machine), then repeatedly applies some fixed transition unitary $U_{\delta}$ to $\RegA \tensor \RegW \tensor \RegB \tensor \RegQ$. After each application of $U_{\delta}$, $\RegQ$ is measured (applying some $(\Pi_f, \Id - \Pi_f)$) to determine if the computation is in the ``halt state'' $\ket{q_f}$; the computation halts if the outcome of this measurement is $1$. A computation is $\EQPTM$ if the expected number of steps before halting is polynomial for all inputs $\ket{\psi}_{\RegA}$.

Our $\EQPTM$ definition is based on the definition of a quantum Turing machine (QTM) given in the seminal work of Deutsch~\cite{Deutsch85} (though we use a halt state \cite{Ozawa98a} in place of Deutsch's halt qubit). Note that the operation of a QTM is unitary \emph{except} for the measurement of whether the machine has halted. The validity of this ``halting scheme'' was the subject of some debate in a sequence of later works~\cite{Myers97,Ozawa98a,LindenP98,Ozawa98b}.

While the particulars of this debate are not so important here, there was a clear message: the reversibility of a QTM implies that the runtime of any QTM computation is \emph{always} effectively measured, even if there is no explicit monitoring of the halt state. Intuitively, this is because a QTM that has halted must, when reversed, know when to ``un-halt''; this requires counting the number of computation steps since the machine halted.

It was observed by \cite{LindenP98} that this prevents ``useful interference'' between branches of a QTM computation with different runtimes. That is, each branch of the computation is entangled with a description of its runtime, which prevents the branches from interfering with one another. Because interference is crucial in the design of efficient quantum algorithms, this is considered a major drawback of the QTM model. The now-standard definitions of \emph{efficient} quantum computation \cite{BV97,BBBV97} deliberately avoid this problem by restricting quantum Turing machines to have a \emph{fixed} runtime; these QTMs are effectively uniform quantum circuit families.

This phenomenon underpins the \cite{FOCS:CCLY21} impossibility result. Both in the classical \cite{STOC:BarLin02} and quantum \cite{FOCS:CCLY21} settings, there do not exist \emph{strict} polynomial time black-box simulators for constant-round protocols. It follows that such a simulator must have a variable runtime. By the observation of \cite{LindenP98}, simulation branches with different runtimes do not interfere. \cite{FOCS:CCLY21} leverage this by designing an adversary which can \emph{detect} this absence of interference.

\paragraph{Can we avoid measuring the runtime?}
The above discussion suggests that the $\EQPTM$ model (i.e., quantum Turing machines in Deutsch's model~\cite{Deutsch85} with expected polynomial runtime) may not capture arbitrary efficient quantum computation. In particular, we ask whether it is possible to formalize a model in which the runtime is \emph{not} measured. Such a model could potentially avoid the \cite{FOCS:CCLY21} impossibility result.

Our solution is to formalize computations in which the runtime of an $\EQPTM$ subcomputation is left \emph{in superposition} and can later be \emph{uncomputed}. To describe our formalism in more detail, we first briefly discuss coherent computation.

\paragraph{Coherent computation.}
It is well known that any quantum operation $\Phi$ on a state $\ket{\psi}$ can be realized in three steps:
\begin{inparaenum}[(1)]
	\item prepare some ancilla qubits in a fixed state $\ket{0}$;
	\item apply a unitary operation $U_{\Phi}$ to both $\ket{\psi}$ and the ancilla;
	\item discard (trace out) the ancilla.
\end{inparaenum}
We refer to $U_{\Phi}$ as a \emph{unitary dilation} of $\Phi$. $U_{\Phi}$ is not uniquely determined by $\Phi$, but all such dilations are related by an isometry acting only on the ancilla system. 

Since an $\EQPTM$ computation is a quantum operation, it has a unitary dilation. In fact, we can choose a unitary dilation with a natural explicit form that we call a ``coherent implementation'', as shown in \cref{fig:coherent-qtm}.

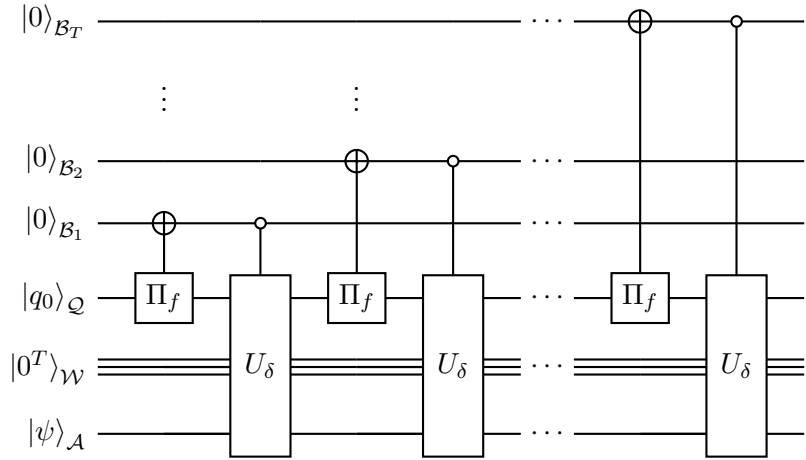
\begin{figure}[tbhp]
\centering
\begin{quantikz}
\lstick[wires=1]{$\ket{0}_{\RegB_T}$} & \qw & \qw & \qw & \qw & \qw\ \cdots\  & \targ{}\vqw{4} & \octrl{4} & \qw \\
& \vdots & & \vdots \\
\lstick[wires=1]{$\ket{0}_{\RegB_2}$} & \qw & \qw & \targ{}\vqw{2} & \octrl{2} & \qw\ \cdots\  & \qw & \qw & \qw \\
\lstick[wires=1]{$\ket{0}_{\RegB_1}$} & \targ{}\vqw{1} & \octrl{1} & \qw & \qw & \qw\ \cdots\  & \qw & \qw & \qw \\
\lstick[wires=1]{$\ket{\TStateInit}_{\RegState}$} & \gate{\Pi_f} & \gate[wires=3][0.8cm]{U_{\delta}} & \gate{\Pi_f} & \gate[wires=3][0.8cm]{U_{\delta}} & \qw \ \cdots\  & \gate{\Pi_f} & \gate[wires=3][0.8cm]{U_{\delta}} & \qw \\
 \lstick[wires=1]{$\ket{0^T}_{\RegW}$} \qwbundle[alternate]{} & \qwbundle[alternate]{} &  \qwbundle[alternate]{}  & \qwbundle[alternate]{} & \qwbundle[alternate]{} &  \qwbundle[alternate]{} \ \cdots\ & \qwbundle[alternate]{} & \qwbundle[alternate]{} & \qwbundle[alternate]{} \\
 \lstick{$\ket{\psi}_{\RegA}$} \qw & \qw & \qw & \qw & \qw & \qw\ \cdots\ & \qw & \qw & \qw &
\end{quantikz}
\caption{A coherent implementation $U$ (unitary dilation) of a quantum Turing machine with transition function $\delta$. The open circles indicate that the $i$th $U_{\delta}$ is applied when $\RegB_i$ contains $\ket{0}$. See \cref{sec:qtms} for more details.}\label{fig:coherent-qtm}
\end{figure}

Note that since we think of $T$ as being exponentially large, the unitary $U$ (corresponding to~\cref{fig:coherent-qtm}) is of exponential size. However, as long as the ancilla $\RegW \otimes \RegB$ (where $\RegB \eqdef \RegB_1 \otimes \cdots \otimes \RegB_T$) is initialized to zero and $\RegQ$ is initialized to $\ket{q_0}$, the effect of $U$ on $\RegA$ is identical to the original $\EQPTM$ computation. Indeed, the only difference from the original computation is that the runtime is written (in unary) on $\RegB$ and left in superposition. This means, in particular, that circuits making a single black-box query to a coherent implementation $U$ of an $\EQPTM$ computation (and that cannot otherwise access $\RegB$) can only perform $\EQPTM$ computations.

\paragraph{Our formalism: coherent-runtime EQPT.}
The advantage of moving to coherent implementations is that, unlike the original computation, $U$ has an \emph{inverse} $U^{\dagger}$. A coherent-runtime EQPT computation is allowed to invoke both $U$ and $U^{\dagger}$ in a restricted way, as we specify next.

\begin{definition}[Coherent-runtime EQPT (informal)]
	\label{def:cr-eqpt-informal}
	A computation on a register $\RegX$ is coherent-runtime EQPT ($\EQPTC$) if it can be implemented by a procedure that can perform the following operations any polynomial number of times:
	\begin{enumerate}[nolistsep,label=(\arabic*)]
		\item \label{cr-eqpt-polyckt} apply any polynomial-size quantum circuit to $\RegX$;
		\item \label{cr-eqpt-forward} initialize fresh ancillas $\RegW^{(i)} \otimes \RegB^{(i)}$ to zero and $\RegQ^{(i)}$ to $\ket{q_0^{(i)}}$ and apply a coherent implementation $U_i$ of an $\EQPTM$ computation to $\RegW^{(i)} \otimes \RegB^{(i)}\tensor \RegQ^{(i)}$ and a subregister of $\RegX$;
		\item \label{cr-eqpt-inverse} apply $U_i^{\dagger}$ to ancillas $\RegW^{(i)} \otimes \RegB^{(i)} \tensor \RegQ^{(i)}$ and a subregister of $\RegX$, then discard $(\RegW^{(i)}, \RegB^{(i)}, \RegQ^{(i)})$. For each $U_i$, this operation may occur at any time after performing \ref{cr-eqpt-forward} with respect to $U_i$.
	\end{enumerate}
	The output of the computation is the residual state on $\RegX$.
\end{definition}
Note that, because all unitary dilations are equivalent up to local isometry, the map computed by an $\EQPTC$ computation is independent of the particular implementation of $U_i$.

\paragraph{What is the runtime of an $\EQPTC$ computation?}
While procedures performing only \ref{cr-eqpt-polyckt} and \ref{cr-eqpt-forward} are clearly efficient (they are $\EQPTM$), the efficiency of performing $U_i^\dagger$ in \ref{cr-eqpt-inverse} is less immediate. We analyze this in two ways: 

\begin{itemize}
    \item We prove (\cref{lemma:truncation}) that any $\EQPTC$ computation has strict polynomial-time approximations (obtained by simultaneously truncating each $U_i$ and $U_i^\dagger$ to the same fixed runtime). This tells us that $\EQPTC$ algorithms do not implement ``inefficient'' computations. 
    \item We give a natural interpretation of ``expected runtime'' under which the expected runtime of $U^\dagger$ (as applied in \ref{cr-eqpt-inverse}) is equal to the expected runtime of $U$.
\end{itemize} 

Together, these give us a motivated definition of the expected runtime of an $\EQPTC$ computation.

\cref{lemma:truncation} is proved in \cref{sec:eqpt}. In this overview, we focus on the expected runtime interpretation. For simplicity, we consider the basic form of $\EQPTC$ computation depicted in \cref{fig:cr-eqpt-simple}; we assume in addition that $C_1 = C_3 = \Id$ and that the computation always halts in at most $T$ steps.

Let $U^{(t)}$ be the truncation of $U$ to just after the $t$-th controlled application of $U_{\delta}$. Observe that after applying $U$ followed by $C_2$ on input state $\ket{\phi}$, the state of the system can be described as a superposition over $t$ of the applications of the $U^{(t)}$:
\[ \ket{\Psi}
= \sum_{t=0}^{T} \alpha_t \ket{0^t 1^{T-t}}_{\RegB} \ket{\TStateFinal}_{\RegQ} \ket{\phi_t}_{\RegW,\RegX}
= \sum_{t=0}^{T} \alpha_t (\Id \otimes C_2) U^{(t)} \ket{0^t 1^{T-t}}_{\RegB} \ket{\TStateInit}_{\RegQ} \ket{0}_{\RegW} \ket{\phi}_{\RegX}
\]
for some states $\ket{\phi_t}$ and $\alpha_t \in \C$ where $|\alpha_t|^2$ is the probability that the $\EQPTM$ computation halts in $t$ steps. The latter equality holds because if the computation halts at step $t$, the effect of the last $T-t$ steps of $U$ is only to flip $\RegB_{t+1},\ldots,\RegB_T$ from $\ket{0}$ to $\ket{1}$ (and $C_2$ does not act on $\RegB$).

We emphasize that since $U$ is a coherent implementation of an $\EQPTM$ computation, we know that
\[ \sum_t |\alpha_t|^2 \cdot t = \poly(\secp),
\]
as the left-hand side is equal to the expected runtime of $U$ as an $\EQPTM$ computation. 

Now, for any state $\ket{\psi}$ on $\RegW \otimes \RegX$, we can also express an application of $U^\dagger$ to $\RegB \tensor \RegQ \tensor \RegW \tensor \RegX$ entirely in terms of the unitary $U^{(t)}$, where $t$ is the contents of the $\RegB$ register. Specifically, we have
\[ U^{\dagger} \ket{0^t 1^{T-t}}_{\RegB} \ket{\TStateFinal}_{\RegQ} \ket{\psi} = (U^{(t)})^{\dagger} \ket{0^T}_{\RegB} \ket{\TStateFinal}_{\RegQ} \ket{\psi} \]
because the effect of the first $T-t$ steps of $U^{\dagger}$ on this state is only to flip $\RegB_{t+1},\ldots,\RegB_T$ from $\ket{1}$ to $\ket{0}$. As a result, the final state of the system (after the entire $\EQPTC$ computation) is
\[ U^{\dagger} \ket{\Psi} = \sum_{t=0}^{T} \alpha_t U^{\dagger} (\Id \otimes C_2) U^{(t)} \ket{0^t 1^{T-t}}_{\RegB} \ket{\phi}
= \sum_{t=0}^{T} \alpha_t (U^{(t)})^{\dagger} (\Id \otimes C_2) U^{(t)} \ket{0^T}_{\RegB} \ket{\phi}. \]
We can interpret this to mean that within the ``branch'' of the superposition where $U$ ran in time $t$, the running time of $U^{\dagger}$ is also $t$, even if an arbitrary computation $C_2$ has been applied to $\RegX$ in between the applications of $U$ and $U^\dagger$. This gives an intuitive explanation for how $\EQPTC$ computations are efficient: they simply compute an $(\alpha_t)_t$-superposition over branches in which $U$ and $U^\dagger$ together ran for $2t$ steps, such that the expectation $\sum_t |\alpha_t|^2 \cdot t$ is polynomial! Curiously, the \cite{LindenP98} reversibility issue indicates that such a computation cannot be implemented by an $\EQPTM$ quantum Turing machine, which is what necessitates our new $\EQPTC$ definition. 

With all of this as motivation, we define the expected running time of an $\EQPTC$ computation of the form $(U, C_1 = \Id, C_2, C_3 = \Id)$ to be the appropriate linear combination of the branch runtimes, which is
\[ \sum_t |\alpha_t|^2 \cdot (2t + \mathsf{time}(C_2)) =  2\cdot \mathsf{time}(U) + \mathsf{time}(C_2),
\]
where $\mathsf{time}(U)$ is the expected running time of $U$ as an $\EQPTM$ computation and $\mathsf{time}(C_2)$ is the (strict) running time of $C_2$. \cref{lemma:truncation} (whose proof makes use of this analysis) provides additional justification for this definition.

\paragraph{Extension to multiple $U_i$.} Everything we have discussed so far extends to the general case of \cref{def:cr-eqpt-informal}. However, we emphasize that the above analysis crucially relies on the ancilla being well-formed. This is the reason that $\EQPTC$ algorithms have restricted access to $U_i,U_i^{\dagger}$: removing any of these restrictions could lead to applying these operations on malformed ancillas. Indeed, one can show that allowing an algorithm to apply $U_i,U_i^{\dagger},U_i$ to the same ancilla register would enable it to perform exponential-time computations.

\vspace{10pt}

Having established our computational model for simulation/extraction, we now give a detailed overview of our simulation and extraction techniques.

\subsection{Post-Quantum ZK for \cite{FOCS:GolMicWig86} and~\cite{STOC:FeiSha90} from Guaranteed Extraction}

The central idea behind our proofs of post-quantum ZK for the \cite{FOCS:GolMicWig86} GNI protocol (\cref{thm:szk}) and a variant of the~\cite{STOC:FeiSha90} protocol for $\NP$ (\cref{thm:feige-shamir}) is \emph{state-preserving extraction} (\cref{def:state-preserving-extraction}). Given a state-preserving extractor of the appropriate ``one-out-of-two graph isomorphism'' subroutine, proving the post-quantum ZK for the \cite{FOCS:GolMicWig86} GNI protocol (\cref{thm:szk}) follows easily, as simulating a cheating verifier immediately reduces to performing a state-preserving extraction of the verifier's (uniquely determined) bit $b$ such that $H\simeq G_b$. Proving post-quantum ZK for the~\cite{STOC:FeiSha90} protocol (\cref{thm:feige-shamir}) is more complicated because the Feige--Shamir protocol is a concurrent composition of two different protocols; we refer the reader to \cref{sec:feige-shamir} for details on its analysis. 

In this subsection, we show that state-preserving extraction reduces to a related task that we call \emph{guaranteed extraction}; achieving the latter will be the focus of \cref{sec:tech-overview-hpe}.

Consider a $3$-message\footnote{Throughout our discussion of proofs of knowledge, we focus on the case of $3$- and $4$-message protocols. We sometimes ignore the first verifier message $\vk$ in a 4-message protocol for notational convenience.} public coin classical proof of knowledge $(P_{\Sigma}, V_{\Sigma})$ satisfying \emph{special soundness}:\footnote{This particular special soundness assumption is also for convenience; we later describe generalizations of special soundness for which we have results.} for any prover first message $a$ and any \emph{pair} of accepting transcripts $(a, r, z), (a, r', z')$ on different challenges $r \neq r'$, it is possible to extract a witness from $(a, r, z, r', z')$. For any such protocol, in the classical setting, it is possible to extract a witness from a cheating prover $P^*$ as follows:

\begin{itemize}
    \item Given a cheating prover $P^*$, the extractor first generates a single transcript $(a,r,z)$ by running $P^*$ to obtain $a$, and then running it on a random $r$ to get $z$. If the transcript is rejecting, the extractor gives up.
    \item If the transcript is accepting, the extractor rewinds $P^*$ to the point after $a$ was sent, and then repeatedly sends i.i.d. challenges $r_1, r_2, \hdots$ until $P^*$ produces \emph{another} accepting transcript. 
\end{itemize}
As long as the prover has significantly greater than $2^{-\secp}$ probability of convincing the verifier, the second accepting transcript $(a, r', z')$ produced will satisfy $r \neq r'$ with all but negligible probability, and thus a witness can be computed. In other words, this extractor \emph{guarantees} (with all but negligible probability) that a witness is extracted conditioned on an initial accepting execution. Moreover, for \emph{any} efficient $P^*$, the expected runtime of this procedure is $\poly(\secp)$, since if $P^*$ (with some fixed random coins) is convincing with probability $p$, the expected number of rewinds in this procedure is $\frac 1 p$ and thus the overall expected number of rewinds is $p \cdot \frac 1 p = 1$.

In the quantum setting, one might hope for a similar ``guaranteed'' extractor, but prior works~\cite{EC:Unruh12,EC:Unruh16,FOCS:CMSZ21} fail to achieve this. Indeed, \cite[Page 32]{EC:Unruh12} explicitly asks whether something of this nature is possible. 

Our first idea is to abstractly define a quantum analogue of this ``guaranteed'' extraction property and show that under certain conditions, it generically implies state-preserving extraction. Since the classical problem can only be solved in \emph{expected} polynomial time, there is again an ambiguity in what the quantum efficiency notion should be. However, it turns out that there is no \cite{FOCS:CCLY21}-type impossibility result for the problem of guaranteed extraction, so we demand the stronger $\EQPTM$ extraction efficiency notion.

\begin{definition}\label{def:high-probability-extraction}
$(P_{\Sigma}, V_{\Sigma})$ is a post-quantum proof of knowledge with \emph{guaranteed extraction} if it has an extractor $\Extract^{P^*}$ of the following form.

\begin{itemize}[noitemsep]
    \item $\Extract^{P^*}$ first runs the cheating prover $P^*$ to generate a (classical) first message $a$.
    \item $\Extract^{P^*}$ runs $P^*$ coherently on the superposition $\sum_{r \in R} \ket{r}$ of all challenges to obtain a superposition $\sum_{r,z} \alpha_{r, z} \ket{r, z}$ over challenge-response pairs.\footnote{In general, the response $z$ will be entangled with the prover's state; here we suppress this dependence.} 
    \item $\Extract^{P^*}$ then computes (in superposition) the verifier's decision $V(x, a, r, z)$ and measures it. If the measurement outcome is $0$, the extractor gives up. 
    \item If the measurement outcome is $1$, run some quantum procedure $\FindWitness^{P^*}$ that outputs a string $w$. 
    \end{itemize}

We require that the following two properties hold. 

\begin{itemize}[noitemsep]
    \item \textbf{Correctness (guaranteed extraction):} The probability that the initial measurement returns $1$ but the output witness $w$ is invalid is $\negl(\secp)$.
    \item \textbf{Efficiency:} For any QPT $P^*$, the procedure $\Extract^{P^*}$ is in $\EQPTM$. 
\end{itemize}
\end{definition}
\noindent We claim that under suitable conditions, this kind of guaranteed extraction \emph{generically} implies state-preserving extraction, where the extractor will be $\EQPTC$ rather than $\EQPTM$. We describe the simplest example of these conditions: when the $\mathsf{NP}$ language itself is in $\mathsf{UP}$ (i.e. witnesses are unique).

\begin{lemma}[see \cref{lemma:state-preserving-high-probability}]\label{lemma:tech-overview-state-preserving-high-probability}
  If $(P_{\Sigma}, V_{\Sigma})$ is a post-quantum proof of knowledge with guaranteed extraction for a language with unique witnesses, then $(P_{\Sigma}, V_{\Sigma})$ is a state-preserving proof of knowledge with $\EQPTC$ extraction. 
\end{lemma}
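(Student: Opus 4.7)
The plan is to construct the state-preserving extractor $\mathsf{Ext}$ as a coherent-runtime EQPT procedure. On input $(P^*, \ket{\psi})$, $\mathsf{Ext}$ will: (i) perform an honest one-time interaction with $P^*$ on the prover register $\RegP$ to produce a classical transcript $\tau$ and residual state $\ket{\psi_\tau}$; (ii) if $\tau$ is rejecting, output $(\tau, \ket{\psi_\tau}, \bot)$; (iii) otherwise, apply the coherent unitary implementation $U$ of the guaranteed extractor $\Extract$ of \cref{def:high-probability-extraction} to $\RegP$ together with fresh ancillas $\RegE$, placing a candidate witness into an output register $\RegO$; (iv) controlled on the extractor's internal accept bit, coherently copy $\RegO$ into a fresh external register $\RegW$; (v) apply $U^\dagger$ on the ancillas; (vi) measure $\RegW$ to obtain $w$ and output $(\tau, \ket{\psi_\tau'}, w)$, where $\ket{\psi_\tau'}$ is the residual prover state. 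This procedure fits the template of \cref{fig:cr-eqpt-simple}, so its $\EQPTC$ efficiency is immediate from \cref{def:cr-eqpt-informal}.

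The critical use of the unique-witness hypothesis occurs in step (iv). By guaranteed extraction correctness together with uniqueness, the post-$U$ state on the accept branches has an approximate product structure in which the output register $\RegO$ factors as $\ket{w^*}_{\RegO}$, where $w^*$ is the unique valid witness and the deviation from the product form is bounded by the guaranteed extraction error. Consequently, the controlled copy in step (iv) effectively writes the fixed classical string $\ket{w^*}$ into $\RegW$ on accept branches (and $\ket{0}$ on reject branches) without creating any new quantum entanglement within the accept subspace itself. Correctness of the extracted witness is then immediate: whenever $\tau$ accepts in step (i), $\RegW$ contains $\ket{w^*}$ with overwhelming probability, so measuring it yields a valid witness.

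The main obstacle is proving state preservation: that the extract/copy/uncompute cycle in steps (iii)--(v) preserves the state $\ket{\psi_\tau}$ on $\RegP$ up to negligible error. Tracing out $\RegW$ effectively applies the binary measurement $\{U^\dagger \Pi_{\mathrm{bit}=1} U,\; I - U^\dagger \Pi_{\mathrm{bit}=1} U\}$ to $\RegP$, which can in principle disturb the state. To bound the disturbance, the plan is to exploit the specific form of $U$ from \cref{def:high-probability-extraction}: the coherent-acceptance projector is constructed by averaging the honest verifier's acceptance predicate over a uniform random challenge, and the post-honest-accept state $\ket{\psi_\tau}$ is itself obtained by projecting onto a specific challenge-response pair. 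A Jordan-subspace analysis of the pair (coherent-accept projector, honest-accept projector), combined with the uniqueness-enabled disentanglement of $\RegW$, should show that the two binary measurements are approximately compatible, so that the extract/copy/uncompute cycle leaves $\ket{\psi_\tau}$ essentially unchanged modulo the external $\RegW$ register. The resulting disturbance can then be bounded by a polynomial in the guaranteed extraction error, giving the required state-preserving property and completing the reduction.
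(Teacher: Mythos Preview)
Your proposal has the order of operations backwards, and this breaks both correctness and state preservation. You first perform the full honest interaction---measuring the challenge $r$ and response $z$ to obtain a classical transcript $\tau$---and only then apply the coherent implementation $U$ of the guaranteed extractor to the residual state $\ket{\psi_\tau}$. But the guaranteed extractor of \cref{def:high-probability-extraction} is specified to act on the prover state immediately after the first message $a$: it prepares a uniform superposition over challenges, measures its own coherent accept bit, and only conditioned on that bit being $1$ is $\FindWitness$ promised to output a valid witness. The post-transcript state $\ket{\psi_\tau}$ is not of this form (it has already been collapsed to a specific $(r,z)$), so there is no correctness guarantee when you feed it to $U$; your extractor may simply fail to produce a witness even when $\tau$ is accepting.

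The paper's construction reverses the order: it first measures only the coherent accept bit (before fixing $r$ or $z$), then on the accepting branch applies $U$ (a coherent implementation of $\FindWitness$), measures the witness register, applies $U^\dagger$, and only \emph{afterward} measures $(r,z)$ to produce the transcript. With this ordering, state preservation is essentially immediate: on the accepting branch the output register of $U$ holds the unique valid witness $\ket{w^*}$ with overwhelming probability, so measuring it causes statistically negligible disturbance and $U^\dagger$ restores the state. No Jordan-subspace analysis is needed. Your final paragraph is effectively trying to show that the extractor's internal accept projector and the specific post-honest-accept state $\ket{\psi_\tau}$ are compatible---but once you have already collapsed to a fixed $(r,z)$, these are genuinely different events, and there is no reason the proposed Jordan analysis would establish the alignment you need.
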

\noindent \cref{lemma:tech-overview-state-preserving-high-probability} can be extended to higher generality. For example, informally:

\begin{enumerate}
    \item We can also extract ``partial witnesses'' that are uniquely determined by the instance $x$.
    \item We can extract undetectably when the first message $a$ ``binds'' the prover to a single witness in the sense that the guaranteed extractor will only output this one witness (even if many others exist). 
    \item This can also be extended to certain protocols whose first messages are informally ``collapse-binding'' \cite{EC:Unruh16} to the witness.
\end{enumerate} 

These generalizations are formalized in \cref{sec:state-preserving} using the notion of a ``witness-binding protocol'' (\cref{def:witness-binding}). In this overview, we give a proof for the ``unique witness'' setting.

\begin{proof}[Proof sketch]

Let $\Extract^{P^*}$ be a post-quantum guaranteed extractor with associated subroutine $\FindWitness^{P^*}$. We will present an $\EQPTC$ extractor $\overline{\Extract}^{P^*}$ that has the form of an $\EQPTC$ computation (see~\cref{fig:cr-eqpt-simple}) where the unitary $U$ is a coherent implementation of $\FindWitness^{P^*}$.

\begin{remark} This is an oversimplification of our real state-preserving extractor. In particular, $\Extract^{P^*}$ as described in this overview does not fit the $\EQPTC$ model because $\FindWitness^{P^*}$ is not necessarily an $\EQPTM$ computation --- its running time is only expected polynomial when viewed as a subroutine of $\Extract^{P^*}$, which runs $\FindWitness^{P^*}$ with some probability (which may be negligible) and moreover, only runs it on inputs consistent with the verifier decision $V(x,a,r,z) = 1$. In \cref{sec:state-preserving}, we formally demonstrate that our state-preserving extractor is $\EQPTC$ by showing that it can be written in the form of~\cref{fig:cr-eqpt-simple} where the unitary $U$ is a coherent implementation of the $\EQPTM$ procedure
$\Extract^{P^*}$. 
\end{remark}

\noindent Our (simplified) $\EQPTC$ extractor $\overline{\Extract}^{P^*}$ is defined as follows. 

\begin{itemize}
    \item Given $P^*$, generate a first message $a$ and superposition $\sum_{r, z} \alpha_{r, z} \ket{r, z}$ as in $\Extract^{P^*}$. 
    \item Compute the verifier's decision bit $V(x, a, r, z)$ in superposition and then measure it. If the measurement outcome is $0$, measure $r, z$ and terminate, outputting $(a, r, z, w=\bot)$ along with the current prover state.
    \item If the measurement outcome is $1$, let $\ket{\psi}_{\RegH}$ denote the current prover state. For simplicity, assume that $\ket{\psi}_{\RegH}$ includes the superposition over $(r,z)$ and space to write the extracted witness. The next steps are: 
    
    \begin{itemize}
        \item Run $U$ on input $\ket{\psi}_{\RegH} \tensor \ket{0}_{\RegB, \RegW}$.
        \item Measure the sub-register of $\RegH$ containing the witness $w$. 
        \item Run $U^\dagger$.
        \item Measure the sub-register of $\RegH$ containing the current transcript $r, z$.
        \item Return $(a, r, z, w)$ and the residual prover state (i.e., the rest of $\RegH$). 
    \end{itemize}
\end{itemize}

Extraction correctness follows from the correctness of $\FindWitness^{P^*}$. Moreover, one can see that $\overline{\Extract}^{P^*}$ is state-preserving by considering two cases:

\begin{itemize}
    \item \textbf{Case 1:} The initial measurement returns $0$. In this case, the transcript $(r, z)$ is immediately measured, and the resulting (sub-normalized) state exactly matches the component of the post-interaction $P^*$ view corresponding to when the verifier rejects. 
    \item \textbf{Case 2:} The initial measurement returns $1$. In this case, the procedure $\FindWitness^{P^*}$ would output a valid witness with probability $1-\negl$, so the output register of $U (\ket{\psi}_{\RegA} \tensor \ket{0}_{\RegB, \RegW})$ contains a valid witness with probability $1-\negl$. Since we assumed that the language $L$ is in $\mathsf{UP}$, this witness register is actually \emph{deterministic}, so  measuring it is computationally (even statistically!) undetectable, and hence after applying $U^\dagger$ the resulting state $\ket{\psi'}$ is computationally indistinguishable from $\ket{\psi}$. Thus, the output of the extractor the measured witness $w$ along with a view that is computationally indistinguishable from the view of $P^*$ corresponding to when the verifier accepts.
\end{itemize}
This completes the proof sketch. \qedhere
 
\end{proof}

\paragraph{How do we apply \cref{lemma:tech-overview-state-preserving-high-probability}?}  We now describe how to instantiate $\Sigma$-protocols so that the reduction in \cref{lemma:tech-overview-state-preserving-high-probability} applies (see \cref{sec:state-preserving-examples}).

First, we note that the \emph{un-repeated} variants of standard proofs of knowledge \cite{FOCS:GolMicWig86,Blum86} are ``witness-binding'' in the informal sense of the generalization (2); an extractor run on such protocols will only output a witness consistent with the commitment string $a$. However, since the un-repeated protocols only have constant (or worse) soundness error, there is no guaranteed extraction procedure for them (even in the classical setting).

In order to obtain negligible soundness error, these protocols are typically repeated in parallel; in this case, we \emph{do} show guaranteed extraction procedures, but the protocols \emph{lose} the witness-binding property (2). This is because each ``slot'' of the parallel repetition may be consistent with a different witness, and the extractor has no clear way of outputting a canonical one. In this case, measuring the witness potentially disturbs the prover's state by collapsing it to be consistent with the measured witness, which would not happen in the honest execution.

We resolve this issue using \emph{commit-and-prove}. Given a generic $\Sigma$-protocol for which we have a guaranteed extractor, we consider a modified protocol in which the prover sends a (collapsing or statistically binding) commitment $\com = \Com(w)$ to its $\mathsf{NP}$-witness along with a $\Sigma$-protocol proof of knowledge of an opening of $\com$ to a valid $\mathsf{NP}$-witness. When the extractor $\overline{\Extract}^{P^*}$ of \cref{lemma:tech-overview-state-preserving-high-probability} is applied to this protocol composition, the procedure $\FindWitness^{P^*}$ (which is run coherently as $U$) actually obtains \emph{both} an $\NP$ witness $w$ \emph{and} an opening of $\mathsf{com}$ to $w$. Therefore, the collapsing property of $\mathsf{Com}$ says that $w$ can be measured undetectably. In other words, the commit-and-prove compiler enforces a computational uniqueness property sufficient for \cref{lemma:tech-overview-state-preserving-high-probability} to apply. It also turns out that the (original, unmodified) \cite{FOCS:GolMicWig86} graph-nonisomorphism protocol can be viewed as using this commit-and-prove paradigm,\footnote{The verifier sends an instance-dependent commitment \cite{BelMicOst90,JC:ItoOhtShi97,C:MicVad03} of a bit to the prover (which is perfectly binding in the proof of ZK) and demonstrates knowledge of the bit and its opening.} which is one way to understand the proof of \cref{thm:szk}.

Finally, we remark that this commit-and-prove compiler is the cause of the super-polynomial assumptions in \cref{thm:state-preserving-wi,thm:feige-shamir}. This is because in order to show that a commit-and-prove protocol remains witness-indistinguishable, it must be argued that the proof of knowledge does not compromise the hiding of $\mathsf{Com}$, which we only know how to argue by simulating the proof of knowledge in superpolynomial time (and assuming that $\mathsf{Com}$ is superpolynomially secure). This issue does not arise when $\mathsf{Com}$ is statistically hiding and the $\Sigma$-protocol is statistically witness-indistinguishable.

\subsection{Achieving Guaranteed Extraction}
\label{sec:tech-overview-hpe}

So far, we have reduced from state-preserving extraction to the problem of guaranteed extraction. We now describe how we achieve guaranteed extraction for a wide class of $\Sigma$-protocols. Informally, we require that the protocol satisfies two important properties in order to perform guaranteed extraction:

\begin{itemize}
    \item \textbf{Collapsing:} Prover responses can be measured undetectably provided that they are valid.
    \item \textbf{$k$-special soundness:} It is possible to obtain a witness given $k$ accepting protocol transcripts $(a, r_1, z_1, \hdots, r_k, z_k)$ with distinct $r_i$ (for the same first prover message $a$).
\end{itemize}
Both of these restrictions can be relaxed substantially\footnote{We highlight that the PoK subroutine in the \cite{FOCS:GolMicWig86} graph non-isomorphism protocol is \emph{not} collapsing; it is only collapsing onto its responses of $0$ challenge bits; however, it turns out that this property is still sufficient to obtain guaranteed extraction for the subroutine (see \cref{sec:examples,sec:obtaining-guaranteed-extraction}).} (see \cref{subsec:protocol-prelims,sec:gss,sec:obtaining-guaranteed-extraction} for more details), but we focus on this case for the technical overview. 

\begin{theorem}[See \cref{thm:high-probability-extraction}]\label{thm:tech-overview-high-probability}
   Any public-coin interactive argument satisfying collapsing and $k$-special soundness is a post-quantum proof of knowledge with guaranteed extraction (in $\EQPTM$). 
\end{theorem}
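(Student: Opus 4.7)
The plan is to implement $\FindWitness^{P^*}$ as an iterative subroutine that collects $k$ accepting transcripts $(a, r_1, z_1), \ldots, (a, r_k, z_k)$ sharing the same first message $a$ and with pairwise distinct challenges, then applies the $k$-special-soundness extractor to recover a witness $w$. The coherent accept projector $\Pi_{\mathrm{acc}}$ is defined in the natural way on the prover's post-$a$ state: append a uniform superposition on a fresh challenge register $\RegChal$, run $P^*$ coherently to produce a response register, compute the verifier's decision in an ancilla, and project onto decision $=1$. The extractor $\Extract^{P^*}$ performs $\Pi_{\mathrm{acc}}$ once and measures the outcome; on reject, it outputs $\bot$, matching the syntax of \cref{def:high-probability-extraction}.

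On accept, $\FindWitness^{P^*}$ proceeds in $k-1$ additional rounds. In each round it first measures $(r_i, z_i)$ from the current challenge/response registers, then uncomputes the verifier decision and the coherent response computation, discards (and re-initializes) $\RegChal$, and finally performs a CMSZ-style repair followed by a fresh application of $\Pi_{\mathrm{acc}}$. The repair will be a Marriott--Watrous / Jordan-decomposition rewinding, which given any state with success probability $p$ under $\Pi_{\mathrm{acc}}$ produces in expected time $O(1/p)$ another state whose success probability under a fresh $\Pi_{\mathrm{acc}}$ is $\geq p - \negl$. The collapsing property is what justifies treating the intermediate measurement of $(r_i, z_i)$ as computationally invisible: after conditioning on $\Pi_{\mathrm{acc}} = 1$, measuring the response register is computationally indistinguishable from leaving it coherent, so the repair procedure's analysis for the unmeasured state carries over up to negligible error. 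Once $k$ accepting transcripts are in hand, distinctness of the $r_i$ holds except with negligible probability since the challenge space has size $\geq 2^{\secp}$, and $k$-special soundness yields $w$.

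Efficiency follows from the usual Markov-style accounting. The body of $\FindWitness^{P^*}$ is only entered with probability $p$ (the initial acceptance probability of $\Pi_{\mathrm{acc}}$), and conditioned on entry, each of the $k-1$ rounds runs in expected time $O(1/p)$ by the CMSZ repair guarantee. Hence the overall expected runtime is $p \cdot O(k/p) + (1-p)\cdot O(1) = \poly(\secp)$, independent of $p$. Crucially, because the repair loop is a measured-runtime process (an inner loop that halts when the verifier decision register measures to $1$), the extractor fits in $\EQPTM$ rather than merely $\EQPTC$, as required by \cref{def:high-probability-extraction}.

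The main obstacle is making the above combination of collapsing and repair fully rigorous and robust. CMSZ-style repair is designed around a single projective measurement $\Pi_{\mathrm{acc}}$; here, each round inserts a classical measurement of $(r_i, z_i)$ which in the worst case can perturb the state. Collapsing says this perturbation is computationally undetectable, but to convert this into a guarantee on the per-round success probability one must show that the repair analysis is stable under computationally indistinguishable substitutions of the input state, and that the degradation of the repair success probability is only negligible per round (so that the cumulative error over $k$ rounds stays negligible). A second subtle point is handling states with negligible $p$: truncation parameters for the repair subroutine and for re-sampling on challenge collisions have to be chosen so that the unconditional expected runtime remains polynomial, which is what enforces $\EQPTM$ rather than $\EQPTC$ here, and which differs from the more flexible analysis available for the state-preserving extractor of \cref{lemma:tech-overview-state-preserving-high-probability}.
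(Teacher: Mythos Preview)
Your proposal has the right high-level shape and is close to what the paper calls its ``new template'' (Section~2.3.4): condition on $\Pi_{\mathrm{acc}}=1$, measure the challenge from the resulting state so that the transcript is guaranteed accepting, record $(r_i,z_i)$, then repair. However, two of the steps you flag as ``subtle'' are in fact genuine failures of the argument as written, and the paper devotes most of Section~2.3 to fixing them.

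First, the efficiency accounting does not go through. After projecting onto $\Pi_{\mathrm{acc}}$ and then measuring the challenge register, the disturbance is \emph{not} a binary measurement; it is a many-outcome measurement of $\RegChal$. The paper shows (Section~2.3.5, via the pseudoinverse lemma) that a CMSZ-style repair after this disturbance requires an expected $\Theta(1/p)$ calls to the ``is the success probability above threshold?'' measurement $\sG$, not $O(1)$ as in the original CMSZ setting. Since each call to $\sG$ itself costs at least $\mathrm{poly}(1/p)$ with Marriott--Watrous estimation, the per-round cost is not $O(1/p)$, and the $p\cdot O(k/p)$ cancellation you invoke does not occur. The paper only recovers the desired bound by replacing both the estimation and the transformation subroutines with quadratically faster singular-value algorithms (\`a la \cite{NagajWZ11,STOC:GSLW19}), which together buy exactly the missing factor of $1/p$.

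Second, and more fundamentally, your use of collapsing to control the runtime is not sound. You correctly note that collapsing only gives computational indistinguishability and that one must show the repair analysis is ``stable under computationally indistinguishable substitutions.'' But expected running time is \emph{not} an efficiently testable property of a state: a computationally undetectable measurement of $z$ can shift Jordan eigenvalues by a negligible amount, producing terms of the form $p/(p-\negl)$ in the runtime, which are unbounded when $p$ is itself negligible. The paper states this explicitly (Section~2.3.7): ``This is not just an issue with our proof strategy: the version of the above extractor where $z$ is measured does not run in expected polynomial time.'' The fix is structural, not analytical: the paper reorders the loop so that a variable-length $\Estimate^{\sD,\sG}$ determines a runtime bound \emph{before} $z$ is measured, and then runs a fixed-length $\Transform$ using that bound afterward. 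This moves the negligible error incurred by collapsing from the runtime into the (efficiently testable) correctness guarantee.
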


We consider \cref{thm:tech-overview-high-probability} to be an interesting result in its own right and expect it to be useful in future work. We now describe our proof of~\cref{thm:tech-overview-high-probability} over the course of several steps:
\begin{itemize}
    \item We begin by describing an abstract template that generalizes the~\cite{FOCS:CMSZ21} extraction procedure in~\cref{sec:overview-cmsz-template}. In this template, the extractor repeatedly (1) queries the adversary on i.i.d. random challenges and then (2) applies a ``repair procedure'' to restore the adversary's success probability.
    
    \item In~\cref{sec:tech-overview-first-attempt}, we describe a natural ``first attempt'' at guaranteed extraction based on the~\cite{FOCS:CMSZ21} template. 
    
    \item We then observe in~\cref{sec:tech-overview-not-eqpt} that the entire template is unlikely to achieve guaranteed extraction in expected polynomial time. Perhaps surprisingly (and unlike the classical setting), querying the adversary on i.i.d. challenges appears \emph{too slow} for this extraction task.

    \item In~\cref{sec:tech-overview-new-template}, we introduce a new extraction template in which the adversary is \emph{entangled} with a superposition of challenges, and the challenge is only measured once the adversary is guaranteed to give an accepting response. 
    
    \item While this new template is a promising idea, we are still far from achieving guaranteed extraction. For the rest of the overview (\cref{sec:tech-overview-no-speedup,sec:tech-overview-speedup,subsubsec:gk-issue}), we outline several technical challenges in instantiating this approach, eventually leading to our final extraction procedure and analysis.
    
    \end{itemize}

\subsubsection{An Abstract \cite{FOCS:CMSZ21} Extraction Template}\label{sec:overview-cmsz-template}
\cite{FOCS:CMSZ21} recently showed that protocols satisfying collapsing and $k$-special soundness are post-quantum proofs of knowledge. Unlike our setting of guaranteed extraction, the \cite{FOCS:CMSZ21} extractor $\Extract^{P^*}(x, \gamma)$ is given \emph{as advice} an error parameter $\gamma$ and and extracts from cheating provers $P^*$ (that may have some initial quantum state) that are convincing with probability $\gamma^* \geq \gamma$. The extractor's success probability is roughly $\frac \gamma 2$. 

At a high level, our abstract template makes use of two core subroutines that we call $\Estimate$ and $\Transform$. We describe the correctness properties required of $\Estimate$ and $\Transform$ below, and also describe their particular instantiations in \cite{FOCS:CMSZ21}.

\paragraph{Jordan's lemma and singular vector algorithms.} Let $\BProj{\MeasA},\BProj{\MeasB}$ be projectors on a Hilbert space $\RegH$ with corresponding binary projective measurements $\MeasA = \BMeas{\BProj{\MeasA}}$ and $\MeasB = \BMeas{\BProj{\MeasB}}$. Recall that Jordan's lemma~\cite{Jordan75} states that $\RegH$ can be decomposed as a direct sum $\RegH = \bigoplus \RegS_j$ of two-dimensional invariant subspaces $\RegS_j$, where in each $\RegS_j$, the projectors $\BProj{\MeasA}$ and $\BProj{\MeasB}$ act as rank-one projectors $\JorKetBraA{j}{1}$ and $\JorKetBraB{j}{1}$.\footnote{There will also be one-dimensional subspaces, which we ignore in this overview since they can be viewed as ``degenerate'' two-dimensional subspaces.} The vectors $\JorKetA{j}{1}$ and $\JorKetB{j}{1}$ are also left and right singular vectors of $\BProj{\MeasA}\BProj{\MeasB}$ with singular value $\sqrt{p_j}$, where $p_j \eqdef \abs{\JorBraKetAB{j}{1}}^2$. This decomposition allows us to  define on $\RegH$ the projective measurement $\Jor = (\Pi^{\Jor}_j)$ onto the Jordan subspaces $\RegS_j$ (i.e., $\image(\Pi^{\Jor}_j) = \RegS_j$). For an arbitrary state $\ket{\psi}$, we define the \textdef{Jordan spectrum} of $\ket{\psi}$ to be the distribution of $p_j$ induced by $\Jor$.  

We will make use of procedures $\Estimate, \Transform$ satisfying the following properties. 

\begin{itemize}
    \item The Jordan subspaces $\RegS_j$ are invariant\footnote{We allow for decoherence, so we ask that every element of $\RegS_j$ is mapped to a \emph{mixed state} where every component is in $\RegS_j$.} under $\Estimate^{\MeasA,\MeasB}$ and $\Transform^{\MeasA,\MeasB}$. Equivalently, $\Estimate^{\MeasA,\MeasB}$ and $\Transform^{\MeasA,\MeasB}$ should \emph{commute} with $\Jor$.  This property is important for arguing about the output behavior of $\Estimate$ and $\Transform$ on arbitrary states.
    \item $\Estimate^{\MeasA,\MeasB}$: on input $\ket{S_j} \in \RegS_j$, output $p \approx p_j$; the residual state remains in $\RegS_j$.
    
    \item $\Transform^{\MeasA,\MeasB}$ maps each $\JorKetA{j}{1}$ to $\JorKetB{j}{1}$. We have no requirements on any other state in $\RegS_j$ except that it remains in $\RegS_j$.
    
\end{itemize}

    \cite{FOCS:CMSZ21} implement a version of $\Estimate^{\MeasA,\MeasB}$ (following~\cite{MarriottW05}) with $\eps$ accuracy by alternating $\MeasA$ and $\MeasB$ for $t = \poly(\secp)/\eps^2$ steps. The output is $p = d/(t-1)$ where $d$ is the number of occurrences of $b_i = b_{i+1}$ among the outcomes $b_1,b_2,\dots,b_t$. With probability $1-2^\secp$, we have $\abs{p - p_j} \leq \eps$.  They (implicitly) implement $\Transform^{\MeasA,\MeasB}$ by alternating measurements $\MeasA$ and $\MeasB$ back and forth until $\MeasB \rightarrow 1$, with an expected running time of $O(1/p_j)$ on $\RegS_j$.

\paragraph{The~\cite{FOCS:CMSZ21} Extractor.} We now use the abstract procedures $(\Estimate, \Transform)$ to describe (a slightly simplified version of) the~\cite{FOCS:CMSZ21} extractor. Let $\ket{+_R}_{\RegR}$ denote the uniform superposition over challenges on register $\RegR$ and let $\RegH$ denote the register containing the prover's state. Let $\sV_r = \BMeas{\BProj{V, r}}$ denote a binary projective measurement on $\RegH$ that measures whether $P^*$ returns a valid response on $r$. 

The extraction technique makes crucial use of two measurements: the first is $\sU = \BMeas{\BProj{\sU}}$, where $\BProj{\sU} \eqdef \Id_{\RegH} \otimes \ketbra{+_R}_{\RegR}$ is the projective measurement of whether the challenge register $\RegR$ is \emph{uniform}. The second is $\sC = \BMeas{\BProj{\sC}}$, where $\BProj{\sC} \eqdef (\BProj{V,r_i})_{\RegH} \otimes \sum_{r \in R} \ketbra{r}_{\RegR}$ is the projective measurement that runs the prover on the challenge on $\RegR$ and \emph{checks} whether the prover wins. The extraction procedure is described in \cref{fig:tech-overview-cmsz} below.

\begin{mdframed}

\captionsetup{type=figure}
    \captionof{figure}{The \cite{FOCS:CMSZ21} extractor with generic procedures $\Estimate, \Transform$}\label{fig:tech-overview-cmsz}
    
\begin{enumerate}
    \item Generate a first verifier message $\mathsf{vk}$ and run $P^*(\vk) \rightarrow \Co$ to obtain a classical first prover message $\Co$ once and for all. Let $\ket{\psi}$ denote the state of $P^*$ after it returns $\Co$. 
\item\label[step]{step:tech-overview-initial-estimate} Run $\Estimate^{\sU,\sC}$ to accuracy $\gamma/4$ on $\ket{\psi}\ket{+_R}$, which outputs an estimate $p$ of the adversary's success probability and then discard $\RegR$;\footnote{We assume here that when we run $\Estimate^{\sU,\sC}$ on a state $\ket{\phi} \in \image(\BProj{\sU})$, the residual state is in $\image(\BProj{\sU})$. Then we are guaranteed that $\RegR$ is unentangled from $\RegH$, which allows us to discard it. In our eventual construction, this assumption is enforced in \cref{thm:svdisc}.}~abort if $p < \gamma/2$ (this occurs with probability at most $1-\gamma/2$). Subtract $\gamma/4$ from $p$ so that $p$ represents a reasonable lower bound on the success probability. Set an error parameter $\eps = \frac{\gamma^2}{2\secp k}$ for the rest of the procedure and fix $N = \secp k /p$.
    \item We now want to generate $k$ accepting transcripts. For $i$ from $1$ to $N$:
    \begin{enumerate}
        \item Sample a uniformly random challenge $r_i$ and apply $\sV_{r_i}$ to the current state $\ket{\psi_i}$.
        \item If the output is $b_i = 1$, measure the response $\Resp$. This is (computationally) undetectable by the protocol's collapsing property, so we ignore this step \textbf{for now}. 
        \item Let $E$ be a unitary such that applying $E$ to $\RegH \otimes \RegW$ (where $\RegW$ is an appropriate-size ancilla initialized to $\ket{0}_{\RegW}$) and then discarding $\RegW$ is equivalent to running $\Estimate^{\sU,\sC}$ for $\secp p/\eps^2$ steps on $\RegH \otimes \RegR$ (where $\RegR$ is initialized to $\ket{+_R}_{\RegR}$) and then discarding $\RegR$. 

        We \emph{repair} the success probability by initializing $\RegW = \ket{0}_{\RegW}$ and then running $\Transform^{\sD,\sG}$ on $\RegH \otimes \RegW$ where, roughly speaking, $\sD$ is a projective measurement corresponding to the \emph{disturbance} caused by step (a), and $\sG$ is a projective measurement that determines whether the adversary's success probability is \emph{good}, meaning at least $p-\varepsilon$. More precisely:
        \begin{itemize}
            \item $\sG = \BMeas{\BProj{p,\eps}}$ returns $1$ if, after applying $E$, the estimate is at least $p- \varepsilon$.\footnote{In our actual construction/proof, we replace this call to $\Estimate$ (and the additional call at the end of Step 3c) with a weaker primitive that \emph{only} computes the threshold instead of fully estimating $p$. This change makes it easier to instantiate the primitive.}
            \item $\sD = \BMeas{\BProj{r_i,b_i}}$ returns $1$ if $\RegW = \ket{0}_{\RegW}$ \emph{and} applying $\sV_{r_i}$ returns $b_i$.  
        \end{itemize} 
        If $\Transform^{\sD,\sG}$ has not terminated within $T$ calls to $\sD$ and $\sG$, abort (this occurs with probability at most $O(1/T)$). Otherwise, apply $E$, trace out $\RegW$, re-initialize $\RegR$ to $\ket{+_R}$ and then run $\Estimate^{\sU,\sC}$ for $\secp p/\eps^2$ steps to obtain a new probability estimate $p'$. If $p' < p-2\eps$, abort. Finally, discard $\RegR$ and re-define $p := p'$. 
    \end{enumerate}
\end{enumerate}
\end{mdframed}

\subsubsection{Guaranteed extraction, first attempt}
\label{sec:tech-overview-first-attempt}
The \cite{FOCS:CMSZ21} algorithm, interpreted in terms of the abstract procedures $\Estimate, \Transform$, will serve as our initial template for extraction. We now consider whether it can be modified to achieve \emph{guaranteed} extraction. 

\paragraph{Syntactic Changes.}

The first issues with the \cite{FOCS:CMSZ21} extraction procedure are syntactic in nature. Namely, we want an extraction procedure that works for \emph{any} $P^*$, with no a priori lower bound $\gamma$ on the success probability of $P^*$. Of course, an extractor $\Extract^{P^*}$ that extracts with probability close to $1$ given an arbitrary $P^*$ is impossible to achieve (imagine a $P^*$ with negligible success probability), so the game is also changed as described in \cref{def:high-probability-extraction}. In terms of the \cite{FOCS:CMSZ21} template, the change is as follows:
    
\begin{itemize}
        \item After obtaining $(\Co,\ket{\psi})$, measure $\sC$ on $\ket{\psi} \ket{+_R}$ and terminate if the outcome is $0$. 
        \item Otherwise, the state is (re-normalized) $\BProj{\sC} (\ket{\psi} \ket{+_R})$, and the goal is to extract with probability $1-\negl$. 
\end{itemize}

\paragraph{Variable-Runtime Estimation.}
Since we are given no \emph{a priori} lower bound $\gamma$ on the success probability  of $P^*$, there is no fixed additive precision $\epsilon$ for which the initial $\Estimate$ in \cref{step:tech-overview-initial-estimate} guarantees successful extraction --- the initial state $\ket{\psi}\ket{+_R}$ could be concentrated on subspaces $\mathcal S_j$ such that $p_j \ll \eps$, in which case the estimation procedure almost certainly returns $0$.

To remedy this issue, we define a \emph{variable-length} variant of $\Estimate^{\MeasA,\MeasB}$ with the guarantee that for every $j$ and every state in $\mathcal S_j$, $\Estimate^{\MeasA,\MeasB}$ returns $p_j$ to within constant (factor $2$) \emph{multiplicative} accuracy with probability $1-2^{-\secp}$. With regard to instantiation, we note that the \cite{MarriottW05,FOCS:CMSZ21} implementation of $\Estimate^{\MeasA,\MeasB}$ can be modified to be variable-length: simply continue alternating $\Pi_A, \Pi_B$ until sufficiently many ($d = \poly(\secp)$) $b_i = b_{i+1}$ occur, so that the estimate $\frac d {t-1}$ (where $t$ is the number of measurements performed) is reasonably concentrated around its expectation.

Thus, we begin with the natural idea that \cref{step:tech-overview-initial-estimate} should be modified to use this variable-length $\Estimate$. We remark that variable-length $\Estimate$ is \emph{not} required in later steps: the output $p$ of \cref{step:tech-overview-initial-estimate} can be used to set the parameters $(\eps, N)$ for the rest of the procedure.

With this modification, our extractor \emph{never} aborts in \cref{step:tech-overview-initial-estimate}, but it also no longer runs in strict polynomial time. How do we analyze its runtime? First, one can compute that when run on a state in $\mathcal S_j$, the expected running time of this procedure is (up to factors of $\poly(\secp)$) roughly $\frac 1 {p_j}$. This might seem concerning, because this expectation could be large (even superpolynomial) if $p_j$ is very small. However, what we care about is the runtime of $\mathsf{Estimate}^{\sU, \sC}$ on the (re-normalized) state $\BProj{\sC} (\ket{\psi} \ket{+_R}).$ Writing $\ket{\psi} \ket{+_R} = \sum_j \alpha_j \ket{v_{j,1}}$, we see that $\BProj{\sC} (\ket{\psi} \ket{+_R}) = \sum_j \alpha_j \sqrt{p_j} \ket{w_{j,1}}$. 

To calculate the overall expected runtime, we use the fact that $\mathsf{Estimate}^{\sU,\sC}$ commutes with the projective measurement $\Jor$ that outputs $j$ on each subspace $\RegS_j$. This implies that the expected runtime of $\mathsf{Estimate}$ on our state is the weighted linear combination of its expected runtime on the eigenstates $\ket{v_{j,1}}$, namely
\[ \frac 1 {\gamma^*} \sum_j |\alpha_j|^2 p_j \cdot \frac 1 {p_j} = \frac 1 {\gamma^*},
\]
where $\gamma^* = || \BProj{\sC} (\ket{\psi} \ket{+_R}) ||^2$ is the probability that $\sC \rightarrow 1$ in the initial execution.\footnote{One way to see this is to notice that applying $\Jor$ after running $\mathsf{Estimate}^{\sU,\sC}$ clearly cannot affect the runtime of $\mathsf{Estimate}^{\sU,\sC}$. Then $\Jor$ can be commuted to occur before $\mathsf{Estimate}^{\sU,\sC}$.} Thus, the overall expected runtime equals $\gamma^* \cdot \frac 1 {\gamma^*} = 1$, so Step 2 of the procedure is efficient!

\paragraph{Our first attempt.}
With the changes above, Step 2 of the extraction procedure now has zero error and runs in expected polynomial time ($\EQPTM$).

The other source of non-negligible extraction error from \cite{FOCS:CMSZ21} is in the cutoff $T$ imposed on $\Transform^{\sD,\sG}$. By removing this cutoff, we obtain a procedure that is somewhat closer to the goal of guaranteed extraction in expected polynomial time, described in~\cref{fig:tech-overview-first-attempt} below.

\begin{mdframed}
\captionsetup{type=figure}
    \captionof{figure}{Guaranteed extraction (Attempt 1)}\label{fig:tech-overview-first-attempt}
\begin{enumerate}
    \item After obtaining $(\Co, \ket{\psi})$, apply $\sC$ to $\ket{\psi} \ket{+_R}$ and terminate if the measurement returns $0$. Otherwise, let $\ket{\phi}$ denote the resulting state on $\RegH \tensor \RegR$.
    \item Run the variable-length $\Estimate^{\sU, \sC}$ on $\ket{\phi}$, obtaining output $p$. Divide $p$ by $2$ to obtain a lower bound on the resulting success probability. Set $\eps = \frac {p^2} {2\secp k }$ and $N = \secp k /p$.
    \item Run Step 3 of the original \cite{FOCS:CMSZ21} extractor as in \cref{fig:tech-overview-cmsz}, with the parameters $p, \eps, N$. Instead of imposing a time limit $T$, the procedure $\Transform^{\sD, \sG}$ is allowed to run until completion\footnote{To avoid a computation that runs for infinite time, one should at the very least impose an exponential $2^\secp$ time cutoff, which can be shown to incur only a $2^{-\secp}$ correctness error.} $( \mathsf{G}\rightarrow 1)$.
\end{enumerate}

\end{mdframed}

\subsubsection{Problem: Step 3 is not expected poly-time.}
\label{sec:tech-overview-not-eqpt}
    Unfortunately, the ``first attempt'' above does \emph{not} satisfy \cref{def:high-probability-extraction}. The issue lies in its runtime: we argued before that over the randomness of $\Extract^{P^*}$, Step 2 runs in expected polynomial time. However, we did not analyze Step 3, which is the main loop for generating transcripts. Here is a rough estimate for its runtime.
    
    Recall that Step 3 loops the following steps for each $i = 1,\dots,\lambda k/p$:
    \begin{itemize}
        \item Run the prover $P^*$ on a random challenge $r_i$. This takes a fixed $\poly(\secp)$ amount of time.
        \item Then, \emph{regardless} of whether $P^*$ was successful, the residual prover state $\ket{\phi_i}$ must be \emph{repaired} to have success probability $\approx p$.
    \end{itemize}
    
    It turns out that as currently written, the expected runtime of the repair step is (up to $\poly(\secp)$ factors) equal to the runtime of a fixed-length $\Estimate$ procedure with precision $\approx p^2$ (this ensures that after $1/p$ repair steps, the total success probability loss must be at most $p$). Moreover, this runtime is intuitively necessary for \emph{any} possible repair procedure, since repairing the success probability should be at least as hard as computing whether it is above the acceptable threshold. 
    
    In our setting, the \cite{MarriottW05,FOCS:CMSZ21} estimation procedure requires $1/p^3$ time to obtain a $p^2$-accurate estimate in the relevant parameter regime.\footnote{As written in \cite{FOCS:CMSZ21}, the estimation procedure runs in $1/p^4$ time, but a factor of $p$ can be saved because (roughly speaking) the estimate only needs to achieve $p^2$ accuracy when $p_j$ is close to $p$.} Since Step 3 performs this loop $\frac{1}{p}$ times (omitting the $\secp k$ factor), the total runtime will be at least $\frac 1 {p^4}$. This is too long for the ``conditioning'' of Step 1 to save us: if the initial state at the beginning of Step 1 is $\ket{\psi} \ket{+_R} \in \mathcal S_j$, the expected runtime of Step 3 is $p_j \cdot \frac 1 {p_j^4} = \frac 1 {p_j^3}$, which can be arbitrarily large (when $p_j$ is small). 
    
    \paragraph{Idea: Use a faster $\Estimate$?} Given how we have phrased the extractor in terms of abstract $(\Estimate, \Transform)$ algorithms, a natural idea for improving the runtime is to use an implementation of the abstract $\Estimate$ algorithm that is faster than the \cite{MarriottW05}-based one used in \cite{FOCS:CMSZ21}. Indeed, if we use the procedure described in \cite{NagajWZ11} to implement $\Estimate^{\sU,\sC}$, we obtain a quadratic speedup: the runtime of $\Estimate^{\sU,\sC}$ in Step 3c can be improved from $\frac{1}{p^3}$ to $\frac{1}{p^{3/2}}$. 
    
    This speedup will be relevant to our eventual solution, but it does not resolve the problem. The back-of-the envelope calculation now just says that the expected runtime of Step 3 on a state $\ket{\psi}\ket{+_R} \in \RegS_j$ is $p_j \cdot p_j^{-5/2} = p_j^{-3/2}$, which is still unbounded. 
    
    \paragraph{So are we doomed?} Indeed, this runtime calculation seems problematic for the \emph{entire} \cite{FOCS:CMSZ21} template that we abstracted, by the following reasoning:
    
    \begin{itemize}
        \item On a state with initial estimate $p$, each choice of $r_i$ will only produce an accepting transcript with probability $\approx p$, so we must try $\approx k/p$ choices of i.i.d. $r_i$ to obtain $k$ accepting transcripts.
        \item Therefore, as long as the repair step takes \textbf{super-constant time} (as a function of $1/p$), the overall extraction procedure will take too long.
    \end{itemize}
    
    This seems to indicate a dead end for extractors that follow the standard rewinding template of repeatedly running $P^*$ on random $r$ to obtain accepting transcripts. 
    
    \subsubsection{Solution: A New Rewinding Template}
    \label{sec:tech-overview-new-template}
    
    We solve our unbounded runtime issue by abandoning ``classical'' rewinding, in the following sense: unlike prior extraction procedures \cite{EC:Unruh12,FOCS:CMSZ21}, our extractor will \emph{not} follow the standard approach of obtaining transcripts by feeding uniformly random $r_i$ to $P^*$. Instead, we will \emph{generate} accepting transcripts $(r_i, z_i)$ via an inherently quantum procedure so that \emph{every} generated transcript is accepting (as opposed to only a $p$ fraction of them).
    
    We accomplish this by using the procedure $\Transform$, which was previously only used for state repair, to \emph{generate} the transcripts. Consider a prover state $\ket{\psi_i}$ at the beginning of Step 3. By definition, $\ket{\psi_i} \ket{+_R} \in \image(\BProj{\sU})$, so applying $\Transform^{\sU, \sC}$ to $\ket{\psi_i} \ket{+_R}$ produces a state in $\image(\BProj{\sC})$. Now if the challenge register $\RegR$ is \emph{measured} (obtaining a string $r_i$), the residual prover state is \emph{guaranteed} to produce an accepting response on $r_i$!
    
    Moreover, the extraction procedure can afford to run $\Transform^{\sU, \sC}$: since $\ket{\psi_i} \ket{+_R}$ has been constructed to lie almost entirely in subspaces $\mathcal S_j$ such that $p_j < p - \eps$, the expected running time of $\Transform^{\sU, \sC}$ can be shown to be roughly\footnote{For technical reasons, we cut off $\Transform$ after an exponential number of steps so that the component of $\ket{\psi_i} \ket{+_R}$ lying in ``bad'' $\mathcal S_j$ (i.e., where $p_j$ is tiny) does not ruin the expected running time.} $\frac 1 p$.
    
    This gives us a potential \emph{new} template for extraction: we modify the main loop (Step 3) as in \cref{fig:tech-overview-new-template}. 
    
    \begin{mdframed}
    \captionsetup{type=figure}
    \captionof{figure}{Our new extraction template}\label{fig:tech-overview-new-template}
    \begin{enumerate}
    \item After obtaining $(\Co, \ket{\psi})$, apply $\sC$ to $\ket{\psi} \ket{+_R}$ and terminate if the measurement returns $0$. Otherwise, let $\ket{\phi}$ denote the resulting state on $\RegH \tensor \RegR$.
    \item Run the variable-length $\Estimate^{\sU, \sC}$ on $\ket{\phi}$, obtaining output $p$. Divide $p$ by $2$ to obtain a lower bound on the resulting success probability. Set $\eps = \frac {p} {4 k }$.
    \item For $i$ from $1$ to \textcolor{red}{$k$}:
    \begin{enumerate}
        \item Given current prover state $\ket{\psi_i}$, apply $\Transform^{\sU,\sC}$ to $\ket{\psi_i} \ket{+_R}$. Call the resulting state $\ket{\phi_{\sC}}$.
        \item Obtain a \emph{guaranteed accepting} transcript $(r_i, z_i)$ by measuring the $\RegR$ register of $\ket{\phi_{\sC}}$ and then running $P^*$ on $r_i$. As before, measuring $z_i$ is computationally undetectable.
        \item Run the Repair Step (3c) as in \cref{fig:tech-overview-cmsz} by calling $\Transform^{\mathsf{D},\mathsf{G}}$ and re-estimating $p$. 
    \end{enumerate}
    \end{enumerate}
    
    \end{mdframed}
    
We emphasize two crucial efficiency gains from this new extraction template:

\begin{itemize}
    \item As already mentioned, the main loop now has $k$ steps instead of $k/p$, since each transcript is now guaranteed to be accepting. 
    \item Since only $k$ repair operations are now required, the \emph{error} parameter $\eps$ for $\Pi_{p, \eps}$ can be set to $\approx p$ instead of $\approx p^2$. 
\end{itemize}

\paragraph{Correctness Analysis.}
We remark that even the correctness of this new extraction procedure is unclear. In the case of $k$-special sound protocols, we need the extraction procedure to produce $k$ accepting transcripts with distinct $r_i$; previously, this was guaranteed because each $r_i$ was sampled i.i.d., so (w.h.p.) no pair of them coincide. Here, $r_i$ is \emph{not} uniformly random --- it has been sampled by measuring the $\RegR$ register of some state in $\BProj{\sC}$. 

In order to analyze the behavior of this extractor, it is important to understand the state $\ket{\phi_{\sC}}$ obtained after applying $\Transform^{\sU, \sC}$. Of course, we have an explicit representation $\sum_j \alpha_j \sqrt{p_j} \ket{w_{j,1}}$ for it, but it is not clear a priori how this helps.

To prove correctness, we analyze the state $\ket{\phi_{\sC}}$ using what we call the Pseudoinverse Lemma (\cref{lemma:pseudoinverse}), which states that $\ket{\phi_{\sC}}$ can be viewed as a \emph{conditional} state obtained by starting with a state $\ket{\phi_{\sU}} = \ket{\psi_{\sU}} \ket{+_R} \in \image(\BProj{\sU})$ and \emph{post-selecting} (i.e., conditioning) on a $\sC$-measurement of $\ket{\phi_{\sU}}$ outputting $1$. Crucially, this pseudoinverse state has a precisely characterized $(\sU, \sC)$-Jordan spectrum related to the Jordan spectrum of $\ket{\phi_C}$. We emphasize that the state $\ket{\phi_{\sU}}$ does not actually exist in the extraction procedure; it is just a tool for the analysis.

Using the pseudoinverse lemma, one can show that the probability a $\sC$-measurement of $\ket{\phi_{\sU}}$ returns $1$ is $\approx p$, which implies that the joint distribution of $(r_1, \hdots, r_k)$ comes from a ``random enough'' distribution that we formalize as ``admissible'' (\cref{def:admissible-dist}). This is shown by the following reasoning: since measuring $\RegR$ commutes with $\sC$, it is as if we have an initially uniformly random $r_i$ (obtained from measuring $\RegR$ of $\ket{\phi_{\sU}}$) that is ``output'' with probability $\approx p$ (when $\sC$ returns $1$). This is sufficient to argue about correctness properties of the extractor. 

\paragraph{Runtime Analysis Idea.} Analyzing the runtime of $\Transform^{\sD, \sG}$ also turns out to be significantly more subtle than in the \cite{FOCS:CMSZ21} setting. The basic idea is to show that (within a reasonable amount of time) $\Transform^{\sD, \sG}$ \emph{returns} a state on $\RegH \tensor \RegW$ to $\image(\Pi_{p, \eps})$ after it was ``initially'' disturbed by the binary measurement $\sD$. In \cite{FOCS:CMSZ21}, this is literally true: the disturbance is measuring $(\Pi_{V, r}, \Id - \Pi_{V, r})$ for randomly sampled $r$ on the prover state $\ket{\psi_i}$. One can then show that an expected constant number of $(\sD, \sG)$-measurements returns the state to $\sG$ by appealing to the statistics of the $(\sD, \sG)$ Marriott-Watrous distribution.

However, in our setting, the ``disturbance'' is quite different: the amplified state $\ket{\phi_{\sC}} \in \image(\BProj{\sC})$ consists of a prover state \emph{entangled with} the challenge register $\RegR$ in a way that is \emph{guaranteed} to produce an accepting transcript. $\ket{\phi_{\sC}}$ is then disturbed by measuring its $\RegR$ register, and the measurement $\sD$ being applied in $\Transform^{\sD, \sG}$ depends on this $\RegR$ measurement outcome. Since the $\RegR$ measurement can disturb $\ket{\phi_{\sC}}$ by a large amount (unlike $\sD$), it is not a priori clear why $\Transform^{\sD, \sG}$ should return the state to $\image(\Pi_{p, \eps})$. 

At a high level, we show how to bound the runtime of this new procedure by appealing to the pseudoinverse state $\ket{\phi_{\sU}}$, again! In more detail, using the pseudoinverse lemma, the state on $\RegH \tensor \RegW$ obtained after measuring $\RegR$ on $\ket{\phi_{\sC}}$ (along with initializing $\RegW$ to $\ket{0}$) can be alternatively thought of as the state obtained by:

\begin{itemize}
    \item Sampling $r_i$ proportional to the probability $\zeta_{r_i}$ of $\ket{\psi_{\sU}}$ successfully answering $r_i$, and
    \item Outputting (normalized) $\Pi_{r_i}(\ket{\psi_{\sU}} \tensor \ket{0}_{\RegW})$, where $\Pi_{r_i} := \Pi_{r_i, 1}$. 
\end{itemize}
This conditioning argument allows us to appeal to the same ``return to $\Pi_{p, \eps}$'' principle to show that $\Transform^{\sD, \sG}$ indeed ``returns'' the state to $\image(\Pi_{p, \eps})$, as if it had ``started out'' as the state $\ket{\phi_{\sU}}\ket{0}_{\RegW}$, which only exists in the analysis!

\subsubsection{Problem: Step 3 is \emph{still} not expected poly-time.}
\label{sec:tech-overview-no-speedup}

The premise of our new extraction template was to speed up the extraction process by getting rid of excess work from running state repair in situations where no accepting transcript was obtained. Previously, we computed the expected runtime to perform $N \approx k/p$ repair steps in~\cref{fig:tech-overview-cmsz} (conditioned on a successful initial execution and initial estimate $p$) to be $pN/\varepsilon^2 \approx 1/p^4$, since the runtime of each repair step was equivalent (up to a constant factor) to the runtime of $\sG$, which was $p/\varepsilon^2$, and $\varepsilon \approx p^2$. As noted above, with our new template we now only have to perform $N = k$ repair steps, and the error parameter $\epsilon$ can now be $\approx p$. With these improvements alone, one might hope to perform $N$ repair steps in $pN/\varepsilon^2 = p(k)(1/p^2) \approx 1/p$ time. This would result in expected polynomial runtime for the overall extractor when factoring in the conditioning.

Perhaps surprisingly, the above reasoning is incorrect! This new extraction procedure is \emph{still} not expected QPT: the expected runtime of $N$ repair steps will be $\approx \frac 1 {p^{2}}$, not $\frac 1 {p}$.

Why does this happen? It turns out that in this new extraction template, each repair step (which previously made expected $O(1)$ calls to $\sG$) must now make an expected $O(1/p)$ calls to $\sG$, cancelling out the factor-$1/p$ savings in $N$ obtained by using $\Transform^{\sU, \sC}$ to generate transcripts. 

Indeed, the pseudoinverse-based runtime analysis above for $\Transform^{\sD, \sG}$ implies that each repair step must now make
\[ \frac 1 {\zeta_R} \sum_r \zeta_r \cdot \frac 1 {\zeta_r} = \frac 1 {\zeta_R} \approx 1/p
\]
calls to $\sG$ (where $\zeta_R = \sum_r \zeta_r \approx p$ is the normalization factor for the $r_i$-distribution). This results in an overall expected running time of $\frac 1 p$ calls to $\sG$ if $p$ was initially measured. Essentially, this is saying that while obtaining an accepting transcript $(r_i, z_i)$ causes \emph{limited enough} disturbance that repair can work, it causes more disturbance than a binary measurement, resulting in a factor of $1/p$ increase in the repair time.

\subsubsection{Solution: Use faster \texorpdfstring{$\Estimate$}{Estimate}  and  \texorpdfstring{$\Transform$}{Transform}} \label{sec:tech-overview-speedup}
Despite the less-than-expected speedup observed in \cref{sec:tech-overview-no-speedup}, it turns out that we nevertheless made significant progress. The reason is that the bottleneck to obtaining a faster extraction procedure is now in the running times of $\Estimate$ and $\Transform$, so we can hope to obtain an expected polynomial time procedure by using faster algorithms for $\Estimate^{\sU, \sC}$ and $\Transform^{\sD, \sG}$.

As discussed above, speeding up the fixed-length $\mathsf{Estimate}^{\sU, \sC}$ in $\sG$ is relatively straightforward by appealing to \allowbreak \cite{NagajWZ11};\footnote{For technical reasons, we use a different algorithm due to \cite{STOC:GSLW19}, but a variant of \cite{NagajWZ11} would also suffice.} this results in an expected running time of $\frac 1 {\sqrt {p}}$ for $\sG$.

However, implementing a \emph{fast} version of $\Transform^{\sD,\sG}$ achieving $1-\negl(\secp)$ correctness (which is required for our extraction procedure to have negligible error) is less straightforward. Some implementations in the literature (e.g., \cite{STOC:GSLW19}) achieve this correctness guarantee, but only given a known (inverse polynomial) lower bound on the eigenvalue $q_j$ (associated with $(\sD, \sG)$-Jordan subspace $\mathcal T_j$). We have no such lower bound for our state $\frac 1 {\zeta_r} \Pi_r (\ket{\phi_{\sU}} \ket{0}_{\RegW})$. Our resolution is to first apply a variable-length fast phase estimation algorithm (implemented by repeatedly running~\cite{NagajWZ11} to increasing precision, or singular value discrimination \cite{STOC:GSLW19} with decreasing thresholds, until we obtain a multiplicative estimate of the phase) and then run a fixed-length fast $\Transform^{\sD,\sG}$ using the estimated phase to lower bound the eigenvalue. The fixed-length fast $\Transform^{\sD,\sG}$ can be done using \cite{STOC:GSLW19}; it is also possible to use a more elementary algorithm combining fast amplitude amplification \cite{BHMT02} with ideas from \cite{STOC:Watrous06} for achieving $1-\negl(\secp)$ correctness.

To summarize, we obtain a final $1/p$ speedup by combining a $1/\sqrt{p}$ speedup from using a faster $\Estimate^{\sU,\sC}$ with a $1/\sqrt{p}$ speedup from using a faster $\Transform^{\sD,\sG}$. The fact that the latter speedup is actually realized turns out to be subtle to argue.
    
\subsubsection{Last Problem: Measuring $z$ ruins the runtime guarantee}
\label{subsubsec:gk-issue}

Unfortunately, we are \emph{still} not done! There is one subtle issue with our extractor that we have ignored so far: our runtime analysis was only valid ignoring the effect of measuring the prover response $z$. Since all transcripts after running $\Transform^{\sU, \sC}$ are accepting by construction, the collapsing property of the protocol implies that measuring $z$ is computationally undetectable, so one might assume that the runtime analysis extends immediately.

However, the \emph{expected running time} of an algorithm is not an efficiently testable property of the input state. This is not just an issue with our proof strategy: the version of the above extractor where $z$ is measured does not run in expected polynomial time. 

In a nutshell, the issue is that a computationally undetectable measurement can still cause a state's eigenvalues (either $\{p_j\}$, in $\Jor^{\sU, \sC}$, or $\{q_j\}$, in $\Jor^{\sG, \sD}$) to change by a negligible but nonzero amount, affecting the subsequent runtime of $\Transform^{\sD,\sG}$. This negligible change can have an enormous effect on the expected runtime of the extractor, because if the runtime of a procedure is inversely proportional to the disturbed eigenvalue $\tilde p = p-\negl$, an overall expected runtime expression can now contain terms of the form $\frac p {p - \negl}$, which can be unbounded when $p$ is also negligible. Interestingly, such issues have long been known to exist in the \emph{classical} setting: these $\frac p {p-\negl}$ terms are the major technical difficulty in obtaining a classical simulator for the \cite{JC:GolKah96} protocol. This classical analogy inspires our resolution.

\paragraph{Solution: Estimate repair time before measuring $z$.} 
\label{subsubsec:gk-fix}

We modify our extractor so that in each loop iteration, all procedures occurring after the $z$-measurement have a \emph{pre-determined} runtime. Previously, after $z$ was measured, we ran a fast variable-length $\Transform$ by running a the variable-length $\Estimate^{\sD, \sG}$ to determine a time bound $t$, and then running a $t$-time $\Transform^{\sD, \sG}$. Instead of this, we will run $\Estimate^{\sD, \sG}$ \emph{before} $z$ is measured. This allows us to compute a runtime bound for $\Transform^{\sD, \sG}$ before the $z$ measurement disturbs the state, preserving the expected running time of the entire procedure. This results in the final extraction procedure described in \cref{fig:final-extractor-intro} below.

\begin{mdframed}
\captionsetup{type=figure}
    \captionof{figure}{Our final extraction procedure}\label{fig:final-extractor-intro}
    \begin{enumerate}
    \item After obtaining $(\Co, \ket{\psi})$, apply $\sC$ to $\ket{\psi} \ket{+_R}$ and terminate if the measurement returns $0$. Otherwise, let $\ket{\phi}$ denote the resulting state on $\RegH \tensor \RegR$.
    \item Run the variable-length $\Estimate^{\sU, \sC}$ on $\ket{\phi}$, obtaining output $p$. Divide $p$ by $2$ to obtain a lower bound on the resulting success probability. Set $\eps = \frac {p} {4 k }$ and $N = k$.
    \item For $i$ from $1$ to $N$:
    \begin{enumerate}
        \item Given prover state $\ket{\psi_i}$, apply $\Transform^{\sU,\sC} \ket{\psi_i} \ket{+_R}$. Call the resulting state $\ket{\phi_{\sC}}$.
        \item Measure (and discard) the $\RegR$ register of $\ket{\phi_{\sC}}$ to obtain a classical challenge $r_i$.
        \item Initialize $\RegW$ to $\ket{0}_{\RegW}$ and call the variable-length $\Estimate^{\sD, \sG}$, which outputs a value $q$. We require that the output state is in the image of $\Pi_{r_i}$.
        \item Measure the response $z_i$.
        \item We repair the success probability by running $\Transform^{\sD,\sG}$ on $\RegH \otimes \RegW$ for $\frac{\secp}{\sqrt q}$ oracle steps. If the resulting state is not in the image of $\Pi_{p, \eps}$, abort.
        
        Trace out $\RegW$ and run $\Estimate^{\sU,\sC}$ for $\secp \sqrt{p}/\eps$ steps to obtain a new probability estimate $p'$. If $p' < p- 2\eps $, abort. Finally, discard $\RegR$ and re-define $p := p'$.  
    \end{enumerate}
    \end{enumerate}
\end{mdframed}

By making this change, we incur an additional \emph{correctness} error for the extractor, because the collapsing measurement may decrease the probability that $\Transform^{\sD, \sG}$ successfully maps the state to $\Pi_{p, \eps}$. However, this error is negligible because this correctness property is efficiently checkable (unlike the expected runtime). Thus, this procedure achieves both expected polynomial runtime\footnote{It remains to be argued that measuring $z_j$ does not affect the running time of \emph{subsequent} variable-runtime steps. This turns out to hold because the runtime of future loop iterations can be guaranteed by the correctness properties of the re-estimation step, which hold for an \emph{arbitrary} re-estimation input state.}  and the desired correctness guarantees.

\subsubsection{Putting everything together} To summarize, we gave a new extraction template along with a particular instantiation that achieves expected polynomial runtime, by leveraging four different algorithmic improvements:

\begin{enumerate}[noitemsep]
    \item By generating accepting transcripts with $\Transform^{\sU, \sC}$, we now only have to generate $k$ transcripts and repair $k$ prover states (instead of $k/p$). 
    \item (1) allows us to relax the error parameter $\eps$ by a factor of $1/p$ (speeding up $\sG$).
    \item Using a fast algorithm for $\Estimate$ from the literature \cite{NagajWZ11,STOC:GSLW19} saves a factor of $1/\sqrt{p}$ runtime.
    \item Using a new fast, variable-runtime algorithm for $\Transform$ saves another factor of $1/\sqrt{p}$.
\end{enumerate}
Finally, we implement the variable-length $\Transform$ in two phases (variable-length phase estimation followed by fixed-length $\Transform$) and interleave the measurement of the response $z$ between them, so that this $z$-measurement has no effect on the runtime.

We remark that the overall analysis of our extractor is rather involved (as we have omitted additional details in this overview); we refer the reader to \cref{sec:high-probability-extractor} for a full analysis.

\subsection{Post-Quantum ZK for \cite{JC:GolKah96}}
\label{sec:tech-overview-pqzk-gk}

In this section we give an overview of our proof that the Goldreich--Kahan (GK) protocol is post-quantum zero-knowledge (\cref{thm:gk}). Our simulator makes use of some of the techniques described in \cref{sec:tech-overview-hpe}, but the simulation strategy is quite different to our other results. In particular, our simulator does \emph{not} make use of state-preserving extraction.

We first recall the Goldreich--Kahan construction of a constant-round zero-knowledge proof system for $\NP$. Let $(P_{\Sigma},V_{\Sigma})$ be a $\Sigma$-protocol for $\NP$ satisfying special honest verifier zero knowledge (SHVZK)\footnote{Recall that the special honest-verifier zero-knowledge property guarantees the existence of a randomized simulation algorithm $\SHVZK.\Sim(\Ch)$ that takes any $\Sigma$-protocol challenge $\Ch \in R$ as input and outputs a tuple $(\Co,\Resp)$ such that the distribution of $(\Co,\Ch,\Resp)$ is indistinguishable from the distribution of transcripts arising from an honest prover interaction on challenge $\Ch$.} and let $\Com$ be a statistically hiding, computationally binding commitment. \cite{JC:GolKah96} construct a zero knowledge protocol $(P,V)$ as described in \cref{fig:gk-intro}.

\begin{figure}[!ht]
\centering
	\procedure[]{}{
		P (x, w) \> \> V(x) \\ 
		\begin{subprocedure}
			\pseudocode[mode=text]{Sample commitment key $\ck$.}
		\end{subprocedure}
		\> \sendmessageright*[3cm]{\ck} \> 
		\\
		 \> \sendmessageleft*[3cm]{\com} \> \begin{subprocedure}
			\pseudocode[mode=text]{Sample $\Sigma$-protocol challenge $\Ch \gets R$.\\
			Commit to $\Ch$: \\
			$ \com = \Com(\ck, \Ch; \omega)$ for $\omega \gets \{0,1\}^\secp$.}
		\end{subprocedure}\\ 
		\begin{subprocedure}
			\pseudocode[mode=text]{Compute $(\Co, \mathsf{st}) \gets P_\Sigma(x, w)$}
		\end{subprocedure}
		\> \sendmessageright*[3cm]{\Co} \>  \\
		\> \sendmessageleft*[3cm]{\Ch, \omega} \> \begin{subprocedure}
			\pseudocode[mode=text]{}
		\end{subprocedure} \\
		\begin{subprocedure}
			\pseudocode[mode=text]{If $\Com(\ck, \Ch; \omega) \neq  \com$, abort. \\ Compute $\Resp \gets P_\Sigma(\mathsf{st}, \Ch)$
			}
		\end{subprocedure}
			\> \sendmessageright*[3cm]{\Resp} \>  \begin{subprocedure}
			\pseudocode[mode=text]{Accept if $(\Co, \Ch, \Resp)$ is an \\ accepting $\Sigma$-protocol transcript for $x$.}
			\end{subprocedure}
	}
	\caption{The \cite{JC:GolKah96} Zero Knowledge Proof System for $\mathsf{NP}$.}
	\label{fig:gk-intro}
\end{figure}

Soundness of the~\cite{JC:GolKah96} protocol holds against \emph{unbounded} $P^*$ and therefore extends immediately to the quantum setting. 

\paragraph{Recap: the na\"ive classical simulator.}

As observed by~\cite{JC:GolKah96}, there is a natural \emph{na\"ive simulator} for their protocol that, for reasons analogous to~\cref{subsubsec:gk-issue}, turns out to have an unbounded expected runtime. To build intuition for our quantum simulation strategy, we will first recall the na\"ive classical simulator and show how to extend it to a na\"ive quantum simulator (while temporarily ignoring the runtime issue). Then, by using the technique described in~\cref{subsubsec:gk-fix}, we will improve this to a full $\EQPTC$ quantum simulator.

The na\"ive classical simulator does the following:
\begin{enumerate}[noitemsep]
    \item Call $V^*$ on a random commitment key $\ck$ to obtain a commitment $\com$. 
    \item\label[step]{gk-shvzk} Sample $(\Co',\Resp') \gets \SHVZK.\Sim(0)$. 
    \item\label[step]{gk-first-run} Run $V^*$ on $\Co'$ to obtain a challenge-opening pair $(\Ch',\omega')$. If $\omega'$ is not a valid opening of $\com$ to $\Ch'$, terminate the simulation and output the current view of $V^*$.
    \item\label[step]{gk-rewinding-step} \textbf{Rewinding step.} Sample $(\Co,\Resp) \gets \SHVZK.\Sim(\Ch')$ and run $V^*$ on $\Co$. If the output $(r,\omega)$ is not a valid message-opening pair, repeat this step from the beginning. \item Respond with $z$ and output $V^*$'s view.
\end{enumerate}

\noindent To see that this simulator outputs the correct view for $V^*$, consider two hybrid steps:
\begin{itemize}
    \item First, switch to a hybrid simulator in which the sample $(a',z') \gets \SHVZK.\Sim(0)$ is instead computed by running the honest prover $P(x,w)$. The indistinguishability between this hybrid simulator and the real simulator follows from the fact that $a'$ sampled as $(a',z') \gets \SHVZK.\Sim(0)$ is computationally indistinguishable from the honestly generated $a'$.
    \item Next, switch to a second hybrid simulator in which the honest prover is also used in the rewinding step to generate the $(a,z)$ samples rather than $\SHVZK.\Sim(r')$ (where $z$ is generated by running the honest prover on $(a,r')$). This is indistinguishable from the previous hybrid simulator by the $\SHVZK$ property, and moreover, by the computational binding of the commitment, the $r$ obtained in Step 4 must be $r'$ except with $\negl(\secp)$ probability. Moreover, conditioned on $r = r'$, the second hybrid produces the same distribution as the honest interaction.
\end{itemize}

We now show how to extend this simulator to the quantum setting.

\paragraph{Our ``na\"ive'' quantum simulator.} Step 1 of the na\"ive classical simulator will be unchanged in the quantum setting, so we focus on devising quantum verisons of Steps 2,3, and 4 while assuming $\ck,\com$ are fixed throughout.

Let $\ket{\psi}_{\RegV}$ be the state of the malicious verifier immediately after it sends $\com$. We let registers $\RegA,\RegZ$ denote registers containing the messages $a,z$ in the $\Sigma$-protocol and let $\RegM$ be a register that will contain the random coins for $\SHVZK.\Sim$ (or the honest prover later on). Let $\ket{\Sim_r}$ for any $r \in R$ be the state
$\ket{\Sim_r}_{\RegA,\RegZ,\RegM} = \sum  \alpha_{\mu} \ket{\SHVZK.\Sim(r;\mu),\mu}$ obtained by running $\SHVZK.\Sim$ on a uniform superposition of its random coins $\mu$.

We define binary projective measurements analogous to the $\sU$ and $\sC$ measurements used in our state-preserving extractor. However, instead of a single $\sU$ measurement, we will have for each $r \in R$ a measurement $\sS_r = \BMeas{\BProj{\sS,r}}$ on $\RegV \otimes \RegA \otimes \RegZ \otimes \RegM$ where $\BProj{\sS,r} \eqdef \Id_{\RegV} \otimes \ketbra{\Sim_r}_{\RegA,\RegZ,\RegM}$. The idea behind the $\sC = \BMeas{\BProj{\sC}}$ measurement is the same as before: it measures whether the malicious verifier $V^*$ returns a valid opening when run on the challenge $\RegA$. Note that $\sC$ acts as identity on $\RegZ,\RegM$.

The next steps of the quantum simulator are a direct analogue of the corresponding steps in the classical simulator:

\begin{enumerate}
\item[$2^*$.] Initialize $\RegA \otimes \RegZ \otimes \RegM$ to $\ket{\Sim_0}$.
\item[$3^*$.] Measure $\ket{\psi}_{\RegV} \otimes \ket{P}_{\RegA,\RegZ,\RegM}$ with $\sC$. If the outcome of $\sC$ is $0$ (the opening is invalid), terminate the simulation at this step: measure $\RegA$ to obtain $a'$, compute and measure the verifier's response $(r',\omega')$ and return $(\ck,\com,a',(r',\omega'),z = \bot)$ along with $\RegV$. If the outcome of $\sC$ is $1$, we will have to rewind. First, compute the verifier's response and measure it to obtain $r'$. 
\end{enumerate}

When the opening is invalid $(\sC$ outputs $0$), the $\SHVZK$ guarantee informally implies that these steps computationally simulate the view of $V^*$.

The hard case is when the opening is valid ($\sC$ outputs $1$). At this stage of the simulation, the state on $\RegV \otimes \RegA \otimes \RegZ$ is $\BProj{\sC}(\ket{\psi}_{\RegV}\ket{\Sim_0}_{\RegA,\RegZ} )$ (up to normalization). Intuitively, we want to ``swap'' $\ket{\Sim_0}_{\RegA,\RegZ,\RegM}$ for $\ket{\Sim_{r'}}_{\RegA,\RegZ,\RegM}$, but the application of $\BProj{\sC}$ has entangled the $\RegA$ register with $\RegV$. We will therefore apply an operation to disentangle these registers, then swap $\ket{\Sim_0}$ for $\ket{\Sim_{r'}}$, and then ``undo'' the disentangling operation. We do this by defining a unitary $U$ that is the coherent implementation of the following variable-length computation on $\RegV \otimes \RegA \otimes \RegZ \otimes \RegM \otimes \RegR$: measure $\RegR$ to obtain $r$, and then run a variable-length $\Transform^{\sC,\sS_{r}}$ on $\RegV \otimes \RegA \otimes \RegZ \otimes \RegM$.\footnote{The register $\RegR$ is required for the definition of $U$ and should not be confused with the sub-register of $\RegV$ that we measure to obtain the verifier's response.} Recall that implementing a variable-length computation coherently requires additional ancilla registers $\RegW, \RegB, \RegQ$ (see~\cref{sec:intro-creqpt}); we will suppress these registers for this overview, but we emphasize that they must be all be initialized to $\ket{0}$.

The simulator then continues as follows.

\begin{enumerate}
\item[$4^*$.] Run the following steps:
\begin{enumerate}
    \item Initialize $\RegR$ to $\ket{0}$ and apply $U$ to $\BProj{\sC}(\ket{\psi}_{\RegV} \ket{\Sim_0}_{\RegA,\RegZ,\RegM}) \otimes \ket{0}_{\RegR}$. On $\RegV \otimes \RegA \otimes \RegZ \otimes \RegM$, this maps $\image(\BProj{\sC})$ to $\image(\BProj{\sS,0})$, which yields a state of the form $\ket{\psi'}_{\RegV} \ket{\Sim_0}_{\RegA,\RegZ,\RegM}$. Importantly, this (carefully!) breaks the entanglement between $\RegV$ and $\RegA$.
    \item Now the simulator can easily swap $\ket{\Sim_0}$ out for $\ket{\Sim_{r'}}$.
    \item Finally, the simulator changes the $\RegR$ register from $\ket{0}_{\RegR}$ to $\ket{r'}_{\RegR}$, and then applies $U^\dagger$ and traces out $\RegR$. This step maps the state on \emph{back} from $\image(\BProj{\sS,r'})$ to $\image(\BProj{\sC})$. 
\end{enumerate}
\item[$5^*$.] Measure $\RegA$ to obtain $a$, compute and measure the verifier's response $(r,\omega)$, measure $\RegZ$ to obtain $z$, and output $(\ck,\com,a,(r,\omega),z)$ along with $\RegV$.
\end{enumerate}

This simulator can be written as an $\EQPTC$ computation, but we defer the details of this to our full proof (\cref{sec:gk}). For this overview, we will focus on proving the simulation guarantee.

Inspired by classical proof, we prove that our simulator produces the correct view for $V^*$ by considering two hybrid simulators. To describe the hybrid simulators, we define states $\ket{P}$ and $\ket{P_{r}}$ for any $r \in R$ corresponding to responses of the honest prover:
\begin{itemize}
    \item Let $\ket{P}_{\RegA,\RegZ,\RegM}$ be the state from running the honest prover $P_{x,w}$ on a uniform superposition of random coins $\mu$ to generate a first message $P_{x,w}(\mu)$, i.e., $\ket{P}_{\RegA,\RegZ,\RegM} = \sum_{\mu} \ket{P_{x,w}(\mu)}_{\RegA}\ket{0}_{\RegZ} \ket{\mu}_{\RegM}$
    \item For any $r \in R$, let $\ket{P_{r}}$ be the same as $\ket{P}_{\RegA,\RegZ,\RegM}$, except $\RegZ$ additionally contains the honest prover's response to $r$, i.e.,  $\ket{P_{r}}_{\RegA,\RegZ,\RegM} = \sum_{\mu} \ket{P_{x,w}(\mu)}_{\RegA}\ket{P_{x,w}(r;\mu)}_{\RegZ} \ket{\mu}_{\RegM}$
\end{itemize}

\noindent The hybrid simulators are essentially quantum versions of the classical ones:
\begin{itemize}
    \item The first hybrid simulator behaves the same as the original simulator except that everywhere the simulator uses $\ket{\Sim_0}$, the hybrid simulator uses $\ket{P}$ instead. The amplification in Step~$4^* (a)$ is now onto $\image(\ketbra{P})$ rather than $\image(\sS_{0})$. Moreover, in Step~$4^* b$, the simulator swaps $\ket{P}$ out for $\ket{\Sim_{r'}}$.
    
    \item The second hybrid simulator is the same as the first, except every appearance of $\ket{\Sim_{r'}}$ is replaced with $\ket{P_{r'}}$. In particular, in Step~$4^* (b)$, the simulator swaps $\ket{P}$ out for $\ket{P_{r'}}$. The (inverse) amplification in Step~$4^* (c)$ is now from $\image(\ketbra{P_{r'}})$ onto $\image(\BProj{\sC})$.
\end{itemize}

We remark that defining these hybrid simulators also requires extending the definition of the unitary $U$ that performs $\Transform$. In particular, $U$ must now support $\Transform^{\sC,\sP}$ where $\sP = \BMeas{\Id_{\RegV} \otimes \ketbra{P}}$ and $\Transform^{\sC,\sP_r}$ where $\sP_r = \BMeas{\Id_{\RegV} \otimes \ketbra{P_r}}$.

Proving indistinguishability of these hybrids requires some care. Intuitively, we want to invoke the SHVZK property to claim that $\ket{\Sim_0}$ and $\ket{P}$ are indistinguishable given just the reduced density matrices on the $\RegA$ register (for the first hybrid) and that $\ket{\Sim_{r'}}$ and $\ket{P_{r'}}$ are indistinguishable given just the reduced density matrices on  $\RegA \otimes \RegZ$ (for the second hybrid). However, we have to ensure that the application of $\Transform$ --- which makes use of projections onto these states --- does not make this distinguishing task any easier.

We resolve this by proving a general lemma (\cref{lemma:proj-indist}) about quantum computational indistinguishability that may be of independent interest, which we briefly elaborate on here. Consider the states $\ket{\tau_b} \eqdef \sum_{\mu} \ket{\mu}_{\RegX} \ket{D_b(\mu)}_{\RegY}$ where $D_0,D_1$ are computationally indistinguishable classical distributions with randomness $\mu$. If we are only given access to $\RegY$, then distinguishing $\ket{\tau_0}$ from $\ket{\tau_1}$ is clearly hard (since $\Tr_{\RegX}(\ketbra{\tau_b})$ is a random classical sample from $D_b$). \cref{lemma:proj-indist} strengthens this claim: it states that guessing $b$ remains hard even given an oracle implementing the corresponding binary-outcome measurement $\BMeas{\ketbra{\tau_b}_{\RegX,\RegY}}$.

By combining this lemma with the fact that our $\Transform$ procedure can always be truncated (in a further hybrid argument) to have strict $\poly(\secp,1/\varepsilon)$-runtime with $\varepsilon$-accuracy, we can prove the desired indistinguishability claims.

\paragraph{From the na\"ive simulator to the full simulator.} The problem with both the classical and quantum na\"ive simulators presented above is that their expected runtime is not polynomial. The issue is conceptually the same as in~\cref{subsubsec:gk-issue}. Consider a malicious verifier $V^*$ that gives a valid response with negligible probability $p$ when run on $a$ sampled as $(a,z) \gets \SHVZK.\Sim(0)$, and succeeds with probability $p-\negl$ when run on $a$ sampled as $(a,z) \gets \SHVZK.\Sim(r)$. Then the expected running time is $\frac{p}{p-\negl}$, which can be unbounded for small $p$.

The solution described in~\cite{JC:GolKah96} is therefore to \emph{estimate} the running time of the rewinding step before making the computational switch. That is, if the simulator obtains a valid response before the rewinding step, then it keeps running the $V^*$ on samples from $\SHVZK.\Sim(0)$ until it obtains $\secp$ additional valid responses. This gives the simulator an accurate estimate of the success probability of $V^*$, which it uses to bound the running time of the subsequent rewinding step.

We give a quantum simulator in $\EQPTC$ for the~\cite{JC:GolKah96} protocol that implements the analogous quantum version of this estimation trick. As in~\cref{subsubsec:gk-issue}, the idea is to first compute an upper bound on the runtime of the $\Transform$ step (equivalently, a lower bound on the singular values) after measuring $\sC$ in Step $3^*$ \emph{before} measuring $r$. This estimate is computed using a variable-length $\Estimate^{\sS_0,\sC}$ procedure, and since the $\Transform$ step has now been restricted to run in fixed polynomial time, we achieve the desired $p \cdot 1/p = 1$ cancellation in the expected running time.

Implementing this properly requires several tweaks to our simulator. In particular, the simulator no longer measures the verifier's challenge $r'$ directly in Step $3^*$; recording $r'$ is now delegated to $U$, since this step must be performed ``in between'' $\Estimate$ and $\Transform$. That is, we must modify $U$ so that instead of just performing (a coherent implementation of) $\Transform$, it runs the following steps coherently: (1) perform a variable-length $\Estimate$, where $\Estimate$ is parameterized by the same projectors as $\Transform$ (2) compute and measure the verifier's response (3) run $\Transform$ using the time bound computed from $\Estimate$. We defer further details to the full proof (\cref{sec:gk}). We remark that just as in~\cref{subsubsec:gk-issue}, the $\negl(\secp)$ error incurred by the collapsing measurement moves into the \emph{correctness} error of the simulation.

\subsection{Related Work}\label{sec:related-work}

\paragraph{Post-Quantum Zero-Knowledge.} The first construction of a zero-knowledge protocol secure against quantum adversaries is due to Watrous \cite{STOC:Watrous06}. Roughly speaking, \cite{STOC:Watrous06} shows that ``partial simulators'' that succeed with an inverse polynomial probability that is \emph{independent} of the verifier state can be extended to full post-quantum zero-knowledge simulators. This technique handles sequential repetitions of classical $\Sigma$-protocols and has been used as a subroutine in other contexts (e.g., \cite{STOC:BitShm20,C:BCKM21b,C:ChiChuYam21,C:AnaChuLap21}), but its applicability is limited to somewhat special situations. Nevertheless, most prior post-quantum zero-knowledge results have relied crucially on the \cite{STOC:Watrous06} technique.

\cite{STOC:BitShm20,TCC:AnaLap20} recently introduced a beautiful \emph{non-black-box} technique that, in particular, achieves constant-round zero knowledge arguments for $\mathsf{NP}$ with \emph{strict} polynomial time simulation \cite{STOC:BitShm20}. As discussed above, the use of non-black-box techniques is necessary to achieve strict polynomial time simulation in the classical \cite{STOC:BarLin02} and quantum \cite{FOCS:CCLY21} settings (and in the quantum setting this extends to $\EQPTM$ simulation).

Finally, recent work \cite{C:ChiChuYam21} showed that the Goldreich--Kahan protocol achieves post-quantum $\epsilon$-zero knowledge. This is closely related to our \cref{thm:gk}, and so we present a detailed comparison below.

\paragraph{Comparison with \cite{C:ChiChuYam21}.} The post-quantum security of the Goldreich--Kahan protocol was analyzed previously in \cite{C:ChiChuYam21}. Our simulation strategy for \cref{thm:gk} is related to that of \cite{C:ChiChuYam21} in that the two simulators both consider the Jordan decomposition for essentially the same pair of projectors, but the two simulators are otherwise quite different.

At a high level, \cite{C:ChiChuYam21} constructs a (non-trivial) quantum analogue of the following classical simulator: given error parameter $\epsilon$, repeat $\poly(1/\epsilon)$ times: sample $a \gets \Sim(0)$ and run $V^*$ on $a$. If $V^*$ ever opens correctly, record its response $r$. Then, run a single execution of the protocol using $(a, z) \gets \Sim(r)$ and output the result. 

More concretely, the \cite{C:ChiChuYam21} simulator first attempts to extract the verifier's challenge $r$ in $\poly(1/\eps)$ time, and then attempts to generate an accepting transcript in a single final interaction with the verifier. However, if the verifier \emph{aborts} in this final interaction, the simulation fails; this is roughly because successfully extracting $r$ skews the verifier's state towards not aborting. To obtain a full simulator, they use an idea from \cite{STOC:BitShm20}: (1) design a ``partial simulator'' that randomly guesses whether the verifier will abort in its final invocation, then achieves $\varepsilon$-simulation conditioned on a correct guess; (2) apply \cite{STOC:Watrous06}-rewinding to ``amplify'' onto executions where the guess is correct. 

It is natural to ask whether the above simulation strategy would have sufficed to prove \cref{thm:gk} (instead of writing down a new simulator). We remark that this is unlikely; their simulator seems to be tailored to $\eps$-ZK and, moreover, does not address what \cite{JC:GolKah96} describe as the main technical challenge in the classical setting: handling verifiers that abort with all but negligible probability. In more detail:
\begin{itemize}
    \item Their non-aborting simulator (like the classical analogue above) \emph{always} tries to extract $\Ch$. To achieve negligible simulation error, this extraction must succeed with all but negligible probability for any adversary that with inverse polynomial probability does not abort. This would require that the simulator run in superpolynomial time.
    
    Our simulator, as well as essentially all classical black-box ZK simulators, address this issue by first measuring whether the verifier aborts, and then only proceeding with the simulation in the non-aborting case.
    
    \item By Markov's inequality, expected polynomial time simulation implies $\eps$-simulation in time $O(1/\eps)$. As a function of $\eps$, the \cite{FOCS:CCLY21} simulator runs in some large polynomial time (as currently written, they appear to achieve runtime $1/\eps^6$, although it is likely unoptimized). Thus, even a hypothetical variable-runtime version of their simulator would not be expected polynomial time. In particular, the \cite{STOC:Watrous06,STOC:BitShm20} ``guessing'' compiler appears to cause a quadratic blowup in the runtime of their non-aborting simulator (due to a required smaller accuracy parameter).
    
    \item The \cite{STOC:Watrous06,STOC:BitShm20} ``guessing'' compiler adds an additional layer of complexity onto the \cite{C:ChiChuYam21} simulator that is incompatible with the $\EQPTC$ definition in the sense that given an $\EQPTC$ partial simulator, the \cite{STOC:Watrous06,STOC:BitShm20} ``guessing'' compiler would not produce a procedure in $\EQPTC$.
\end{itemize}

\noindent We also achieve some improvements over~\cite{C:ChiChuYam21} unrelated to the simulation accuracy:
\begin{itemize}
    \item \cite{C:ChiChuYam21} require that the underlying sigma protocol satisfies a \emph{delayed witness} property, which is not required in the classical setting. Our ``projector indistinguishability'' lemma (\cref{lemma:proj-indist}; see also~\cref{sec:tech-overview-pqzk-gk}) enables us to handle arbitrary sigma protocols.
    \item \cite{C:ChiChuYam21} require that the verifier commit to the sigma protocol challenge $r$ using a \emph{strong collapse-binding} commitment. Using a new proof technique (see~\cref{sec:unique-message-collapsing}), we show that standard collapse-binding suffices.
\end{itemize}

\paragraph{Post-Quantum Extraction}
As previously discussed, there is a line of prior work \cite{EC:Unruh12,EC:Unruh16,FOCS:CMSZ21} that achieves forms of post-quantum extraction that do \emph{not} preserve the prover state. Below we briefly discuss prior work on state-preserving post-quantum extraction.

\cite{STOC:BitShm20} directly constructs a state-preserving extractable commitment with \emph{non-black-box} extraction in order to achieve their zero-knowledge result. Their construction makes use of post-quantum fully homomorphic encryption (for quantum circuits). Their extractor homomorphically evaluates the adversarial sender.

\cite{STOC:BitShm20} also shows that constant-round zero-knowledge arguments and post-quantum secure function evaluation generically imply constant-round state-preserving extractable commitments. Combining this with \cite{STOC:Watrous06} yields a polynomial-round state-preserving extractable commitment scheme. Since this result also holds in the ``$\epsilon$ setting,'' plugging in \cite{C:ChiChuYam21} implies a constant-round $\epsilon$ state-preserving extractable commitment, although this protocol would have many rounds and is only privately verifiable.

All of the above results achieve computationally state-preserving extraction. \cite{C:AnaChuLap21} constructs a polynomial-round state-preserving extractable commitment scheme with \emph{statistical} state preservation. They use the \cite{STOC:Watrous06} simulation technique as the core of their extraction procedure, applied to a new construction where statistical state preservation is possible. 

\section{Preliminaries}
\label{sec:preliminaries}

The security parameter is denoted by $\secp$. A function $f \colon \Naturals \rightarrow [0,1]$ is \emph{negligible}, denoted $f(\secp) = \negl(\secp)$, if it decreases faster than the inverse of any polynomial. A probability is \emph{overwhelming} if is at least $1-\negl(\secp)$ for a negligible function $\negl(\secp)$. For any positive integer $n$, let $[n] \coloneqq \{1,2,\dots,n\}$. For a set $R$, we write $r \gets R$ to denote a uniformly random sample $r$ drawn from $R$.

\subsection{Quantum Preliminaries and Notation}
\label{sec:quantum-prelims}

\paragraph{Quantum information.}
A (pure) \emph{quantum state} is a vector $\ket{\psi}$ in a complex Hilbert space $\RegH$ with $\norm{\ket{\psi}} = 1$; in this work, $\RegH$ is finite-dimensional. We denote by $\Hermitians{\RegH}$ the space of Hermitian operators on $\RegH$. A \emph{density matrix} is a positive semi-definite operator $\DMatrix \in \Hermitians{\RegH}$ with $\Tr(\DMatrix) = 1$. A density matrix represents a probabilistic mixture of pure states (a mixed state); the density matrix corresponding to the pure state $\ket{\psi}$ is $\ketbra{\psi}$. Typically we divide a Hilbert space into \emph{registers}, e.g. $\RegH = \RegH_1 \otimes \RegH_2$. We sometimes write, e.g., $\DMatrix^{\RegH_1}$ to specify that $\DMatrix \in \Hermitians{\RegH_1}$.

A unitary operation is a complex square matrix $U$ such that $U \contra{U} = \Id$. The operation $U$ transforms the pure state $\ket{\psi}$ to the pure state $U \ket{\psi}$, and the density matrix $\DMatrix$ to the density matrix $U \DMatrix \contra{U}$.

A \emph{projector} $\Projector$ is a Hermitian operator ($\contra{\Projector} = \Projector$) such that $\Projector^2 = \Projector$. A \emph{projective measurement} is a collection of projectors $\ProjMeasurement = (\Projector_i)_{i \in S}$ such that $\sum_{i \in S} \Projector_i = \Id$. This implies that $\Projector_i \Projector_j = 0$ for distinct $i$ and $j$ in $S$. The application of $\ProjMeasurement$ to a pure state $\ket{\psi}$ yields outcome $i \in S$ with probability $p_i = \norm{\Projector_i \ket{\psi}}^2$; in this case the post-measurement state is $\ket{\psi_i} = \Projector_i \ket{\psi}/\sqrt{p_i}$. We refer to the post-measurement state $\Projector_i \ket{\psi}/\sqrt{p_i}$ as the result of applying $\ProjMeasurement$ to $\ket{\psi}$ and \emph{post-selecting} (conditioning) on outcome $i$. A state $\ket{\psi}$ is an \emph{eigenstate} of $\ProjMeasurement$ if it is an eigenstate of every $\Projector_i$.

A two-outcome projective measurement is called a \emph{binary projective measurement}, and is written as $\ProjMeasurement = \BMeas{\Projector}$, where $\Projector$ is associated with the outcome $1$, and $\Id - \Projector$ with the outcome $0$.

General (non-unitary) evolution of a quantum state can be represented via a \emph{completely-positive trace-preserving (CPTP)} map $T \colon \Hermitians{\RegH} \to \Hermitians{\RegH'}$. We omit the precise definition of these maps in this work; we only use the facts that they are trace-preserving (for every $\DMatrix \in \Hermitians{\RegH}$ it holds that $\Tr(T(\DMatrix)) = \Tr(\DMatrix)$) and linear.

For every CPTP map $T \colon \Hermitians{\RegH} \to \Hermitians{\RegH}$ there exists a \emph{unitary dilation} $U$ that operates on an expanded Hilbert space $\RegH \otimes \RegK$, so that $T(\DMatrix) = \Tr_{\RegK}(U (\rho \otimes \ketbra{0}^{\RegK}) U^{\dagger})$. This is not necessarily unique; however, if $T$ is described as a circuit then there is a dilation $\CPTPUnitary$ represented by a circuit of size $O(|T|)$.

For Hilbert spaces $\RegA,\RegB$ the \emph{partial trace} over $\RegB$ is the unique CPTP map $\Tr_{\RegB} \colon \Hermitians{\RegA \otimes \RegB} \to \Hermitians{\RegA}$ such that $\Tr_{\RegB}(\DMatrix_A \otimes \DMatrix_B) = \Tr(\DMatrix_B) \DMatrix_A$ for every $\DMatrix_A \in \Hermitians{\RegA}$ and $\DMatrix_B \in \Hermitians{\RegB}$.

A \emph{general measurement} is a CPTP map $\Measurement \colon \Hermitians{\RegH} \to \Hermitians{\RegH \otimes \RegO}$, where $\RegO$ is an ancilla register holding a classical outcome. Specifically, given measurement operators $\{ M_{i} \}_{i=1}^{N}$ such that $\sum_{i=1}^{N} M_{i} M_{i}^{\dagger} = \Id$ and a basis $\{ \ket{i} \}_{i=1}^{N}$ for $\RegO$, $\Measurement(\DMatrix) \eqdef \sum_{i=1}^{N} (M_{i} \DMatrix M_{i}^{\dagger} \otimes \ketbra{i}^{\RegO})$. We sometimes implicitly discard the outcome register. A projective measurement is a general measurement where the $M_{i}$ are projectors. A measurement induces a probability distribution over its outcomes given by $\Pr[i] = \Tr(\ketbra{i}^{\RegO} \Measurement(\DMatrix))$; we denote sampling from this distribution by $i \gets \Measurement(\DMatrix)$.

The \emph{trace distance} between states $\DMatrix,\DMatrixW$, denoted $d(\DMatrix,\DMatrixW)$, is defined as $\frac{1}{2}\Tr( \sqrt{(\DMatrix - \DMatrixW)^2})$. The trace distance is contractive under CPTP maps (for any CPTP map $T$, $d(T(\DMatrix),T(\DMatrixW)) \leq d(\DMatrix,\DMatrixW)$). It follows that for any measurement $\Measurement$, the statistical distance between the distributions $\Measurement(\DMatrix)$ and $\Measurement(\DMatrixW)$ is bounded by $d(\DMatrix,\DMatrixW)$. We have the following \emph{gentle measurement lemma}, which bounds how much a state is disturbed by applying a measurement whose outcome is almost certain.

\begin{lemma}[Gentle Measurement~\cite{Winter99}]
\label{lemma:gentle-measurement}
    Let $\DMatrix \in \Hermitians{\RegH}$ and $\ProjMeasurement = \BMeas{\Projector}$ be a binary projective measurement on $\RegH$ such that $\Tr(\Projector \DMatrix) \geq 1-\delta$. Let
    \[\DMatrix' \eqdef \frac{\Projector \DMatrix \Projector}{\Tr(\Projector \DMatrix)} \]
    be the state after applying $\ProjMeasurement$ to $\DMatrix$ and post-selecting on obtaining outcome $1$. Then
\begin{equation*}
d(\DMatrix,\DMatrix') \leq 2\sqrt{\delta}.
\end{equation*}
\end{lemma}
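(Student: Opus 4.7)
The plan is to reduce the mixed-state statement to the pure-state case via purification, and then settle the pure-state case by a short two-dimensional geometric argument. Let $\ket{\Psi}_{\RegH \otimes \RegK}$ be any purification of $\DMatrix$ on some auxiliary space $\RegK$, so $\DMatrix = \Tr_{\RegK}(\ketbra{\Psi})$. Because $\Projector \otimes \Id_{\RegK}$ acts trivially on $\RegK$, post-selecting on outcome $1$ of $\BMeas{\Projector \otimes \Id_{\RegK}}$ applied to $\ketbra{\Psi}$ commutes with the partial trace over $\RegK$: the outcome-$1$ probability equals $\Tr(\Projector\DMatrix) \geq 1-\delta$, and the post-selected pure state
\[
\ket{\Psi'} \eqdef (\Projector \otimes \Id_{\RegK})\ket{\Psi}/\sqrt{\Tr(\Projector\DMatrix)}
\]
is itself a purification of $\DMatrix'$. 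By contractivity of trace distance under the partial trace, it then suffices to bound $d(\ketbra{\Psi}, \ketbra{\Psi'})$.

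For the pure-state comparison, I would use the standard identity $d(\ketbra{\phi}, \ketbra{\psi}) = \sqrt{1 - |\braket{\phi|\psi}|^2}$. Decomposing $\ket{\Psi} = \alpha \ket{\Psi_0} + \beta \ket{\Psi_1}$ in the orthogonal decomposition induced by the complementary projectors $(\Id - \Projector) \otimes \Id_{\RegK}$ and $\Projector \otimes \Id_{\RegK}$, with $\ket{\Psi_1}$ the unit vector proportional to $(\Projector \otimes \Id_{\RegK})\ket{\Psi}$, one reads off $|\beta|^2 = \bra{\Psi}(\Projector \otimes \Id_{\RegK})\ket{\Psi} = \Tr(\Projector\DMatrix) \geq 1-\delta$ and $\ket{\Psi'} = \ket{\Psi_1}$, so $|\braket{\Psi|\Psi'}| = |\beta|$. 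Combining the two steps gives
\[
d(\DMatrix, \DMatrix') \;\leq\; d(\ketbra{\Psi}, \ketbra{\Psi'}) \;=\; \sqrt{1 - |\beta|^2} \;\leq\; \sqrt{\delta} \;\leq\; 2\sqrt{\delta}.
\]

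There is no serious obstacle in this route: the only care needed is in verifying that the post-selected purification really does purify $\DMatrix'$, which follows from the fact that $\Projector \otimes \Id_{\RegK}$ acts as the identity on the $\RegK$-factor. In fact this strategy yields the sharper bound $\sqrt{\delta}$, and the stated $2\sqrt{\delta}$ is a convenient (and commonly cited) slackened form. An alternative direct route that avoids purification is to apply the triangle inequality $d(\DMatrix, \DMatrix') \leq d(\DMatrix, \Projector\DMatrix\Projector) + d(\Projector\DMatrix\Projector, \DMatrix')$, bound the normalization term by $\delta/2$ using $\Tr(\Projector\DMatrix) \geq 1-\delta$, and control the cross terms in the expansion $\DMatrix - \Projector\DMatrix\Projector = \Projector\DMatrix(\Id - \Projector) + (\Id - \Projector)\DMatrix\Projector + (\Id - \Projector)\DMatrix(\Id - \Projector)$ via the Cauchy--Schwarz inequality for Schatten norms (each off-diagonal term has trace norm at most $\sqrt{\delta}$); this direct argument is the one that naturally produces the factor of $2$ in the stated bound.
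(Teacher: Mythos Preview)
Your argument is correct and in fact yields the sharper bound $\sqrt{\delta}$. However, the paper does not supply its own proof of this lemma: it is stated in the preliminaries and attributed to \cite{Winter99} without proof, as a standard background fact. So there is no ``paper's proof'' to compare against; your purification-plus-contractivity route is one of the standard ways to establish the result, and the alternative direct triangle-inequality argument you sketch is another.
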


\begin{definition}
\label{def:almost-proj}
A real-valued measurement $\Measurement$ on $\RegH$ is \defemph{$(\varepsilon,\delta)$-almost-projective} if applying $\Measurement$ twice in a row to any state $\DMatrix \in \Hermitians{\RegH}$ produces measurement outcomes $p,p'$ where 
\begin{equation*}
\Pr[\abs{p-p'} \leq \varepsilon] \geq 1-\delta.
\end{equation*}
\end{definition}

\paragraph{Quantum algorithms.}
In this work, a \emph{quantum adversary} is a family of quantum circuits $\{ \Adversary_{\secp} \}_{\secp \in \Naturals}$ represented classically using some standard universal gate set. A quantum adversary is \emph{polynomial-size} if there exists a polynomial $p$ and $\secp_0 \in \Naturals$ such that for all $\secp > \secp_0$ it holds that $|\Adversary_{\secp}| \leq p(\secp)$ (i.e., quantum adversaries have classical non-uniform advice).

\subsection{Black-Box Access to Quantum Algorithms}
\label{subsec:blackbox}

Let $A$ be a polynomial-time quantum algorithm with internal state $\brho \in \mathrm{D}(\RegH)$ whose behavior is specified by a unitary $U$ on $\RegX \otimes \RegH$. A quantum oracle algorithm $S^A$ with \emph{black-box access} to $(A,\brho)$ is restricted to acting on $\RegH$ (which is initially set $\brho$) by applying the unitary $U$ or $U^\dagger$, but can freely manipulate $\RegX$ and an arbitrary external register $\RegY$.

Black-box access models sometimes permit the $U$ and $U^\dagger$ gates to be controlled on any external registers (i.e., any registers other than the registers $\RegZ \otimes \RegH$ to which $U$ is applied). We note that none of the black-box algorithms in this work require controlled access to $U,U^{\dagger}$. This is because our black-box use of $U,U^\dagger$ takes the form $U^{\dagger} (\Id_{\RegH} \otimes V_{\RegX,\RegY_1}) U$ where $V$ is a unitary acting only on $\RegX \otimes \RegY_1$, and we can replace $U,U^\dagger$ controlled on $\RegY_2$, with $V$ controlled on $\RegY_2$.

\paragraph{Algorithms with classical input and output.} We also consider the special case of quantum algorithms that take classical ``challenge'' $r$ and produce classical ``response'' $z$. Writing $\RegX = \RegR \otimes \RegZ$, an algorithm of this form is specified by a unitary $U$ on $\RegR \otimes \RegZ \otimes \RegH$ of the form $\sum_r \ketbra{r}_{\RegR} \otimes U^{(r)}_{\RegZ,\RegH}$. For example, $S^A$ can run $A$ on a superposition of inputs by instantiating $\RegR \otimes \RegZ$ to $\sum_{r} \ket{r}_{\RegR} \otimes \ket{0}_{\RegZ}$ and then applying $U$.

We note that this definition is consistent with the notions of interactive quantum machines and oracle access to an interactive quantum machine used in e.g.~\cite{EC:Unruh12} and other works on post-quantum zero-knowledge.

We remark that our formalism is tailored to the two-message challenge-response setting. While the protocols we analyze in this paper will have more than two messages of interaction, our analysis will typically center around two particular messages in the middle of a longer execution, and $\brho$ will be the intermediate state of the interactive algorithm right before the next challenge is sent. We also point out that the unitary $U$ can be treated as independent of the (classical) protocol transcript before challenge $r$ is sent, since we can assume this transcript is saved in $\brho$. 

\subsection{Jordan's Lemma}
\label{subsec:jordan}

We state Jordan's lemma and its relation to the singular value decomposition.

\begin{lemma}[\cite{Jordan75}]
\label{lemma:jordan}
For any two Hermitian projectors $\BProj{\MeasA}$ and $\BProj{\MeasB}$ on a Hilbert space $\RegH$, there exists an orthogonal decomposition of $\RegH = \bigoplus_j \Subspace_{j}$ into one-dimensional and two-dimensional subspaces $\{\Subspace_{j}\}_{j}$ (the \emph{Jordan subspaces}), where each $\Subspace_{j}$ is invariant under both $\BProj{\MeasA}$ and $\BProj{\MeasB}$. Moreover:
\begin{itemize}[noitemsep]
\item in each one-dimensional space, $\BProj{\MeasA}$ and $\BProj{\MeasB}$ act as identity or rank-zero projectors; and
\item in each two-dimensional subspace $\Subspace_{j}$, $\BProj{\MeasA}$ and $\BProj{\MeasB}$ are rank-one projectors. In particular, there exist distinct orthogonal bases $\{\JorKetA{j}{1},\JorKetA{j}{0}\}$ and $\{\JorKetB{j}{1},\JorKetB{j}{0}\}$ for $\Subspace_{j}$ such that $\BProj{\MeasA}$ projects onto $\JorKetA{j}{1}$ and $\BProj{\MeasB}$ projects onto $\JorKetB{j}{1}$.
\end{itemize}
\end{lemma}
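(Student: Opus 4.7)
The plan is to apply the spectral theorem to a Hermitian operator built from $\BProj{\MeasA}$ and $\BProj{\MeasB}$ and then read off the Jordan decomposition directly from its eigenspaces. Specifically, I would work with $C \eqdef \BProj{\MeasA} \BProj{\MeasB} \BProj{\MeasA}$, which is Hermitian positive semidefinite, supported on $\image(\BProj{\MeasA})$, with spectrum contained in $[0,1]$. By the spectral theorem $\image(\BProj{\MeasA})$ decomposes as an orthogonal direct sum of eigenspaces $V_\lambda$ of $C$; I would carry out the symmetric construction for $D \eqdef \BProj{\MeasB}\BProj{\MeasA}\BProj{\MeasB}$ on $\image(\BProj{\MeasB})$ with eigenspaces $W_\mu$.

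The degenerate cases are quick. For a unit $v \in V_1$, the inequality $\|C v\| \le \|\BProj{\MeasB} \BProj{\MeasA} v\| \le \|v\|$ must be tight, forcing $\BProj{\MeasA} v = v$ and $\BProj{\MeasB} v = v$, so $\spanset\{v\}$ is an invariant line on which both projectors act as the identity. For $v \in V_0 \cap \image(\BProj{\MeasA})$, the identity $\|\BProj{\MeasB}\BProj{\MeasA} v\|^2 = \langle v, Cv\rangle = 0$ gives $\BProj{\MeasB} v = 0$, yielding an invariant line on which $\BProj{\MeasA}$ is the identity and $\BProj{\MeasB}$ is zero. Taking orthonormal bases of each $V_0, V_1$, together with the analogous pieces from $W_0, W_1$ and from $\ker(\BProj{\MeasA}) \cap \ker(\BProj{\MeasB})$, produces all the required one-dimensional invariant subspaces.

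For the main case, fix $\lambda \in (0,1)$ and a unit eigenvector $v \in V_\lambda$, so $\BProj{\MeasA} v = v$ and $C v = \lambda v$. I would set $w \eqdef (\BProj{\MeasB} v - \lambda v)/\sqrt{\lambda(1-\lambda)}$ and verify by direct computation that $w$ is a unit vector orthogonal to $v$, using $\|\BProj{\MeasB} v\|^2 = \langle v, Cv\rangle = \lambda$. Then $\Subspace_j \eqdef \spanset\{v,w\}$ is invariant under both projectors: a short calculation using $\BProj{\MeasA}\BProj{\MeasB} v = Cv = \lambda v$ gives $\BProj{\MeasA} w = 0$, so $\BProj{\MeasA}$ restricts to the rank-one projector onto $v$; and the identities $\BProj{\MeasB} v = \lambda v + \sqrt{\lambda(1-\lambda)} w$ and $\BProj{\MeasB} w = \sqrt{\lambda(1-\lambda)} v + (1-\lambda) w$ show that $\BProj{\MeasB}$ restricts to the rank-one projector onto $\sqrt{\lambda} v + \sqrt{1-\lambda} w$. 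Performing this construction for each vector in an orthonormal eigenbasis of each $V_\lambda$ with $\lambda \in (0,1)$ yields the family of candidate two-dimensional Jordan subspaces.

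The main obstacle is verifying that this family is pairwise orthogonal and that its span, together with the one-dimensional pieces, exhausts $\RegH$. Pairwise orthogonality reduces to checking $\langle w, w'\rangle = 0$ for $w, w'$ built from distinct basis vectors $v, v'$; expanding and using the identity $\langle v, \BProj{\MeasB} v'\rangle = \langle v, Cv'\rangle = \lambda' \langle v, v'\rangle$ (which holds because $v, v' \in \image(\BProj{\MeasA})$) reduces the claim to $\langle v, v'\rangle = 0$, which holds by choice of basis. For the spanning statement, the intertwining relation $D \BProj{\MeasB} = \BProj{\MeasB} C$ shows that (after normalization) $\BProj{\MeasB}$ restricts to a bijection between $\bigoplus_{\lambda \in (0,1)} V_\lambda$ and $\bigoplus_{\mu \in (0,1)} W_\mu$, so the $w$-vectors I build precisely cover the nonzero-eigenvalue part of $\image(\BProj{\MeasB})$ and no subspace of $\RegH$ is missed.
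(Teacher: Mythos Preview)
The paper does not give its own proof of this lemma; it merely states the result and cites Regev's lecture notes for a proof. Your approach via the spectral decomposition of $C = \BProj{\MeasA}\BProj{\MeasB}\BProj{\MeasA}$ on $\image(\BProj{\MeasA})$ is correct and is one of the standard routes to Jordan's lemma (the other common route diagonalizes $\BProj{\MeasA}+\BProj{\MeasB}$).

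The only step worth tightening is the final spanning argument. Rather than appealing loosely to the intertwining relation, it is cleanest to observe directly that your collection of subspaces contains all of $\image(\BProj{\MeasA})$ (via the $v$'s, including those from $V_0,V_1$), all of $\image(\BProj{\MeasB})$ (since $\spanset\{v,w\}=\spanset\{v,\BProj{\MeasB} v\}$, and the vectors $\BProj{\MeasB} v$ together with $W_0$ and $W_1=V_1$ span $\image(\BProj{\MeasB})$), and the piece $\ker(\BProj{\MeasA})\cap\ker(\BProj{\MeasB})$. Since any vector orthogonal to $\image(\BProj{\MeasA})$, $\image(\BProj{\MeasB})$, and $\ker(\BProj{\MeasA})\cap\ker(\BProj{\MeasB})$ must be zero, these three pieces span $\RegH$, and the decomposition is complete.
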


A simple proof of Jordan's lemma can be found in~\cite{Regev06-XXX}.

For each $j$, the vectors $\JorKetA{j}{1}$ and $\JorKetB{j}{1}$ are corresponding left and right singular vectors of the matrix $\BProj{\MeasA} \BProj{\MeasB}$ with singular value $s_j = |\JorBraKetAB{j}{1}|$. The same is true for $\JorKetA{j}{0}$ and $\JorKetB{j}{0}$ with respect to $(\Id-\ProjA)(\Id-\ProjB)$.

\subsection{Commitment Schemes}

A \emph{commitment scheme} consists of a pair of PPT algorithms $\Gen,\Commit$ with the following properties.

\newcommand{\ExpHide}{\mathsf{Exp}^{\Adversary}_{\mathsf{hide}}}
\paragraph{Statistical/computational hiding.}
For an adversary $\Adversary$, define the experiment $\ExpHide(\secp)$ as follows.
\begin{enumerate}[noitemsep]
    \item $\Adversary(1^\secp)$ sends $(\ck,m_0,m_1)$ to the challenger.
    \item The challenger flips a coin $b \in \Bits$ and returns $\com \eqdef \Commit(\ck,m_b)$ to the adversary.
    \item The adversary outputs a bit $b'$. The experiment outputs $1$ if $b = b'$.
\end{enumerate}
We say that $(\Gen,\Commit)$ is statistically (resp. computationally) hiding if for all unbounded (resp. non-uniform QPT) adversaries $\Adversary$,
\begin{equation*}
    | \Pr[\ExpHide(\secp) = 1] - 1/2 | = \negl(\secp) ~.
\end{equation*}

\newcommand{\ExpBind}{\mathsf{Exp}^{\Adversary}_{\mathsf{bind}}}
\paragraph{Statistical/computational binding.}
For an adversary $\Adversary$, define the experiment $\ExpBind(\secp)$ as follows.
\begin{enumerate}[noitemsep]
    \item The challenger generates $\ck \gets \Gen(1^\secp)$.
    \item $\Adversary(\ck)$ sends $(m_0,\omega_0,m_1,\omega_1)$ to the challenger.
    \item The experiment outputs $1$ if $\Commit(\ck,m_0,\omega_0) = \Commit(\ck,m_1,\omega_1)$.
\end{enumerate}
We say that $(\Gen,\Commit)$ is statistically (resp. computationally) binding if for all unbounded (resp. non-uniform QPT) adversaries $\Adversary$,
\begin{equation*}
    \Pr[\ExpBind(\secp) = 1] = \negl(\secp) ~.
\end{equation*}

\newcommand{\ExpCollapse}{\mathsf{Exp}^{\Adversary}_{\mathsf{cl}}}
\newcommand{\MeasValid}{\Meas{\mathsf{valid}}}
\paragraph{Collapse binding.}
For an adversary $\Adversary$, define the experiment $\ExpCollapse(\secp)$ as follows.
\begin{enumerate}[noitemsep]
    \item The challenger generates $\ck \gets \Gen(1^\secp)$.
    \item $\Adversary(\ck)$ sends a commitment $\com$ and a quantum state $\DMatrix$ on registers $\RegM \otimes \RegW$.
    \item The challenger flips a coin $b \in \Bits$. If $b = 0$, the challenger does nothing. Otherwise, the challenger measures $\RegM$ in the computational basis.
    \item The challenger returns registers $\RegM \otimes \RegW$ to the adversary, who outputs a bit $b'$. The experiment outputs $1$ if $b = b'$.
\end{enumerate}
We say that $\Adversary$ is valid if measuring the output of $\Adversary(\ck)$ in the computational basis yields, with probability $1$, $(\com,m,\omega)$ such that $\Commit(\ck,m,\omega) = \com$.

We say that $(\Gen,\Commit)$ is collapse-binding if for all \emph{valid} non-uniform QPT adversaries $\Adversary$,
\begin{equation*}
    | \Pr[\ExpCollapse(\secp) = 1] - 1/2 | = \negl(\secp) ~.
\end{equation*}

\subsection{Preliminaries on Interactive Arguments}\label{subsec:protocol-prelims}

An interactive argument for an $\NP$-language $L$ consists of a pair of interactive algorithms $P, V$:

\begin{itemize}
    \item The prover algorithm $P$ is given as input an $\NP$ statement $x$ and an $\NP$ witness $w$ for $x$. 
    \item The verifier algorithm $V$ is given as input an $\NP$ statement $x$; at the end of the interaction, it outputs a bit $b$ (interpreted as ``accept''/``reject''). 
\end{itemize}

The minimal requirement we ask of such a protocol is \emph{completeness}, which states that when the honest $P, V$ algorithms are executed on a valid instance-witness pair $(x, w)$, the verifier should accept with probability $1-\negl(\secp)$. 

We typically consider interactive arguments consisting of either $3$ or $4$ messages. In many (but not all) settings we assume that the argument system is \emph{public-coin} (in the second-to-last round), meaning that the second-to-last message (or \emph{challenge}) is a uniformly random string $r$ from some domain. We will use the following notation to denote messages in any such protocol:

\begin{itemize}[noitemsep]
    \item For $4$-message public-coin protocols, we use $\mathsf{vk}$ to denote the first verifier message.
    \item We denote the first prover message by $a$. 
    \item We denote the verifier challenge by $r$.
    \item We denote the prover response by $z$. 
    \item We denote the verification predicate by $V(\vk, a, r, z)$. 
\end{itemize}

We consider $3$-message protocols as a special case of $4$-message protocols in which $\vk = \bot$. 

A key property of interactive protocols considered in this work is \emph{collapsing} (and relaxations thereof), defined below.

\begin{definition}[Collapsing Protocol \cite{EC:Unruh16,LiuZ19,DonFMS19}]
\label{def:collapsing-protocol}
An interactive protocol $(P,V)$ is \emph{collapsing} if for every polynomial-size interactive quantum adversary $A$ (where $A$ may have an arbitrary polynomial-size auxiliary input quantum advice state),
\begin{equation*}
\Big| \Pr[\mathsf{CollapseExpt}(0,A) = 1]
- \Pr[\mathsf{CollapseExpt}(1,A) = 1] \Big|
\leq \negl(\secp).
\end{equation*}
For $b \in \{0,1\}$, the experiment $\mathsf{CollapseExpt}(b,A)$ is defined as follows:
\begin{enumerate}[noitemsep]
	\item The challenger runs the interaction $\langle A,V\rangle$ between $A$ (acting as a malicious prover) and the honest verifier $V$, stopping just before the measurement the register $\RegZ$ containing the malicious prover's final message. Let $\tau'$ be the transcript up to this point  excluding the final prover message.
	\item The challenger applies a unitary $U$ to compute the verifier's decision bit $V(\tau',\RegZ)$ onto a fresh ancilla, measures the ancilla, and then applies $U^\dagger$. If the measurement outcome is $0$, the experiment aborts.
	\item If $b = 0$, the challenger does nothing. If $b = 1$, the challenger measures the $\RegZ$ register in the computational basis and discards the result.
	\item The challenger returns the $\RegZ$ register to $A$. Finally $A$ outputs a bit $b'$, which is the output of the experiment.
\end{enumerate}
\end{definition}

\cref{def:collapsing-protocol} captures the collapsing property of Kilian's interactive argument system \cite{STOC:Kilian92} (as well as other $\Sigma$-protocols that make use of ``strongly collapsing commitments'' \cite{C:ChiChuYam21}), but does not accurately capture protocols that make use of commitments satisfying statistical binding but not ``strict binding'' \cite{EC:Unruh12}. To capture these protocols, we introduce a partial-collapsing definition.

For a 3 or 4-message interactive protocol $(P,V)$, let $T$ denote the set of transcript prefixes $\prefix$ (i.e., the first message $a$ in a 3-message protocol or the first two messages $(\vk,a)$ in a 4-message protocol), let $R$ denote the set of challenges $r$ (the second-to-last message) and let $Z$ denotes the set of possible responses $z$ (the final message). Informally, such a protocol is \emph{partially collapsing} with respect to a function $f: T \times R \times Z \rightarrow \{0,1\}^*$ if the prover cannot detect a measurement of $f$.

\begin{definition}[Partially Collapsing Protocol]
\label{def:partial-collapsing-protocol}
Let $f: T \times R \times Z \rightarrow \{0,1\}^*$ be a public efficiently computable function. A 3 or 4-message interactive protocol $(P, V)$ is partially collapsing with respect to $f$ if for every polynomial-size interactive quantum adversary $A$ (where $A$ may have an arbitrary polynomial-size auxiliary input quantum advice state),
\begin{equation*}
\Big| \Pr[\mathsf{PCollapseExpt}(0,f,A) = 1]
- \Pr[\mathsf{PCollapseExpt}(1,f,A) = 1] \Big|
\leq \negl(\secp).
\end{equation*}
For $b \in \{0,1\}$, the experiment $\mathsf{PCollapseExpt}(b,f,A)$ is defined as follows:
\begin{enumerate}[noitemsep]
	\item The challenger runs the interaction $\langle A,V\rangle$ between $A$ (acting as a malicious prover) and the honest verifier $V$, stopping just before the measurement the register $\RegZ$ containing the malicious prover's final message. Let $(\prefix,r)$ be the transcript up to this point (i.e., excluding the final prover message).
	\item The challenger applies a unitary $U$ to compute the verifier's decision bit $V(\tau',\RegZ)$ onto a fresh ancilla, measures the ancilla, and then applies $U^\dagger$. If the measurement outcome is $0$, the experiment aborts.
	\item If $b = 0$, the challenger does nothing. If $b = 1$, the challenger initializes a fresh ancilla $\RegY$ to $\ket{0}_{\RegY}$, applies the unitary $U_f$ (acting on $\RegZ \otimes \RegY$) that computes $f(\prefix,r,\cdot)$ on $\RegZ$ and XORs the output onto $\RegY$, measures $\RegY$ and discards the result, and then applies $U_f^\dagger$.
	\item The challenger returns the $\RegZ$ register to $A$. Finally $A$ outputs a bit $b'$, which is the output of the experiment.
\end{enumerate}
\end{definition}

\cref{def:partial-collapsing-protocol} captures the collapsing property of standard commit-and-open $\Sigma$-protocols \cite{FOCS:GolMicWig86,Blum86} that make use of statistically binding (or, more generally, standard collapse-binding \cite{EC:Unruh16,C:ChiChuYam21}) commitments by setting $f$ to output the part of $z$ corresponding to the committed message (but not the opening). In some other cases (a subroutine of the \cite{FOCS:GolMicWig86} graph non-isomorphism protocol, as well as the \cite{C:LapSha90} ``reverse Hamiltonicity'' $\Sigma$-protocol) we will use more complicated definitions of $f$ that measure different pieces of information depending on the challenge $r$.

Finally, we recall the definition of special honest-verifier zero knowledge.

\begin{definition}[Special honest-verifier zero knowledge]\label{def:shvzk}
    A 3-message sigma protocol $(P_{\Sigma},V_{\Sigma})$ is \emph{special honest verifier zero knowledge} (SHVZK) if there exists an algorithm $\SHVZK.\Sim$ such that for all $(x,w) \in \Relation$ and challenges $\Ch \in R$, the distributions
    \begin{equation*}
        \SHVZK.\Sim(x,\Ch) \quad\text{and}\quad (\Co,\Resp) \gets P_{\Sigma}(x,w,\Ch)
    \end{equation*}
    are computationally indistinguishable.
\end{definition}

\section{Standard Collapse-Binding Implies Unique Messages}
\label{sec:unique-message-collapsing}

Recall that the standard collapse-binding security property ensures that if an efficient adversary produces a superposition of valid message-opening pairs $(m,\omega)$ to a commitment $c$, then it cannot detect whether a measurement of $m$ is performed. There is an \emph{apparent} deficiency with this definition as compared to the classical binding definition, which Unruh (implicitly) observes in~\cite{EC:Unruh12,EC:Unruh16}: collapse-binding does not seem to imply that an adversary cannot give valid openings to two different messages \emph{if the openings themselves are not measured}. 

This issue has received relatively little attention, in part because circumventing it turns out to be fairly easy in many cases by either modifying the underlying protocol, or by simply assuming ``strong'' collapse-binding \cite{C:ChiChuYam21} where the measurement of the message \emph{and opening} is undetectable. For example:
\begin{itemize}
\item In~\cite{EC:Unruh12}, Unruh introduces the notion of a \emph{strict-binding} commitment, defined so that for any commitment $c$, there is a unique valid message-opening pair $(m,\omega)$. Unruh shows that standard $\Sigma$-protocols (such as GMW 3-coloring and Blum Hamiltonicity) are sound when instantiated with strict-binding commitments, but due to the issue described above, is unable to prove that these protocols are sound when instantiated with a statistically-binding commitment. 
\item In~\cite{EC:Unruh16}, Unruh gives a generic transformation which converts a classically secure $\Sigma$-protocol into a quantum proof of knowledge by committing to the responses to each challenge in advance. However, in many $\Sigma$-protocols (e.g. \cite{FOCS:GolMicWig86,Blum86}) the response already consists of an opening to a commitment; are these protocols secure if the commitment is collapse-binding?
\item This issue also arises in~\cite{C:ChiChuYam21}, which explicitly asks for a strong collapse-binding commitment to instantiate their $\Sigma$-protocols. (They do note that a statistically binding commitment also suffices via a different argument.)
\end{itemize}

We believe this is an unsatisfying state of affairs. Collapse-binding is widely accepted as the quantum analogue of classical computational binding, but as the above examples illustrate, there are many natural settings where it is unclear whether it can be used as a drop-in replacement for classically binding commitments. Given this issue, a natural suggestion would be to treat strong collapse-binding as the quantum analogue of classical binding. However, we suggest that any definition of quantum computationally binding should at least capture statistically binding commitments. Statistically binding commitments do not generically satisfy strong collapse-binding, but are (standard) collapse-binding. Worse, strong collapse-binding is not a ``robust'' notion: we can make any commitment scheme lose its strong collapse-binding property by adding a single bit to the opening that the receiver ignores.

In this section, we resolve this difficulty and show that standard collapse-binding generically implies that an adversary cannot give two valid openings for two different messages, even when the openings are left unmeasured. This simplifies some of the proofs in this work, and also implies that strong collapse-binding and strict binding are unnecessary in the above examples.

Towards proving this, we first formalize a natural security property that captures the fact that a quantum adversary should only be able to open to a unique message.

Let $\Com = (\Gen,\Commit)$ be a non-interactive commitment scheme. Define the following challenger-adversary interaction $\mathsf{Exp}_{uniq}^{\mathsf{Adv}}(\secp)$ where $\Adv = (\Adv_1,\Adv_2)$ is a two-phase adversary.
\begin{enumerate}
\item\label[step]{step:unique-1} The challenger generates $\ck \gets \Gen(\secp)$.
\item\label[step]{step:unique-2} Run $\mathsf{Adv}_1(\ck)$ to output a classical commitment string $\com$, a classical message $m_1$ and a superposition of openings on register $\RegW$. It also returns its internal state $\RegH$, which is passed onto $\Adv_2$.
\item\label[step]{step:unique-3} The challenger measures whether $\RegW$ contains a valid opening for $m_1$ with respect to $\com$ and aborts (and outputs $0$) if not.
\item\label[step]{step:unique-4} Run $\mathsf{Adv}_2(\ck)$ on $(\RegH,\RegW)$. It outputs another message $m_2$ and a superposition of openings on register $\RegW$. If $m_2 = m_1$ then the expriment aborts and outputs $0$.
\item\label[step]{step:unique-5} The challenger measures whether $\RegW$ contains a valid opening for $m_1$ with respect to $\com$. If so, the experiment outputs $1$, otherwise $0$.
 \end{enumerate}
 
\begin{definition}
We say that a commitment is \emph{unique-message binding} if it can only be opened to a unique message if for all QPT adversaries $\mathsf{Adv}$,
\[ \Pr[\mathsf{Exp}_{uniq}^{\mathsf{Adv}}(\secp) =1] = \negl(\secp).\]
\end{definition}

\begin{lemma}\label{lemma:collapse-binding-unique-message}
Any collapse-binding commitment $\Com$ satisfies unique-message binding.
\end{lemma}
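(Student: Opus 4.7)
The plan is to reduce collapse-binding to unique-message binding by contrapositive: given a QPT adversary $\Adv = (\Adv_1, \Adv_2)$ that wins $\mathsf{Exp}_{uniq}^{\Adv}(\secp)$ with non-negligible probability $\varepsilon$, I will construct a QPT collapse-binding adversary $B$ with non-negligible distinguishing advantage. The central observation is that while $\Adv_1$ outputs only a classical message $m_1$, $\Adv_2$ certifies the existence of a second valid message $m_2 \ne m_1$; a \emph{controlled} invocation of $\Adv_2$ converts this gap into a genuine coherent superposition over two valid message-opening pairs on a single output register, which is exactly the scenario in which collapse-binding is nontrivial.

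Concretely, $B$ runs $\Adv_1(\ck)$ to obtain classical $(\com, m_1)$ and a state on $(\RegH, \RegW)$, coherently checks validity of $\RegW$ as an opening of $(\com, m_1)$, and measures the resulting bit; on failure it outputs a fixed default valid triple of the form $(\Commit(\ck, 0, \omega_0), 0, \omega_0)$ and makes a uniform guess later. Conditioned on success, $B$ prepares a control register $\RegC$ in state $\ket{+}$ together with fresh ancillas $\RegM, \RegM^*$, applies a unitary dilation of $\Adv_2$ to $(\RegH, \RegW, \RegM^*)$ controlled on $\RegC = 1$, and then constructs $\RegM$ by a $\mathrm{CNOT}$-style sequence so that $\RegM = m_1$ when $\RegC = 0$ and $\RegM = \RegM^*$ when $\RegC = 1$. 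A final coherent validity check, followed by a controlled SWAP that overwrites $(\RegM, \RegW)$ with the default triple on branches for which $(\com, \RegM, \RegW)$ is invalid or $\RegM = m_1$ occurs inside the $\RegC = 1$ branch, ensures that what $B$ hands to the challenger is always supported on valid triples, making $B$ a valid collapse-binding adversary.

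After the challenger returns $(\RegM, \RegW)$, $B$ sequentially reverses the validity SWAP, the $\RegM$ preparation, and the controlled-$\Adv_2$, then measures $\RegC$ in the $\{\ket{+}, \ket{-}\}$ basis, outputting $0$ on $\ket{+}$ and $1$ on $\ket{-}$. When $b = 0$ every step is unitary and cleanly reversed, so $\RegC$ returns to $\ket{+}$ and $B$ outputs $0$ with probability $1$. When $b = 1$, the challenger's measurement acts on a state of the form $\tfrac{1}{\sqrt{2}}(\ket{0}_\RegC \ket{m_1}_\RegM \ket{\xi_0} + \ket{1}_\RegC \sum_{m' \ne m_1} \alpha_{m'} \ket{m'}_\RegM \ket{\xi_{m'}})$ in which $\RegM$ is correlated with $\RegC$, so this measurement decoheres $\RegC$ into a classical bit, and the final $\{\ket{+}, \ket{-}\}$-measurement outputs $\ket{+}$ with probability only $1/2$. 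A direct calculation turns the resulting gap between the two cases into distinguishing advantage $\Omega(\varepsilon)$, contradicting collapse-binding.

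The main obstacle I expect is handling the general case in which $\Adv_2$ outputs a superposition over many candidate messages of varying validity rather than a single deterministic $m_2$, while simultaneously guaranteeing that $B$'s output triple is valid with probability one. Both issues are addressed by performing validity checks coherently on an ancilla and using a controlled SWAP to the default triple instead of measuring-and-restarting: this preserves coherence of $\RegC$ on the ``good'' component whose weight is controlled by $\varepsilon$, while enforcing global validity. Care must also be taken so that the $\RegM^*$ register, which differs across the two $\RegC$-branches, is uncomputed together with $\Adv_2$ rather than being left as residual junk entangled with $\RegC$, since otherwise the final $\{\ket{+}, \ket{-}\}$-measurement in the $b = 0$ case would be dephased and the distinguishing advantage destroyed.
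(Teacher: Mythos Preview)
Your overall strategy---prepare a control qubit in $\ket{+}$, run $\Adv_2$ controlled on it, hand the resulting message/opening registers to the collapsing challenger, then uncompute and measure the control in the Hadamard basis---is exactly the paper's approach (and, as the paper notes, is reminiscent of Unruh's control-qubit trick). The gap is in how you enforce that your reduction is a \emph{valid} collapse-binding adversary.

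Your ``controlled SWAP to a default triple'' cannot work as written. In the collapsing experiment the commitment $\com$ is a single classical string that your adversary has already fixed (it came from $\Adv_1$). Your default pair $(0,\omega_0)$ opens the \emph{different} commitment $\Commit(\ck,0,\omega_0)$, not $\com$; after the SWAP, the bad branches of the superposition therefore carry an invalid opening of $\com$, so your adversary is not valid. There is no obvious repair: the only valid opening of $\com$ you possess is the quantum state on $\RegW$, which on the $\RegC=1$ branch has already been overwritten by $\Adv_2$, and it cannot be cloned beforehand. The paper instead applies the binary projective measurement onto the subspace where either $\RegC=0$, or $\RegC=1$ and $(\RegM,\RegW)$ is a valid opening of $\com$ to some $m\neq m_1$; on rejection it aborts and guesses uniformly. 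This guarantees validity, but destroys your clean claim that $\Pr[b'=0\mid b=0]=1$, since the projection does not cancel with its inverse. The paper replaces that claim by the inequality $\Tr(\ProjA\ProjB\rho\ProjB)\ge \Tr(\ProjB\rho)^2/\Tr(\rho)$ whenever $\ProjA\rho=\rho$ (applied with $\ProjA=\ketbra{+}\otimes\Id$ and $\ProjB$ the conjugated validity projector), together with Jensen's inequality across $(\ck,\com,m_1)$, to obtain $\Pr[b'=0\mid b=0]\ge 1/2+\varepsilon/4$. Combined with your (correct) observation that $\Pr[b'=1\mid b=1]=1/2$, this yields advantage $\varepsilon/8$.
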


We remark that the unique-message binding definition and this lemma easily extend to interactive collapse-binding commitments. However, we will focus on the non-interactive case for simplicity. Our proof is reminiscent of the ``control qubit'' trick used by Unruh in~\cite{Unruh16-asiacrypt} to prove that collapse-binding implies a notion called sum-binding.

\begin{proof}
Suppose that $\mathsf{Adv} = (\Adv_1,\Adv_2)$ satisfies $\Pr[\mathsf{Exp}_{uniq}^{\mathsf{Adv}}(\secp) =1] = \varepsilon(\secp) = \varepsilon$. Then we construct an adversary $\mathsf{Adv}'$ that obtains advantage $\varepsilon/8$ in the collapsing game for $\Com$ as follows:
\begin{enumerate}
\item Upon receiving $\ck$ from the challenger, $\mathsf{Adv}'$ does the following:
\begin{enumerate}
\item Run $\Adv_1(\ck)$ to obtain a classical commitment $\com$, a classical message $m_1$ (on register $\RegM$), and registers $\RegW, \RegH$.
\item\label[step]{step:collapsing-first-measurement} Measure whether $\RegW$ contains a valid opening for $m_1$ with respect to $\com$; if the opening is invalid, abort and output a random $b'$.\footnote{To match the syntax of the collapsing game, the ``abort'' works as follows: $\mathsf{Adv}'(\ck)$ initializes $\RegM \otimes \RegW$ to some valid commitment, sends it to the challenger, ignores the registers it gets back, and then outputs a random $b'$.}
\item\label[step]{step:apply-U} Next, prepares an ancilla qubit $\RegB$ in the state $\ket{+}_{\RegB}$ and then apply the unitary $U$ defined as 
\[ U = \ketbra{1}_{\RegB} \otimes U^{\Adv_2}_{\RegH,\RegM,\RegW} + \ketbra{0}_{\RegB} \otimes \Id_{\RegH,\RegM,\RegW}.\]
where $U^{\Adv_2}$ is a unitary description of $\Adv_2$ (the action of $\Adv_2$ on $\RegH \otimes \RegM \otimes \RegW$ is unitary without loss of generality). That is, the unitary $U$ has two branches of computation: it does nothing when $\RegB = 0$, and it runs $\Adv_2$ when $\RegB = 1$.
\item\label[step]{step:measure-valid} Next apply the binary projective measurement $\BMeas{\BProj{\ck,\com,m_1}}$ where 
\[\BProj{\ck,\com,m_1} \coloneqq \ketbra{0}_{\RegB} \otimes \Id_{\RegH,\RegM,\RegW} + \ketbra{1}_{\RegB} \otimes \Id_{\RegH} \otimes \sum_{\substack{m,\omega \ : \  m \neq m_1 \wedge \\ \Commit(\ck,m,\omega) = \com}} \ketbra{m,\omega}_{\RegM,\RegW}. \]
This measurement checks that after applying $U$, the output of $\Adv_2$ (when $\RegB = 1$) is a valid message and opening $(m,\omega)$ where $m \neq m_1$. If this measurement rejects, abort and output a random $b'$.
\item Finally, send $\RegM \otimes \RegW$ to the collapsing challenger.
\end{enumerate}
\item\label[step]{step:uncompute-U} When the collapsing challenger returns $\RegM \otimes \RegW$, apply $U^\dagger$.
\item\label[step]{step:distinguish} Perform the binary projective measurement $\BMeas{\BProj{+}}$ where $\BProj{+} \coloneqq \ketbra{+}_{\RegB} \otimes \Id_{\RegH,\RegM,\RegW}$. If the measurement outcome is $1$ (corresponding to $\ket{+}$), then $\mathsf{Adv}'$ outputs $b' = 0$ (i.e., guesses that the collapsing challenger did not measure the message). Otherwise, it outputs $b' = 1$.
\end{enumerate}

We now compute the probability that $\Adv'$ outputs $b' = b$ for each choice of the collapsing challenge bit $b$.

If $b = 1$, then $\Adv'$ guesses correctly (outputs $b' = 1$) with probability exactly $1/2$. This is because if $\Adv'$ aborts, it outputs $1$ with probability $1/2$ by definition, and if it does not abort, then it sends the collapsing challenger $\RegM \otimes \RegW$ where a measurement of the $\RegM$ register will completely determine the $\RegB$ register. In particular, if the outcome of the $\RegM$ measurement is $m_1$, $\RegB$ collapses to $\ket{0}$; otherwise, $\RegB$ collapses to $\ket{1}$. In either case, the probability that the measurement of $\BMeas{\BProj{+}}$ returns $0$ (making $\Adv'$ output $b' = 1$) is exactly $1/2$.

We now consider the case $b = 0$. Let $\DMatrix = \sum_{\ck,\com,m_1} \DMatrix_{\ck,\com,m_1} + \DMatrix_{\bot}$ be the state on $\RegM \otimes \RegW \otimes \RegH$ after \cref{step:collapsing-first-measurement}, where $\DMatrix_{\ck,\com,m_1}$ is the (subnormalized) state corresponding to outcomes $\ck,\com,m_1$ and the outcome ``valid'' in \cref{step:collapsing-first-measurement}, and $\DMatrix_{\bot}$ is the (subnormalized) state corresponding to the outcome ``invalid'' in \cref{step:collapsing-first-measurement}.

Recall that in the case $b = 0$, the collapsing challenger does nothing to $\RegM \otimes \RegW$. Thus the effect of \cref{step:apply-U,step:measure-valid,step:uncompute-U} is to apply a binary projective measurement $\BMeas{\BProj{\ck,\com,m_1}'}$ where $\BProj{\ck,\com,m_1}' \coloneqq U^\dagger \Pi_{\ck,\com,m_1} U$. From the description of the experiment, it holds that
\begin{align*}
    \Pr[b' = 0] &= \frac{1}{2} \Tr(\DMatrix_{\bot}) + \frac{1}{2} \sum_{\ck,\com,m_1} \Tr((\Id-\BProj{\ck,\com,m_1}') (\ketbra{+} \otimes \DMatrix_{\ck,\com,m_1})) \\
    &\quad + \sum_{\ck,\com,m_1} \Tr(\BProj{+} \BProj{\ck,\com,m_1}' (\ketbra{+} \otimes \DMatrix_{\ck,\com,m_1}) \BProj{\ck,\com,m_1}') \\
    & \geq \frac{1}{2} - \frac{1}{2} \sum_{\ck,\com,m_1} \Tr(\BProj{\ck,\com,m_1}' (\ketbra{+} \otimes \DMatrix_{\ck,\com,m_1})) \\
    &\quad + \sum_{\ck,\com,m_1} \Tr(\DMatrix_{\ck,\com,m_1}) \left(\frac{\Tr(\BProj{\ck,\com,m_1}' (\ketbra{+} \otimes \DMatrix_{\ck,\com,m_1}))}{\Tr(\DMatrix_{\ck,\com,m_1})}\right)^2 \\
    &\geq \frac{1}{2} - \frac{1}{2} \sum_{\ck,\com,m_1} \Tr(\BProj{\ck,\com,m_1}' (\ketbra{+} \otimes \DMatrix_{\ck,\com,m_1})) \\
    &\quad + \frac{\left(\sum_{\ck,\com,m_1} \Tr(\BProj{\ck,\com,m_1}' (\ketbra{+} \otimes \DMatrix_{\ck,\com,m_1})) \right)^2}{\sum_{\ck,\com,m_1} \Tr(\DMatrix_{\ck,\com,m_1})}
\end{align*}
where the latter inequality is Jensen, and the former is the following:
\begin{claim}
    If $\ProjA \DMatrix = \DMatrix$ then $\Tr(\ProjA \ProjB \DMatrix \ProjB) \geq \Tr(\ProjB \DMatrix)^2/\Tr(\DMatrix)$.
\end{claim}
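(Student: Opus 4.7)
The plan is to prove the inequality via the Cauchy--Schwarz inequality in the Hilbert--Schmidt inner product, after first exploiting the hypothesis $\ProjA \DMatrix = \DMatrix$ to symmetrize the trace $\Tr(\ProjB \DMatrix)$ into a form involving $\ProjA \ProjB \ProjA$.

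The first step is to observe that since $\DMatrix$ is positive semidefinite (hence Hermitian), $\ProjA \DMatrix = \DMatrix$ implies $\DMatrix \ProjA = \DMatrix$ by taking adjoints, and therefore $\DMatrix = \ProjA \DMatrix \ProjA$. Plugging this in and using cyclicity of the trace, one gets the identity
\[
\Tr(\ProjB \DMatrix) \;=\; \Tr(\ProjA \ProjB \ProjA \DMatrix).
\]
This rewriting is the key preparatory move: it puts $\Tr(\ProjB \DMatrix)$ into a shape where the positive operator $\ProjA \ProjB \ProjA = (\ProjB \ProjA)^{\dagger} (\ProjB \ProjA)$ is visible inside the trace against $\DMatrix$.

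Next, I would apply Cauchy--Schwarz in the Hilbert--Schmidt inner product $\langle X, Y \rangle \eqdef \Tr(X^{\dagger} Y)$ to the pair $X \eqdef \ProjA \ProjB \ProjA \DMatrix^{1/2}$ and $Y \eqdef \DMatrix^{1/2}$. By direct computation, $\Tr(X^{\dagger} Y) = \Tr(\ProjA \ProjB \ProjA \DMatrix) = \Tr(\ProjB \DMatrix)$, and $\Tr(Y^{\dagger} Y) = \Tr(\DMatrix)$. So the inequality $|\Tr(X^\dagger Y)|^2 \leq \Tr(X^\dagger X)\,\Tr(Y^\dagger Y)$ becomes
\[
\Tr(\ProjB \DMatrix)^2 \;\leq\; \Tr(X^{\dagger} X) \cdot \Tr(\DMatrix).
\]
Finally, using $\ProjA^2 = \ProjA$, cyclicity, and once more the identity $\ProjA \DMatrix \ProjA = \DMatrix$, a short calculation gives
\[
\Tr(X^{\dagger} X) \;=\; \Tr(\ProjA \ProjB \ProjA \ProjB \ProjA \DMatrix) \;=\; \Tr(\ProjB \ProjA \ProjB \DMatrix) \;=\; \Tr(\ProjA \ProjB \DMatrix \ProjB),
\]
which, combined with the displayed Cauchy--Schwarz bound, is exactly the claimed inequality $\Tr(\ProjA \ProjB \DMatrix \ProjB) \geq \Tr(\ProjB \DMatrix)^2 / \Tr(\DMatrix)$ (treating $\Tr(\DMatrix) = 0$ as a trivial degenerate case).

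I do not expect any real obstacle here: the only nontrivial step is recognizing that $X = \ProjA \ProjB \ProjA \DMatrix^{1/2}$ is the right choice to pair against $Y = \DMatrix^{1/2}$ under Cauchy--Schwarz, because it arranges for $\Tr(X^\dagger X)$ to collapse, via the hypothesis $\ProjA \DMatrix \ProjA = \DMatrix$, to $\Tr(\ProjA \ProjB \DMatrix \ProjB)$ rather than to a longer alternating product. Everything else is just trace-cyclic bookkeeping.
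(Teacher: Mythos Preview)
Your proof is correct and takes essentially the same approach as the paper: rewrite $\Tr(\ProjB\DMatrix)$ using $\ProjA\DMatrix=\DMatrix$, then apply Hilbert--Schmidt Cauchy--Schwarz. The paper's version is marginally slicker in that it takes $X=\ProjA\ProjB\,\DMatrix^{1/2}$ (rather than your $X=\ProjA\ProjB\ProjA\,\DMatrix^{1/2}$), so that $\Tr(X^\dagger X)=\Tr(\ProjA\ProjB\DMatrix\ProjB\ProjA)$ is already the target without needing a second invocation of $\ProjA\DMatrix\ProjA=\DMatrix$; but this is a cosmetic difference, not a substantive one.
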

\begin{proof}
    $\Tr(\ProjB \DMatrix) = \Tr(\ProjB \DMatrix \ProjA) = \Tr(\ProjA \ProjB \DMatrix) \leq \sqrt{\Tr(\ProjA \ProjB \DMatrix \ProjB \ProjA) \Tr(\DMatrix)}$, where the inequality is by Cauchy-Schwarz.
\end{proof}
Let $\gamma \eqdef \sum_{\ck,\com,m_1} \Tr(\DMatrix_{\ck,\com,m_1})$. Observe that $\sum_{\ck,\com,m_1} \Tr(\BProj{\ck,\com,m_1}' (\ketbra{+} \otimes \DMatrix_{\ck,\com,m_1})) = (\gamma + \varepsilon)/2$. It follows that
\begin{equation*}
    \Pr[b' = 0] = \frac{1}{2} - \frac{1}{4} (\gamma + \eps) + \frac{1}{4} \cdot \frac{(\gamma + \eps)^2}{\gamma} \geq \frac{1}{2} + \frac{\eps}{4} \enspace.
\end{equation*}

Thus the overall probability that $\Adv'$ guesses a random $b$ correctly in the collapsing experiment is at least $1/2 + \varepsilon(\secp)/8$. 
\end{proof}

\section{Generalized Notions of Special Soundness}
\label{sec:gss}

Let $(P,V)$ denote a 3 or 4-message public-coin interactive proof or argument system. Let $T$ denote the set of transcript prefixes $\prefix$ (i.e., the first message in a 3-message protocol or the first two messages in a 4-message protocol), $R$ denotes the set of challenges $r$ (the second-to-last message) and $Z$ denotes the set of possible responses $z$ (the final message). The instance $x$ is assumed to be part of $\prefix$, which allows us to capture protocols in which the instance is adaptively chosen by the prover in its first message.

We introduce generalizations of the special soundness property to capturing situations where
\begin{enumerate}
    \item the special soundness extractor is able to produce a witness given only a function $f(z)$ of the response $z$, and/or
    \item the extractor is only required to succeed (with some $1-\negl(\secp)$ probability) when the challenges are sampled from an ``admissible distribution.''
\end{enumerate}
The second property is related to the notion of probabilistic special soundness due to~\cite{FOCS:CMSZ21}.\footnote{A similar (but not identical) definition appears in an older version of~\cite{FOCS:CMSZ21}: \url{https://arxiv.org/pdf/2103.08140v1.pdf}.}

Throughout this section, $k$ will be a parameter specifying the number of (partial) transcripts required to extract.

\subsection{Generalized Special Soundness Definitions}

We first recall the standard definition of $k$ special soundness.
\begin{definition}[$k$-special soundness]\label{def:k-ss}
    An interactive protocol $(P, V)$ is $k$-special-sound if there exists an efficient extractor $\SSExtract: T \times (R \times Z)^k \rightarrow \{0,1\}^*$ such that given $\prefix,(r_i,z_i)_{i \in [k]}$ where each $r_i$ is distinct and for each $i$, $(\prefix,r_i,z_i)$ is an accepting transcript, $\SSExtract(\prefix,r_i,z_i)$ outputs a valid witness $w$ for the instance $x$ with probability $1$.
\end{definition}

In order to generalize this definition, we consider interactive protocols $(P,V)$ with a ``consistency'' predicate $g: T \times (R \times \{0,1\}^*)^* \rightarrow \{0,1\}$. The argument $\{0,1\}^*$ corresponds to some partial information $y$ about a response $z$. The consistency predicate should have the property that if $g(\prefix,(r_i,y_i)_{i \in [k]}) =1$, then $g(\prefix,(r_i,y_i)_{i \in G}) = 1$ for all subsets $G \subset [k]$. For any positive integer $k$, we define the set $\Consistent_k$ to be the subset of $T \times (R \times \{0,1\}^*)^k$ on which $g$ outputs $1$. We can extend $k$ special soundness to allow the $\SSExtract$ algorithm to produce a witness given only partial information $y_i$ of the responses $z_i$ provided that the ``partial transcripts'' satisfy consistency.

\begin{definition}[$(k,g)$-special soundness]\label{def:k-g-ss}
    An interactive protocol $(P, V)$ is $(k,g)$-special-sound if there exists an efficient extractor $\SSExtract_g: T \times (R \times \{0,1\}^*)^* \rightarrow \{0,1\}$ such that given $(\prefix,(r_i,y_i)_{i \in [k]}) \in \Consistent_k$ where each $r_i$ is distinct and for each $i$, $\SSExtract_g(\prefix,r_i,y_i)$ outputs a valid witness $w$ with probability $1$.
\end{definition}

Notice that all $k$-special-sound protocols with super-polynomial size challenge space are $(k,g)$-probabilistic-special sound for the ``trivial'' consistency predicate $g$ that simply checks (interpreting $y_i = z_i$ as a full response) whether all the transcripts are accepting.

\begin{claim}
\label{claim:kss-to-kfss}
For any $k$-special-sound protocol $(P,V)$, there exists a consistency predicate $g$ such that $(P,V)$ is $(k,g)$-special-sound.
\end{claim}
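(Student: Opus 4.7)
The plan is to take the "trivial" consistency predicate already alluded to in the paragraph immediately preceding the claim and verify that it witnesses $(k,g)$-special-soundness for any $k$-special-sound protocol. Concretely, I would set the partial information to be the full response, i.e.\ $y_i \eqdef z_i$, and define
\[
g\bigl(\prefix,(r_i,y_i)_{i \in [n]}\bigr) = 1 \iff \forall i \in [n],\ V\text{ accepts }(\prefix,r_i,y_i).
\]
This $g$ is efficiently computable since $V$ is, and, crucially, it is subset-preserving: if every transcript $(\prefix,r_i,y_i)$ is accepting, then so is every transcript in any sub-tuple indexed by $G \subseteq [n]$. Hence $g$ satisfies the structural requirement imposed on consistency predicates in the paragraph preceding \cref{def:k-g-ss}, so that $\Consistent_k$ is well-defined.

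Next I would define $\SSExtract_g \eqdef \SSExtract$, i.e.\ re-use the $k$-special-soundness extractor from \cref{def:k-ss} verbatim (since $y_i = z_i$, the input spaces match). Suppose $(\prefix,(r_i,y_i)_{i\in[k]}) \in \Consistent_k$ with the $r_i$ pairwise distinct. By the definition of $g$, each $(\prefix,r_i,z_i)$ is an accepting transcript, and by hypothesis the challenges $r_i$ are distinct; therefore the preconditions of \cref{def:k-ss} are met and $\SSExtract(\prefix,(r_i,z_i)_{i\in[k]})$ outputs a valid witness with probability $1$. This is exactly the guarantee required by \cref{def:k-g-ss}, so $(P,V)$ is $(k,g)$-special-sound.

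There is essentially no obstacle here: the claim is a definitional observation that $(k,g)$-special soundness properly generalizes $k$-special soundness, and the entire content is to exhibit a subset-closed $g$ for which the $(k,g)$ definition collapses to the $k$ definition. The only thing worth noting is that unlike the probabilistic special-soundness reduction mentioned in the following remark, no assumption on the size of the challenge space is needed, because distinctness of challenges is not enforced by $g$ but is a separate precondition inside \cref{def:k-g-ss} itself.
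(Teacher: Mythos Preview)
Your proposal is correct and takes essentially the same approach as the paper: define $g$ to check that every $(\prefix,r_i,y_i)$ is an accepting transcript and reuse the original $\SSExtract$. Your write-up is actually more careful than the paper's, in that you explicitly verify the subset-closure property of $g$ and note that no challenge-space assumption is needed here.
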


\begin{proof}
Define $g$ to output $1$ on input $\prefix,(r_i,y_i)_{i \in [k]}$ if and only if each $(\prefix,r_i,y_i)$ is an accepting transcript. It follows that the original $\SSExtract$ in the special soundness definition satisfies the requirements of the $(k, g)$-special soundness definition.
\end{proof}

When the challenge space $R$ is super-polynomial-size, we can generalize this definition even further so that the extractor need not succeed on worst-case $k$-tuples of distinct challenges, but only on $k$-tuples sampled from an ``admissible distribution.''

\begin{definition}[$Q$-admissible distribution]
\label{def:q-admissible-dist}
    A distribution $D_k$ over $R^k$ is \emph{admissible} if there exists a negligible function $\negl(\secp)$ and a sampling procedure $\mathsf{Samp}$ such that $D_k$ is $\negl(\secp)$-close to the output distribution of the following process:
    \begin{itemize}[nolistsep]
        \item $\mathsf{Samp}$ makes, in expectation, $Q(\secp)$ classical queries to an oracle $O_R$ that outputs a uniformly random challenge $r \gets R$ each time it is queried.
        \item $\mathsf{Samp}$ must produce its outputs as follows. Let $Q_{\mathrm{total}}$ be the total number of queries it makes to $O_R$. $\mathsf{Samp}$ specifies a set $\{i_1,\dots,i_k\} \subseteq [Q_{\mathrm{total}}]$, and its output is defined to be $r_{i_1},\dots,r_{i_k}$ where $r_i$ is the $i$th output of the uniform sampling oracle $O_R$.
        
        We stress that $\mathsf{Samp}$ may use an arbitrary (e.g., even inefficient) process to select the set $\{i_1,\dots,i_k\}$. Moreover, the output challenges $r_{i_1},\dots,r_{i_k}$ do not necessarily have distinct values (this can occur if the sampling oracle $O_R$ outputs the same challenge more than once).
    \end{itemize}
\end{definition}

\begin{definition}[admissible distribution]
\label{def:admissible-dist}
    A distribution $D_k$ over $R^k$ is \emph{admissible} if there exists $Q = \poly(\secp)$ such that $D_k$ is a $Q$-admissible distribution.
\end{definition}

\begin{definition}[$(k,g)$-probabilistic special soundness]\label{def:k-g-pss}
    An interactive protocol $(P, V)$ with consistency predicate $g$ is $(k,g)$-probabilistic-special-sound if there exists an efficient extractor $\SSExtract: T \times (R \times \{0,1\}^*)^k \rightarrow \{0,1\}^*$ such that for any distribution $D$ supported on $\Consistent_k$ whose marginal distribution on $R^k$ is admissible, 
    \[ \Pr_{(\prefix,(r_i,y_i)_{i \in [k]}) \gets D }[\PSSExtract_g(\prefix,(r_i,y_i)_{i \in [k]}) \rightarrow w \wedge w\text{ is a valid witness for }x] = 1-\negl(\secp)\]
\end{definition}

Note that $(k,g)$-probabilistic special soundness (PSS) is only meaningful when the challenge space $R$ has super-polynomial size. When $R$ is polynomial, an admissible distribution $D_k$ can simply output $(r,\dots,r)$ (the same challenge repeated $k$ times) since there exists a $\Samp$ that simply queries $O_R$ until it outputs the same challenge $k$ times.

However, when $|R|$ is superpolynomial, $(k,g)$-PSS is a relaxation of $(k,g)$-special soundness.

\begin{claim}
\label{k-g-ss-implies-k-g-pss}
When $R = 2^{\omega(\log \lambda)}$, any $(k,g)$-special-sound protocol is also $(k,g)$-probabilistic-special-sound.
\end{claim}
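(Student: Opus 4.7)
The plan is to use the \emph{same} extractor $\SSExtract_g$ from the $(k,g)$-special-soundness hypothesis, and argue that an admissible distribution produces $k$-tuples with pairwise distinct challenges with overwhelming probability. Once distinctness holds, the $(k,g)$-SS guarantee immediately yields a valid witness (with probability $1$), so the $(k,g)$-PSS success probability drops from $1$ by at most the collision probability, which I will show is negligible.

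More concretely, fix an admissible distribution $D$ supported on $\Consistent_k$ whose marginal on $R^k$ is $Q$-admissible for some $Q = \poly(\secp)$. Let $\Samp$ be the corresponding sampler with expected query count $Q$ to the uniform oracle $O_R$, and let $Q_{\mathrm{total}}$ be its (random) total number of queries. First I would bound $Q_{\mathrm{total}}$: by Markov's inequality, $\Pr[Q_{\mathrm{total}} > Q \cdot 2^{\secp/2}] \leq 2^{-\secp/2}$, so up to negligible loss I may condition on $Q_{\mathrm{total}} \leq Q' \eqdef Q \cdot 2^{\secp/2}$. (In fact any super-polynomial cutoff that is still much smaller than $\sqrt{|R|}$ works; I pick $2^{\secp/2}$ for concreteness.)

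Next, conditioned on $Q_{\mathrm{total}} \leq Q'$, the oracle outputs $r_1,\dots,r_{Q_{\mathrm{total}}}$ are i.i.d.\ uniform over $R$, so by a union bound the probability that any two of them collide is at most $\binom{Q'}{2}/|R| \leq (Q')^2/|R|$. Since $|R| = 2^{\omega(\log \secp)}$ and $Q' = 2^{\secp/2} \cdot \poly(\secp)$, this bound is \emph{not} negligible as stated --- so the key technical point is to pick the Markov cutoff more carefully. Because $|R| = 2^{\omega(\log \secp)}$, there exists a super-polynomial function $f(\secp) \leq |R|^{1/3}$; setting the cutoff to $Q \cdot f(\secp)$ gives $\Pr[Q_{\mathrm{total}} > Q \cdot f(\secp)] \leq 1/f(\secp) = \negl(\secp)$ and collision probability at most $(Q \cdot f(\secp))^2 / |R| \leq \poly(\secp)/|R|^{1/3} = \negl(\secp)$. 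Combining the two negligible error terms, the $k$ selected indices $i_1,\dots,i_k \in [Q_{\mathrm{total}}]$ yield pairwise distinct challenges except with negligible probability.

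Finally I would invoke $(k,g)$-special-soundness: on the good event (pairwise distinct challenges), the tuple $(\prefix,(r_{i_j},y_j)_{j \in [k]})$ lies in $\Consistent_k$ with all $r_{i_j}$ distinct, so $\SSExtract_g$ returns a valid witness with probability $1$. Adding up the at most two negligible bad events gives overall success probability $1 - \negl(\secp)$, matching \cref{def:k-g-pss}. The main obstacle is really just bookkeeping of the Markov cutoff versus the collision bound; the assumption $|R| = 2^{\omega(\log \secp)}$ gives enough slack to pick a cutoff that is super-polynomial yet sub-$\sqrt{|R|}$, which is exactly what is needed to make both error terms negligible simultaneously.
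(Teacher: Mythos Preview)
Your proposal is correct and follows the same approach as the paper: both arguments reduce to showing that an admissible distribution outputs pairwise distinct challenges with probability $1-\negl(\secp)$, after which $(k,g)$-special soundness applies directly. The only difference is stylistic---the paper argues by contradiction (assuming non-negligible collision probability forces super-polynomial expected queries), while you give a direct bound via Markov plus a birthday union bound with cutoff $f(\secp)=|R|^{1/3}$; both work for the same reason, namely that $|R|=2^{\omega(\log\secp)}$ is super-polynomial.
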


\begin{proof}
It suffices to prove that the probability any admissible distribution outputs the same challenge $r$ more than once is $\negl(\secp)$. By the definition of an admissible distribution, its output is $\negl(\secp)$-close to the output of an arbitrary sampling algorithm that makes an expected $\poly(\secp)$ number of queries to a uniform sampling oracle $O_R$ over $R$, and then outputs a size-$k$ subset of the oracle responses. 

Suppose towards a contradiction that there exists constant $c$ such that for infinitely many $\secp \in \mathbb{N}$, the sampling oracle $O_R$ outputs a repeated challenge with probability $1/\secp^c$. Let $d$ be a constant such that the expected number of queries to the uniform sampling oracle $O_R$ is $O(\secp^d)$. If $0 \leq q \leq \lambda^{d+c+1}$ oracle queries have already been made, the probability that the next oracle query allows finding a collision is at most $\lambda^{d+c+1}/|R|$. This implies that finding a collision within $\lambda^{d+c+1}$ queries is at most $\lambda^{2d+2c+2}/|R|$. Thus, to find a collision with probability at least $1/\lambda^c$, the number of oracle queries must be at least $\lambda^{d+c+1}$ with probability at least $1/\lambda^c - \lambda^{2d+2c+2}/|R|$, which implies the expected number of oracle queries is at least $\lambda^{d+c+1}(1/\lambda^c - \lambda^{2d+2c+2}/|R|) = \lambda^{d+1} - \lambda^{3d+3c+3}/|R|$. Since $R = 2^{\omega(\log \lambda)}$, there exists a constant $\lambda_0$ such that for all $\lambda > \lambda_0$, this expectation is $\lambda^{d+1} - \lambda^{3d+3c+3}/|R| > \lambda^{d+1} - 1$. This contradicts our assumption that the expected number of queries to the sampling oracle is $O(\lambda^d)$.
\end{proof}

\subsection{A Special Soundness Parallel Repetition Theorem}

Although it is well-known that $2$-special soundness is preserved under parallel repetition, the situation is more complicated for generalized special soundness notions (and even $k$-special soundness for larger values of $k$). We state and prove a useful theorem about the parallel repetition of special sound protocols.

\begin{lemma}
\label{lemma:parallel-repetition}
If $\Sigma = (P,V)$ is a $(k,g)$-special-sound protocol, then the $t=\Omega(k^2\log^2(\secp))$-fold parallel repetition $\Sigma^t$ is $(k^2,g^t)$-probabilistic special sound where $g^t$ outputs $1$ if and only (1) the arguments $y_i$ consist of $t$ formally separated components, and (2) $g$ outputs $1$ on each of the $t$ components.
\end{lemma}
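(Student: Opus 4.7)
The plan is to construct a $(k^2, g^t)$-probabilistic special soundness extractor $\PSSExtract_{g^t}$ for $\Sigma^t$ via a ``good-slot'' strategy: on input $(\prefix, (r_i, y_i)_{i \in [k^2]})$, parse each $y_i$ as $(y_i^{(1)}, \ldots, y_i^{(t)})$ (as ensured by the definition of $g^t$) and scan slots $j = 1, \ldots, t$ for one in which $\{r_i^{(j)}\}_{i \in [k^2]}$ contains at least $k$ distinct values; if such a slot is found, pick any $k$ indices with distinct slot-$j$ challenges and invoke the underlying $\SSExtract_g$ on the slot-$j$ projection of the chosen transcripts. Because $g^t$ enforces $g$ on every slot and $g$-consistency is preserved under subsets, this projection is $g$-consistent, so the $(k,g)$-special soundness of $\Sigma$ yields a valid witness deterministically.

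The work is in showing that, with overwhelming probability over the admissible challenge distribution, some slot actually contains at least $k$ distinct challenges. I would first dispense with the case $|R| < k$, in which the hypothesis that $\Sigma$ is $(k,g)$-special-sound is vacuous and the lemma degenerate. For $|R| \geq k$, let $\Samp$ be the sampler from \cref{def:admissible-dist}, which makes an expected $Q_0 = \poly(\secp)$ queries to the uniform oracle on $R^t$. By Markov and the $\negl(\secp)$-closeness of the admissible distribution to $\Samp$'s output distribution, it suffices to work with $\Samp$'s output and condition on $\Samp$ making at most $Q = Q_0 \cdot \secp$ queries, which holds with probability $1 - \negl(\secp)$.

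The main obstacle is that $\Samp$'s index selection is adaptive, so the $k^2$ output tuples are correlated through the selection. I would resolve this with a union bound over the $\binom{Q}{k^2} \leq Q^{k^2}$ possible choices of selected indices $S \subseteq [Q]$: for any \emph{fixed} $S$, the oracle responses at positions in $S$ are i.i.d.\ uniform over $R^t$, and by independence across the $t$ slots,
\[
    \Pr\big[\text{every slot has} \leq k-1 \text{ distinct values on indices } S\big] \;=\; p^t,
\]
where $p \eqdef \Pr_{r_1, \ldots, r_{k^2} \sim R}[\,|\{r_1, \ldots, r_{k^2}\}| \leq k-1\,]$. A direct inclusion-exclusion argument gives $p \leq \binom{|R|}{k-1} ((k-1)/|R|)^{k^2}$, which for every $|R| \geq k \geq 2$ is bounded away from $1$ by an absolute constant (the worst case being $|R| = k$, where $p \leq k \cdot e^{-k} \leq 1/2$). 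Hence $\log(1/p) = \Omega(1)$, and $\Pr[\text{no good slot exists}] \leq Q^{k^2} \cdot p^t$ is negligible for $t = \Omega(k^2 \log^2 \secp)$, completing the proof modulo the earlier $\negl(\secp)$-probability events absorbed by Markov and sampler-closeness.
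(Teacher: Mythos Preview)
Your extractor construction, the per-slot bound on the probability of seeing fewer than $k$ distinct challenges, and the union bound over the $\binom{Q}{k^2}$ possible index sets are all essentially the same as the paper's proof. The one genuine gap is the Markov truncation step: cutting $\Samp$ off at $Q = Q_0 \cdot \secp$ queries gives only $\Pr[\text{queries} > Q] \leq 1/\secp$ by Markov's inequality, and $1/\secp$ is \emph{not} $\negl(\secp)$. Since $(k^2,g^t)$-PSS requires success probability $1 - \negl(\secp)$, your ``$\negl(\secp)$-probability events absorbed by Markov'' sentence is simply false as stated, and the argument leaves an inverse-polynomial hole.

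The paper sidesteps this by arguing by contradiction rather than by direct truncation: assuming the extractor aborts with probability $\geq 1/\secp^c$ for infinitely many $\secp$, one shows (via your same union bound) that within the first $\secp^{d+c+1}$ oracle queries a bad $k^2$-subset can appear with only negligible probability, hence $\Samp$ must exceed $\secp^{d+c+1}$ queries with probability roughly $1/\secp^c$, contradicting the expected-$\secp^d$ query bound. Your direct route can also be salvaged, but not with $Q = Q_0 \cdot \secp$: you need $Q$ mildly superpolynomial, e.g.\ $Q = Q_0 \cdot \secp^{\sqrt{\log \secp}}$, so that $Q_0/Q = \negl(\secp)$ while still $\log Q = o(\log^2 \secp)$, which keeps $Q^{k^2} p^t = 2^{O(k^2 \log^{3/2}\secp) - \Omega(k^2 \log^2 \secp)}$ negligible using only your $\log(1/p) = \Omega(1)$ bound and $t = \Omega(k^2 \log^2 \secp)$.
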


\begin{proof}

Let $\Consistent_{k^2}$ be the set of $k$-tuples of shared-prefix partial transcripts of $\Sigma^t$ on which $g^t$ outputs $1$. Let $D$ be a distribution supported on $\Consistent_{k^2}$ whose marginal distribution on $(R^t)^{k^2}$ is admissible.

We construct $\PSSExtract_{g^t}$ for $\Sigma^t$ that takes as input
\[(\tau_{\mathsf{pre},j})_{j \in [t]},((r_{j,i})_{j \in [t]},(y_{j,i})_{j \in [t]})_{i \in [k^2]} \gets D\]
and does the following:
\begin{enumerate}
    \item Look for $j \in [t]$ such that $\{r_{j,i}\}_{i \in [k^2]}$ consists of $k$ distinct challenges. If no such $j$ exists, abort and output $\bot$.
    \item If such a $j$ exists, let $H$ be a size-$k$ subset of $[k^2]$ such that $\{r_{j,i}\}_{i \in H}$ consists of $k$ distinct challenges, and let $\SSExtract_g$ be the $(k,g)$-special-soundness extractor for $\Sigma$. Run $\SSExtract_g( \tau_{\mathrm{pre},j},(r_{j,i},y_{j,i})_{i \in H}) \rightarrow w$ and output $w$.
\end{enumerate}

First, we note that \[g(\tau_{\mathrm{pre},j},(r_{j,i},y_{j,i})_{i \in H}) = 1\]
follows from
\[g^t((\tau_{\mathsf{pre},j})_{j \in [t]},((r_{j,i})_{j \in [t]},(y_{j,i})_{j \in [t]})_{i \in [k^2]}) =1.\]

Thus, it suffices to prove that this extractor aborts with probability $\negl(\secp)$. Define $\mathsf{BAD} \subset R^{tk^2}$ to be the set of all $tk^2$-tuples $(r_{j,i})_{j \in [t], i \in [k^2]}$ such that for all $j \in [t]$, the $k^2$-tuple $(r_{j,i})_{i \in [k^2]}$ does not contain $k$ distinct challenges.

Suppose $(r_{j,i})_{j \in [t], i \in [k^2]}$ is sampled uniformly at random from $R^{tk^2}$. Then we have
\[ \Pr_{(r_{j,i})_{j \in [t], i \in [k^2]} \gets R^{tk^2}}[(r_{j,i})_{j \in [t], i \in [k^2]} \in \mathsf{BAD}] \leq \left(\frac{k}{e^k}\right)^t .\]
This follows from the fact that for any fixed $j$, the probability that $(r_{j,i})_{i \in [k^2]}$ does \emph{not} contain $k$ distinct challenges is at most $k((k-1)/k)^{k^2} \leq k/e^k$.

By the definition of an admissible distribution (\cref{def:admissible-dist}), the marginal distribution of $D_{\Sigma^t}$ on $(R^t)^{k^2}$ is the result of the following process (up to $\negl(\secp)$ statistical distance): make an expected $\poly(\secp)$ number of classical queries to a uniform sampling oracle $O_{R^t}$ over $R^t$, receiving a set of challenges $A$, and then (using an arbitrary procedure) output any size-$k^2$ subset $\{(r_{1,1},\dots,r_{t,1}),\dots,(r_{1,k^2},\dots,r_{t,k^2})\}$ of $A$ of $k^2$ challenges. The extractor aborts if $(r_{j,i})_{j \in [t], i \in [k^2]} \in \mathsf{BAD}$. 

Let $d$ be a constant such that the expected number of queries to the uniform sampling oracle $O_{R^t}$ is $O(\secp^d)$. Suppose towards a contradiction that the extractor aborts with non-negligible probability, i.e., there exists a constant $c$ such that for infinitely many $\secp \in \mathbb{N}$, the extractor aborts with probability at least $1/\lambda^c$. If $0 \leq q \leq \lambda^{d+c+1}$ oracle queries have already been made, the probability that the next oracle query allows finding a size-$k^2$ subset of outputs in $\mathsf{BAD}$ is at most
\begin{align*}
    (\lambda^{d+c+1})^{k^2}\left(\frac{k}{e^k}\right)^{k^2\log^2(\lambda)}.
\end{align*}
Moreover, there exists a constant $\lambda_0$ such that for all $\lambda > \lambda_0$, this can be upper bounded as
\begin{align*}
    (\lambda^{d+c+1})^{k^2}\left(\frac{k}{e^k}\right)^{k^2\log^2(\lambda)} < \left(\frac{2^{k^2 + k^2 \log (k)} }{e^{k^3}}\right)^{\log^2(\lambda)} < (1/2)^{\log^2(\lambda)} = 1/\lambda^{\log (\lambda)}.
\end{align*}
Thus for all $\lambda > \lambda_0$, the probability of finding a size-$k^2$ subset of oracle outputs in $\mathsf{BAD}$ within $\lambda^{d+c+1}$ oracle queries is at most $\lambda^{d+c+1} /\lambda^{\log(\lambda)}$; this implies that finding a size-$k^2$ subset of oracle outputs in $\mathsf{BAD}$ with probability $1/\lambda^c$ requires making at least $\lambda^{d+c+1}$ oracle queries with probability at least $1/\lambda^c - \lambda^{d+c+1}/\lambda^{\log(\lambda)}$. Then for $\lambda > \lambda_0$, the expected number of queries is at least $(\lambda^{d+c+1})(1/\lambda^c - \lambda^{d+c+1}/\lambda^{\log(\lambda)}) = \lambda^{d+1} - \lambda^{2d+2c+2}/\lambda^{\log(\lambda)}$. Since $c$ and $d$ are constants, there exists $\lambda_0'$ such that for $\lambda > \lambda_0'$, the expected number of queries is at least $\lambda^{d+1}-1$. This contradicts our assumption that the number expected number of queries to the uniform sampling oracle is $O(\lambda^d)$. \qedhere
\end{proof}

\subsection{Examples of Probabilistic Special Sound Protocols}\label{sec:examples}

We now show that many classical interactive proofs-of-knowledge (or arguments-of-knowledge) satisfy probabilistic special soundness. It was already noted above that (parallel repetitions of) standard special sound protocols satisfy the notion. Here, we highlight three other cases: commit-and-open protocols (where $g$ is only given partial transcripts), Kilian's protocol, and a subroutine of the \cite{FOCS:GolMicWig86} graph non-isomorphism protocol. 

\subsubsection{The ``one-out-of-two'' graph isomorphism subroutine}
In order to prove \cref{thm:szk}, we consider the following proof-of-knowledge subroutine of the \cite{FOCS:GolMicWig86} graph non-isomorphism protocol:

\begin{itemize}
    \item The subroutine instance is three graphs $G_0, G_1, H$. The prover\footnote{The \cite{FOCS:GolMicWig86} verifier acts as the prover in this subroutine.} wants to prove that there exists a bit $b$ such that $G_b$ is isomorphic to $H$. To do so, they execute a parallel repetition of the following protocol.
    \item The prover picks a random permutations $\sigma_0, \sigma_1$, a random bit $c$, and sends $(H_0 = \sigma_0 (G_c),H_1= \sigma_1(G_{1-c}))$ to the verifier.
    \item The verifier sends a random bit $r$.
    \item If $r=0$, the prover sends $(c, \sigma_0, \sigma_1)$ and the verifier checks that $(H_0 = \sigma_0 (G_c),H_1= \sigma_1(G_{1-c}))$ was computed correctly.
    \item If $r=1$, the prover sends $(c\oplus b, \sigma_{c\oplus b}\pi)$, where $\pi$ is an isomorphism mapping $H$ to $G_b$. The verifier then checks that $(\sigma_{c\oplus b}\pi) H = H_{c\oplus b}$.
\end{itemize}

In the classical setting, this is generally viewed as a proof of knowledge of $(b, \pi)$. However, we consider it as a proof of knowledge of the bit $b$, in the situation where $G_0$ and $G_1$ are not isomorphic. We will formalize this in \emph{two} different ways: first by showing that the protocol is $(2, g)$-special sound for a natural consistency predicate $g$, and then by showing that it is $(2, g')$-PSS for a more complicated predicate $g'$ that we have to use to be compatible with the protocol's limited partial collapsing property. 

First, we define an (inefficient) consistency predicate $g$, which is given as input $\prefix$ an arbitrary number of pairs $(\mathbf{r}, \mathbf{c}) \in \{0,1\}^{\secp}\times \{0,1\}^\secp$ (rejecting if the input is not of this form). $g$ outputs $1$ if the following conditions hold for all $\ell \in [\secp]$:
  \begin{itemize}
      \item If $r_\ell = 0$, the graphs $(H_{0,\ell}, H_{1,\ell})$ are isomorphic to $(G_{c_\ell}, G_{1-c_\ell})$.
      \item If $r_\ell = 1$, the graph $H_{{c_\ell}, \ell}$ is isomorphic to $H$.
  \end{itemize}

The following claim then holds immediately by transitivity of graph isomorphism. The extractor, given $(\mathbf{r}, \mathbf{c})$ and $(\mathbf{r}', \mathbf{c}')$ simply chooses an $\ell$ such that $r_\ell \neq r'_\ell$ and outputs $b = c_\ell \oplus c'_\ell$.

\begin{claim}
  If $G_0$ and $G_1$ are not isomorphic, then the \cite{FOCS:GolMicWig86} subroutine satisfies $(2, g)$-special soundness, where the extractor outputs the bit $b$. 
\end{claim}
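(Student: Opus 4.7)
The plan is to exhibit a direct, efficient extractor $\SSExtract_g$ and then verify its output is a valid witness by chaining the graph isomorphisms that $g$ certifies. On input $\prefix, (\mathbf{r}, \mathbf{c}), (\mathbf{r}', \mathbf{c}')$ in $\Consistent_2$ with $\mathbf{r} \ne \mathbf{r}'$, the extractor scans for any index $\ell \in [\secp]$ with $r_\ell \ne r'_\ell$ and outputs $b \eqdef c_\ell \oplus c'_\ell$. Such an $\ell$ exists because $\mathbf{r} \ne \mathbf{r}'$, and the whole computation is clearly polynomial time. The heart of the proof is then an isomorphism-chaining argument at coordinate $\ell$.

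To argue correctness, fix such an $\ell$ and assume without loss of generality that $r_\ell = 0$ and $r'_\ell = 1$ (the other case is symmetric in the roles of the two transcripts). Since $g$ outputs $1$ on $(\mathbf{r}, \mathbf{c})$, the $r_\ell = 0$ clause of $g$ gives $H_{0,\ell} \cong G_{c_\ell}$ and $H_{1,\ell} \cong G_{1-c_\ell}$. Since $g$ also outputs $1$ on $(\mathbf{r}', \mathbf{c}')$, the $r'_\ell = 1$ clause gives $H_{c'_\ell,\ell} \cong H$. Combining these by transitivity of graph isomorphism yields
\[
H \;\cong\; H_{c'_\ell,\ell} \;\cong\; G_{c_\ell \oplus c'_\ell} \;=\; G_b,
\]
so the extracted $b$ is a bit such that $H \cong G_b$, i.e.\ a valid witness for the ``one-out-of-two graph isomorphism'' relation.

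The one subtlety is well-definedness: a priori the extractor's output could depend on which differing coordinate $\ell$ it selects, since different coordinates might produce different bits. This is exactly where the hypothesis $G_0 \not\cong G_1$ enters. Under this hypothesis, $H$ is isomorphic to at most one of $G_0, G_1$, so there is a unique bit $b$ with $H \cong G_b$ (if any), and the computation above shows every valid choice of $\ell$ produces this same $b$. In particular, consistency of $g$ across the $\secp$ parallel repetitions of both transcripts is automatic.

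I do not expect any real obstacle: the argument is essentially the classical special-soundness proof for the graph isomorphism $\Sigma$-protocol, with a minor bookkeeping twist to track which of $H_{0,\ell}, H_{1,\ell}$ corresponds to which of $G_0, G_1$. The only thing that needs care is to write $g$'s two clauses explicitly and to note that, although $g$ itself is not known to be polynomial-time (it references isomorphism), \cref{def:k-g-ss} only requires \emph{the extractor} to be efficient, and our $\SSExtract_g$ needs only bit-level XOR once membership in $\Consistent_2$ is given as a hypothesis.
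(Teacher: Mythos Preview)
Your proposal is correct and takes essentially the same approach as the paper: the paper's proof is a one-line remark that the claim ``holds immediately by transitivity of graph isomorphism,'' with the extractor choosing any $\ell$ with $r_\ell \neq r'_\ell$ and outputting $b = c_\ell \oplus c'_\ell$. Your write-up simply spells out the transitivity chain and adds the (correct but not strictly required) observation about well-definedness across different choices of $\ell$.
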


Finally, we define the predicate $g'$ to be a slight modification of $g$: for the \emph{first} pair $(r^{(1)}, c^{(1)})$, $g'$ ignores\footnote{Alternatively, we could define $g'$ to require inputs with these $c^{(1)}_\ell$ omitted.} the bits $c^{(1)}_\ell$ for $i$ such that $r^{(1)}_\ell = 1$. The protocol will then not be $(2, g')$-special sound (e.g. a first transcript with $r^{(1)} = 1^\secp$ would provide no information), it \emph{will} be $(2, g')$-PSS.

\begin{claim}
\label{claim:gni-pss}
  If $G_0$ and $G_1$ are not isomorphic, then the \cite{FOCS:GolMicWig86} subroutine satisfies $(2, g')$-PSS, where the extractor outputs the bit $b$. 
\end{claim}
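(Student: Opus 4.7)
My plan is to explicitly construct an extractor $\PSSExtract_{g'}$ and then argue its correctness over admissible challenge distributions. The extractor, on input a first message $\prefix = \{(H_{0,\ell},H_{1,\ell})\}_{\ell \in [\secp]}$ together with two partial transcripts $(\mathbf r^{(1)}, \mathbf y^{(1)}), (\mathbf r^{(2)}, \mathbf y^{(2)})$, first searches for any coordinate $\ell \in [\secp]$ with $r^{(1)}_\ell \neq r^{(2)}_\ell$ (aborting if none exists), and then extracts $b$ according to the following case analysis on the differing coordinate.

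In the \emph{easy case} $r^{(1)}_\ell = 0$ and $r^{(2)}_\ell = 1$, the predicate $g'$ certifies the full bit $c^{(1)}_\ell$, so the extractor learns from $\mathbf y^{(1)}$ that $H_{0,\ell} \cong G_{c^{(1)}_\ell}$ and $H_{1,\ell} \cong G_{1 - c^{(1)}_\ell}$; from $\mathbf y^{(2)}$ it learns $c^{(2)}_\ell$ such that $H_{c^{(2)}_\ell, \ell} \cong H$. Composing these yields $H \cong G_{c^{(1)}_\ell \oplus c^{(2)}_\ell}$, and since $G_0 \not\cong G_1$ this uniquely identifies $b = c^{(1)}_\ell \oplus c^{(2)}_\ell$. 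The \emph{hard case} $r^{(1)}_\ell = 1$, $r^{(2)}_\ell = 0$ is the main obstacle, since by definition $g'$ discards the bit $c^{(1)}_\ell$ which would otherwise tell us which of $H_{0,\ell}, H_{1,\ell}$ is isomorphic to $H$. The key observation is that $g'$ still retains the witness permutation $\tau$ inside $\mathbf y^{(1)}$, and acceptance of the first partial transcript guarantees $\tau(H) \in \{H_{0,\ell}, H_{1,\ell}\}$. The extractor therefore recomputes $\tau(H)$, checks (as labeled graphs) whether it equals $H_{0,\ell}$ or $H_{1,\ell}$ to recover the implicit bit $d \in \{0,1\}$ such that $\tau(H) = H_{d,\ell}$, and then outputs $b = c^{(2)}_\ell \oplus d$, which is correct by the same composition argument as above using that $G_0 \not\cong G_1$.

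It remains to bound the abort probability, i.e., the probability that $\mathbf r^{(1)} = \mathbf r^{(2)}$ under an admissible distribution supported on $\Consistent_2$. Since the challenge space of the parallel-repeated protocol is $R = \{0,1\}^\secp$, which has size $2^{\secp} = 2^{\omega(\log \secp)}$, the argument of \cref{k-g-ss-implies-k-g-pss} applies verbatim: a sampler making an expected polynomial number of classical queries to the uniform oracle over $R$ outputs two equal challenges with only negligible probability, so $\mathbf r^{(1)} \neq \mathbf r^{(2)}$ with overwhelming probability and a differing coordinate $\ell$ always exists. Combining this with the case analysis above shows that $\PSSExtract_{g'}$ outputs the correct bit $b$ with overwhelming probability, establishing $(2,g')$-probabilistic special soundness.
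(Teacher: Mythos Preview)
Your handling of the ``hard case'' ($r^{(1)}_\ell = 1$, $r^{(2)}_\ell = 0$) is incorrect. The predicate $g'$, like $g$, is defined on inputs of the form $(\mathbf r, \mathbf c) \in \{0,1\}^\secp \times \{0,1\}^\secp$: the partial information $\mathbf y^{(1)}$ passed to the extractor consists only of the bit vector $\mathbf c^{(1)}$, not the full response $z$, and in particular contains no permutations. Moreover, for coordinates with $r^{(1)}_\ell = 1$, $g'$ places no constraint whatsoever on $c^{(1)}_\ell$, so the extractor cannot assume anything about that bit either. Your step ``$g'$ still retains the witness permutation $\tau$ inside $\mathbf y^{(1)}$'' therefore has no basis, and the hard-case extraction fails.

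The paper's proof sidesteps this entirely by strengthening the admissibility argument: rather than only guaranteeing $\mathbf r^{(1)} \neq \mathbf r^{(2)}$, it shows that with overwhelming probability there is some $\ell$ with $r^{(1)}_\ell = 0$ and $r^{(2)}_\ell = 1$ specifically. For two uniformly random $\secp$-bit strings this fails with probability $(3/4)^\secp = \negl(\secp)$, and the Markov-style reasoning from \cref{k-g-ss-implies-k-g-pss} then transfers this bound to any admissible distribution. With such an index guaranteed, only your ``easy case'' ever arises and the extractor succeeds; the hard case need not (and, given the information available, cannot) be handled.
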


\begin{proof}
  This follows from the claim that if $(r^{(1)}, r^{(2)})\in \{0,1\}^\secp \times \{0,1\}^{\secp}$ is sampled according to an admissible distribution, then with all but $\negl(\secp)$ probability, there exists an index $\ell$ such that $r^{(1)}_\ell = 0$ and $r^{(2)}_\ell = 1$. This can be argued using the same reasoning as in the proof of~\cref{k-g-ss-implies-k-g-pss}, since the probability that two uniformly random $\lambda$-bit strings $r^{(1)}$ and $r^{(2)}$ do not have an index $\ell \in [\lambda]$ such that $r^{(1)}_\ell = 0$ and $r^{(2)}_\ell = 1$ is $\negl(\secp)$. 
\end{proof}

\subsubsection{Commit-and-Open Protocols}

The next class of examples we discuss is that of \emph{commit-and-open protocols}. In particular, we are interested in characterizing a special soundness property where the extractor is only given the opened \emph{messages} in the prover's response (and not their openings).

\begin{definition}\label{def:commit-and-open}
   Let $\Com$ denote a (possibly keyed) non-interactive commitment scheme. A commit-and-open protocol is a (3 or 4 message) protocol for an $\NP$ language $L$ of the following form:
   
   \begin{itemize}
       \item (Optional first verifier message) If $\Com$ is keyed, the verifier samples and sends the commitment key $\ck$ for $\Com$.
       \item The prover, given a witness $w$ for some statement $x\in L$, computes a string $y \in \{0,1\}^N$ and sends a bitwise commitment $a = \Com(\ck, y)$ to the verifier.
       \item The verifier samples a string $r$ that encodes a subset $S\subset [N]$ and sends $r$ to the prover.
       \item The prover sends openings to $\{y_i\}_{i\in S}$.
       \item The verifier checks that each opening to $y_i$ (for $i\in S$ is valid and then computes some function $\Check(y_S)$ on the opened bits. 
   \end{itemize}
   
   We say that such a protocol satisfies ``commit-and-open $k$-special soundness'' if there exists an extractor $\mathsf{Extract}(x, y)$ satisfying the following property. For every instance $x$ and every collection of $k$ \emph{distinct} sets $S_1, \hdots, S_k$ (represented by strings $(r_1, \hdots, r_k)$, for \emph{any} string $y$ such that $\Check(y_{S_i}) = 1$ for all $i$, $w= \mathsf{Extract}(x, y)$ is a valid $\NP$-witness for $x$. 
\end{definition}

It is not hard to see that the ``commit-and-open'' $k$-special soundness property, combined with the (computational/statistical) binding of the commitment scheme, implies a standard (computational/statistical) $k$-special soundness property of the $\Sigma$-protocol. However, we consider ``commit-and-open $k$-special soundness'' explicitly in order to satisfy (probabilistic) special soundness with respect to \emph{partial} transcripts. 

This definition captures extremely common $\Sigma$-protocols, such as:

\begin{itemize}
    \item The \cite{FOCS:GolMicWig86} $\Sigma$-protocol for $3$-coloring.
    \item A slight variant of the \cite{Blum86} $\Sigma$-protocol for Hamiltonicity\footnote{In this variant, in addition to committing to a permuted graph $\pi(G)$, the prover commits to the permutation $\pi$ and the permuted cycle $\pi \circ \sigma$. On the $0$ challenge, the prover additionally opens the commitment to $\pi$, and on the $1$ challenge, the prover additionally opens the commitment to $\pi \circ \sigma$.}
    \item Protocols following the ``MPC-in-the-head'' paradigm \cite{STOC:IKOS07}.
\end{itemize}

To view this in terms of generalized $k$-special soundness, define a consistency predicate $g$ as follows: on input $(\prefix,(r_i,\{m_{i,\ell}\}_{\ell \in S_i})_{i \in [k]})$, output $1$ if and only if
\begin{itemize}
    \item For any pair of sets $S_i, S_j$ (corresponding to challenges $r_i, r_j$), for any $\ell \in S_i \cap  S_j$, we have $m_{i,\ell} = m_{j,\ell}$. That is, the ``opened" message subsets are mutually consistent. 
    \item For all $i\in [k]$, $\Check(\{m_{i, \ell}\}_{\ell \in S_i}) = 1$.
\end{itemize}.
With this formalism in place, the following claim is immediate.

\begin{claim}
Any protocol satisfying commit-and-open $k$-special soundness (as described in \cref{def:commit-and-open}) is $(k, g)$-special sound.
\end{claim}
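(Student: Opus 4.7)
The plan is to construct the $(k,g)$-special-soundness extractor $\SSExtract_g$ directly from the commit-and-open extractor $\mathsf{Extract}$ guaranteed by \cref{def:commit-and-open}. Given an input $(\prefix,(r_i,y_i)_{i \in [k]}) \in \Consistent_k$ with distinct $r_i$, the first step is to parse each partial response $y_i$ as the collection of opened bits $\{m_{i,\ell}\}_{\ell \in S_i}$, where $S_i \subseteq [N]$ is the subset encoded by $r_i$.

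Next, I would stitch the $k$ partial openings into a single global string $y \in \{0,1\}^N$. For each $\ell \in \bigcup_{i \in [k]} S_i$, define $y_\ell \eqdef m_{i^*,\ell}$ for any $i^*$ such that $\ell \in S_{i^*}$; the first clause of $g$ (agreement of openings on intersections) guarantees this is well-defined. For $\ell \notin \bigcup_i S_i$, set $y_\ell \eqdef 0$ arbitrarily. By construction $y_{S_i} = \{m_{i,\ell}\}_{\ell \in S_i}$ for every $i$, so the second clause of $g$ (namely $\Check(\{m_{i,\ell}\}_{\ell \in S_i}) = 1$ for all $i$) immediately yields $\Check(y_{S_i}) = 1$ for all $i$. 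The extractor then outputs $\mathsf{Extract}(x,y)$, where $x$ is the instance determined by $\prefix$. Since distinct $r_i$ encode distinct $S_i$ (which we may assume without loss of generality, since it holds for all the commit-and-open protocols cited above), we have $k$ distinct subsets $S_i$ with $\Check(y_{S_i}) = 1$, so \cref{def:commit-and-open} guarantees that $\mathsf{Extract}(x,y)$ returns a valid $\NP$-witness with probability $1$.

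There is no real obstacle here; the proof is essentially a bookkeeping exercise to convert local partial transcripts into a single global string. The only point worth calling out explicitly is the well-definedness of $y$, which is precisely what the intersection-consistency clause of $g$ was designed to ensure --- without it, two partial transcripts could disagree on a shared index and no consistent global $y$ would exist. The distinctness of the $S_i$ (and hence applicability of the commit-and-open extractor to $k$ distinct sets) follows from the distinctness of the $r_i$ together with the mild injectivity assumption on the $r \mapsto S$ encoding.
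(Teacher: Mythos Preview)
Your proposal is correct and is precisely the argument the paper has in mind; the paper in fact declares the claim ``immediate'' and omits any proof, so you have simply spelled out the natural construction (stitch the consistent partial openings into a global string $y$ and feed it to the commit-and-open extractor). The only added care you take beyond what the paper says explicitly is noting the mild injectivity of $r \mapsto S$, which is indeed implicit in the paper's definition of a commit-and-open protocol.
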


\subsubsection{Kilian's Protocol}\label{sec:kilian}

We briefly recall Kilian's protocol~\cite{STOC:Kilian92} instantiated with a collapsing hash function:

\begin{enumerate}
    \item The verifier samples a collapsing hash function $h \gets H_{\secp}$ and sends $h$ to the prover.
    \item Let $h_{\mathrm{Merkle}}$ be the Merkle hash function corresponding to $h$. The prover uses $w$ to compute a PCP $\pi$, and then sends $\mathsf{rt} = h_{\mathrm{Merkle}}(\pi)$ to the verifier.
    \item The verifier samples random coins $r$ and sends them to the prover.
    \item The prover computes the set of the PCP indices $q_r$ that the PCP verifier with randomness $r$ would check. It sends the corresponding values $\pi[q_r]$ along with the Merkle openings of $\mathsf{rt}$ on the positions $q_r$. 
    \item Finally, the verifier accepts if all the Merkle openings are valid and $V_{\mathrm{PCP},x}(r,\pi[q_r]) =1$, i.e., the PCP verifier with randomness $r$ accepts $\pi[q_r]$.
\end{enumerate}

We will instantiate Kilian's protocol with a PCP of knowledge, defined as follows. Let $\mathsf{WIN}_{\mathsf{PCP},x}(\pi)$ denote the probability that $\pi$ is accepted by the PCP verifier. 

\begin{definition}[PCP of Knowledge]
A PCP has knowledge error $\kappa_{\mathrm{PCP}}(\secp)$ if there is an extractor $\mathsf{E}_{\mathrm{PCP}}$ such that given any PCP $\pi$ where $\mathsf{WIN}_{\mathrm{PCP},x}(\pi) > \kappa_{\mathrm{PCP}}$, the extractor $\mathsf{E}_{\mathrm{PCP}}(\pi) \rightarrow w$ outputs a valid witness $w$ for $x$ with probability $1$.
\end{definition}

The following claim is due to~\cite{FOCS:CMSZ21}, though we have slightly rewritten it to match our definition of $k$-PSS.

\begin{claim}
\label{claim:kilian-pss}
Kilian's protocol instantiated with a PCP with knowledge error $\kappa_{\mathrm{PCP}}(\secp) = \negl(\secp)$ proof length $\ell(\secp)$, and alphabet-size $\Sigma(\secp)$ is $(k,g)$-PSS where $k = \ell \log(|\Sigma|)$ and the consistency function $g$ outputs $1$ on $(\tau_{\mathrm{pre}},(r_i,z_i)_{i \in [k]})$ if (1) for each $i$, the response $z_i$ contains PCP answers $\pi[q_{r_i}]$ such that $V_{\mathrm{PCP}}(x,r_i,\pi[q_{r_i}]) = 1$, and (2) for every $i \neq i'$ the answers $\pi[q_{r_i}]$ and $\pi[q_{r_{i'}}]$ agree on all indices in $q_{r_i} \cap q_{r_{i'}}$.
\end{claim}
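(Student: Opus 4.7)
The plan is to construct an explicit extractor $\PSSExtract_g$ that aggregates the PCP answers from the provided transcripts into a full PCP proof and then invokes the PCP knowledge extractor. Given input $(\prefix,(r_i,z_i)_{i \in [k]}) \in \Consistent_k$, the extractor parses each $z_i$ to obtain the partial PCP evaluation $\pi[q_{r_i}]$; these evaluations are mutually consistent on intersections by condition~(2) of $g$, so they jointly specify a well-defined partial proof $\pi^\star$ supported on $W = \bigcup_i q_{r_i}$. The extractor extends $\pi^\star$ arbitrarily (say, filling positions outside $W$ with a fixed symbol) to a full proof $\pi \in \Sigma^\ell$ and outputs $\mathsf{E}_{\mathrm{PCP}}(\pi)$.

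Correctness reduces to showing that, with overwhelming probability over the admissible sampling of $(r_i)_{i \in [k]}$, the constructed $\pi$ satisfies $\mathsf{WIN}_{\mathrm{PCP},x}(\pi) > \kappa_{\mathrm{PCP}}$, after which the PCP knowledge extractor succeeds by definition. Call a PCP $\pi' \in \Sigma^\ell$ \emph{bad} if $\mathsf{WIN}_{\mathrm{PCP},x}(\pi') \leq \kappa_{\mathrm{PCP}}$; since our $\pi$ is accepted by the PCP verifier on each of the $k$ sampled pairs $(r_i, \pi[q_{r_i}])$ by condition~(1) of $g$, it suffices to upper bound the probability that \emph{some} bad $\pi'$ is simultaneously accepted on all $k$ of the sampled $r_i$. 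The approach is to fix an arbitrary bad $\pi'$, bound its per-$r$ acceptance probability by $\kappa_{\mathrm{PCP}}$, and take a union bound over the $|\Sigma|^\ell = 2^k$ candidate proofs. The choice $k = \ell \log|\Sigma|$ is tuned so that this union-bound cardinality exactly balances the per-proof acceptance probability $\kappa_{\mathrm{PCP}}^k$, with the extra slack needed for the final bound to be negligible coming from $\kappa_{\mathrm{PCP}}$ itself being negligible.

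The main subtlety I expect to confront is that the admissible distribution on $(r_i)_{i \in [k]}$ is not i.i.d. uniform on $R^k$: per \cref{def:admissible-dist}, the sampler $\mathsf{Samp}$ makes an expected $Q = \poly(\secp)$ queries to a uniform oracle $O_R$ and then adaptively selects $k$ of the responses to emit. To handle this I will first invoke Markov's inequality to condition on $\mathsf{Samp}$ making at most $Q' = Q/\epsilon$ total queries for some $\epsilon = \negl(\secp)$ (introducing only a negligible additive error), and then bound, for each fixed bad $\pi'$, the probability that at least $k$ of the $Q'$ uniform oracle responses are accepted by $\pi'$ by $\binom{Q'}{k}\kappa_{\mathrm{PCP}}^k$. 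Crucially, this count-based event dominates \emph{any} possible adaptive selection by the sampler, so the adversarial subset choice does not hurt us. Union bounding over all $|\Sigma|^\ell$ candidate bad proofs yields a total error of at most $(2Q'\kappa_{\mathrm{PCP}})^k$, which is negligible because $Q'\kappa_{\mathrm{PCP}}$ is negligible; this is the step I expect to be the main obstacle to write cleanly, since it requires carefully reasoning about a sampler that may condition its output on the full transcript of oracle responses.
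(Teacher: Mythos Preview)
Your proposal is correct and follows essentially the same approach as the paper: build a single PCP string from the mutually consistent partial answers, run $\mathsf{E}_{\mathrm{PCP}}$, and show via a union bound over all $|\Sigma|^\ell$ proofs that no bad proof is accepted on $k$ of the sampler's oracle responses. One point to tighten: the assertion that $Q'\kappa_{\mathrm{PCP}}$ is negligible is not automatic once you set $Q' = Q/\epsilon$ with $\epsilon$ an \emph{arbitrary} negligible function (e.g.\ $\epsilon = \kappa_{\mathrm{PCP}}^2$ would break it), so you must either pick $\epsilon$ relative to $\kappa_{\mathrm{PCP}}$ (say $\epsilon = \sqrt{\kappa_{\mathrm{PCP}}}$) or, as the paper does, argue by contradiction---assume failure probability $\geq 1/\secp^c$, apply Markov to truncate at a \emph{polynomial} $Q' = 2\secp^{c+d}$, and then the union bound $|\Sigma|^\ell (Q')^k \kappa_{\mathrm{PCP}}^k = (2Q'\kappa_{\mathrm{PCP}})^k$ is negligible simply because $Q'$ is polynomial and $\kappa_{\mathrm{PCP}}$ is negligible.
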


\begin{proof}
Our extractor $\PSSExtract_g$ takes as input $(\prefix,(r_i,z_i)_{i \in [k]})$ and generates a witness as follows:
\begin{enumerate}
    \item\label[step]{step:pcp} Generate a PCP string $\pi \in \Sigma^{\ell}$ as follows. For each $t \in [\ell]$, check if $t \in q_{r_i}$ for any $i$. If so, pick such an $i$ arbitrarily and set $\pi[t]$ according to the value specified in $z_i$ (the choice of $i$ does not matter since the input satisfies consistency with respect to $g$). If there is no such $i$, set $\pi[t]$ arbitrarily.
    \item Run $\mathsf{E}_{\mathrm{PCP}}(\pi) \rightarrow w$ and output $w$.
\end{enumerate}
We prove that~\cref{step:pcp} constructs a PCP $\pi$ where $\mathsf{WIN}_{\mathrm{PCP},x}(\pi) > \kappa_{\mathrm{PCP}}$ with $1-\negl(\secp)$ probability whenever $(\tau_{\mathrm{pre}},(r_i,z_i)_{i \in [k]})$ is sampled from a distribution supported on $\Consistent_k$ (i.e., the subset of $T \times (R \times Z)^k$ where $g$ outputs $1$) whose marginal distribution on $R^k$ is admissible.

It suffices to prove that if $(r_1,\dots,r_k)$ are output by $\Samp$ (where $\Samp$ makes an expected $\poly(\secp)$ number of queries to a uniform sampling oracle $O_R$ and then outputs a size-$k$ subset of the outputs of $O_r$) then the probability there \emph{exists} $\pi \in \Sigma^\ell$ such that (1) $\mathsf{WIN}_{\mathrm{PCP},x}(\pi) \leq \kappa_{\mathrm{PCP}}$ and (2) $V_{\mathrm{PCP},x}(r_i,\pi[q_{r_i}]) = 1$ for all $i \in [k]$ is $\negl(\secp)$. This follows by invoking the definition of an admissible distribution, and observing that any $\pi$ resulting from~\cref{step:pcp} satisfies (2) by construction, which means that $\mathsf{WIN}_{\mathrm{PCP},x}(\pi) > \kappa_{\mathrm{PCP}}$ with probability $1-\negl(\secp)$.

Let $d$ be a constant such that for all $\lambda > \lambda_d$, $\Samp$ makes at most $\lambda^d$ queries to the sampling oracle $O_r$. Suppose towards contradiction that there exists a constant $c$ such that for infinitely many $\lambda$, the probability that $\Samp$ outputs $(r_1,\dots,r_k)$ such that with probability at least $1/\lambda^c$, there \emph{exists} $\pi \in \Sigma^\ell$ satisfying conditions (1) and (2) above. Thus, for infinitely many $\lambda$, the probability that $\Samp$ makes $2\lambda^{c+d}$ queries (or more) to its sampling oracle $O_R$ is at most $1/(2\lambda^c)$ by Markov's inequality. This means that even if $\Samp$ makes at most $2\lambda^{c+d}$ queries to its sampling oracle, it still succeeds with probability at least $1/(2\lambda^c)$ for infinitely many $\lambda$.

Consider any fixed PCP $\pi$ such that $\mathsf{WIN}_{\mathrm{PCP},x}(\pi) \leq \kappa_{\mathrm{PCP}}$. The probability that the PCP is accepting on at least $k$ challenges out of $2\lambda^{c+d}$ uniformly random challenges is at most 
\[\kappa_{\mathrm{PCP}}^k \cdot \binom{2\lambda^{c+d}}{k} \leq \kappa_{\mathrm{PCP}}^k (2\lambda^{c+d})^k.\]
By taking a union bound over all $\pi \in \Sigma^\ell$ we conclude that given $2\lambda^{c+d}$ uniformly random challenges, the probability there \emph{exists} a PCP $\pi$ such that $\mathsf{WIN}_{\mathrm{PCP},x}(\pi) \leq \kappa_{\mathrm{PCP}}$ and $\pi$ is accepting on at least $k$ of the $2\lambda^{c+d}$ challenges is at most 
\[ |\Sigma|^\ell \kappa_{\mathrm{PCP}}^k (2\lambda^{c+d})^k  = (|\Sigma|\cdot(2\lambda^{c+d} \kappa_{\mathrm{PCP}})^{\log(|\Sigma|)})^\ell, \]
where we have plugged in $k = \ell \log(|\Sigma|)$. Since $\kappa_{\mathrm{PCP}} = \negl(\secp)$, there exists $\lambda_0$ such that $2\lambda^{c+d} \kappa_{\mathrm{PCP}} < \frac{1}{4}$ for all $\lambda > \lambda_0$. Then for all $\lambda > \lambda_0$, we have 
\[ (|\Sigma|\cdot(2\lambda^{c+d} \kappa_{\mathrm{PCP}})^{\log(|\Sigma|)})^\ell < \frac{1}{|\Sigma|^\ell} \]
Since the PCP alphabet size is at least $|\Sigma| \geq 2$ and the PCP length is at least $\ell \geq \lambda$, the probability that $\Samp$ succeeds when restricted to making at at most $2\lambda^{c+d}$ queries to $O_R$ is at most $O(1/2^\lambda)$, which is a contradiction.
\end{proof}

\section{Singular Vector Algorithms}

In this section we give algorithms for working with states that are singular vectors of a matrix $\ProjA \ProjB$, where $\ProjA,\ProjB$ are projectors. In \cref{sec:vrsvt} we give an algorithm that transforms left singular vectors to right singular vectors with negligible error. The runtime of the algorithm depends on the corresponding singular value.

\paragraph{Notation.}
Throughout this section we will consider the interaction between two binary projective measurements $\MeasA = \BMeas{\ProjA}, \MeasB = \BMeas{\ProjB}$.

We consider the matrix $\ProjA \ProjB$ and its singular value decomposition $V \Sigma W^{\dagger}$. Recall that $V,W$ are unitary and $\Sigma$ is a diagonal matrix. The columns of $V$ (resp. $W$) are the left (resp. right) singular vectors of $\ProjA \ProjB$, and the entries on the diagonal of $\Sigma$ are the singular values $s_j$. Note that the singular value decomposition is not in general unique; for the purposes of this section we fix one arbitrarily.

We denote left (resp. right) singular vectors of $\ProjA \ProjB$ with $s_j > 0$ by $\JorKetA{j}{1}$ (resp. $\JorKetB{j}{1}$). Define $\Subspace_j \eqdef \spanset(\JorKetA{j}{1}, \JorKetB{j}{1})$. If $s_j < 1$, then $\Subspace_j$ is two-dimensional. The $\Subspace_j$ correspond to the Jordan subspaces of $(\ProjA,\ProjB)$. As such, we also have $\JorKetA{j}{0},\JorKetB{j}{0} \in \Subspace_j$. A straightforward calculation shows that these are left and right singular vectors of $(\Id-\ProjA)(\Id-\ProjB)$ with singular value $s_j$. The Jordan subspace values $p_j$ are the squares of the corresponding singular values. In our setting it is more natural to use the squares (since they correspond to probabilities), and so the guarantees in this section are stated with respect to the squared singular values.

\subsection{Fixed-Runtime Algorithms}

In this section we recall a selection of algorithms for manipulating singular vectors of $\ProjA \ProjB$. All of these algorithms make black-box use of $U_{\MeasA},U_{\MeasB}$; we consider their complexity as circuits with $U_{\MeasA},U_{\MeasB}$ gates. All of these algorithms take as input some threshold $a \in (0,1]$, such that their correctness guarantee will hold for singular vectors of value at least $a$, and their running time is linear in $1/a$.

The first algorithm $\Transform$ implements a fixed-runtime singular vector transformation, taking left singular vectors to their corresponding right singular vectors.

\begin{theorem}[Singular vector transformation \cite{STOC:GSLW19}]
    \label{thm:svt}
    There is a uniform family of circuits $\{ \Transform_{a,\delta} \}_{a,\delta \in (0,1]}$ with $U_{\MeasA},U_{\MeasB}$ gates, of size $O(\log (1/\delta)/\sqrt{a})$, such that the following holds. Let $\JorKetA{j}{1}$ be a left singular vector of $\ProjA \ProjB$ with singular value $s_j$. If $a \leq s_j^2$, $\Transform_{a,\delta}[\MeasA \to \MeasB](\JorKetA{j}{1})$ outputs the state $\JorKetB{j}{1}$ with probability at least $1 - \delta$. Moreover, for all $a$, $\Subspace_j$ is invariant under $\Transform_{a,\delta}$.
\end{theorem}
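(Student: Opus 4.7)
The plan is to invoke the quantum singular value transformation (QSVT) framework of~\cite{STOC:GSLW19}. The two projectors $\ProjA, \ProjB$ give rise to a projected unitary encoding in the standard way: the reflections $R_\MeasA = 2\ProjA - \Id$ and $R_\MeasB = 2 \ProjB - \Id$ can each be implemented with a single controlled call to $U_\MeasA$ (resp.\ $U_\MeasB$) and a fresh single-qubit phase flip on the outcome ancilla, and the alternating product $R_\MeasA R_\MeasB$ is the Jordan rotation that acts as a rotation by angle $2 \arccos(s_j)$ within each two-dimensional $\Subspace_j$.

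First, I will invoke a standard polynomial approximation result (see e.g.~\cite[Lemma~29 / Corollary~66]{STOC:GSLW19}, itself building on Low--Chuang and related Chebyshev constructions) to obtain an odd real polynomial $P$ of degree $d = O\!\left(\log(1/\delta)/\sqrt{a}\right)$ with the two properties
\[
|P(x)| \leq 1 \text{ for all } x \in [-1,1], \qquad |1 - P(x)| \leq \delta \text{ for all } x \in [\sqrt{a},1].
\]
Given such a $P$, the QSVT compiler of~\cite[Thm.~17/Thm.~31]{STOC:GSLW19} produces a circuit $\Transform_{a,\delta}$ acting on the original Hilbert space together with a single-qubit ``flag'' ancilla, consisting of $d$ alternating calls to $R_\MeasA, R_\MeasB$ interleaved with controlled single-qubit $Z$-rotations by appropriately chosen QSVT phases. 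The total size of the circuit, measured in $U_\MeasA, U_\MeasB$ gates and $O(1)$-qubit Clifford$+T$ gates, is $O(d) = O(\log(1/\delta)/\sqrt{a})$.

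The correctness analysis is then routine: QSVT with an odd polynomial realizes the odd singular value transformation, so on the input $\ket{0}_{\mathrm{flag}} \otimes \JorKetA{j}{1}$ the circuit produces
\[
P(s_j)\, \ket{0}_{\mathrm{flag}} \JorKetB{j}{1} \;+\; \sqrt{1 - P(s_j)^2}\, \ket{1}_{\mathrm{flag}} \ket{\xi_j},
\]
where $\ket{\xi_j}$ is some state lying in $\Subspace_j$ (this is the key structural property: every block of the QSVT iteration acts within $\Subspace_j \otimes \mathrm{span}(\ket{0},\ket{1})_{\mathrm{flag}}$). Under the hypothesis $s_j^2 \geq a$, the polynomial bound gives $|P(s_j)|^2 \geq (1-\delta)^2 \geq 1-\delta$ after rescaling $\delta$ by a constant inside the polynomial construction; measuring the flag and post-selecting on $\ket{0}$ therefore yields $\JorKetB{j}{1}$ with probability at least $1-\delta$. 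The same block structure immediately shows that $\Subspace_j$ is invariant under $\Transform_{a,\delta}$ regardless of the value of $s_j$, which gives the ``moreover'' clause.

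The main subtlety I expect is bookkeeping rather than conceptual: one has to verify that the $\Transform_{a,\delta}$ circuit can be driven entirely by the binary measurement unitaries $U_\MeasA, U_\MeasB$ (so that the proposition is stated in the correct oracle model), and that the flag-conditioned output state is exactly $\JorKetB{j}{1}$ rather than some rotated vector inside $\Subspace_j$; both follow from the explicit form of the QSVT iteration, since odd-degree QSVT alternately applies $R_\MeasA$ and $R_\MeasB$ with phase rotations on the flag, and the parity of the polynomial ensures the post-selected state lies in $\image(\ProjB) \cap \Subspace_j = \mathrm{span}(\JorKetB{j}{1})$.
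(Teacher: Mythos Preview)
Your proposal is correct and matches the paper's treatment: the paper does not prove this theorem but simply cites it as a black-box result from \cite{STOC:GSLW19}, and your sketch via QSVT with an odd polynomial approximation to the sign function is precisely the construction underlying that citation.
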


The second algorithm $\Threshold$ implements a measurement determining, given a threshold $a$ and a singular vector with singular value $s_j$, whether $a \leq s_j$ or $a > 2s_j$ (and otherwise has no guarantee).

\begin{theorem}[Singular value threshold \cite{STOC:GSLW19}]
    \label{thm:svdisc}
    There is an algorithm $\Threshold$ which, for all binary projective measurements $\MeasA,\MeasB$, given black-box access to operators $U_{\MeasA},U_{\MeasB}$, achieves the following guarantee. Given $\delta > 0, b \geq \varepsilon > 0$ and a state $\JorKetA{j}{1}$ which is a left singular vector of $\ProjA \ProjB$ with singular value $s_j$:
    \begin{itemize}
        \item if $s_j^2 \geq b$, then $\Pr[\Threshold^{\MeasA,\MeasB}_{p,\varepsilon,\delta}(\JorKetA{j}{1}) \to 1] \geq 1 - \delta$, and
        \item if $s_j^2 \leq b-\varepsilon$, then $\Pr[\Threshold^{\MeasA,\MeasB}_{p,\eps,\delta}(\JorKetA{j}{1}) \to 1] \leq \delta$.
    \end{itemize}
    Moreover, $\Subspace_j$ is invariant under $\Threshold$, and if the outcome is $1$ the post-measurement state is $\JorKetA{j}{1}$. $\Threshold$ runs in time $O(\log (1/\delta)\sqrt{b}/\varepsilon)$.
\end{theorem}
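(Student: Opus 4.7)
The plan is to apply the quantum singular value transformation (QSVT) framework of \cite{STOC:GSLW19} to the projector pair $(\Pi_A, \Pi_B)$ using a polynomial that approximately implements a threshold at $\sqrt{b - \varepsilon/2}$. The key observation underlying the whole approach is that, by Jordan's lemma (\cref{lemma:jordan}), each two-dimensional subspace $\Subspace_j$ is invariant under both $\Pi_A$ and $\Pi_B$, and within $\Subspace_j$ the operator $\Pi_A \Pi_B$ is rank one with left singular vector $\JorKetA{j}{1}$ and singular value $s_j$. Consequently, any QSVT polynomial $P(\Pi_A \Pi_B)$ acts on $\JorKetA{j}{1}$ as multiplication by the scalar $P(s_j)$.

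First, I would construct a polynomial $P$ on $[-1,1]$ of degree $d = O\!\left(\tfrac{\sqrt{b}}{\varepsilon} \log(1/\delta)\right)$ such that $|P(x)|^2 \leq \delta$ whenever $x \leq \sqrt{b-\varepsilon}$ and $|P(x)|^2 \geq 1 - \delta$ whenever $x \geq \sqrt{b}$. The required transition window has width $\sqrt{b} - \sqrt{b-\varepsilon} = \Theta(\varepsilon/\sqrt{b})$, and standard approximations of the sign (or shifted-threshold) function --- for instance the erf-based construction in \cite{STOC:GSLW19} --- attain $\delta$-error at degree inversely proportional to the window width, giving the stated degree bound.

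Second, I would invoke the QSVT construction of \cite{STOC:GSLW19} to realize a block-encoding of $P(\Pi_A \Pi_B)$ using $O(d)$ calls to $U_{\MeasA}$ and $U_{\MeasB}$ together with $O(1)$ ancilla qubits. The algorithm $\Threshold$ then (i) appends the ancilla in $\ket{0}$, (ii) applies this block-encoding, and (iii) measures the ancilla in the computational basis, outputting $1$ iff the ancilla returns to $\ket{0}$. On input $\JorKetA{j}{1}$ the output state of the block-encoding is $P(s_j)\JorKetA{j}{1}\ket{0} + \sqrt{1 - |P(s_j)|^2}\,\ket{\perp}$ where $\ket{\perp}$ is supported on the orthogonal ancilla subspace, so the probability of outcome $1$ equals $|P(s_j)|^2$, yielding both required probability bounds. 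Because $P$ acts by a scalar on $\JorKetA{j}{1}$, conditioning on outcome $1$ leaves the system register in exactly the state $\JorKetA{j}{1}$, verifying the post-measurement guarantee. Invariance of $\Subspace_j$ under $\Threshold$ follows from the fact that every gate in the QSVT circuit acts block-diagonally with respect to the Jordan decomposition.

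The main obstacle I anticipate is carefully constructing the polynomial $P$ so that the one-sided approximation bounds translate into the precise probability guarantees claimed, while simultaneously respecting the parity and boundedness constraints required by QSVT (the polynomial must satisfy $|P(x)| \leq 1$ on $[-1,1]$ and have a definite parity). Rescaling and centering the standard sign-function approximation at the midpoint $\sqrt{b - \varepsilon/2}$, and then applying the even/odd decomposition used in \cite{STOC:GSLW19}, should resolve this, but it is where the technical care of the proof concentrates.
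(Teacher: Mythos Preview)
The paper does not prove this theorem; it is stated as a black-box citation to \cite{STOC:GSLW19} (see the surrounding text: ``In this section we recall a selection of algorithms\ldots''). Your sketch is a faithful outline of how the singular value discrimination result is obtained in \cite{STOC:GSLW19} via QSVT with a polynomial approximation to the sign function, and the degree calculation $d = O\!\left(\tfrac{\sqrt{b}}{\varepsilon}\log(1/\delta)\right)$ from the transition-window width $\sqrt{b}-\sqrt{b-\varepsilon}=\Theta(\varepsilon/\sqrt{b})$ is correct.
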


Next, we describe an algorithm which, with access to $U_{\MeasA},U_{\MeasB}$, can ``flip'' a singular vector state from $\image(\Id - \ProjA)$ to $\image(\ProjA)$ using $\ProjB$, provided that the singular value is sufficiently far from both $0$ and $1$.
\begin{lemma}
    Let $\ProjA,\ProjB$ be projectors. There is an algorithm $\Flip_{\varepsilon}[\ProjA, \ProjB]$ which, on input a state $\JorKetA{j}{0}$ that is a left singular vector of $\ProjA \ProjB$ with $\varepsilon \leq s_j^2 \leq 3/4$, outputs the state $\JorKetA{j}{1}$ with probability $1-\delta$ in time $O(\log(1/\delta)/\sqrt{\varepsilon})$. $\Flip$ is invariant on the subspace spanned by $\{ \JorKetA{j}{1},\JorKetA{j}{0} \}$.
\end{lemma}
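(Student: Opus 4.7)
The plan is to reduce $\Flip$ to the fixed-runtime singular vector transformation $\Transform$ of \cref{thm:svt}, followed by a constant-success-probability measurement of $\MeasA$ that is amplified by repetition. The key observation is that $\JorKetA{j}{0}$ is a left singular vector of $(\Id - \ProjA)(\Id - \ProjB)$ with the \emph{same} singular value $s_j$: in $\Subspace_j$ the two orthonormal bases $\{\JorKetA{j}{1},\JorKetA{j}{0}\}$ and $\{\JorKetB{j}{1},\JorKetB{j}{0}\}$ play symmetric roles with respect to $(\ProjA, \ProjB)$ and $(\Id-\ProjA, \Id - \ProjB)$, and a short calculation in $\Subspace_j$ gives $(\Id-\ProjA)(\Id-\ProjB)\JorKetB{j}{0} = \pm s_j \JorKetA{j}{0}$.

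The algorithm performs at most $k = \Theta(\log(1/\delta))$ iterations of the following. First, run $\Transform_{\varepsilon,\delta/(2k)}[\bar\MeasA \to \bar\MeasB]$ with $\bar\MeasA = \BMeas{\Id - \ProjA}$ and $\bar\MeasB = \BMeas{\Id - \ProjB}$; by \cref{thm:svt} (applied to the pair $(\Id - \ProjA, \Id - \ProjB)$) this maps $\JorKetA{j}{0}$ to $\JorKetB{j}{0}$ with probability $1-\delta/(2k)$ in time $O(\log(k/\delta)/\sqrt{\varepsilon})$, since $s_j^2 \geq \varepsilon$. Second, apply $\MeasA$ and halt successfully on outcome $1$. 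Fixing a convenient phase convention in $\Subspace_j$ we can write $\JorKetB{j}{0} = \sqrt{1-s_j^2}\,\JorKetA{j}{1} - s_j \JorKetA{j}{0}$, so this measurement returns $1$ with probability $1 - s_j^2 \geq 1/4$ (using $s_j^2 \leq 3/4$) and collapses the state exactly to $\JorKetA{j}{1}$. If the measurement returns $0$, then because $\Subspace_j$ is invariant under both $\Transform$ and $\MeasA$ and its $(\Id - \ProjA)$-component is one-dimensional, the state collapses exactly to $\JorKetA{j}{0}$, so we can retry without accumulating error.

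Correctness then follows from a standard hybrid argument: the probability that no iteration succeeds at its $\MeasA$-measurement is at most $(3/4)^k \leq \delta/2$, and by a union bound the probability that any of the $k$ invocations of $\Transform$ outputs an incorrect state is at most $\delta/2$, for a total failure probability of at most $\delta$. Invariance of $\Subspace_j$ under $\Flip$ is immediate from the invariance of each constituent operation.

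The main subtlety is the precise runtime $O(\log(1/\delta)/\sqrt{\varepsilon})$: a naive count over $k$ iterations each of cost $O(\log(k/\delta)/\sqrt{\varepsilon})$ produces an extra $\log(1/\delta)$ factor. This is tightened by observing that, since each failed round returns the state \emph{exactly} to the input $\JorKetA{j}{0}$, the outer loop is a bona fide amplitude amplification of the one-round map $\MeasA \circ \Transform[\bar\MeasA \to \bar\MeasB]$ whose success amplitude on $\Subspace_j$ is bounded below by $1/2$. Replacing the repetition loop with fixed-point amplitude amplification (implementable directly in the QSVT framework of \cite{STOC:GSLW19}) converts the multiplicative $\log(1/\delta)$ blow-up into an additive one, yielding the claimed bound.
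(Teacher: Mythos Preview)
Your approach is correct in spirit but takes a more circuitous route than the paper, and the runtime tightening in your final paragraph is not fully justified.

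The paper's algorithm is simpler: it first alternates the \emph{cheap} measurements $\MeasA,\MeasB$ on $\JorKetA{j}{0}$; since each step returns $1$ with probability $1-s_j^2 \geq 1/4$, after $O(\log(1/\delta))$ steps one of them fires. If $\MeasA \to 1$ the state is already $\JorKetA{j}{1}$; if $\MeasB \to 1$ the state is $\JorKetB{j}{1}$, and a \emph{single} call to $\Transform_{\varepsilon,\delta}[\ProjB \to \ProjA]$ maps it to $\JorKetA{j}{1}$. The total cost is $O(\log(1/\delta)) + O(\log(1/\delta)/\sqrt{\varepsilon}) = O(\log(1/\delta)/\sqrt{\varepsilon})$ with no further tricks. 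The key structural difference is that the paper places the expensive $\Transform$ \emph{outside} the repetition, so the $s_j^2 \leq 3/4$ assumption is used only for the cheap alternating-measurement phase.

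Your construction instead places $\Transform$ (on the complementary pair $(\Id-\ProjA,\Id-\ProjB)$) \emph{inside} the loop, so the naive cost is $O(\log^2(1/\delta)/\sqrt{\varepsilon})$. Your proposed fix via fixed-point amplitude amplification is plausible but underspecified: the statement ``each failed round returns the state exactly to $\JorKetA{j}{0}$'' is only true for a perfect $\Transform$, not for one with error $\delta/(2k)$, and to make amplitude amplification recover the claimed bound you would need to relax the per-round $\Transform$ accuracy to a constant (so each round costs $O(1/\sqrt{\varepsilon})$) and then argue that $O(\log(1/\delta))$ Grover-style iterations using the reflections $2\ProjA-\Id$ and $\Id-2\ProjA$ (restricted to $\Subspace_j$, together with the one-round unitary and its inverse) suffice. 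None of this is spelled out. The paper's argument avoids this entire layer.
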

\begin{proof}
    The algorithm operates as follows:
    \begin{enumerate}[nolistsep]
        \item \label[step]{step:alternate-to-1} Apply $\MeasA,\MeasB$ in an alternating fashion until either $\MeasA \to 1$, $\MeasB \to 1$ or $3\log(1/\delta)$ measurements have been applied.
        \item If $\MeasA \to 1$, stop.
        \item If $\MeasB \to 1$, apply $\Transform_{\varepsilon,\delta}[\ProjB, \ProjA]$.
    \end{enumerate}
    The lemma follows since the probability that \cref{step:alternate-to-1} takes more than $k$ steps is $(3/4)^k$, and then by the guarantee of $\Transform$.
\end{proof}

\subsection{Variable-Runtime Singular Vector Transformation (vrSVT)}
\label{sec:vrsvt}
In this section we describe our variable-runtime SVT algorithm. In fact, for technical reasons our algorithm consists of two parts: a variable-runtime \emph{singular value estimation} procedure which \emph{preserves} singular vectors, and a \emph{singular vector transformation} procedure which transforms left singular vectors to right singular vectors, whose running time is fixed given a classical input from the estimation procedure.

Below we give a proof of \cref{thm:vrsvt} that makes use of the singular value discrimination and singular vector transformation algorithms of \cite{STOC:GSLW19}. We note that it is possible to prove \cref{thm:vrsvt} via more ``elementary'' means using high-probability phase estimation \cite{NagajWZ11} and amplitude amplification. Indeed, phase estimation for $(2\Id - \ProjA)(2\Id - \ProjB)$ is equivalent to singular value estimation for $\ProjA \ProjB$ and amplitude amplification can be viewed as a (non-coherent) singular vector transformation.

\begin{theorem}[Two-stage variable-runtime singular vector transformation]
    \label{thm:vrsvt}
    Let $\MeasA = \BMeas{\ProjA}$, $\MeasB = \BMeas{\ProjB}$ be projective measurements. There is a pair of algorithms $\VarEstimate[\ProjA \rightleftarrows \ProjB]$ and $\Transform[\ProjB \to \ProjA]$ with $U_{\MeasA}$ and $U_{\MeasB}$ gates with the following properties. Let $\JorKetB{j}{1}$ be a left singular vector of $\ProjA \ProjB$ with singular value $s_j > 0$, and let $\JorKetA{j}{1}$ be the corresponding right singular vector. Then
    \begin{enumerate}
        \item The subspace $\Subspace_j$ is invariant under both $\VarEstimate$ and $\Transform$.
        \item The running time of $\VarEstimate(\JorKetA{j}{1})$ is $O(\log(1/\delta)/s_j)$ with probability $1-\delta$ and $O(\log(1/\delta)/\delta)$ with probability $1$.
        \item The output $(q,\ket{\psi}) \gets \VarEstimate(\JorKetA{j}{1})$ is such that $\ket{\psi} = \JorKetB{j}{1}$ with probability $1-\delta$.
        \item The running time of $\Transform(q,\ket{\psi'})$, where $(\gamma,\ket{\psi}) \gets \VarEstimate(\JorKetA{j}{1})$ and $\ket{\psi'}$ is any state, is $O(\log (1/\delta)/s_j)$ with probability $1-\delta$ and at most $1/\delta$ with probability $1$.
        \item The output state of $\Transform(\VarEstimate(\JorKetA{j}{1}))$ is $\JorKetB{j}{1}$ with probability $1-\delta$.
    \end{enumerate}
\end{theorem}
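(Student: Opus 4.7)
My strategy is to decompose the variable-runtime singular vector transformation into two stages. The first stage, $\VarEstimate$, is a variable-length singular-value estimation that, given a singular vector, outputs a classical multiplicative estimate of the singular value $s_j$ while preserving the Jordan subspace $\Subspace_j$. The second stage, $\Transform$, is a fixed-runtime singular vector transformation whose running time is determined entirely by the classical estimate and whose correctness relies on the estimate being sufficiently accurate. The separation is essential because, in downstream applications, other operations may be interleaved between the two calls; once the classical output of $\VarEstimate$ is fixed, every subsequent $\Transform$ invocation has a strict (i.e., pre-determined) running time.

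To build $\VarEstimate[\ProjA \rightleftarrows \ProjB]$, I would perform a geometric search over thresholds for $s_j^2$. For $i = 1, 2, \ldots, i_{\max}$, where $i_{\max} = \Theta(\log(1/\delta))$ is chosen so that $2^{-i_{\max}} = \Theta(\delta^2)$, invoke $\Threshold^{\MeasA,\MeasB}_{b_i,\varepsilon_i,\delta_i}$ from \cref{thm:svdisc} with $b_i = 2^{-i}$, $\varepsilon_i = b_i/2$, and error parameter $\delta_i = \delta/2^{i+1}$. By \cref{thm:svdisc}, each such call distinguishes the regimes $s_j^2 \geq b_i$ and $s_j^2 \leq b_i/2$ with error $\delta_i$, and preserves the input singular vector on outcome $1$. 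Halt at the first index $q$ at which $\Threshold$ returns $1$, outputting $q$ together with the post-measurement state; if no such $q$ arises within $i_{\max}$ iterations, abort. For $\Transform$, on input $(q, \ket{\psi'})$, apply the fixed-length $\Transform_{b_q/2,\delta}$ of \cref{thm:svt} in the appropriate direction.

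For correctness, let $i^* = \lceil \log_2(1/s_j^2) \rceil$ so that $b_{i^*} \leq s_j^2 < 2 b_{i^*}$. A union bound over $i \leq i^*$ shows that with probability at least $1 - \sum_{i \leq i^*} \delta_i \geq 1 - \delta$, all calls with $i < i^*$ return $0$ and the call at $i^*$ returns $1$, yielding $q = i^*$ and hence $b_q/2 \leq s_j^2$; the precondition of \cref{thm:svt} is therefore met, so the fixed-length $\Transform_{b_q/2,\delta}$ maps between the corresponding singular vectors with error $\delta$. For runtime, each $\Threshold$ call costs $O(\log(1/\delta_i)\sqrt{b_i}/(b_i/2)) = O((\log(1/\delta) + i)/\sqrt{b_i})$, so summing over $i \leq i^*$ gives a geometric series in $2^{i/2}$ dominated by its last term $O(\log(1/\delta)/s_j)$; the hard cap at $i_{\max}$ yields the worst-case bound $O(\log(1/\delta)/\delta)$. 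The runtime of $\Transform$ on a valid $q$ is $O(\log(1/\delta)/\sqrt{b_q/2})$, which evaluates to $O(\log(1/\delta)/s_j)$ when $q = i^*$ and is at most $1/\delta$ for the maximum $q = i_{\max}$. Subspace invariance follows immediately from the corresponding invariance properties in \cref{thm:svdisc,thm:svt}, which compose across the iterations of the geometric search.

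The main subtlety is coordinating the correctness of the classical estimate $q$ with the precondition $b_q/2 \leq s_j^2$ required by \cref{thm:svt}: the latter holds only on the $(1-\delta)$-event that the geometric search correctly identifies $i^*$, so I must absorb both correctness and running-time slack outside this event into the $\delta$ failure budget via the truncation at $i_{\max}$. Choosing $\delta_i = \delta/2^{i+1}$ so that $\sum_i \delta_i \leq \delta$ while keeping each $\log(1/\delta_i) = O(\log(1/\delta) + i)$ factor manageable is the key design choice that prevents the error-budget overhead from blowing up the runtime across the geometric search. The remaining work is routine bookkeeping of constants and a careful check that each invocation of $\Threshold$ and $\Transform$ preserves $\Subspace_j$ even on failure branches, which follows directly from the subspace-invariance clauses of \cref{thm:svdisc,thm:svt}.
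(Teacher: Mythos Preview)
Your high-level decomposition—geometric search with $\Threshold$ followed by fixed-runtime $\Transform$—matches the paper's, and the runtime bookkeeping is fine. But there is a real gap in the correctness of the iterated $\Threshold$ calls. The guarantee of \cref{thm:svdisc} is stated only for an input that is itself a singular vector: on outcome $1$ the state is restored to that vector, but on outcome $0$ the theorem promises only that the state remains in $\Subspace_j$. After your first $\Threshold_{b_1,\varepsilon_1,\delta_1}$ call returns $0$, the state is an unknown vector in $\Subspace_j$, and you can no longer invoke the probability bound of \cref{thm:svdisc} at threshold $b_2$. Your union bound over $i \leq i^*$ silently assumes the $\Threshold$ outcome distribution depends only on $s_j$ and not on the particular state within $\Subspace_j$—strictly stronger than the subspace invariance you cite.

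The paper repairs this by inserting, before each $\Threshold$ call, a measurement of $\MeasB$, which collapses the state onto one of the two singular vectors in $\Subspace_j$; depending on which one, it applies $\Threshold$ to either $(\MeasA,\MeasB)$ or the complemented pair $(\bar\MeasA,\bar\MeasB)$ (using that $\JorKetA{j}{0},\JorKetB{j}{0}$ are singular vectors of $(\Id-\ProjA)(\Id-\ProjB)$ with the same $s_j$). It also adds a final $\MeasB$-then-$\Flip$ step to ensure the output state is the specified singular vector (property~(3)), not merely some vector in $\Subspace_j$. Both repairs are absent from your procedure; with them in place, your argument becomes essentially the paper's.
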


The $\Transform$ procedure above can be instantiated directly via the singular vector transformation algorithm of \cite{STOC:GSLW19}, see \cref{thm:svt}.

We describe an implementation of $\VarEstimate$ using the singular value discrimination algorithm (\cref{thm:svdisc}). For a binary projective measurement $\MeasA$, let $\bar{\MeasA}$ denote the same measurement with the outcome labels reversed. For $k$ in the procedure below, define $b \eqdef 2^{-k}$ and $\varepsilon \eqdef 2^{-k-1}$.
\begin{enumerate}[noitemsep]
    \item \label[step]{step:est-loop} Set $b \eqdef 0$, $k \eqdef 0$. Repeat the following two steps until $b = 1$ or $k \geq \lceil \log (1/\delta) \rceil$:
        \begin{enumerate}[noitemsep]
            \item Set $k \gets k+1$.
            \item Apply $\MeasB$, obtaining outcome $c$.
            \item If $c = 1$, apply $\Threshold^{\MeasA,\MeasB}(\gamma,\varepsilon,\delta/\log (1/\delta))$ obtaining outcome $b \in \{0,1\}$.
            \item If $c = 0$, apply $\Threshold^{\bar{\MeasA},\bar{\MeasB}}(\gamma,\varepsilon,\delta/\log (1/\delta))$ obtaining outcome $b \in \{0,1\}$.
        \end{enumerate}
    \item \label[step]{step:fix-state} Apply $\MeasB$, obtaining outcome $c$. If $c = 0$, apply $\Flip_{2^{-k-1}}[\MeasA,\MeasB]$.
    \item Output $2^{-k-1}$.
\end{enumerate}

\begin{lemma}[Variable-runtime singular value estimation]
    \label{lemma:varest}
    Let $\JorKetA{j}{1}$ be a left singular vector with singular value $s_j$. Let $\delta > 0$. $\VarEstimate_{\delta}[\MeasA \rightleftarrows \MeasB](\JorKetA{j}{1},\delta)$ runs in time $O(\log(1/\delta)/s_j)$ with probability $1-\delta$ and $O(\log(1/\delta)/\delta)$ with probability $1$. Moreover, $\VarEstimate$ outputs $a$ in the range $\max(\delta,s_j^2)/4 \leq a \leq \max(\delta,s_j^2)$ with probability $1 - \delta$.
\end{lemma}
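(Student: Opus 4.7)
The plan is to condition on a ``good event'' $G$ that all invocations of $\Threshold$ inside $\VarEstimate$ give correct answers, and then analyze exit-time and output of the loop under this event.

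First I would observe that $\Subspace_j$ is invariant under every operation used by the algorithm: $\MeasB$ is invariant on $\Subspace_j$ because $\ProjB$ is (this also means that starting from $\JorKetA{j}{1} \in \Subspace_j$, the state after the first $\MeasB$ is either $\JorKetB{j}{1}$ or $\JorKetB{j}{0}$); $\Threshold^{\MeasA,\MeasB}$ is invariant by \cref{thm:svdisc}; the analogous statement for $\Threshold^{\bar\MeasA,\bar\MeasB}$ uses the fact that $\Subspace_j$ is also a Jordan subspace of $(\Id-\ProjA,\Id-\ProjB)$ with the same squared singular value $s_j^2$ (since $\JorKetA{j}{0},\JorKetB{j}{0}$ are the corresponding singular vectors); and $\Flip$ is invariant by its stated guarantee. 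Next I would apply the triangle bound: since at most $\lceil \log (1/\delta)\rceil$ invocations of $\Threshold$ occur, each with error $\delta/\log(1/\delta)$, a union bound yields $\Pr[G] \geq 1 - \delta$.

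Now let $k^{*}$ be the unique integer with $2^{-k^{*}-1} < s_j^{2} \leq 2^{-k^{*}}$ when $s_j^{2} > \delta$, and set $k^{*} = \lceil \log(1/\delta)\rceil$ otherwise. Conditioned on $G$, the $\Threshold_{b,\varepsilon}$ call at iteration $k$ with $b = 2^{-k}$, $\varepsilon = 2^{-k-1}$ outputs $0$ whenever $s_j^{2} \leq 2^{-k-1}$ (i.e.\ for $k \leq k^{*}-1$) and outputs $1$ whenever $s_j^{2} \geq 2^{-k}$ (i.e.\ for $k \geq k^{*}+1$), while the outcome at $k = k^{*}$ is unconstrained. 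So under $G$ the loop exits at iteration $k \in \{k^{*}, k^{*}+1\}$ (or at $\lceil\log(1/\delta)\rceil$, when $s_j^{2} \leq \delta$, in which case the output $\approx \delta/2$ lies in $[\delta/4,\delta]$). In all cases the output $2^{-k-1}$ lies in $[\max(\delta,s_j^{2})/4,\max(\delta,s_j^{2})]$, giving the estimation guarantee. For the post-processing in step 2, note that immediately after $\Threshold$ the state is exactly the one produced by the most recent $\MeasB$ measurement, so re-applying $\MeasB$ deterministically reproduces the same outcome $c$; when $c = 0$ one then applies $\Flip$ (with threshold $2^{-k-1} \leq s_j^2 \cdot O(1)$, which is below $3/4$ for the regime of interest) to map the state to the image of $\ProjB$, yielding $\JorKetB{j}{1}$ with probability $1 - \delta$ by the $\Flip$ guarantee (absorbing another $\delta$ into the error budget).

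For the runtime, by \cref{thm:svdisc} the $\Threshold$ call at iteration $k$ takes time $O(\log(1/\delta) \cdot \sqrt{2^{-k}}/2^{-k-1}) = O(\log(1/\delta)\cdot 2^{k/2})$. Summing as a geometric series, the total runtime through exit-iteration $k$ is dominated by the last term, giving $O(\log(1/\delta) \cdot 2^{k/2})$. Under $G$, $k \leq k^{*}+1$, so $2^{k/2} = O(1/s_j)$, yielding the probability-$(1-\delta)$ bound $O(\log(1/\delta)/s_j)$. The always-true bound comes from the hard cap $k \leq \lceil \log(1/\delta)\rceil$: $2^{k/2} \leq O(1/\sqrt{\delta})$, and together with the cost of step 2 ($\Flip$ at threshold $\varepsilon = 2^{-k-1} \geq \delta/4$ takes $O(\log(1/\delta)/\sqrt{\delta})$ time) gives the stated worst-case bound. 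The main subtlety, and the step I would treat most carefully, is the analysis of the post-processing measurement in step 2 together with the $\Flip$ call, because one has to verify (i) that the ``which side of $\ProjB$'' outcome is consistent between the last loop iteration and step 2 (so that applying $\Flip$ only when $c=0$ gives the correct target state $\JorKetB{j}{1}$), and (ii) that the threshold $2^{-k-1}$ passed to $\Flip$ is genuinely a lower bound on $s_j^{2}$ in the relevant Jordan subspace under $G$.
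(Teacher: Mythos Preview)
Your proof plan follows essentially the same approach as the paper's: a union bound over the at most $\lceil\log(1/\delta)\rceil$ calls to $\Threshold$, an exit-iteration analysis giving $k\in\{k^*,k^*+1\}$ (hence the stated range for $a=2^{-k-1}$), and a geometric running-time sum dominated by the last term. Your runtime computation is in fact tighter than the paper's (you get $O(\log(1/\delta)\cdot 2^{k/2})$ where the paper writes $O(\log(1/\delta)\cdot 2^k)$), though both suffice for the stated bounds.

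One point to flag, since you singled it out yourself: the claim that ``immediately after $\Threshold$ the state is exactly the one produced by the most recent $\MeasB$ measurement'' is not supported by \cref{thm:svdisc} as stated, which guarantees preservation on outcome $1$ only for the \emph{left} singular vector $\JorKetA{j}{1}$, whereas the state entering $\Threshold$ here is $\JorKetB{j}{c}$. Fortunately this claim is not needed for \cref{lemma:varest}, which concerns only the runtime and the classical output $a$; the output-\emph{state} guarantee belongs to \cref{thm:vrsvt}, and there the paper simply argues that whatever state in $\Subspace_j$ survives the loop, the $\MeasB$ measurement in Step~2 either lands directly on $\JorKetB{j}{1}$ or (on outcome $0$) is repaired by $\Flip$, with no appeal to determinism of the re-measurement.
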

\begin{proof}
    First, observe that $k$ iterations of \cref{step:est-loop} take time $O(\log(1/\delta) \cdot 2^k)$. Since $\VarEstimate$ terminates within $\lceil \log(1/\delta) \rceil$ iterations of \cref{step:est-loop} with probability $1$, $\VarEstimate$ runs in time $O(\log(1/\delta)/\delta)$ with probability $1$.

    The probability that the singular value discrimination algorithm outputs $1$ when $2^{-k} > 2s_j$ is at most $\delta/(\log (1/\delta))$. Similarly, the probability that it outputs $1$ when $2^{-k} \leq s_j$ is at least $1-\delta/(\log (1/\delta))$. By a union bound, with probability at least $1 - \delta$ the algorithm either stops in the first iteration where $2^{-k} \leq 2 s_j$ (so $s_j < 2^{-k} \leq 2 s_j$) or in the following iteration ($s_j/2 < 2^{-k} \leq s_j$). Thus $2^{-k} \in [s_j/2,2s_j]$, so $2^{-k-1} \in [s_j/4,s_j]$ as required. The running time in this case is $O(\log(1/\delta)/s_j)$.
    
    If $s_j \geq 1/2$ then the algorithm stops after one iteration in state $\JorKetB{j}{1}$ with probability $1-\delta$. Otherwise the probability that $\log(1/\delta)$ alternating measurements $\MeasA,\MeasB$ are applied with only $0$ outcomes is at most $\delta$. If \cref{step:fix-state} terminates with $\MeasB \to 1$, then the resulting state is $\JorKetB{j}{1}$. Otherwise, the resulting state is $\JorKetA{j}{1}$. In this case the $\Transform$ algorithm rotates the state to $\JorKetB{j}{1}$ with probability $1-\delta$.
\end{proof}

The next two claims follow directly from the correctness and subspace invariance guarantees of $\Threshold$ and $\VarEstimate$.
\begin{corollary}
    \label{cor:threshold-est-almost}
    For any state $\DMatrix$, $\delta > 0$, $\varepsilon \colon [0,1] \to [\delta,1]$:
    \[
        \Pr[\Threshold_{p,\varepsilon(p),\delta}(\VarEstimate(\DMatrix)) = 1] \geq 1 - 2\delta,
    \]
    where $p$ is the classical output from $\VarEstimate$.
\end{corollary}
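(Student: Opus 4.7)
The proof is a direct union bound combining \cref{lemma:varest} and the ``output $1$'' clause of \cref{thm:svdisc}. First, by Jordan's lemma together with the invariance of $\VarEstimate$ and $\Threshold$ on each Jordan subspace $\Subspace_j$ of $(\ProjA, \ProjB)$, it suffices to analyze inputs $\DMatrix$ supported on a single $\Subspace_j$ with squared singular value $s_j^2$ and then sum the contributions over $j$.

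Next, I would apply \cref{lemma:varest}: with probability at least $1-\delta$, the classical output satisfies $p \in [\max(\delta, s_j^2)/4, \max(\delta, s_j^2)]$ and the residual state is the corresponding singular vector of $\ProjA \ProjB$ inside $\Subspace_j$; call this event $E_1$. Conditioned on $E_1$, we have $p \leq \max(\delta, s_j^2)$. In the regime $s_j^2 \geq \delta$ this yields $p \leq s_j^2$ directly, placing us in the ``output $1$ with probability $\geq 1-\delta$'' case of \cref{thm:svdisc} (any precision $\varepsilon(p) \geq \delta$ is permitted), so the outer $\Threshold$ call succeeds except with probability $\delta$. A union bound with $\Pr[\overline{E_1}] \leq \delta$ then gives the claimed $1-2\delta$ bound.

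The one delicate regime is $s_j^2 < \delta$, where $p$ can be as large as $\delta$ while $s_j^2$ is smaller, so the threshold hypothesis $s_j^2 \geq p$ of \cref{thm:svdisc} is not immediate from $E_1$ alone. My plan for this case is to sharpen $E_1$ by tracking the outcomes of every inner $\Threshold$ call inside $\VarEstimate$'s loop: each successful exit is certified by an inner $\Threshold_{2p, p, \delta/\log(1/\delta)}$ returning $1$, and the ``output $0$'' clause of \cref{thm:svdisc} then implies $s_j^2 > p$ except with the $\delta/\log(1/\delta)$ per-call failure budget already contained in $E_1$. The hypothesis $\varepsilon(p) \geq \delta$ also guarantees that $p - \varepsilon(p) \leq 0$, so we are never in the ``output $0$ w.h.p.'' window of \cref{thm:svdisc} for the outer call; combining these observations reestablishes the threshold hypothesis in both regimes.

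The main obstacle I anticipate is bookkeeping the error budgets so that the failure probabilities across the $O(\log(1/\delta))$ inner $\Threshold$ invocations plus the single outer one sum to at most $2\delta$ without double-counting. The clean way to handle this is to define a single composite good event that bundles correct behavior of every $\Threshold$ call with its own error parameter: the inner calls contribute at most $\delta$ total (by the per-call choice of $\delta/\log(1/\delta)$), matching $\Pr[\overline{E_1}]$, while the outer call contributes at most another $\delta$, which together yield exactly the $2\delta$ slack permitted by the corollary.
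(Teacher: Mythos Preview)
Your main line of argument—reduce to a single Jordan subspace by invariance, invoke \cref{lemma:varest} to get $p \leq s_j^2$ with probability $1-\delta$, then invoke \cref{thm:svdisc} to get $\Threshold \to 1$ with probability $1-\delta$, and union bound—is exactly what the paper intends; the paper offers only the one-line justification that the claim ``follows directly from the correctness and subspace invariance guarantees of $\Threshold$ and $\VarEstimate$''.

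You are right that the regime $s_j^2 < \delta$ is delicate, but your proposed fix does not close it. Your two observations cover disjoint subcases. When the $\VarEstimate$ loop exits with $b=1$, the inner call's ``output $0$'' clause indeed forces $s_j^2 > 2^{-k-1} = p$ except with the per-call budget, as you say. But when the loop exits because $k$ hits $\lceil \log(1/\delta) \rceil$ with $b=0$ at every iteration, the inner calls only certify $s_j^2 < 2^{-k}$ at each $k$; this is compatible with $0 < s_j^2 < p$ for the final output $p = 2^{-k-1}$. In this timeout subcase your appeal to $p - \varepsilon(p) \leq 0$ merely shows the outer $\Threshold$ is not in its ``output $0$ w.h.p.'' window—it does \emph{not} place you in the ``output $1$ w.h.p.'' window $s_j^2 \geq p$, and \cref{thm:svdisc} (which is in any case stated under the hypothesis $b \geq \varepsilon$) gives no guarantee in the gap. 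So the sentence ``combining these observations reestablishes the threshold hypothesis in both regimes'' is not justified.

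To close this you would need either an additional property of the singular value discrimination primitive (e.g., that when the threshold $b$ is at most the precision $\varepsilon$ the procedure simply outputs $1$, which is natural for the underlying construction but not part of \cref{thm:svdisc} as stated), or to restrict attention to inputs supported on subspaces with $s_j^2 \geq \delta$. The latter is, in fact, all the paper's applications of the corollary require.
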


\begin{corollary}
    \label{cor:threshold-almost}
    For any state $\DMatrix$, $\delta > 0$, $\varepsilon \in [\delta,1]$:
    \[
        \Pr\left[ b_1 = 1 \,\wedge\, b_2 = 0  \, \middle\vert \begin{array}{r}
            (b_1,\DMatrix_1) \gets \Threshold_{p,\varepsilon,\delta}(\DMatrix) \\
            (b_2,\DMatrix_2) \gets \Threshold_{p-\varepsilon,\varepsilon,\delta}(\DMatrix_1) \\
        \end{array}
        \right] \leq 2\delta ~.
    \]
    Moreover,
    \[
        \Pr\left[ b_1 = 1 \,\wedge\, p_j < p - \varepsilon  \, \middle\vert \begin{array}{r}
            (b_1,\DMatrix_1) \gets \Threshold_{p,\varepsilon,\delta}(\DMatrix) \\
            j \gets \Meas{\Jor}[\MeasA,\MeasB](\DMatrix_1) \\
        \end{array}
        \right] \leq \delta ~.
    \]
\end{corollary}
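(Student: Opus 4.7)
The plan is to reduce both bounds to statements about individual Jordan subspaces via the invariance guarantees of $\Threshold$ (from~\cref{thm:svdisc}) and $\MeasJor$. First, I would decompose the input state as $\DMatrix = \sum_j \DMatrix_j$ where each $\DMatrix_j$ is the component of $\DMatrix$ lying in the Jordan subspace $\Subspace_j$ with associated squared singular value $p_j$. Since $\Threshold$ is subspace-invariant and $\MeasJor$ by definition projects onto the $\Subspace_j$'s, it suffices to analyze the probability of the bad event starting from each $\DMatrix_j$ independently and then sum the contributions weighted by $\Tr(\DMatrix_j)$.

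For the first inequality, I would partition the Jordan subspaces into two classes: \emph{good} subspaces where $p_j \geq p - \varepsilon$, and \emph{bad} subspaces where $p_j < p - \varepsilon$. In a bad subspace, the correctness guarantee of $\Threshold_{p,\varepsilon,\delta}$ (the ``below threshold'' case, since $p_j \leq p - \varepsilon$) gives $\Pr[b_1 = 1 \mid \Subspace_j] \leq \delta$ directly, so the contribution to the bad event is at most $\delta \cdot \Tr(\DMatrix_j)$. In a good subspace, I would use the ``moreover'' clause of~\cref{thm:svdisc}: conditioned on $b_1 = 1$, the post-measurement state collapses to $\JorKetA{j}{1}$, and then $\Threshold_{p-\varepsilon,\varepsilon,\delta}$ applied to this singular vector outputs $1$ with probability $\geq 1-\delta$ because $p_j \geq p - \varepsilon$ is precisely the ``above threshold'' condition for the second call. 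So $\Pr[b_1 = 1 \wedge b_2 = 0 \mid \Subspace_j] \leq \delta \cdot \Tr(\DMatrix_j)$. Summing over $j$ yields at most $\delta$; the stated $2\delta$ is a loose bound that absorbs this and any minor subnormalization slack.

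For the second inequality, the argument is simpler. The event $\{p_j < p - \varepsilon\}$ corresponds exactly to the union of bad Jordan subspaces. Within each such subspace, the ``below threshold'' guarantee of $\Threshold_{p,\varepsilon,\delta}$ again gives $\Pr[b_1 = 1 \mid \Subspace_j] \leq \delta$, and the subsequent $\MeasJor$ measurement simply reports which $\Subspace_j$ the state lies in (without changing this probability, by invariance). Summing $\delta \cdot \Tr(\DMatrix_j)$ over all bad $j$ bounds the total probability by $\delta$.

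The main obstacle is a notational subtlety rather than a mathematical one: the guarantee of $\Threshold$ in~\cref{thm:svdisc} is phrased for the specific singular vector $\JorKetA{j}{1}$, whereas $\DMatrix_j$ may also have support on $\JorKetA{j}{0}$ (and on one-dimensional Jordan components when $s_j \in \{0,1\}$). I would address this by invoking subspace invariance to treat $\Threshold$ as a well-defined two-outcome measurement on each $\Subspace_j$, noting that on 1D Jordan components the measurement is trivial (it acts as identity or zero on $\BProj{\MeasA}$, so the ``above/below threshold'' logic still applies with probabilities in $\{0,1\}$), and handling the 2D case by observing that $\JorKetA{j}{0}$ is the $+1$-eigenvector of $\Id-\BProj{\MeasA}$ with squared singular value $p_j$ under $(\Id-\BProj{\MeasA})(\Id-\BProj{\MeasB})$, so the same threshold logic applies. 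This justifies treating each $\Subspace_j$ uniformly in the case analysis above.
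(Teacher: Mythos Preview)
Your approach is correct and is exactly the intended argument: the paper gives no detailed proof, stating only that both corollaries ``follow directly from the correctness and subspace invariance guarantees of $\Threshold$ and $\VarEstimate$,'' and your Jordan-subspace decomposition with a case split on $p_j \gtrless p-\varepsilon$ is the natural unpacking of that remark.

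One minor caveat on the obstacle you flag: your proposed fix for $\JorKetA{j}{0}$ via the $(\Id-\ProjA)(\Id-\ProjB)$ singular value decomposition does not go through as written, since applying $\Threshold^{\MeasA,\MeasB}$ to $\JorKetA{j}{0}$ is a different operation from applying $\Threshold^{\bar\MeasA,\bar\MeasB}$ to it, and \cref{thm:svdisc} only states guarantees for the specific input $\JorKetA{j}{1}$. The gap is real but lies in the statement of \cref{thm:svdisc} rather than in your argument: the underlying \cite{STOC:GSLW19} singular value discrimination procedure in fact acts block-diagonally with outcome probabilities depending only on $p_j$, which is what is needed (and what the paper implicitly uses, e.g.\ in the proof of \cref{claim:abort-chosen-p} where the bound is asserted for arbitrary states).
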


\section{Pseudoinverse Lemma}

In this section we show that for binary projective measurements $\MeasA,\MeasB$ any state $\ket{\psi_\MeasA}$ in the image of $\ProjA$, there is a state $\ket{\psi_\MeasB}$ in the image of $\MeasB$ such that $\ket{\psi_\MeasA}$ is (approximately) obtained by applying $\MeasA$ to $\ket{\psi_\MeasB}$ and conditioning on obtaining a $1$. Moreover, if $\ket{\psi_\MeasA}$ has Jordan spectrum that is concentrated around eigenvalue $p$, then $\ket{\psi_\MeasA}$ has the same property. We refer to this as the ``pseudoinverse lemma'' because $\ket{\psi_\MeasB}$ is obtained from $\ket{\psi_\MeasA}$ by applying the pseudoinverse of the matrix $\ProjA\ProjB$.

\begin{lemma}[Pseudoinverse Lemma]
\label{lemma:pseudoinverse}
    Let $\MeasA,\MeasB$ be binary projective measurements, and let $\{ \Subspace_j \}_{j}$ be the induced Jordan decomposition. Let $\SProj[\Jor]{j}$ be the projection on to $\Subspace_j$ and let $p_j$ be the eigenvalue of $\Subspace_j$. Let $\DMatrix$ be a state such that $\Tr(\ProjA \DMatrix) = 1$ and let $\SProj{0} \eqdef \sum_{j, p_j = 0} \SProj[\Jor]{j}$. Let $E \eqdef \sum_{j, p_j > 0} \frac{1}{p_j} \SProj[\Jor]{j}$. There exists a ``pseudoinverse'' state $\DMatrix'$ with $\Tr(\ProjB \DMatrix') = 1$ such that all of the following are true:
    \begin{enumerate}[noitemsep]
        \item $\Tr(\ProjA \DMatrix') = \frac{1-\Tr(\SProj{0} \DMatrix)}{\Tr(E \DMatrix)} $,
        \item $d\left(\DMatrix,\frac{\ProjA \DMatrix' \ProjA}{\Tr(\ProjA \DMatrix')}\right) \leq 2\sqrt{\Tr(\SProj{0} \DMatrix)}$,
        \item for all $j$ such that $p_j > 0$ it holds that $\Tr(\SProj[\Jor]{j} \DMatrix') = \frac{\Tr(\SProj[\Jor]{j} \DMatrix)}{p_j \cdot \Tr(E \DMatrix)}$, and
        \item for all $j$ such that $p_j = 0$ it holds that $\Tr(\SProj[\Jor]{j} \DMatrix') = 0$.
    \end{enumerate}
\end{lemma}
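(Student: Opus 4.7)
The plan is to construct $\DMatrix'$ explicitly using the Jordan decomposition and then verify each of the four conditions by direct computation. For each Jordan subspace $\RegS_j$ with $p_j > 0$, let $\JorKetA{j}{1}$ and $\JorKetB{j}{1}$ be unit vectors in $\image(\ProjA)|_{\RegS_j}$ and $\image(\ProjB)|_{\RegS_j}$, so that $\JorBraKetAB{j}{1} = e^{i\phi_j}\sqrt{p_j}$ for some phase $\phi_j$. I will define
\[ K \,\eqdef\, \sum_{j \,:\, p_j > 0} \frac{e^{-i\phi_j}}{\sqrt{p_j}} \, \JorKetB{j}{1}\JorBraA{j}{1}, \qquad \DMatrix' \,\eqdef\, \frac{K\DMatrix K^\dagger}{Z}, \qquad Z \,\eqdef\, \Tr(K\DMatrix K^\dagger). \]
This is a ``pseudoinverse map'' that sends $\image(\ProjA)$ into $\image(\ProjB)$, rescaled by $1/\sqrt{p_j}$ inside each two-dimensional Jordan subspace and phase-adjusted so that $\ProjA K = \sum_{j: p_j > 0} \JorKetBraA{j}{1}$. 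Since the image of $K$ is contained in $\image(\ProjB)$, the normalization $\Tr(\ProjB \DMatrix') = 1$ is automatic.

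Conditions (3), (4), and (1) will follow from short trace identities of the form $\Tr(\SProj[\Jor]{j}\DMatrix') = Z^{-1}\Tr(K^\dagger \SProj[\Jor]{j} K \DMatrix)$ and $\Tr(\ProjA \DMatrix') = Z^{-1}\Tr(K^\dagger \ProjA K \DMatrix)$. The rank-one structure of $K$ gives $K^\dagger \SProj[\Jor]{j} K = \tfrac{1}{p_j}\JorKetBraA{j}{1}$ when $p_j > 0$ and $0$ otherwise, which immediately yields (3) and (4). Summing (3) over $j$ and using that $\DMatrix \in \image(\ProjA)$ forces each $\SProj[\Jor]{j}\DMatrix\SProj[\Jor]{j}$ to be supported on $\JorKetA{j}{1}$ identifies $Z$ with $\Tr(E\DMatrix)$. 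Finally, the phase choice in $K$ gives $K^\dagger \ProjA K = \sum_{j: p_j > 0}\JorKetBraA{j}{1}$; writing $P_0 \eqdef \ProjA - K^\dagger \ProjA K$ for the complementary projector inside $\image(\ProjA)$, the trace $\Tr(K^\dagger \ProjA K \DMatrix) = 1 - \Tr(P_0 \DMatrix)$ equals $1 - \Tr(\SProj{0} \DMatrix)$, because inside $\image(\ProjA)$ the weight of $\DMatrix$ on $\SProj{0}$ agrees with its weight on the zero-eigenvalue $\JorKetBraA{j}{1}$'s. This is (1).

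The interesting step is (2). The key identity, again enabled by the phase-adjusted $K$, is
\[ \ProjA K \DMatrix K^\dagger \ProjA \,=\, (\Id - P_0)\,\DMatrix\,(\Id - P_0), \]
where I use $\DMatrix = \ProjA \DMatrix \ProjA$ to drop an extra $\ProjA$ on each side. After dividing by $\Tr(\ProjA \DMatrix')\cdot Z = 1 - \Tr(\SProj{0} \DMatrix)$, this exhibits $\tfrac{\ProjA \DMatrix' \ProjA}{\Tr(\ProjA \DMatrix')}$ as exactly the post-measurement state obtained by applying the binary projective measurement $\BMeas{\Id - P_0}$ to $\DMatrix$ and post-selecting on outcome $1$, an event of probability $1 - \Tr(\SProj{0} \DMatrix)$. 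The gentle measurement lemma (\cref{lemma:gentle-measurement}) then yields the trace-distance bound $2\sqrt{\Tr(\SProj{0}\DMatrix)}$.

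The main subtlety I anticipate is getting the phases in $K$ exactly right: the naive choice $K \propto \sum_j p_j^{-1/2}\JorKetB{j}{1}\JorBraA{j}{1}$ leaves residual phase factors $e^{i(\phi_j - \phi_k)}$ on the off-diagonal entries of $\ProjA K \DMatrix K^\dagger \ProjA$, which would destroy the identification with $(\Id - P_0)\DMatrix(\Id - P_0)$ whenever $\DMatrix$ has coherences between distinct Jordan subspaces, and would prevent the gentle measurement reduction in (2). The phase $e^{-i\phi_j}$ in the definition above is exactly what is needed to cancel these; once that is in place, everything else reduces to mechanical trace identities in the Jordan basis.
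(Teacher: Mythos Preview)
Your proposal is correct and is essentially the paper's own proof. The paper defines $\DMatrix' \eqdef C^{+}\DMatrix (C^{+})^{\dagger}/\Tr(C^{+}\DMatrix (C^{+})^{\dagger})$ where $C^{+} = \sum_{p_j>0} p_j^{-1/2}\JorKetB{j}{1}\JorBraA{j}{1}$ is the Moore--Penrose pseudoinverse of $C = \ProjA\ProjB$; your $K$ is exactly this $C^{+}$, with the phase $e^{-i\phi_j}$ already absorbed since the paper fixes $\JorKetA{j}{1},\JorKetB{j}{1}$ as left/right singular vectors of $\ProjA\ProjB$ (so $\JorBraKetAB{j}{1}=\sqrt{p_j}$ is real). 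The key identity $\ProjA\DMatrix'\ProjA = \tfrac{1}{\Tr(E\DMatrix)}(\Id-\Pi_0)\DMatrix(\Id-\Pi_0)$ and the subsequent appeal to gentle measurement for (2) are identical to what you wrote, and the trace computations for (1), (3), (4) match as well.
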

An important consequence of (3) and (4) is that for all $j$, if $\Tr(\SProj[\Jor]{j} \DMatrix) = 0$ then $\Tr(\SProj[\Jor]{j} \DMatrix') = 0$.

\begin{proof}
    Let $C \eqdef \ProjA \ProjB$, and note that $\JorKetA{j}{1},\JorKetB{j}{1}$ are corresponding left and right singular vectors of $C$ with singular value $\sqrt{p_j}$. Hence $C = \sum_{p_j > 0} \sqrt{p_j} \JorKetA{j}{1} \JorBraB{j}{1}$. Let $C^+$ be the pseudoinverse of $C$, i.e., $C^+ = \sum_{p_j>0} \frac{1}{\sqrt{p_j}} \JorKetB{j}{1} \JorBraA{j}{1}$. Define \[\DMatrix' \eqdef \frac{C^+ \DMatrix (C^+)^{\dagger}}{\Tr(C^+ \DMatrix (C^{+})^{\dagger})}.\]
    Since $\Tr(\ProjA\DMatrix)=1$, we have $\Tr(\ProjB \DMatrix') = 1$. We also have
    \begin{align}
        \Tr(C^+ \DMatrix (C^+)^{\dagger}) = \Tr((C C^{\dagger})^{+}\DMatrix) = \sum_j \frac{1}{p_j} \JorBraA{j}{1} \DMatrix \JorKetA{j}{1} = \Tr(E \DMatrix). \label{eq:pseudoinverse-trace}
    \end{align}
    Next, observe that since $C C^+ = \sum_{p_j > 0} \JorKetBraA{j}{1} = \Id - \Pi_0$, we have
    \begin{align}
    \ProjA \DMatrix' \ProjA &= \ProjA \left(\frac{ C^+ \DMatrix (C^+)^{\dagger} }{\Tr(C^+ \DMatrix (C^{+})^{\dagger})}\right) \ProjA \notag \\ 
    &= \ProjA \left(\frac{ C^+ \DMatrix (C^+)^{\dagger} }{\Tr(E \DMatrix)}\right) \ProjA \notag \\
    &= \ProjA \ProjB \left(\frac{ C^+ \DMatrix (C^+)^{\dagger} }{\Tr(E \DMatrix)}\right) \ProjB \ProjA \notag \\
    &= \frac{1}{\Tr(E \DMatrix)} C C^+ \DMatrix (C C^+)^{\dagger} \notag \\ 
    &= \frac{1}{\Tr(E \DMatrix)} (\Id-\Pi_0) \DMatrix (\Id-\Pi_0). \label{eq:pseudoinverse-condition}
    \end{align}
    
    \noindent Given these calculations, we can prove the claimed properties (1-3) in the lemma statement:
    
    \begin{itemize}
        \item \textbf{Proof of (1).} Taking the trace of both sides of \cref{eq:pseudoinverse-condition}, we see that
    \[ \Tr(\ProjA \DMatrix') = \Tr(\ProjA \DMatrix' \ProjA) = \frac 1 {\Tr(E \DMatrix)} \Tr\left((\Id - \Pi_0) \DMatrix (\Id - \Pi_0) \right) = \frac {\Tr\left((\Id - \Pi_0) \DMatrix \right)} {\Tr(E \DMatrix)}  = \frac {1 - \Tr\left(\Pi_0 \DMatrix \right)} {\Tr(E \DMatrix)}.
    \]
    
        \item \textbf{Proof of (2).}
    Given \cref{eq:pseudoinverse-condition} and the trace calculation above, we have that
    
    \[ \frac{\ProjA \DMatrix' \ProjA}{\Tr(\ProjA \DMatrix')} = \frac{1}{1-\Tr(\Pi_0 \DMatrix)} (\Id-\Pi_0) \DMatrix (\Id-\Pi_0)
    \]
    The inequality $d\left(\DMatrix,\frac{\ProjA \DMatrix' \ProjA}{\Tr(\ProjA \DMatrix')}\right) \leq 2\sqrt{\Tr(\SProj{0} \DMatrix)}$ now follows from \cref{lemma:gentle-measurement} (gentle measurement).
    
    \item  \textbf{Proof of (3).} For all $j$ such that $p_j > 0$, making use of the same calculation as \cref{eq:pseudoinverse-trace}, we have
    \[ \Tr(\SProj[\Jor]{j} \DMatrix') = \frac{ \Tr(\SProj[\Jor]{j} C^+ \DMatrix (C^+)^{\dagger})}{\Tr( C^+ \DMatrix (C^+)^{\dagger})}= \frac{\Tr( C^+ \SProj[\Jor]{j} \DMatrix (C^+)^{\dagger})}{\Tr(E \DMatrix)} = \frac{\Tr(E \hspace{.1cm} \SProj[\Jor]{j} \DMatrix)}{\Tr(E \DMatrix)} =  \frac{\Tr(\frac 1 {p_j} \SProj[\Jor]{j} \DMatrix)}{\Tr(E \DMatrix)}.
    \]
    
    \item  \textbf{Proof of (4).} This follows immediately from the fact that $\Pi_0 C^+ = C^+ \Pi_0 = 0$.
    \end{itemize}  
    
    This completes the proof of \cref{lemma:pseudoinverse}.
\end{proof}

We conclude this section by showing that under a mild condition, any state $\DMatrix$ that is close to $\image(\ProjA)$ has a nearby state in $\image(\ProjA)$ with the same Jordan decomposition.

\begin{claim}
    \label{claim:jordan-rotate}
    Let $\DMatrix$ be any state. Let $\SProj[\Jor]{\mathsf{stuck}}$ project on to one-dimensional subspaces $\Subspace_j$ in the image of $\Id - \ProjA$. There exists a state $\DMatrixW$ such that for all $j$, $\Tr(\SProj[\Jor]{j} \DMatrixW) = \Tr(\SProj[\Jor]{j} \DMatrix)$, $\Tr(\ProjA \DMatrixW) = 1 - \Tr(\SProj[\Jor]{\mathsf{stuck}} \cdot \DMatrix)$, and $d(\DMatrix,\DMatrixW) \leq \sqrt{1 - \Tr(\ProjA \DMatrix)}$.
\end{claim}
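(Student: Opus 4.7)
The plan is to construct $\DMatrixW$ explicitly as the image of $\DMatrix$ under an appropriately chosen quantum channel $\Phi$, verify the two algebraic constraints directly, and then bound the trace distance via a purification argument. First I would write $\Pi_{00} \eqdef \sum_{j\text{ two-dim}} \JorKetBraA{j}{0}$, which is precisely $\Id - \ProjA - \SProj[\Jor]{\mathsf{stuck}}$, and define the Kraus operators $K_1 \eqdef \Id - \Pi_{00}$ and $K_0 \eqdef \sum_{j\text{ two-dim}} \JorKetA{j}{1}\JorBraA{j}{0}$. A direct check gives $K_0^\dagger K_0 + K_1^\dagger K_1 = \Pi_{00} + (\Id-\Pi_{00}) = \Id$, so $\Phi(\rho)\eqdef K_0\rho K_0^\dagger + K_1\rho K_1^\dagger$ is a valid CPTP map, and I set $\DMatrixW \eqdef \Phi(\DMatrix)$. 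Intuitively, $\Phi$ is the identity on every one-dimensional Jordan subspace, and inside each two-dimensional $\Subspace_j$ it transports the mass sitting on $\JorKetA{j}{0}$ over to $\JorKetA{j}{1}$ without altering the total weight on $\Subspace_j$.

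The algebraic conditions then follow from block-by-block calculations using \cref{lemma:jordan}. For each Jordan subspace one verifies $K_0^\dagger \SProj[\Jor]{j} K_0 + K_1^\dagger \SProj[\Jor]{j} K_1 = \SProj[\Jor]{j}$ (for two-dim $j$ the two terms are $\JorKetBraA{j}{0}$ and $\JorKetBraA{j}{1}$; for one-dim $j$ only $K_1$ contributes, and it acts as identity on $\SProj[\Jor]{j}$), which gives $\Tr(\SProj[\Jor]{j}\DMatrixW) = \Tr(\SProj[\Jor]{j}\DMatrix)$ for all $j$. Analogously, $K_0^\dagger \ProjA K_0 + K_1^\dagger \ProjA K_1 = \Pi_{00} + \ProjA$, so $\Tr(\ProjA\DMatrixW) = \Tr(\Pi_{00}\DMatrix) + \Tr(\ProjA\DMatrix) = 1 - \Tr(\SProj[\Jor]{\mathsf{stuck}}\DMatrix)$ as required.

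For the trace distance I would proceed by Uhlmann. Fix a purification $\ket{\Psi}_{\RegS\RegE}$ of $\DMatrix$ and let $U_\Phi$ be a Stinespring dilation of $\Phi$ on $\RegS\otimes\RegA$ with ancilla $\RegA$ initialized to $\ket{0}_\RegA$. Then $\ket{\Psi'} \eqdef (U_\Phi\otimes\Id_\RegE)\ket{\Psi}_{\RegS\RegE}\ket{0}_\RegA$ is a purification of $\DMatrixW$, and by contractivity of trace distance under partial trace
\[
  d(\DMatrix,\DMatrixW) \;\leq\; \sqrt{1 - |\braket{\Psi\otimes 0}{\Psi'}|^2}.
\]
It therefore suffices to show $|\braket{\Psi\otimes 0}{\Psi'}|^2 \geq \Tr(\ProjA\DMatrix)$.

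The main obstacle I expect is precisely this fidelity lower bound. The naive direct evaluation yields $\braket{\Psi\otimes 0}{\Psi'} = \bra{\Psi}(K_1\otimes\Id_\RegE)\ket{\Psi} = 1 - \Tr(\Pi_{00}\DMatrix)$, which only produces the weaker bound $\sqrt{1-(1-\Tr(\Pi_{00}\DMatrix))^2}$. To match the stated bound I would exploit Uhlmann's freedom in both the purification of $\DMatrix$ and the Kraus representation of $\Phi$: within each two-dimensional $\Subspace_j$, align the $K_1$ branch with the ``good'' component of $\ket{\Psi}$ restricted to $\Subspace_j$ so that the per-block overlap is $\sqrt{a_j\,w_j}\ge a_j$ (where $a_j = \JorBraA{j}{1}\DMatrix\JorKetA{j}{1}$ and $w_j = \Tr(\SProj[\Jor]{j}\DMatrix)$), and include the full contribution $w_j$ from each one-dimensional Jordan subspace (both good and stuck). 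Summing these contributions and applying a Cauchy--Schwarz-type step across the blocks, together with the identity $\Tr(\ProjA\DMatrix) = \sum_{j\text{ two-dim}} a_j + \sum_{j\text{ one-dim good}} w_j$, should recover $|\braket{\Psi\otimes 0}{\Psi'}|^2 \geq \Tr(\ProjA\DMatrix)$ and hence the stated bound.
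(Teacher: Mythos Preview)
Your construction is essentially the paper's: both define $\DMatrixW$ by measuring $\{\ProjA,\Id-\ProjA\}$ and, on outcome $0$, rotating $\JorKetA{j}{0}\mapsto\JorKetA{j}{1}$ inside each two-dimensional $\Subspace_j$. The paper writes this as $\DMatrixW=\ProjA\DMatrix\ProjA+U(\Id-\ProjA)\DMatrix(\Id-\ProjA)U^\dagger$ for the swap unitary $U$; your $K_0,K_1$ are just $U(\Id-\ProjA)$ and $\ProjA$ (plus the stuck projector). The algebraic verifications are correct.

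The gap is the fidelity step. The bound $|\langle\Psi\otimes 0|\Psi'\rangle|^2\geq\Tr(\ProjA\DMatrix)$ you propose to prove is \emph{false} for this $\DMatrixW$, and no amount of Uhlmann freedom can rescue it because Uhlmann's theorem already gives you the maximum. Concretely, take two two-dimensional Jordan blocks and the pure state
\[
\ket{\psi}=\tfrac12\bigl(\JorKetA{1}{1}+\JorKetA{1}{0}+\JorKetA{2}{1}-\JorKetA{2}{0}\bigr),
\]
so $\Tr(\ProjA\DMatrix)=\tfrac12$. Your channel (and the paper's) produces $\DMatrixW=\tfrac12\bigl(\JorKetBraA{1}{1}+\JorKetBraA{2}{1}\bigr)$, the maximally mixed state on $\image(\ProjA)$. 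Then $F(\DMatrix,\DMatrixW)^2=\langle\psi|\DMatrixW|\psi\rangle=\tfrac14<\tfrac12$, and a direct eigenvalue computation gives $d(\DMatrix,\DMatrixW)=\tfrac{1+\sqrt5}{4}\approx 0.809>\sqrt{1/2}\approx 0.707$. So for this particular $\DMatrixW$ the claimed inequality $d\leq\sqrt{1-\Tr(\ProjA\DMatrix)}$ fails outright; your sketched Cauchy--Schwarz argument cannot close because the target inequality is not true.

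Two remarks. First, the paper's one-line proof does not verify the trace-distance bound either, and in fact its construction has exactly the same defect as yours in the example above, so you are not missing a trick the paper uses. Second, for how the claim is actually invoked (with $\Tr(\ProjA\DMatrix)\geq 1-\delta$), a bound of the form $d(\DMatrix,\DMatrixW)\leq c\sqrt{1-\Tr(\ProjA\DMatrix)}$ for some constant $c$ is all that is needed; that weaker bound \emph{does} follow for your $\DMatrixW$, e.g.\ from $d(\DMatrix,\DMatrixW)\leq\sqrt{1-F^2}$ together with the easy overlap estimate $F\geq\Tr(\ProjA\DMatrix)$ (giving $d\leq\sqrt{2(1-\Tr(\ProjA\DMatrix))}$). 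If you want the stated constant exactly, you must change the construction of $\DMatrixW$ rather than the fidelity argument.
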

\begin{proof}
    Define a unitary $U$ which is invariant on the $\Subspace_j$ and, in each two-dimensional $\Subspace_j$, rotates $\JorKetA{j}{0}$ to $\JorKetA{j}{1}$. Formally,
    \begin{equation*}
        U \eqdef \sum_{j,p_j \notin \{0,1\}} (\JorKetA{j}{1}\JorBraA{j}{0} + \JorKetA{j}{0}\JorBraA{j}{1}) + \Id_{\Subspace^{(1)}},
    \end{equation*}
    where $\Subspace^{(1)}$ is the direct sum of the 1D subspaces. Set
    \[
        \DMatrixW \eqdef \ProjA \DMatrix \ProjA + U (I - \ProjA) \DMatrix (I - \ProjA) U^{\dagger}. \qedhere
    \]
\end{proof}

\section{Post-Quantum Guaranteed Extraction}\label{sec:high-probability-extractor}

\def \tensor {\otimes}
\def \vk {\mathsf{vk}}

In this section, we give a post-quantum extraction procedure for various 3- and 4-message public-coin interactive protocols. In particular, we will consider interactive protocols satisfying \emph{partial collapsing} (\cref{def:partial-collapsing-protocol}) with respect to some class of efficiently computable functions $F = \{f: T\times R \times Z \rightarrow \{0,1\}^* \}$. Our goal is to establish \emph{guaranteed extraction}, defined below (essentially matching \cref{def:high-probability-extraction}). 

\begin{definition}\label{def:high-probability-extraction-body}
$(P_{\Sigma}, V_{\Sigma})$ is a post-quantum proof of knowledge with \emph{guaranteed extraction} if it has an extractor $\Extract^{P^*}$ of the following form.

\begin{enumerate}[noitemsep]
    \item $\Extract^{P^*}$ first runs the cheating prover $P^*$ to generate a (classical) first message $a$ along with an instance $x$ (in a 4-message protocol, this requires first sampling a random $\vk$ and running $P^*(\vk)$ to obtain $x,a$). 
    \item\label[step]{step:ge-run-coherently} $\Extract^{P^*}$ runs $P^*$ coherently on the superposition $\sum_{r \in R} \ket{r}$ of all challenges to obtain a superposition $\sum_{r,z} \alpha_{r, z} \ket{r, z}$ over challenge-response pairs.\footnote{In general, the response $z$ will be entangled with the prover's state; here we suppress this dependence.} 
    \item $\Extract^{P^*}$ then computes (in superposition) the verifier's decision $V(x, a, r, z)$ and measures it. If the measurement outcome is $0$, the extractor gives up. 
    \item If the measurement outcome is $1$, run some quantum procedure $\FindWitness^{P^*}$ that outputs a string $w$. 
\end{enumerate}

We require that the following two properties hold. 

\begin{itemize}[noitemsep]
    \item \textbf{Correctness (guaranteed extraction):} The probability that the initial measurement returns $1$ but the output witness $w$ is not a valid witness for $x$ is $\negl(\secp)$.
    \item \textbf{Efficiency:} For any QPT $P^*$, the procedure $\Extract^{P^*}$ is in $\EQPTM$. 
\end{itemize}
\end{definition}

We remark that this definition is written to capture (first-message) \emph{adaptive soundness}, where the prover $P^*$ is allowed to choose the instance $x$ when it sends its first message. One could alternatively define a non-adaptive variant of this definition in which the instance $x$ is fixed in advance (and this section's results would hold in this setting as well).~\cref{def:high-probability-extraction-body} suffices for our purposes since none of the 4-message protocols we consider have the first verifier message $\vk$ depend on $x$ (in all cases we consider, $\vk$ is just a commitment key or hash function key), and the protocols all satisfy adaptive soundness. 

\subsubsection{Notation}
\label{sec:ge-notation}

Let $\RegR$ denote a register with the basis $\{\ket{r}\}_{r \in R}$ and let $\ket{+_R}_{\RegR} \eqdef \frac{1}{\sqrt{\abs{R}}} \sum_{r \in R} \ket{r}$. Let $\RegH$ denote the prover's state (including its workspace), and let $U_r$ denote the unitary on $\RegH$ that the prover applies on challenge $r$. Let $\RegZ$ denote the subregister of $\RegH$ that the prover measures to obtain its response $z$ after applying $U_r$. 

Define the projector
\[\Pi_{V,r} = U_{r}^{\dagger} \left(\sum_{z : V(r,z) = 1} \ketbra{z}_{\RegZ} \otimes \Id\right) U_{r}\]
which intuitively projects onto subspace of $\RegH$ where the prover gives an accepting response on challenge $r$.

Define the binary projective measurement $\sC = \BMeas{\BProj{\sC}}$ where
\[ \BProj{\sC} = \sum_r \ketbra{r}_{\RegR} \otimes \Pi_{V,r} ,\]
and $\sU = \BMeas{\BProj{\sU}}$ where 
\[ \BProj{\sU} = \ketbra{+_R}_{\RegR} \otimes \Id_{\RegH}.\]

\subsection{Description of the Extractor}
\label{subsec:description-ext}

We first give a full description of an extraction procedure $\Extract$, defined for any partially collapsing protocol. 

\paragraph{The threshold unitary.} 

Consider the following measurement procedure $\sT_{p,\varepsilon,\delta}$ on $\RegH$, parameterized by threshold $p$, accuracy $\varepsilon$ and error $\delta$.
\begin{itemize}[noitemsep]
    \item Initialize a fresh register $\RegR$ to $\ket{+_R}_{\RegR}$.
    \item Run $\Threshold_{p,\varepsilon,\delta}^{\sU,\sC}$ on $\RegH \tensor \RegR$ , obtaining outcome $b$.
    \item Trace out $\RegR$ and output $b$.
\end{itemize}

We define $U_{p,\varepsilon,\delta}$ to be a \emph{coherent} implementation of $\sT_{p,\eps,\delta}$. $U_{p,\varepsilon,\delta}$ acts on $\RegH \otimes \RegW \otimes \RegB$ where $\RegW \otimes \RegB$ is an ancilla register: $\RegW$ contains the algorithm's workspace and $\RegB$ is a single qubit containing the measurement outcome. In particular, applying $U_{\varepsilon,\delta}$ to $\ket{\psi}_{\RegH} \ket{0}_{\RegW,\RegB}$, measuring $\RegB$, and then tracing out $\RegW$ implements the above measurement.

\paragraph{The repair measurements.} We define the two projective measurements $\sD_r = \BMeas{\BProj{r}},\sG_{p,\varepsilon,\delta} = \BMeas{\BProj{p,\varepsilon,\delta}}$ for our repair step.

For any $p,\varepsilon,\delta > 0$, define the projector $\Pi_{p,\varepsilon,\delta}$ on $\RegH \tensor \RegW \tensor \RegB$ as follows:
\[ \Pi_{p,\varepsilon,\delta} \eqdef U_{p,\varepsilon,\delta}^\dagger (\Id_{\RegH, \RegW} \tensor \ketbra{1}_{\RegB}) U_{p,\varepsilon,\delta}.\]
For any $r \in R$, we define the projector $\Pi_r$ on $\RegH \tensor \RegW$ as
\[\Pi_r \coloneqq (\Pi_{V, r})_{\RegH} \tensor \ketbra{0}_{\RegW}.\]

We describe the extraction procedure $\Extract_{V}^{P^*}(x)$. The procedure is defined with respect to $k$ efficiently computable functions $f_1, \hdots, f_k: T\times R \times Z\rightarrow \{0,1\}^*$. 

\begin{mdframed}

\begin{enumerate}
    \item \label[step]{step:first-measurement} \textbf{Initial Execution.} Use $P^*$ to generate $(\vk, \Co)$, and let $\ket{\psi}$ denote the residual prover state. Apply $\sC = \BMeas{\BProj{\sC}}$ to $ \ket{\psi}_{\RegH}\tensor \ket{+_R}_{\RegR}$. If $0$, terminate (note that we do not consider this an ``abort''.) Otherwise:
    
    \item \label[step]{step:variable-phase-est} \textbf{Estimate success probability.} Run $\VarEstimate[\sC \rightleftarrows \sU]$ (as defined in \cref{sec:vrsvt}) with $\frac 1 2$-multiplicative error and failure probability $\delta = 1/2^\lambda$, outputting a value $p$. Note that since the input state is in $\BProj{\sC}$, the algorithm produces an output state in $\BProj{\sC}$ with probability $1-\delta$.
    
    Abort if $p < \secp k \sqrt{\delta}$. Define $\eps = \frac{p}{4k}$.

    \item \label[step]{step:extract-transcript} \textbf{Main Loop.} Repeat the following ``main loop'' for $i$ from $1$ to $k$:
    \begin{enumerate}
    \item \label[step]{step:loop-reestimate} \textbf{Lower bound success probability.} Run $\Threshold_{p,\varepsilon,\delta}^{\sC,\sU}$ on $\RegH \tensor \RegR $, obtaining outcome $b$. Abort if $b = 0$. Update $p \coloneqq p-\eps$.
    \item \label[step]{step:loop-measure-r} \textbf{Measure the challenge.} Measure the $\RegR$ register, obtaining a particular challenge $r_i \in R$. Discard the $\RegR$ register. 
    \item \label[step]{step:loop-phase-est} \textbf{Estimate the running time of Transform.} Initialize the $\RegW$ register to $\ket{0}_{\RegW}$ and run $\VarEstimate[\sD_r \rightleftarrows \sG_{p, \eps,\delta}]$ with $\frac 1 2$-multiplicative error and failure probability $\delta = 2^{-\secp}$, obtaining classical output $q$. Since the input state is in $\Pi_r$, the algorithm produces an output state in $\Pi_r$ with probability $1-\delta$.
    \item \label[step]{step:loop-collapsing}\textbf{Record part of the accepting response.}  Make a partial measurement of the prover response $z_i$; specifically, measure $y_i = f_i(z_i)$.\footnote{Formally, we (1) apply the prover unitary $U_{r^*}$ to $\RegH$, (2) apply the projective measurement $\big(\Pi_y\big)_y$ for $\Pi_y = \sum_{z: f_i(z) = y} \ketbra{z}_{\RegZ}\tensor \Id_{\RegH'}$ (where $\RegH = \RegZ \tensor \RegH'$), and (3) apply $U_{r^*}^\dagger$ to $\RegH$.} \textbf{If $i=k$, go to Step 4.}
    \item \label[step]{step:loop-svt} \textbf{Transform onto good states.} Apply $\Transform_q[\sD_r \rightarrow \sG_{p, \eps,\delta}]$ with failure probability $\delta = 2^{- \secp}$.
    
    \item \label[step]{step:loop-discard-w} Next, apply $U_{p,\varepsilon,\delta}$ and then discard the $\RegW$ register. Update $p \coloneqq p-\eps$.

    \item \label[step]{step:loop-amplify-C} \textbf{Transform onto accepting executions.} Re-initialize $\RegR$ to $\ket{+_R}$ and then apply $\Transform_{p}[\sU \rightarrow \sC]$; abort if this procedure fails.
    \end{enumerate}
    
    \item Output $(\vk,\Co,\Ch_1,y_1,\cdots,\Ch_k,y_k)$.
\end{enumerate}

The above procedure deterministically terminates and aborts if it has not already stopped after $O(k)/\sqrt \delta$ steps, for $\delta := 2^{-\secp}$.

\end{mdframed}

\subsection{Partial Transcript Extraction Theorem}
Our most general extraction theorem is stated for \emph{any} partially collapsing protocol, but is only guaranteed to output partial transcripts (rather than a witness). In \cref{sec:obtaining-guaranteed-extraction}, we show how this theorem can be used to establish guaranteed extraction of a witness.
\begin{theorem}\label{thm:high-probability-extraction}
   For any $4$-message public-coin interactive argument satisfying partial collapsing (\cref{def:collapsing-protocol}) with respect to the functions $f_1, \hdots, f_{k-1}$ (but \emph{not necessarily} $f_k$), the procedure $\Extract_V$ has the following properties for any instance $x$.
   
   \begin{enumerate}
       \item \textbf{Efficiency:} For any QPT prover $P^*$, $\Extract_V^{P^*}$ runs in expected polynomial time ($\EQPTM$). More formally, the number of calls that $\Extract_V^{P^*}$ makes to $P^*$ is a classical random variable whose expectation is a fixed polynomial in $k, \secp$.
       
       \item \textbf{Correctness:} $\Extract$ aborts with negligible probability.
       
       \item \textbf{Distribution of outputs}: For every choice of $(\vk, a)$, let $\gamma = \gamma_{\vk, a}$ denote the success probability of $P^*$ conditioned on first two messages $(\vk, a)$. Then, if $\gamma > \delta^{1/3}$, the distribution of $(r_1, \hdots, r_k)$ (conditioned on $(\vk, a)$ and a successful first execution) is $O(1/\gamma)$-admissible (\cref{def:admissible-dist}). 
   \end{enumerate}
\end{theorem}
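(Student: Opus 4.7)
The plan is to first analyze an idealized extractor $\widetilde{\Extract}$ that omits the partial measurement of $y_i$ in Step 3d for each $i < k$, prove the three claimed properties for $\widetilde{\Extract}$, and then transfer them to $\Extract$ via partial collapsing. For $\widetilde{\Extract}$, I would use Jordan's lemma applied to $(\sU, \sC)$ and track the invariant that at the start of each iteration $i$ the state on $\RegH \otimes \RegR$ lies in $\image(\BProj{\sC})$ and is supported, up to $\negl(\secp)$ trace error, on Jordan subspaces of eigenvalue $p_j \geq p - (i-1)\eps$. This invariant is preserved by the threshold guarantee of $\Threshold$ in Step 3a together with the Jordan-subspace invariance of $\Threshold, \VarEstimate, \Transform$ from \cref{sec:vrsvt}; a union bound over the $O(k)$ subroutine failure probabilities $\delta = 2^{-\secp}$ then establishes part (2).

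The main obstacle is bounding the expected number of prover queries in the ``$\sD_r, \sG$'' block (Steps 3b--3e). The post-measurement-of-$r$ state $\DMatrix_r$ on $\RegH \otimes \RegW$ is not a priori concentrated on large-eigenvalue Jordan subspaces of $(\sD_r, \sG_{p,\eps,\delta})$, so one cannot directly bound the runtime of $\VarEstimate[\sD_r \rightleftarrows \sG]$. I would apply the Pseudoinverse Lemma (\cref{lemma:pseudoinverse}) to $(\sG, \sD_r)$: since $\DMatrix_r \in \image(\Pi_r)$, there is a pseudoinverse state $\DMatrix'_r \in \image(\Pi_{p,\eps,\delta})$ such that $\Pi_r \DMatrix'_r \Pi_r / \Tr(\Pi_r \DMatrix'_r)$ equals $\DMatrix_r$ up to negligible trace error. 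Crucially, summing $\Tr(\Pi_r \DMatrix'_r)$ over all $r$ gives (approximately) the success probability of the pseudoinverse state, which the threshold check forces to be at least $p - \eps$. Combining this with the $O(\mathrm{polylog}(\secp)/\sqrt{q_j})$ expected query cost of $\VarEstimate$ and $\Transform_q$ on a $(\sD_r,\sG)$-eigenstate of eigenvalue $q_j$, the expected query cost of one main-loop iteration becomes $O(k \cdot \mathrm{polylog}(\secp)/\sqrt{p})$. Summing over $k$ iterations and weighting by the probability $\gamma$ that Step 1 succeeds --- using $\sum_j |\alpha_j|^2 \sqrt{p_j} \leq \sqrt{\gamma}$ by Cauchy--Schwarz to absorb the analogous cost of Step 2 --- yields a fixed polynomial in $(k, \secp)$ bound on the total expected query count, establishing part (1).

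For part (3), the Pseudoinverse Lemma additionally identifies the conditional distribution of $r_i$ with the distribution over $r$ induced by measuring acceptance of the pseudoinverse prover state on a uniformly random challenge. This yields an explicit admissible sampler that internally simulates the (quantum) extractor, interfacing with the uniform oracle $O_R$ only when a fresh $\ket{+_R}$ register is instantiated: for each iteration $i$ it queries $O_R$ on uniform $r^{(1)},r^{(2)},\dots$, runs the simulated prover on each, and uses the first accepting challenge as $r_i$. Since this simulator is permitted to be computationally inefficient (only the query count is constrained), this is well-defined; the expected number of queries is $O(k/\gamma)$, witnessing $O(1/\gamma)$-admissibility in the sense of \cref{def:admissible-dist}.

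Finally, I would transfer the guarantees from $\widetilde\Extract$ to $\Extract$ via a hybrid sequence $H_0 = \widetilde\Extract, H_1, \dots, H_{k-1} = \Extract$ where $H_i$ inserts the first $i$ partial measurements of Step 3d. Each transition $H_{i-1} \to H_i$ applies partial collapsing with respect to $f_i$ at a point conditioned on an accepting transcript (ensured by the preceding threshold check), yielding $\negl(\secp)$ distinguishing advantage against any polynomial-size distinguisher. The decisive structural feature of the extractor is that Step 3d occurs \emph{after} the variable-length $\VarEstimate$ of Step 3c has already fixed the runtime $q$ of the subsequent $\Transform_q$, so the partial measurement cannot affect the query count of any step in the current iteration; this localization of all variable-runtime components to points before the collapsing measurement is what allows the partial-collapsing indistinguishability (after a standard truncation of the extractor to a strict polynomial-time distinguisher) to be converted into both a correctness and an expected-query-count bound for $\Extract$, yielding parts (1)--(3) as stated.
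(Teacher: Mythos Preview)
Your overall architecture---remove the $y_i$-measurements, prove (1)--(3) for the idealized extractor, then transfer via partial collapsing using that the only variable-runtime step after the $y_i$-measurement is already pinned down by the prior $\VarEstimate$ output---matches the paper and is correct. The gap is in how you propose to bound the expected query cost of Steps~3b--3e.

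You apply the Pseudoinverse Lemma to the pair $(\sG_{p,\eps,\delta},\sD_r)$ on the post-measurement state $\DMatrix_r\in\image(\Pi_r)$, obtaining for each $r$ a pseudoinverse $\DMatrix'_r\in\image(\Pi_{p,\eps,\delta})$, and then assert that ``summing $\Tr(\Pi_r\DMatrix'_r)$ over all $r$ gives (approximately) the success probability of the pseudoinverse state, which the threshold check forces to be at least $p-\eps$.'' But there is no single pseudoinverse state here: each $\DMatrix'_r$ depends on $r$ both through $\sD_r$ and through the input $\DMatrix_r$, so $\sum_r \Tr(\Pi_r\DMatrix'_r)$ has no reason to equal anything controlled by the $(\sU,\sC)$-threshold in Step~3a, and the summand-by-summand bound you need on the $(\sD_r,\sG)$-spectrum of $\DMatrix_r$ does not follow.

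The paper closes this gap by applying the Pseudoinverse Lemma \emph{first to $(\sU,\sC)$}, on the state $\DMatrix^{(\sC)}$ \emph{before} $r$ is measured. This produces a single $r$-independent state $\dmh^{(\sU)}$ on $\RegH$ such that measuring $\RegR$ on $\DMatrix^{(\sC)}$ is equivalent (up to $O(\sqrt{\delta})$ trace error) to sampling $r$ with probability $\zeta_r/\zeta_R$, $\zeta_r=\Tr(\Pi_{V,r}\,\dmh^{(\sU)})$, and producing $\Pi_{V,r}\,\dmh^{(\sU)}\,\Pi_{V,r}/\zeta_r$. The threshold check now directly forces $\zeta_R/|R|\gtrsim p$ and places $\dmh^{(\sU)}\otimes\ketbra{0}_{\RegW}$ within $\delta$ of $\image(\Pi_{p,\eps,\delta})$; the $(\sD_r,\sG)$-Jordan weights of each conditional state are then $q_j y_j/\zeta_r$ for the \emph{same} weights $y_j$ of $\dmh^{(\sU)}\otimes\ketbra{0}_{\RegW}$, and two applications of Jensen (over $j$, then over $r$) give the $\poly(\secp)/p$ per-iteration bound. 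A second use of the Pseudoinverse Lemma, this time on $(\sD_r,\sG)$, does appear in the paper, but only in the correctness argument (to control the spectrum after $\Transform$), not for the runtime. Your admissibility argument for part~(3) also implicitly relies on this same $(\sU,\sC)$-pseudoinverse identification of the $r_i$-distribution, so the fix is needed there as well.
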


\subsection{Proof of \cref{thm:high-probability-extraction}}

\subsubsection{Intermediate State Notation}
Our extraction procedure and analysis make use of four relevant registers:

\begin{itemize}[noitemsep]
    \item A challenge randomness register $\RegR$,
    \item A prover state register $\RegH$, and
    \item A phase estimation workspace register $\RegW$. 
    \item A one qubit register $\RegB$ that contains a bit $b$ where $b = 1$ indicates that the computation has not aborted during a sub-computation.
\end{itemize}

We now establish some conventions:
\begin{itemize}[noitemsep]
    \item states written using the letter $\dmr$ satisfy $\dmr \in \Hermitians{\RegB \tensor \RegH \tensor \RegR}$ or $\dmr \in \Hermitians{ \RegH \tensor \RegR}$, where we use $\Hermitians{\RegH}$ to denote the space of Hermitian operators on $\RegH$; 
    \item states using the letter $\dmw$ satisfy $\dmw \in  \Hermitians{\RegB \tensor \RegH \otimes \RegW}$ or $\dmw \in  \Hermitians{\RegH \otimes \RegW}$;
    \item states using the letter $\dmh$ satisfy $\dmh \in \Hermitians{\RegB \tensor \RegH}$ or $\dmh \in  \Hermitians{\RegH}$;
    \item states using the letter $\dmwr$ satisfy $\dmwr\in \Hermitians{\RegH\tensor \RegW \tensor \RegR}$
\end{itemize}
 
With these conventions in mind, we define some intermediate states related to the extraction procedure:

\begin{mdframed}

\begin{itemize}[noitemsep]
    \item Let $\psi$ denote the prover state after $(\vk, \Co)$ is generated.
    \item Let $\dmr_{\RegB,\RegH,\RegR}^{(2)}$ denote the state obtained at the end of \cref{step:variable-phase-est}. 
    \item For each iteration of the \cref{step:extract-transcript} loop, we define the following states:
    
    \begin{itemize}[nolistsep]
        \item Let $\dmr_{\RegB,\RegH,\RegR}^{(\mathrm{init})}$ denote the state at the beginning of \cref{step:extract-transcript}. The register $\RegB$ is initialized to $\ketbra{1}$. For the rest of the loop iteration, $\RegB$ is set to $\ketbra{0}$ if the computation aborts. 
        \item Let $\dmr_{\RegB,\RegH,\RegR}^{(\sC)}$ denote the state at the end of \cref{step:loop-reestimate}.
        \item Let $\dmh_{\RegB,\RegH,\RegR}^{(3b)}$ denote the state at the end of \cref{step:loop-measure-r}.
        \item Let $\dmw_{\RegB,\RegH,\RegW}^{(3c)}$ denote the state at the end of \cref{step:loop-phase-est}. 
        \item Let $\dmw_{\RegB,\RegH,\RegW}^{(3e)}$ denote the state immediately before the $\mathcal W$ register is traced out during \cref{step:loop-svt}.
        \item Let $\dmh_{\RegB,\RegH}^{(3f)}$ denote the state at the end of \cref{step:loop-discard-w}.
        \item Let $\dmr_{\RegB,\RegH,\RegR}^{(3g)}$ denote the state at the end of \cref{step:loop-amplify-C}.
    \end{itemize}
\end{itemize}

\end{mdframed}

As in~\cref{lemma:pseudoinverse}, let the Jordan decomposition of $\RegH \otimes \RegR$ corresponding to $\BProj{\sC},\BProj{\sU}$ be $\{\RegS_j\}_j$ where subspace $\RegS_j$ is associated with the eigenvalue/success probability $p_j$. Let $\Pi^{\Jor}_j$ the projection onto $\RegS_j$, i.e., $\image(\Pi^{\Jor}_j) = \RegS_j$. Define the following projections on $\RegH \otimes \RegR$:
\begin{itemize}
    \item $\Pi_0^{\mathsf{Jor}} \eqdef \sum_{j, p_j =0} \Pi^{\Jor}_j$
    \item $\Pi^\Jor_{\geq p}= \sum_{j: p_j \geq  p} \Pi_j^{\Jor}$
    \item $\Pi^{\Jor}_{< p} = \sum_{j: p_j < p} \Pi_j^{\Jor}$
\end{itemize}

\noindent We additionally define the following projectors on $\RegB \otimes \RegH \otimes \RegR$. 
\[ \Pi^{\Jor}_{\mathsf{Bad}} = \ketbra{1}_{\RegB} \tensor \Pi^{\Jor}_{< p} \hspace{10pt} \text{and} \hspace{10pt} \Pi^{\Jor}_{\mathsf{Good}} = \Id_{\RegB,\RegH,\RegR} - \Pi^{\Jor}_{\mathsf{Bad}}.
\]

\begin{claim}\label{claim:reest-good-state}
  For \emph{any} estimate $p$ and \emph{any} state $\dmr_{\RegB,\RegH,\RegR}^{(\mathrm{init})}$ such that $\Tr((\Id_{\RegB} \tensor \Pi_{\sC}) \dmr_{\RegB,\RegH,\RegR}^{(\mathrm{init})}) = 1$, the state $\dmr_{\RegB,\RegH,\RegR}^{(\sC)}$ obtained by running $b\gets \Threshold^{\sC, \sU}_{p, \eps, \delta}$ (and then redefining $p := p-\eps$) and setting $\RegB = \ketbra{b}$ satisfies
  \[ \Tr( \Pi^{\Jor}_{\mathsf{Bad}} \dmr_{\RegB,\RegH,\RegR}^{(\sC)}) \leq \delta.\]
\end{claim}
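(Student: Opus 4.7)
The plan is to observe that this claim is essentially a direct invocation of \cref{cor:threshold-almost}, applied to the projectors $\BProj{\sC},\BProj{\sU}$ on $\RegH \otimes \RegR$. Specifically, recall that $\Threshold^{\sC,\sU}_{p,\varepsilon,\delta}$ is a binary projective measurement on $\RegH \otimes \RegR$ which commutes with the Jordan projectors $\Pi_j^{\Jor}$ (by \cref{thm:svdisc}), and which on any input state $\dmr \in \Hermitians{\RegH \otimes \RegR}$ has the ``moreover'' guarantee
\[
\Pr\left[ b = 1 \,\wedge\, p_j < p - \varepsilon \, \middle\vert \begin{array}{r} (b,\dmr') \gets \Threshold^{\sC,\sU}_{p,\varepsilon,\delta}(\dmr) \\ j \gets \Meas{\Jor}(\dmr') \end{array} \right] \leq \delta.
\]

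Concretely, I would let $\dmr_{\RegH,\RegR}^{(\mathrm{init})}$ denote the reduced state of $\dmr_{\RegB,\RegH,\RegR}^{(\mathrm{init})}$ on $\RegH \otimes \RegR$ (the $\RegB$ register is classical with value $1$ at this point, so no information is lost). The assumption $\Tr((\Id_{\RegB} \otimes \BProj{\sC}) \dmr_{\RegB,\RegH,\RegR}^{(\mathrm{init})}) = 1$ is not needed for this argument --- I would simply apply the corollary as a black box. Writing $p_{\mathrm{orig}}$ for the value of $p$ before \cref{step:loop-reestimate} and $p = p_{\mathrm{orig}} - \varepsilon$ for the updated value, the event defining $\Pi^{\Jor}_{\mathsf{Bad}}$ with the updated $p$ is exactly $\{b = 1\} \wedge \{p_j < p_{\mathrm{orig}} - \varepsilon\}$, so by the inequality above, $\Tr(\Pi^{\Jor}_{\mathsf{Bad}} \dmr_{\RegB,\RegH,\RegR}^{(\sC)}) \leq \delta$.

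The only mildly subtle point is aligning conventions: $\Pi^{\Jor}_j$ is defined on $\RegH \otimes \RegR$ while $\Pi^{\Jor}_{\mathsf{Bad}}$ lives on $\RegB \otimes \RegH \otimes \RegR$, and we need to confirm that the $\RegB$-register bookkeeping in the extraction procedure (setting $\RegB = \ketbra{b}$) coincides with the classical outcome of $\Threshold^{\sC,\sU}$ used in the corollary. This is immediate from the definition of $\dmr_{\RegB,\RegH,\RegR}^{(\sC)}$ in the claim statement. I do not anticipate any obstacle here --- the entire argument amounts to unpacking definitions and quoting \cref{cor:threshold-almost}.
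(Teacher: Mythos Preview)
Your proposal is correct and takes essentially the same approach as the paper: the paper's proof is a two-sentence invocation of \cref{cor:threshold-almost} (specifically its ``moreover'' clause), noting that when $\Threshold$ returns $0$ the computation aborts so only the $b=1$ branch contributes to $\Pi^{\Jor}_{\mathsf{Bad}}$. Your observation that the hypothesis $\Tr((\Id_{\RegB} \otimes \BProj{\sC}) \dmr_{\RegB,\RegH,\RegR}^{(\mathrm{init})}) = 1$ is not actually used in the argument is also consistent with the paper's proof.
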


\begin{proof}
  When $\Threshold^{\sC, \sU}_{p, \eps, \delta}$ returns $0$, the computation aborts. Therefore, the lemma follows immediately from the almost-projectivity of $\Threshold$ (\cref{cor:threshold-almost}).
\end{proof}

\subsubsection{Analysis of \cref{step:first-measurement,step:variable-phase-est}}

We first show that \cref{step:first-measurement,step:variable-phase-est} run in expected polynomial time, and bound the statistic $\mathbb E[1/p \cdot X_1]$, where $p$ is the output of \cref{step:variable-phase-est} and $X_1$ is the indicator for the event ``\cref{step:first-measurement} does not abort''.

\begin{lemma}
    \label{lemma:step-1-2}
    The expected runtime of \cref{step:first-measurement,step:variable-phase-est} is $O(1)$. Moreover, 
    \[\mathbb E[1/p \cdot X_1] = O(1).
    \]
\end{lemma}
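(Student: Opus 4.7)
My plan is to decompose the state entering \cref{step:variable-phase-est} in the Jordan basis for the pair of projections $(\BProj{\sC}, \BProj{\sU})$ from~\cref{sec:ge-notation}, and then exploit the subspace-invariance of $\VarEstimate$ to analyze each Jordan branch in isolation. Since $\ket{\psi}\otimes\ket{+_R} \in \image(\BProj{\sU})$, I can write
\[
    \ket{\psi}\otimes\ket{+_R} = \sum_j \alpha_j \ket{u_j},
\]
where $\ket{u_j}$ is the unique unit vector in $\Subspace_j \cap \image(\BProj{\sU})$ and $\sum_j |\alpha_j|^2 = 1$. Jordan's lemma then gives $\Pr[X_1 = 1] = \sum_j |\alpha_j|^2 p_j$, and for any $j$ with $p_j > 0$, conditioned on the joint event ``$X_1 = 1$ and branch $j$'' (of probability $|\alpha_j|^2 p_j$) the post-measurement state is the unit vector $\ket{c_j} \in \Subspace_j \cap \image(\BProj{\sC})$, which is precisely the input regime required by \cref{thm:vrsvt}.

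Next I would apply the guarantees of $\VarEstimate$ to each $\ket{c_j}$: with probability $1-\delta$ the runtime is $O(\log(1/\delta)/\sqrt{p_j})$ and the output satisfies $\max(\delta, p_j)/4 \le p \le \max(\delta, p_j)$; in the complementary $\delta$-probability branch the runtime is at most $O(\log(1/\delta)/\delta)$, and by construction the output is always at least $\delta/2$. Because $\VarEstimate$ commutes with $\Jor$, these per-subspace expectations aggregate linearly against the weights $|\alpha_j|^2 p_j$. For the runtime this gives
\[
    \mathbb{E}[\text{runtime of \cref{step:variable-phase-est}}]
    \;\le\; \sum_j |\alpha_j|^2 p_j \cdot \Bigl(O(\log(1/\delta)/\sqrt{p_j}) + \delta \cdot O(\log(1/\delta)/\delta)\Bigr)
    \;\le\; O(\log(1/\delta)),
\]
using the key cancellation $p_j \cdot (1/\sqrt{p_j}) = \sqrt{p_j} \le 1$ together with $\sum_j |\alpha_j|^2 \le 1$ and $\sum_j |\alpha_j|^2 p_j \le 1$; adding the fixed $\poly(\secp)$ cost of \cref{step:first-measurement} yields the stated $O(1)$ bound (in units of $\poly(\secp)$). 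For the second quantity,
\[
    \mathbb{E}[1/p \cdot X_1]
    \;=\; \sum_j |\alpha_j|^2 p_j \cdot \mathbb{E}[1/p \mid \ket{c_j}]
    \;\le\; \sum_j |\alpha_j|^2 p_j \left(\frac{4}{\max(\delta, p_j)} + \delta \cdot \frac{2}{\delta}\right)
    \;\le\; 6,
\]
this time using the cancellation $p_j / \max(\delta, p_j) \le 1$ in the first term and $\sum_j |\alpha_j|^2 p_j \le 1$ in the second.

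The only real subtlety---and the step worth spelling out carefully---is the justification that these two global expectations actually decompose along Jordan branches despite the initial state being a coherent superposition. This is exactly what the subspace-invariance clause of \cref{thm:vrsvt} provides: the classical output of $\VarEstimate$, conditioned on the Jordan outcome $j$, is distributed as if one had run $\VarEstimate$ on the pure state $\ket{c_j}$, and the joint probability of $X_1 = 1$ and branch $j$ is precisely $|\alpha_j|^2 p_j$. The exponentially small ``bad'' branch contributes only $O(1)$ because $\delta = 2^{-\secp}$ tames the worst-case $1/\delta$-scale contributions; every other step is routine arithmetic.
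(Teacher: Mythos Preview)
Your proposal is correct and follows essentially the same approach as the paper: decompose $\ket{\psi}\otimes\ket{+_R}$ in the $(\BProj{\sU},\BProj{\sC})$-Jordan basis, use subspace invariance of $\VarEstimate$ to reduce to per-subspace expectations, and obtain the $O(1)$ bound from the cancellation between the conditioning weight $|\alpha_j|^2 p_j$ and the per-subspace cost $O(1/p_j)$. Your treatment is slightly more explicit about the $\delta$-probability failure branch and uses the sharper $1/\sqrt{p_j}$ runtime bound, but the argument is otherwise identical to the paper's.
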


\begin{proof}
   Let $\ket{\psi}$ denote the state of $P^*$ after $(\ck, \Co)$ are generated. Then, consider the $(\sU, \sC)$-Jordan decomposition
   \[ \ket{\psi} \tensor \ket{+_R} = \sum_j \alpha_j \ket{v_{j,1}},
   \]
   where each $\ket{v_{j,1}} \in \mathcal S_j \cap \image(\Pi_{\sU})$. Let $\gamma = \sum_j |\alpha_j|^2 p_j$ denote the initial success probability of $\ket{\psi}$. 
   
   \cref{step:first-measurement} runs in a fixed polynomial time and aborts with probability $1-\gamma$. Otherwise, \cref{step:variable-phase-est} is run on the residual state
   \[ \frac 1 {\sqrt \gamma} \sum_j \alpha_j \sqrt{p_j} \ket{w_{j, 1}},
   \]
   where $\ket{w_{j,1}}$ is a basis vector in $\mathcal S_j \cap \image(\Pi_{\sC})$. \cref{lemma:varest} tells us that \emph{both} the runtime of $\VarEstimate^{\sC, \sU}$ on this state (making oracle use of $\sC, \sU$) and the expectation of $1/p$ (where $p$ is the output of \cref{step:variable-phase-est}) are at most a constant times
   \[ \frac 1 {\gamma} \sum_j \alpha_j^2 p_j \cdot \frac 1 {p_j} + \delta \cdot 1/\delta \leq \frac 1 {\gamma} \sum_j \alpha_j^2 + 1 = \frac 1 \gamma + 1,
   \]
   so since $\Pr[\text{\cref{step:first-measurement} does not abort}] = \gamma$, the overall expected value bounds are as claimed. 
\end{proof}

\subsubsection{The Pseudoinverse State}\label{sec:extractor-analysis-pseudoinverse-state}

As defined earlier, let $\dmr_{\RegB,\RegH,\RegR}^{(\sC)} $ denote the state at the end of~\cref{step:loop-reestimate} (for some arbitrary iteration of \cref{step:extract-transcript}). We prove some important properties of the subsequent states in the execution of \cref{step:extract-transcript}.

We begin with \cref{step:loop-measure-r}, which measures the $\RegR$ register, obtaining a challenge $r$ and resulting state $\dmh_{\RegB,\RegH}^{(3b)}$. Let \[\dmr_{\RegB,\RegH,\RegR}'^{(\sC)} = \frac{\Pi^{\Jor}_{\mathsf{Good}} \dmr_{\RegB,\RegH,\RegR}^{(\sC)} \Pi^{\Jor}_{\mathsf{Good}}}{\Tr(\Pi^{\Jor}_{\mathsf{Good}} \dmr_{\RegB,\RegH,\RegR}^{(\sC)})}\]
denote the residual state; we write $\dmr_{\RegB,\RegH,\RegR}'^{(\sC)} = \alpha_0 \ketbra{0}_{\RegB} \tensor \dmr_{\RegH,\RegR}'^{(\sC,0)} + \alpha_1 \ketbra{1}_{\RegB} \tensor \dmr_{\RegH,\RegR}'^{(\sC,1)}$. By \cref{claim:reest-good-state}, we have that:
\begin{claim}\label{claim:analysis-pi-jor-good}
  $\Tr(\Pi^\Jor_{\mathsf{Good}} \dmr_{\RegB,\RegH,\RegR}^{(\sC)}) \geq 1-\delta$
\end{claim}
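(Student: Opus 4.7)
The plan is to obtain \cref{claim:analysis-pi-jor-good} as an immediate corollary of \cref{claim:reest-good-state}. Since $\Pi^{\Jor}_{\mathsf{Good}} + \Pi^{\Jor}_{\mathsf{Bad}} = \Id_{\RegB,\RegH,\RegR}$ and $\dmr^{(\sC)}_{\RegB,\RegH,\RegR}$ has unit trace, the target inequality $\Tr(\Pi^{\Jor}_{\mathsf{Good}} \dmr^{(\sC)}) \geq 1 - \delta$ is equivalent to $\Tr(\Pi^{\Jor}_{\mathsf{Bad}} \dmr^{(\sC)}) \leq \delta$, which is exactly the conclusion of \cref{claim:reest-good-state} once its hypothesis has been verified on the state $\dmr^{(\mathrm{init})}$ entering \cref{step:loop-reestimate}.

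To verify the hypothesis, the plan is to inspect how $\dmr^{(\mathrm{init})}$ is produced in each iteration of \cref{step:extract-transcript}. For the first iteration, $\dmr^{(\mathrm{init})}$ is obtained by tensoring $\ketbra{1}_{\RegB}$ onto the output of $\VarEstimate[\sC \rightleftarrows \sU]$ from \cref{step:variable-phase-est}, which was itself run on the post-measurement state in $\image(\Pi_{\sC})$ produced by \cref{step:first-measurement}; the guarantee recorded at the end of \cref{step:variable-phase-est} says the output remains in $\image(\Pi_{\sC})$ with probability $1-\delta$. For each subsequent iteration, $\dmr^{(\mathrm{init})} = \dmr^{(3g)}$ is produced by $\Transform[\sU \to \sC]$ in \cref{step:loop-amplify-C}, which either aborts (zeroing $\RegB$, in which case the state lies entirely in $\image(\ketbra{0}_{\RegB})$ and contributes $0$ to $\Tr(\Pi^{\Jor}_{\mathsf{Bad}}\cdot)$) or returns a state in $\image(\Pi_{\sC})$. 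In both cases, any $O(\delta)$ leakage outside $\image(\Id_{\RegB}\otimes \Pi_{\sC})$ is absorbed into the final $\delta = 2^{-\secp}$ bound without difficulty.

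The only nontrivial ingredient invoked inside \cref{claim:reest-good-state} is almost-projectivity of $\Threshold$ (\cref{cor:threshold-almost}): the probability that $\Threshold^{\sC,\sU}_{p,\eps,\delta}$ outputs $1$ while the Jordan eigenvalue of the post-measurement state is below $p - \eps$ is at most $\delta$. After the update $p \coloneqq p - \eps$ performed at the end of \cref{step:loop-reestimate}, this event corresponds exactly to the weight of $\dmr^{(\sC)}$ on $\ketbra{1}_{\RegB} \otimes \Pi^{\Jor}_{<p}$, i.e., on $\Pi^{\Jor}_{\mathsf{Bad}}$. I expect no real obstacle beyond careful bookkeeping of the abort qubit $\RegB$: aborted branches kill the $\ketbra{1}_{\RegB}$ factor in $\Pi^{\Jor}_{\mathsf{Bad}}$ for free, and only the surviving branches need to be controlled, for which almost-projectivity of $\Threshold$ suffices.
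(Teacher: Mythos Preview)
Your proposal is correct and matches the paper's approach: the paper simply states ``By \cref{claim:reest-good-state}, we have that'' before the claim, and you have correctly identified that the inequality is just the complement of the bound in \cref{claim:reest-good-state}. Your additional verification of the hypothesis $\Tr((\Id_{\RegB}\otimes\Pi_{\sC})\dmr^{(\mathrm{init})})=1$ is more careful than the paper, though in fact that hypothesis is not used in the proof of \cref{claim:reest-good-state} itself (which relies only on \cref{cor:threshold-almost}, valid for arbitrary input states), so the $O(\delta)$ leakage you worry about absorbing is actually a non-issue.
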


By gentle measurement, it then follows that
\[ d(\dmr_{\RegB,\RegH,\RegR}^{(\sC)}, \dmr_{\RegB,\RegH,\RegR}'^{(\sC)}) \leq 2\sqrt{\delta}. 
\]

Let $\dmr_{\RegB,\RegH,\RegR}^{(\sU)} = \alpha_0 \ketbra{0}_{\RegB}\tensor  \dmr_{\RegH,\RegR}^{(\sU,0)} + \alpha_1 \ketbra{1}_{\RegB}\tensor \dmr_{\RegH,\RegR}^{(\sU,1)}$ where $\dmr_{\RegH,\RegR}^{(\sU,1)}$ denotes the state guaranteed to exist by applying~\cref{lemma:pseudoinverse} with $\MeasB = \sU $ and $\MeasA = \sC$ on $\dmr_{\RegH,\RegR}'^{(\sC,1)}$.  Recall from~\cref{lemma:pseudoinverse} that $\Tr\left( \BProj{\sU} \dmr_{\RegH,\RegR}^{(\sU,1)}\right) = 1$, since $\Tr( \Pi_0^{\Jor}\dmr_{\RegH,\RegR}'^{(\sC,1)}) = 0$. Moreover, we also have:

\begin{claim}
\label{claim:exact-pseudoinverse}
   \[\dmr_{\RegH,\RegR}'^{(\sC,1)} = \frac{\BProj{\sC}\dmr_{\RegH,\RegR}^{(\sU,1)} \BProj{\sC}}{\Tr(\BProj{\sC}\dmr_{\RegH,\RegR}^{(\sU,1)})}.
   \]
\end{claim}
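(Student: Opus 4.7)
The plan is to invoke property (2) of the Pseudoinverse Lemma (\cref{lemma:pseudoinverse}) and then argue that in our setting the trace distance bound on the right-hand side is exactly $0$. Recall that $\dmr_{\RegH,\RegR}^{(\sU,1)}$ is defined as the output of \cref{lemma:pseudoinverse} applied to input $\dmr_{\RegH,\RegR}'^{(\sC,1)}$ with $\ProjA = \BProj{\sC}$ and $\ProjB = \BProj{\sU}$. Property (2) then immediately gives
\[
d\left(\dmr_{\RegH,\RegR}'^{(\sC,1)},\; \frac{\BProj{\sC}\,\dmr_{\RegH,\RegR}^{(\sU,1)}\,\BProj{\sC}}{\Tr(\BProj{\sC}\,\dmr_{\RegH,\RegR}^{(\sU,1)})}\right) \leq 2\sqrt{\Tr(\Pi_0^{\Jor}\,\dmr_{\RegH,\RegR}'^{(\sC,1)})},
\]
so it suffices to establish that $\Tr(\Pi_0^{\Jor}\,\dmr_{\RegH,\RegR}'^{(\sC,1)}) = 0$.

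Next, I would verify that the hypotheses of the Pseudoinverse Lemma are met for this application, i.e., that $\Tr(\BProj{\sC}\,\dmr_{\RegH,\RegR}'^{(\sC,1)}) = 1$. This follows by tracing back the construction of $\dmr_{\RegH,\RegR}^{(\sC)}$: the state entering \cref{step:loop-reestimate} lies in $\image(\BProj{\sC})$ (by the initial measurement in \cref{step:first-measurement} and the Jordan-subspace invariance of $\VarEstimate$ and $\Transform$ preserved by all previous steps), and by \cref{thm:svdisc} the $\RegB = 1$ branch of $\Threshold^{\sC,\sU}_{p,\eps,\delta}$ keeps the state inside the same Jordan subspaces, hence inside $\image(\BProj{\sC})$. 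Subsequent restriction via $\Pi_{\mathsf{Good}}^{\Jor}$ acts only on the Jordan-subspace label and therefore preserves this property.

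Then I would show that $\Tr(\Pi_0^{\Jor}\,\dmr_{\RegH,\RegR}'^{(\sC,1)}) = 0$. By definition,
\[
\dmr_{\RegB,\RegH,\RegR}'^{(\sC)} \;=\; \frac{\Pi^{\Jor}_{\mathsf{Good}}\,\dmr_{\RegB,\RegH,\RegR}^{(\sC)}\,\Pi^{\Jor}_{\mathsf{Good}}}{\Tr(\Pi^{\Jor}_{\mathsf{Good}}\,\dmr_{\RegB,\RegH,\RegR}^{(\sC)})},
\]
and $\Pi^{\Jor}_{\mathsf{Good}} = \Id - \ketbra{1}_{\RegB} \otimes \Pi^{\Jor}_{<p}$. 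Projecting onto the $\RegB = 1$ branch and then onto the image of $\Pi^{\Jor}_{\mathsf{Good}}$ therefore restricts the residual state $\dmr_{\RegH,\RegR}'^{(\sC,1)}$ to the span of Jordan subspaces $\RegS_j$ with $p_j \geq p$. Since \cref{step:variable-phase-est} aborts whenever $p < \lambda k\sqrt{\delta}$, on the non-abort branch we have $p > 0$, so the support of $\dmr_{\RegH,\RegR}'^{(\sC,1)}$ is disjoint from the image of $\Pi_0^{\Jor}$, giving $\Tr(\Pi_0^{\Jor}\,\dmr_{\RegH,\RegR}'^{(\sC,1)}) = 0$.

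Plugging this into the distance bound yields the claimed exact equality. The only mildly subtle step is the verification of $\Tr(\BProj{\sC}\,\dmr_{\RegH,\RegR}'^{(\sC,1)}) = 1$, but this reduces to the Jordan-subspace invariance properties of the algorithms used in \cref{step:first-measurement,step:variable-phase-est,step:loop-reestimate} together with the fact that $\Pi_{\mathsf{Good}}^{\Jor}$ is diagonal in the Jordan decomposition.
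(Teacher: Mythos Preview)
Your proposal is correct and follows essentially the same approach as the paper: verify that $\Tr(\BProj{\sC}\,\dmr_{\RegH,\RegR}'^{(\sC,1)}) = 1$, invoke property~(2) of \cref{lemma:pseudoinverse}, and argue that $\Tr(\Pi_0^{\Jor}\,\dmr_{\RegH,\RegR}'^{(\sC,1)}) = 0$ because the state is supported on Jordan subspaces with $p_j \geq p > 0$. The paper's proof is terser (it simply asserts the $\image(\BProj{\sC})$ property from context rather than tracing back through the invariance properties of the earlier steps), but the logical structure is identical.
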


\begin{proof}
$\dmr_{\RegH,\RegR}^{(\sC,1)}$ is a state satisfying $\Tr(\BProj{\sC}\dmr_{\RegH,\RegR}^{(\sC,1)}) = 1$, and $\dmr_{\RegH,\RegR}'^{(\sC,1)}$ is then a (re-normalized) projection of $\dmr_{\RegH,\RegR}^{(\sC,1)}$ onto $(\sU, \sC)$-Jordan subspaces with bounded Jordan $p_j$-value. Therefore, $\dmr_{\RegH,\RegR}'^{(\sC,1)}$ also satisfies $\Tr(\BProj{\sC}\dmr_{\RegH,\RegR}'^{(\sC,1)}) = 1$. 

From~\cref{lemma:pseudoinverse} (Property 2) we then have \[d(\dmr_{\RegH,\RegR}'^{(\sC,1)}, \frac{\BProj{\sC}\dmr_{\RegH,\RegR}^{(\sU,1)} \BProj{\sC}}{ \Tr(\BProj{\sC}\dmr_{\RegH,\RegR}^{(\sU,1)})}) \leq 2 \sqrt{\Tr(\Pi_0 \dmr_{\RegH,\RegR}'^{(\sC,1)})} = 0,\] which implies $\dmr_{\RegH,\RegR}'^{(\sC,1)} = \BProj{\sC}\dmr_{\RegH,\RegR}^{(\sU,1)} \BProj{\sC}/ \Tr(\BProj{\sC}\dmr_{\RegH,\RegR}^{(\sU,1)})$.
\end{proof}

\noindent Finally, because $\Tr(\Pi^{\Jor}_{\mathsf{Bad}} \dmr_{\RegB,\RegH,\RegR}'^{(\sC)}) = 0$, \cref{lemma:pseudoinverse} (Property 3) tells us that $\Tr(\Pi^{\Jor}_{\mathsf{Bad}} \dmr_{\RegB,\RegH,\RegR}^{(\sU)}) =0$ as well. 

Define $p_{\sU} = \Tr(\BProj{\sC}\dmr_{\RegH,\RegR}^{(\sU,1)})$ to be the normalization factor above, which is equal to the ($\sC$-)success probability of $\dmr_{\RegH,\RegR}^{(\sU,1)}$. 

\begin{claim}\label{claim:pseudoinverse-win-probability}
  $p_{\sU} \geq p$. 
\end{claim}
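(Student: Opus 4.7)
The plan is to directly compute $p_{\sU}=\Tr(\BProj{\sC}\dmr_{\RegH,\RegR}^{(\sU,1)})$ in terms of the $(\sC,\sU)$-Jordan spectrum of $\dmr_{\RegH,\RegR}'^{(\sC,1)}$ using Property~3 of~\cref{lemma:pseudoinverse}, and then to read off the bound from the fact that $\dmr_{\RegH,\RegR}'^{(\sC,1)}$ is supported only on Jordan subspaces with $p_j \geq p$ (where $p$ refers to the updated value after~\cref{step:loop-reestimate}).

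First, I would observe that $\dmr_{\RegH,\RegR}'^{(\sC,1)}$ is, by construction, the renormalized $\RegB=1$ component of $\Pi^{\Jor}_{\mathsf{Good}}\,\dmr_{\RegB,\RegH,\RegR}^{(\sC)}\,\Pi^{\Jor}_{\mathsf{Good}}$. Since $\Pi^{\Jor}_{\mathsf{Good}}$ annihilates precisely the $\RegB=1$ component supported on $\{p_j<p\}$, the state $\dmr_{\RegH,\RegR}'^{(\sC,1)}$ is supported entirely on Jordan subspaces with $p_j \geq p$; in particular, $\Tr(\Pi_0^{\Jor}\,\dmr_{\RegH,\RegR}'^{(\sC,1)})=0$. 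Moreover the subspace-preservation guarantee of $\Threshold$ in~\cref{thm:svdisc} ensures that conditioning on outcome $b=1$ of~\cref{step:loop-reestimate} leaves the state in $\image(\BProj{\sC})$, so $\Tr(\BProj{\sC}\,\dmr_{\RegH,\RegR}'^{(\sC,1)}) = 1$ and~\cref{lemma:pseudoinverse} applies with $\ProjA=\BProj{\sC}$ and $\ProjB=\BProj{\sU}$, yielding the pseudoinverse state $\dmr_{\RegH,\RegR}^{(\sU,1)}$.

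Next, applying Property~3 of~\cref{lemma:pseudoinverse} gives $\Tr(\Pi_j^{\Jor}\,\dmr_{\RegH,\RegR}^{(\sU,1)}) = \Tr(\Pi_j^{\Jor}\,\dmr_{\RegH,\RegR}'^{(\sC,1)})/(p_j\cdot\Tr(E\cdot\dmr_{\RegH,\RegR}'^{(\sC,1)}))$ for every $j$ with $p_j>0$, and zero otherwise. Since each Jordan subspace $\RegS_j$ with $p_j>0$ contributes exactly $p_j$ to $\Tr(\BProj{\sC}\cdot)$ on states supported in $\RegS_j \cap \image(\BProj{\sU})$, I compute
\[
p_{\sU}
\;=\; \sum_{j:\,p_j>0} p_j\cdot\Tr(\Pi_j^{\Jor}\,\dmr_{\RegH,\RegR}^{(\sU,1)})
\;=\; \frac{\sum_{j:\,p_j>0}\Tr(\Pi_j^{\Jor}\,\dmr_{\RegH,\RegR}'^{(\sC,1)})}{\Tr(E\cdot\dmr_{\RegH,\RegR}'^{(\sC,1)})}
\;=\; \frac{1}{\Tr(E\cdot\dmr_{\RegH,\RegR}'^{(\sC,1)})},
\]
where the last equality uses $\Tr(\Pi_0^{\Jor}\,\dmr_{\RegH,\RegR}'^{(\sC,1)})=0$, so the numerator equals $\Tr(\dmr_{\RegH,\RegR}'^{(\sC,1)})=1$.

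Finally, because $E=\sum_{j:\,p_j>0}p_j^{-1}\Pi_j^{\Jor}$ and $\dmr_{\RegH,\RegR}'^{(\sC,1)}$ is supported on $\{p_j\geq p\}$, I obtain $\Tr(E\cdot\dmr_{\RegH,\RegR}'^{(\sC,1)}) \leq (1/p)\cdot\Tr(\dmr_{\RegH,\RegR}'^{(\sC,1)}) = 1/p$, giving $p_{\sU}\geq p$. I do not foresee a real obstacle: the argument is a short calculation combining the explicit Jordan-support carved out by~\cref{step:loop-reestimate} with the Property~3 identity of~\cref{lemma:pseudoinverse}.
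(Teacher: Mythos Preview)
Your proof is correct and follows essentially the same idea as the paper's: both rely on the fact that the relevant state is supported on Jordan subspaces with $p_j \geq p$ and then compute $p_{\sU}$ via the Jordan decomposition. The paper's argument is slightly more direct: it uses the (already established) fact that $\dmr_{\RegH,\RegR}^{(\sU,1)}$ itself is supported on $\{p_j \geq p\}$ and bounds $p_{\sU} = \sum_{j:\,p_j\geq p} p_j\,\Tr(\Pi_j^{\Jor}\dmr_{\RegH,\RegR}^{(\sU,1)}) \geq p$ in one line, whereas you route back through $\dmr_{\RegH,\RegR}'^{(\sC,1)}$ via Property~3 to obtain the exact value $p_{\sU} = 1/\Tr(E\,\dmr_{\RegH,\RegR}'^{(\sC,1)})$ (which is just Property~1 of \cref{lemma:pseudoinverse}) before bounding.
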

\begin{proof}
    Since $\Tr(\Pi^{\Jor}_{\geq p} \dmr_{\RegH,\RegR}^{(\sU,1)}) = 1$ and $\BProj{\sC}$ commutes with each $\Pi_j^{\Jor}$, we have:
    \begin{align*} \Tr( \BProj{\sC}\dmr_{\RegH,\RegR}^{(\sU,1)}) &= \Tr(  \BProj{\sC}\Pi^{\Jor}_{\geq p} \dmr_{\RegH,\RegR}^{(\sU, 1)})
    \\ &=  \Tr(\sum_{j: p_j \geq p} \BProj{\sC}\Pi^{\Jor}_{j}  \dmr_{\RegH,\RegR}^{(\sU,1)})
    \\ &\geq p \Tr(\sum_{j: p_j \geq p} \Pi^{\Jor}_{j}  \dmr_{\RegH,\RegR}^{(\sU,1)})
    \\ &= p. \qedhere
    \end{align*}
\end{proof}

Since $\Tr(\BProj{\sU} \dmr_{\RegH,\RegR}^{(\sU,1)}) =1$, it can be written in the form $\dmh_{\RegH}^{(\sU,1)} \tensor \ketbra{+_R} $. For each $r$, we define $\zeta_r := \Tr( \Pi_{V, r} \dmh_{\RegH}^{(\sU,1)})$ to be the success probability of $\dmh_{\RegH}^{(\sU,1)}$ on $r$. Finally, define $\zeta_R = \sum_r \zeta_r$. 

We now proceed to analyze the state $\dmh_{\RegB,\RegH}^{(3b)}$. To do so, we first define $\dmh_{\RegB,\RegH}'^{(3b)}$ to be the state at the end of \cref{step:loop-measure-r} when $\dmr_{\RegB,\RegH,\RegR}'^{(\sC)}$ is used in place of $\dmr_{\RegB,\RegH,\RegR}^{(\sC)}$. We know that $d(\dmh_{\RegB,\RegH}^{(3b)}, \dmh_{\RegB,\RegH}'^{(3b)}) \leq 2\sqrt{\delta}$, so this characterization will suffice.

\begin{claim}
\label{claim:stratify-r}
The state $\dmh_{\RegB,\RegH}'^{(3b)}$ is a mixed state with the following form: with probability $\alpha_0$, it is in the abort state. Otherwise, with conditional probability $\zeta_r/ \zeta_R$, \cref{step:loop-amplify-C} measures challenge $r$ and the resulting state is $\ketbra{1}_{\RegB} \tensor \frac{\Pi_{V, r} \dmh_{\RegH}^{(\sU,1)} \Pi_{V, r}}{ \zeta_r}$.
\end{claim}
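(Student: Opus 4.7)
\textbf{Proof proposal for \cref{claim:stratify-r}.}

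The plan is to first use the explicit pseudoinverse formula from \cref{claim:exact-pseudoinverse} and the factorized form $\dmr_{\RegH,\RegR}^{(\sU,1)} = \dmh_{\RegH}^{(\sU,1)} \tensor \ketbra{+_R}_{\RegR}$, then directly compute the effect of measuring $\RegR$ in the computational basis. The $\alpha_0 \ketbra{0}_{\RegB}$ branch of $\dmr_{\RegB,\RegH,\RegR}'^{(\sC)}$ is untouched by \cref{step:loop-measure-r} (up to discarding $\RegR$) and is exactly the abort state with weight $\alpha_0$, so all the work lies in analyzing the $\ketbra{1}_{\RegB}$ branch.

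For that branch, I would substitute $\BProj{\sC} = \sum_{r} \ketbra{r}_{\RegR} \tensor \Pi_{V,r}$ into the pseudoinverse identity of \cref{claim:exact-pseudoinverse} to obtain
\[
\dmr_{\RegH,\RegR}'^{(\sC,1)} \;=\; \frac{1}{|R|\,p_{\sU}} \sum_{r,r'} \ketbra{r}{r'}_{\RegR} \tensor \Pi_{V,r}\, \dmh_{\RegH}^{(\sU,1)} \,\Pi_{V,r'}.
\]
A short calculation using $\Tr(\BProj{\sC}(\dmh_{\RegH}^{(\sU,1)} \tensor \ketbra{+_R})) = \tfrac{1}{|R|}\sum_r \Tr(\Pi_{V,r}\dmh_{\RegH}^{(\sU,1)}) = \zeta_R/|R|$ identifies $p_{\sU} = \zeta_R/|R|$. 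Then a computational-basis measurement of $\RegR$ yields outcome $r$ with probability $\tfrac{1}{|R|\, p_{\sU}} \Tr(\Pi_{V,r}\,\dmh_{\RegH}^{(\sU,1)}) = \zeta_r/\zeta_R$, and the (normalized) residual state on $\RegH$ is $\Pi_{V,r}\,\dmh_{\RegH}^{(\sU,1)}\,\Pi_{V,r}/\zeta_r$, exactly as claimed.

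Combining these two branches, with the $\RegR$ register then discarded, yields the stated mixture. Every step is routine once \cref{claim:exact-pseudoinverse} and the factorized form of $\dmr_{\RegH,\RegR}^{(\sU,1)}$ are in hand, so I do not anticipate a significant obstacle — the claim is essentially just unpacking the pseudoinverse identity under a basis measurement of the challenge register. The only subtlety worth being careful about is tracking normalizations: the $1/|R|$ from $\ketbra{+_R}$ and the $1/p_{\sU}$ from the pseudoinverse must cancel against the inner-product $\tfrac{1}{|R|}$ factors produced when measuring $\RegR$, leaving precisely $\zeta_r/\zeta_R$ and $1/\zeta_r$ in the right places.
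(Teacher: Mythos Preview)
Your proposal is correct and follows essentially the same approach as the paper: expand $\BProj{\sC}$ in the computational basis, plug in the pseudoinverse identity from \cref{claim:exact-pseudoinverse} together with the factorization $\dmr_{\RegH,\RegR}^{(\sU,1)} = \dmh_{\RegH}^{(\sU,1)}\tensor\ketbra{+_R}$, and read off the distribution and post-measurement state on $\RegR$. The only cosmetic difference is that you retain the off-diagonal terms $\ketbra{r}{r'}_{\RegR}\tensor\Pi_{V,r}\dmh_{\RegH}^{(\sU,1)}\Pi_{V,r'}$ until the computational-basis measurement kills them, whereas the paper writes the diagonal form directly; both routes yield the same outcome probabilities and residual states.
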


\begin{proof}
    By definition of the pseudoinverse state $\dmr_{\RegH,\RegR}^{(\sU,1)} = \dmh_{\RegH}^{(\sU,1)}\tensor \ketbra{+_R}$, we can write $\dmr_{\RegH,\RegR}'^{(\sC,1)}$ as
    \[\dmr_{\RegH,\RegR}'^{(\sC,1)} = \frac{\BProj{\sC}( \dmh_{\RegH}^{(\sU,1)}\tensor \ketbra{+_R} )\BProj{\sC}}{\Tr(\BProj{\sC}( \dmh_{\RegH}^{(\sU,1)} \tensor \ketbra{+_R} ))}.\]
    Since $\BProj{\sC}= \sum_{r \in R}  \Pi_{V, r}\tensor \ketbra{r}$, we can write
    \begin{align*}
         \BProj{\sC}( \dmh_{\RegH}^{(\sU,1)}\tensor \ketbra{+_R})\BProj{\sC}&= (\sum_{r \in R}  \Pi_{V, r} \tensor \ketbra{r})\left( \dmh_{\RegH}^{(\sU,1)} \tensor \ketbra{+_R} \right)(\sum_{r \in R} \Pi_{V, r} \tensor \ketbra{r})\\
         &= \frac{1}{\abs{R}}\sum_{r \in R} \Pi_{V, r} \dmh_{\RegH}^{(\sU,1)} \Pi_{V, r} \tensor \ketbra{r}.
    \end{align*}
    Thus, we can rewrite $\dmr_{\RegH,\RegR}'^{(\sC,1)}$ as
    \begin{align*}
        \frac{\BProj{\sC}( \dmh_{\RegH}^{(\sU,1)} \tensor \ketbra{+_R})\BProj{\sC}}{\Tr(\BProj{\sC} ( \dmh_{\RegH}^{(\sU,1)} \tensor \ketbra{+_R}))} &= \frac{\frac{1}{\abs{R}}\sum_{r \in R}   \Pi_{V, r} \dmh_{\RegH}^{(\sU,1)} \Pi_{V, r}\tensor \ketbra{r}}{\Tr(\frac{1}{\abs{R}}\sum_{r \in R}   \Pi_{V, r} \dmh_{\RegH}^{(\sU,1)} \Pi_{V, r} \tensor \ketbra{r})} \\
        &= \frac{\frac{1}{\abs{R}}\sum_{r \in R}  \Pi_{V, r} \dmh_{\RegH}^{(\sU,1)} \Pi_{V, r} \tensor \ketbra{r}}{\frac{1}{\abs{R}}\sum_{r \in R} \zeta_r} \\
        &= \frac{\sum_{r \in R}  \Pi_{V, r} \dmh_{\RegH}^{(\sU,1)} \Pi_{V, r}\tensor \ketbra{r}}{\zeta_R}.
    \end{align*}
    Therefore, the probability of obtaining $r$ after measuring $\RegR$ of $\dmr_{\RegB,\RegH,\RegR}'^{(\sC)}$ is
    \begin{align*}
        \frac{\Tr((\Id \tensor \ketbra{r}) \sum_{r' \in R}  \Pi_{V, r'} \dmh_{\RegH}^{(\sU,1)} \Pi_{V, r'}\tensor \ketbra{r'})}{\zeta_R} &= \frac{\Tr( \Pi_{V, r} \dmh_{\RegH}^{(\sU,1)}\Pi_{V, r}\tensor \ketbra{r} )}{\zeta_R}\\
        &= \frac{\zeta_r}{\zeta_R},
    \end{align*}
    and the post-measurement state is $ \Pi_{V, r} \dmh_{\RegH}^{(\sU,1)} \Pi_{V, r}$. Thus, the state $\dmh_{\RegB,\RegH}'^{(3b)}$ is as claimed.
\end{proof}

In particular, \cref{claim:stratify-r} tells us that the ratio $\frac{\zeta_R}{\abs{R}}$ is exactly $\mathrm{Tr}(\BProj{\sC}\dmr_{\RegH,\RegR}^{(\sU,1)}) = p_{\sU}$. 

We begin our analysis of the repair step by defining the following states:
    \begin{enumerate}
        \item $\dmw_{\RegH,\RegW}^{(\sU,1)} \eqdef \dmh_{\RegH}^{(\sU,1)} \otimes \ketbra{0}_{\RegW}.$ Here, $\dmh_{\RegH}^{(\sU,1)}$ is the state satisfying $\dmr_{\RegH,\RegR}^{(\sU,1)} = \dmh_{\RegH}^{(\sU,1)} \tensor \ketbra{0}_{\RegR}$.
        \item $\dmw_{\RegH,\RegW}^{(r,1)} \eqdef \Pi_r \dmw_{\RegH,\RegW}^{(\sU,1)} \Pi_r/\zeta_r$.
    \end{enumerate}
By \cref{claim:stratify-r} we can view our variant of~\cref{step:loop-measure-r,step:loop-phase-est,step:loop-collapsing,step:loop-svt} as follows:

\begin{itemize}
    \item  With probability $\alpha_0$, abort. Otherwise:
    \item A challenge is sampled so that each string $r$ occurs with probability $\frac{\zeta_r}{\zeta_R}$
    \item If the string $r$ is sampled, initialize the state to $\ketbra{1}_{\RegB} \tensor \dmw_{\RegH,\RegW}^{(r, 1)} $. 
\end{itemize}

Unfortunately, the state $\dmw_{\RegH,\RegW}^{(\sU,1)}$ only satisfies $ \Tr(\Pi_{p, \eps} \dmw_{\RegH,\RegW}^{(\sU,1)}) \geq 1-\delta$ (it is not \emph{quite} fully in the image of $\Pi_{p, \eps}$). With this in mind, we define two additional states:

    \begin{enumerate}
        \item[3.] $\widetilde{\dmw}_{\RegH,\RegW}^{(\sU,1)} \coloneqq \frac{\Pi_{p,\varepsilon} \dmw_{\RegH,\RegW}^{(\sU,1)} \Pi_{p,\varepsilon}}{\Tr(\Pi_{p,\varepsilon} \dmw_{\RegH,\RegW}^{(\sU,1)})}$. Since $\Tr(\Pi_{p,\varepsilon} \dmw_{\RegH,\RegW}^{(\sU,1)}) = 1-\delta$, we have $d(\dmw_{\RegH,\RegW}^{(\sU,1)},\widetilde{\dmw}_{\RegH,\RegW}^{(\sU,1)}) \leq 2\sqrt{\delta}$ by~\cref{lemma:gentle-measurement}.
        \item[4.] $\widetilde{\dmw}_{\RegH,\RegW}^{(r,1)} \eqdef \Pi_r \widetilde{\dmw}_{\RegH,\RegW}^{(\sU,1)} \Pi_r / \Tr(\Pi_r \widetilde{\dmw}_{\RegH,\RegW}^{(\sU,1)})$. 
    \end{enumerate}

    \noindent Let $\widetilde{\zeta}_r \coloneqq \Tr(\Pi_r \widetilde{\dmw}_{\RegH,\RegW}^{(\sU,1)})$, and observe that $\widetilde{\zeta}_r \in [\zeta_r \pm 2\sqrt{\delta}]$. Define $\widetilde{\zeta}_R \eqdef \sum_{r \in R} \widetilde{\zeta}_r$ and $\widetilde{p}_{\sU} \eqdef \widetilde{\zeta}_R/\abs{R}$. 

\begin{claim}\label{claim:fake-p-inequality} $|\widetilde{p}_{\sU} - p_{\sU}| \leq 2\sqrt{\delta}$
\end{claim}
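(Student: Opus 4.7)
The plan is to interpret both $p_{\sU}$ and $\widetilde p_{\sU}$ as linear functionals of $\dmw^{(\sU,1)}$ and $\widetilde\dmw^{(\sU,1)}$ respectively, evaluated against the \emph{same} $[0,\Id]$-bounded observable, and then invoke the closeness of these two states guaranteed by gentle measurement. Concretely, from the definitions we have
\begin{equation*}
    p_{\sU} = \frac{\zeta_R}{|R|} = \frac{1}{|R|}\sum_{r\in R}\Tr\!\bigl(\Pi_r\,\dmw^{(\sU,1)}\bigr)
    \quad\text{and}\quad
    \widetilde p_{\sU} = \frac{\widetilde\zeta_R}{|R|} = \frac{1}{|R|}\sum_{r\in R}\Tr\!\bigl(\Pi_r\,\widetilde\dmw^{(\sU,1)}\bigr).
\end{equation*}
Set $P \eqdef \frac{1}{|R|}\sum_{r\in R}\Pi_r$. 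Then $P \succeq 0$, and since each $\Pi_{V,r} \preceq \Id$ gives $\sum_r \Pi_{V,r} \preceq |R|\cdot\Id$, we obtain $P \preceq \Id$. Thus $P$ is a valid $[0,\Id]$ observable.

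Next I would bound $d(\dmw^{(\sU,1)},\widetilde\dmw^{(\sU,1)})$ via gentle measurement. By construction of the pseudoinverse state in \cref{sec:extractor-analysis-pseudoinverse-state} (Property~3 of \cref{lemma:pseudoinverse}, together with the fact that $\dmr'^{(\sC,1)}$ places no mass on $\Pi^{\Jor}_{<p}$-subspaces by \cref{claim:analysis-pi-jor-good}), the state $\dmr^{(\sU,1)} = \dmh^{(\sU,1)}\otimes\ketbra{+_R}$ is supported on $(\sU,\sC)$-Jordan subspaces with eigenvalue at least $p$. Consequently the threshold algorithm accepts $\dmw^{(\sU,1)}$ with probability at least $1-\delta$ by \cref{thm:svdisc}, i.e., $\Tr(\Pi_{p,\varepsilon}\,\dmw^{(\sU,1)}) \geq 1-\delta$. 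Applying \cref{lemma:gentle-measurement} to the definition $\widetilde\dmw^{(\sU,1)} = \Pi_{p,\varepsilon}\dmw^{(\sU,1)}\Pi_{p,\varepsilon}/\Tr(\Pi_{p,\varepsilon}\dmw^{(\sU,1)})$ then gives $d\!\bigl(\dmw^{(\sU,1)},\widetilde\dmw^{(\sU,1)}\bigr) \leq 2\sqrt{\delta}$.

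Combining the two steps using the standard trace-distance characterization $|\Tr(P(\rho-\sigma))| \leq d(\rho,\sigma)$ for $0 \preceq P \preceq \Id$, we conclude
\begin{equation*}
    \bigl|p_{\sU} - \widetilde p_{\sU}\bigr|
    = \bigl|\Tr\!\bigl(P\,(\dmw^{(\sU,1)} - \widetilde\dmw^{(\sU,1)})\bigr)\bigr|
    \leq d\!\bigl(\dmw^{(\sU,1)},\widetilde\dmw^{(\sU,1)}\bigr) \leq 2\sqrt{\delta},
\end{equation*}
which is the claim. The main subtlety, and the only nontrivial verification required, is the operator-norm bound $P \preceq \Id$: each individual $\Pi_r$ has operator norm $1$, so a priori $\sum_r \Pi_r$ could be as large as $|R|\Id$, and it is precisely the $1/|R|$ normalization (which arises from the uniform superposition $\ket{+_R}$ in the definition of $\Pi_{\sU}$) that cancels this factor and makes the direct comparison to the trace distance possible.
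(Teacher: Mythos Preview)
Your proof is correct and follows essentially the same approach as the paper: both arguments use the gentle-measurement bound $d(\dmw^{(\sU,1)},\widetilde\dmw^{(\sU,1)})\le 2\sqrt{\delta}$ together with the fact that $p_{\sU}$ and $\widetilde p_{\sU}$ arise from applying operator-norm-$\le 1$ measurements to these two states. The only cosmetic difference is that the paper bounds $|\zeta_r-\widetilde\zeta_r|\le 2\sqrt{\delta}$ termwise (using $\Pi_r\preceq\Id$) and then averages over $r$, whereas you average first to form the single observable $P=\frac{1}{|R|}\sum_r\Pi_r\preceq\Id$ and apply the trace-distance bound once.
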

\begin{proof}
    For every string $r$, we have that 
    \[ |\zeta_r - \widetilde \zeta_r | = \left|\mathrm{Tr}(\Pi_r( \dmw_{\RegH,\RegW}^{(\sU,1)} - \widetilde{\dmw}_{\RegH,\RegW}^{(\sU,1)})\right| \leq 2\sqrt{\delta}
    \]
    since $||\dmw_{\RegH,\RegW}^{(\sU,1)} - \widetilde{\dmw}_{\RegH,\RegW}^{(\sU,1)}|| \leq 2\sqrt{\delta}$. Therefore, we have that
    \[ \left|\frac{\zeta_R}{\abs{R}} - \frac{\widetilde \zeta_R}{\abs{R}}\right| \leq 2\sqrt{\delta}
    \]
    by subadditivity. 
\end{proof}
    
    Consider the following two mixed states 

\[ \dmwr_{\RegH,\RegR,\RegW} := \sum_r \frac{\zeta_r}{\zeta_R} \ketbra r \tensor \frac{\Pi_r \dmw_{\RegH,\RegW}^{(\sU,1)} \Pi_r}{\zeta_r} , \text{ and}
\]

\[\widetilde{\dmwr}_{\RegH,\RegR,\RegW} := \sum_r \frac{\widetilde \zeta_r}{\widetilde \zeta_R} \ketbra r \tensor \frac{\Pi_r \widetilde \dmw_{\RegH,\RegW}^{(\sU,1)} \Pi_r}{\widetilde{\zeta}_r}.
\]
We claim that these two mixed states are close in trace distance.

\begin{claim}\label{claim:tau-trace-distance}
$|| \dmwr - \widetilde{\dmwr}||_1 \leq \frac{4\sqrt{\delta}}{p_{\sU}}$. 
\end{claim}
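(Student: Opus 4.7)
The plan is to recognize $\dmwr$ and $\widetilde{\dmwr}$ as normalized outputs of a common CP sub-channel applied to two close states, and then apply a standard stability-of-conditioning bound. Let $\Phi$ denote the map on $\RegH \otimes \RegW$ with output in $\RegH \otimes \RegW \otimes \RegR$ defined by
\[
\Phi(X) \eqdef \frac{1}{|R|} \sum_{r \in R} \ketbra{r}_{\RegR} \otimes \Pi_r X \Pi_r.
\]
Its Kraus operators $\widehat{\Pi}_r \eqdef \tfrac{1}{\sqrt{|R|}} \ket{r}_{\RegR} \otimes \Pi_r$ are mutually orthogonal (the $\ket{r}$'s live on disjoint subspaces of $\RegR$), so $\sum_r \widehat{\Pi}_r^{\dagger} \widehat{\Pi}_r = \tfrac{1}{|R|}\sum_r \Pi_r \le \Id$, and $\Phi$ is therefore trace-non-increasing on Hermitians. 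Unrolling the definitions gives $\dmwr = \Phi(\dmw^{(\sU,1)})/p_{\sU}$ and $\widetilde{\dmwr} = \Phi(\widetilde{\dmw}^{(\sU,1)})/\widetilde{p}_{\sU}$, since $\Tr(\Phi(\dmw^{(\sU,1)})) = \zeta_R/|R| = p_{\sU}$ and likewise for the tilde state.

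With this rewriting, the bound follows from the triangle inequality after adding and subtracting $\Phi(\widetilde{\dmw}^{(\sU,1)})/p_{\sU}$:
\[
\|\dmwr - \widetilde{\dmwr}\|_1 \;\le\; \frac{\|\Phi(\dmw^{(\sU,1)} - \widetilde{\dmw}^{(\sU,1)})\|_1}{p_{\sU}} + \frac{|p_{\sU} - \widetilde{p}_{\sU}|}{p_{\sU}}.
\]
The first term is controlled by $\|\dmw^{(\sU,1)} - \widetilde{\dmw}^{(\sU,1)}\|_1 / p_{\sU}$ using the fact that $\Phi$ is trace-norm contractive on Hermitians; combined with the gentle-measurement bound $d(\dmw^{(\sU,1)},\widetilde{\dmw}^{(\sU,1)}) \leq 2\sqrt{\delta}$ established immediately before the claim, this is at most $4\sqrt{\delta}/p_{\sU}$. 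The second term is at most $2\sqrt{\delta}/p_{\sU}$ by~\cref{claim:fake-p-inequality}.

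The main (and essentially only) obstacle is tightening the constants: the naive triangle inequality as written above yields $6\sqrt{\delta}/p_{\sU}$, whereas the stated bound is $4\sqrt{\delta}/p_{\sU}$. To close the gap I would note that $p_{\sU} - \widetilde{p}_{\sU} = \Tr(\Phi(\dmw^{(\sU,1)} - \widetilde{\dmw}^{(\sU,1)}))$, so the two error contributions are not independent; applying the inequality $|\Tr(\Phi(X))| \le \tfrac{1}{2}\|X\|_1$ for a traceless Hermitian $X$ (obtained by decomposing $X = X_+ - X_-$ and using $\Tr(X_+) = \Tr(X_-) = \tfrac{1}{2}\|X\|_1$) lets one merge the $|p_{\sU} - \widetilde{p}_{\sU}|$ contribution into the first term. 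Beyond this bookkeeping, every ingredient---expressing both states through a common sub-channel, trace-norm contraction under $\Phi$, and the already-proved gentle-measurement and $p_{\sU}$-closeness bounds---is entirely routine.
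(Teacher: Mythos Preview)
Your approach is essentially the paper's. The paper splits via the same intermediate point you do: it writes $\frac{\widetilde{\zeta}_R}{\zeta_R}\widetilde{\dmwr}$, which is exactly your $\Phi(\widetilde{\dmw}^{(\sU,1)})/p_{\sU}$, and bounds the same two pieces by the same two ingredients (\cref{claim:fake-p-inequality} for the normalization gap, and the gentle-measurement bound on $\dmw^{(\sU,1)}$ vs.\ $\widetilde{\dmw}^{(\sU,1)}$ for the other). Your sub-channel packaging is a clean way to say ``$\sum_r \|\Pi_r X \Pi_r\|_1 \le |R|\cdot\|X\|_1$,'' which is what the paper uses implicitly. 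As for the constant: the paper simply plugs in $\|\dmw^{(\sU,1)} - \widetilde{\dmw}^{(\sU,1)}\|_1 \le 2\sqrt{\delta}$ to get $2\sqrt{\delta}/p_{\sU}$ for each piece and hence $4\sqrt{\delta}/p_{\sU}$ total, so your extra ``merging'' step is unnecessary bookkeeping---and in any case only the $O(\sqrt{\delta}/p_{\sU})$ order matters downstream.
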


To see this, we first note that 
\begin{align*} \left|\left|  \widetilde{\dmwr} - \frac {\widetilde \zeta_R}{\zeta_R}\widetilde{\dmwr} \right|\right|_1 &= |1 -\frac{\widetilde \zeta_R}{\zeta_R}|\cdot  \left|\left| \widetilde{\dmwr}\right|\right|_1
\\ &= \left|1 -\frac{\widetilde \zeta_R}{\zeta_R}\right|
\\ &= \left|1 -\frac{\widetilde p_{\sU}}{p_{\sU}}\right|
\\ &\leq \frac{2\sqrt{\delta}}{p_{\sU}}
\end{align*}

\noindent Moreover, we have that 
\begin{align*}
    \left|\left|  \dmwr - \frac {\widetilde \zeta_R}{\zeta_R}\widetilde{\dmwr} \right|\right|_1 
    &= \frac 1 {\zeta_R} \left|\left|\sum_r   \ketbra{r} \tensor \Pi_r (\dmw_{\RegH,\RegW}^{(\sU,1)} - \widetilde{\dmw}_{\RegH,\RegW}^{(\sU,1)}) \Pi_r \right|\right|_1 
    \\ &\leq \frac {|R|}{\zeta_R} \cdot ||\dmw_{\RegH,\RegW}^{(\sU,1)} - \widetilde{\dmw}_{\RegH,\RegW}^{(\sU,1)}||_1
    \\ &\leq \frac {|R|}{\zeta_R} \cdot 2\sqrt{\delta}
    \\ &= \frac{2\sqrt{\delta}}{p_{\sU}}. 
\end{align*}

\noindent Thus, we conclude that $|| \dmwr - \widetilde{\dmwr}||_1 \leq \frac{2\sqrt{\delta}}{p_{\sU}} + \frac {2\sqrt{\delta}}{p_{\sU}} \leq \frac{4\sqrt{\delta}}{p_{\sU}}$ by the triangle inequality.

This trace bound will allow us to analyze correctness and bound the expected runtime of the extractor by appealing to properties of the state $\widetilde{\dmwr}$.

\subsubsection{Runtime Analysis}

In this section, we bound the expected running time of $\mathsf{Ext}$ (proving property (1) of \cref{thm:high-probability-extraction}). 

\begin{theorem}\label{thm:expected-qpt}
For any QPT $P^*$, $\mathsf{Ext}^{P^*}$ runs in $\EQPTM$. 
\end{theorem}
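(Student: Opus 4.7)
The plan is to decompose the extractor's runtime into Steps~1--2 and the $k$ iterations of Step~3, and to bound each contribution using the pseudoinverse characterization of \cref{sec:extractor-analysis-pseudoinverse-state}. First, I would invoke \cref{lemma:step-1-2} directly for Steps~1--2: their combined expected runtime is $O(1)$, and---crucially---the lemma also yields $\mathbb{E}[1/p \cdot X_1] = O(1)$, where $X_1$ is the indicator that the initial $\sC$-measurement returned $1$ and $p$ is the estimate output by \cref{step:variable-phase-est}. A Cauchy--Schwarz step applied to $\sqrt{X_1/p} = X_1/\sqrt{p}$ then gives $\mathbb{E}[X_1/\sqrt{p}] \leq \sqrt{\mathbb{E}[X_1] \cdot \mathbb{E}[X_1/p]} = O(1)$. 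The remaining goal is to bound the expected runtime of Step~3 (conditional on being reached) by $\poly(\secp,k)/\sqrt{p}$.

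Within a single main-loop iteration, the sub-steps with deterministic runtime---Steps~3a and 3g via \cref{thm:svt,thm:svdisc}, with $\eps = p/4k$ and $\delta = 2^{-\secp}$---clearly contribute $\poly(\secp,k)/\sqrt{p}$ time. The variable-runtime sub-steps to analyze carefully are Step~3c (running $\VarEstimate[\sD_r \rightleftarrows \sG_{p,\eps,\delta}]$) and Step~3e (running $\Transform_q[\sD_r \to \sG_{p,\eps,\delta}]$ with the estimate $q$ from Step~3c). By \cref{thm:vrsvt}, their combined expected runtime is $O(\secp/s)$ on a left singular vector of $\Pi_{\sD_r} \Pi_{\sG}$ with singular value $s$, and by the Jordan decomposition the same bound holds in expectation over a mixture, with $\mathbb{E}[1/s] \leq \sqrt{\mathbb{E}[1/s^2]}$ via Cauchy--Schwarz.

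The main obstacle is bounding this expectation averaged over the challenge $r_i$ sampled in Step~3b. Here is where I would invoke the pseudoinverse analysis: by \cref{claim:tau-trace-distance}, the joint state over $(r_i,\RegH,\RegW)$ at the start of Step~3c is $O(\sqrt{\delta}/p_{\sU})$-close in trace distance to the mixture $\widetilde{\dmwr}$, in which $r$ is drawn with probability $\widetilde{\zeta}_r/\widetilde{\zeta}_R$ and the conditional state is $\Pi_r \widetilde{\dmw}^{(\sU,1)} \Pi_r/\widetilde{\zeta}_r$. Since $\widetilde{\dmw}^{(\sU,1)}$ lies in $\image(\Pi_{\sG})$ and satisfies $\Tr(\Pi_{\sD_r}\widetilde{\dmw}^{(\sU,1)}) = \widetilde{\zeta}_r$, applying \cref{lemma:pseudoinverse} with $(\MeasA,\MeasB) = (\sD_r,\sG)$ and pseudoinverse state $\widetilde{\dmw}^{(\sU,1)}$ bounds $\mathbb{E}[1/s^2] \leq 1/\widetilde{\zeta}_r$, so $\mathbb{E}[1/s] \leq 1/\sqrt{\widetilde{\zeta}_r}$. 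Averaging over $r$ and applying Cauchy--Schwarz in the form $\sum_r \sqrt{\widetilde{\zeta}_r} \leq \sqrt{\abs{R}\widetilde{\zeta}_R}$ yields
\[ \sum_r \frac{\widetilde{\zeta}_r}{\widetilde{\zeta}_R} \cdot \frac{1}{\sqrt{\widetilde{\zeta}_r}} \,=\, \frac{1}{\widetilde{\zeta}_R}\sum_r \sqrt{\widetilde{\zeta}_r} \,\leq\, \sqrt{\frac{\abs{R}}{\widetilde{\zeta}_R}} \,=\, \frac{1}{\sqrt{\widetilde{p}_{\sU}}} \,=\, O(1/\sqrt{p}), \]
using \cref{claim:fake-p-inequality,claim:pseudoinverse-win-probability}. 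Summing over the $k$ iterations and the deterministic contributions then gives the desired $\poly(\secp,k)/\sqrt{p}$ bound on Step~3 on $X_1 = 1$, which combines with $\mathbb{E}[X_1/\sqrt{p}] = O(1)$ to yield the overall polynomial expectation.

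The hardest part of writing this up carefully will be tracking how the $O(\sqrt{\delta})$ errors---from the pseudoinverse replacement $\dmwr \to \widetilde{\dmwr}$, from the approximate projection of $\dmw^{(\sU,1)}$ onto $\image(\Pi_{p,\eps,\delta})$, and from the undetectable-but-nonzero disturbance caused by measuring $f_i(z_i)$ in Step~3d of earlier iterations (where the partial-collapsing hypothesis is invoked via a hybrid argument)---compound across $k = \poly(\secp)$ iterations without blowing up the expected runtime. In particular, one must argue that subsequent iterations' re-estimation in Step~3a effectively ``restarts'' the analysis on a state whose Jordan spectrum is still approximately supported on eigenvalues $\geq p - O(k\eps) = \Omega(p)$, so that the $1/\sqrt{p}$-scaling applies uniformly throughout the loop. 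This is feasible for $\delta = 2^{-\secp}$, but the bookkeeping requires care.
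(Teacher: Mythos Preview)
Your approach is essentially the same as the paper's---invoke \cref{lemma:step-1-2} for Steps~1--2, then bound each main-loop iteration via the pseudoinverse/$\widetilde{\dmwr}$ analysis and a Jensen step averaging over $r$---but there is a quantitative slip in your per-iteration bound. When you say ``by \cref{thm:vrsvt}, their combined expected runtime is $O(\secp/s)$,'' that counts \emph{oracle calls} to $\sD_r$ and $\sG_{p,\eps,\delta}$, not wall-clock time. Each $\sG$-call requires implementing $\Pi_{p,\eps,\delta}$, which internally runs $\Threshold^{\sU,\sC}_{p,\eps,\delta}$ and costs $O(\secp\sqrt{p}/\eps) = \poly(\secp,k)/\sqrt{p}$ time by \cref{thm:svdisc}. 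So the actual per-iteration bound for Steps~3c--3e is $\poly(\secp,k)/(s\sqrt{p})$, and after your average over $r$ (which correctly gives $\mathbb{E}_r[1/s] = O(1/\sqrt{p})$) the iteration costs $\poly(\secp,k)/p$, not $\poly(\secp,k)/\sqrt{p}$. This does not break the argument: combining $\poly(\secp,k)/p$ with $\mathbb{E}[X_1/p] = O(1)$ from \cref{lemma:step-1-2} gives the theorem directly, and your Cauchy--Schwarz detour to $\mathbb{E}[X_1/\sqrt{p}]$ is unnecessary.

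One further point of comparison: the paper does \emph{not} invoke partial collapsing to control runtime. Instead it observes that the steps following the $f_i$-measurement in Step~3d---namely Steps~3e, 3f, 3g---have runtime that is \emph{deterministic} given the estimate $q$ computed in Step~3c (before the measurement), so Step~3d cannot affect the runtime of its own iteration. Subsequent iterations are handled by proving the per-iteration bound (the paper's \cref{lemma:running-time-p}) for an \emph{arbitrary} state in $\image(\BProj{\sC})$, which is exactly the syntactic guarantee at the start of each loop; this is the ``restart'' you allude to, and it means no hybrid over the collapsing measurement is needed for the runtime analysis (partial collapsing is used only for the abort-probability bound).
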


We note that \cref{lemma:step-1-2} already showed that the expected running time of \cref{step:first-measurement,step:variable-phase-est} is $O(1)$ calls to $(\sU, \sC)$. 


Next, we show that the expected runtime of the main loop (\cref{step:extract-transcript}) is also $\poly(\secp)$. To prove this, we make use of the syntactic property (enforced by the definition of \cref{step:loop-amplify-C}) that for every $i\in \{0,1,\dots,k\}$, the state $\dmr_{\RegB,\RegH,\RegR}^{(\sC,\mathrm{init})}$ at the beginning of the $i$th iteration of \cref{step:extract-transcript} is in $\image(\Pi_{\sC})$ (provided that the computation has not aborted). We then show

\begin{lemma}\label{lemma:running-time-p}
    Let $p$ be an arbitrary real number output by \cref{step:variable-phase-est}, and let $\dmr_{\RegB,\RegH,\RegR}^{(\sC,\mathrm{init})}$ be an \emph{arbitrary} non-aborted state (i.e. $\RegB$ is initialized to $\ketbra{1}_{\RegB}$) that is in the image of $\Pi_{\sC}$. 
    
    Then, the expected runtime of one iteration of \cref{step:extract-transcript} on $\dmr_{\RegB,\RegH,\RegR}^{(\sC,\mathrm{init})}$ is $\poly(\secp)/p$. 
    
\end{lemma}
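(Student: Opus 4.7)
My approach is to decompose one iteration of \cref{step:extract-transcript} into fixed- and variable-runtime substeps, then bound the variable part by transferring the Jordan analysis to the pseudoinverse state constructed in \cref{sec:extractor-analysis-pseudoinverse-state}.

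Every substep of \cref{step:extract-transcript} apart from \cref{step:loop-phase-est,step:loop-svt} runs in strict time $\poly(\secp)/\sqrt{p}$: the $\Threshold$ calls, the unitary $U_{p,\eps,\delta}$, and the fixed-runtime $\Transform_p[\sU\to\sC]$ all satisfy this bound by the complexity guarantees of \cref{sec:vrsvt} once we substitute $\eps = p/(4k)$ and $\delta = 2^{-\secp}$. The entire task therefore reduces to bounding the expected runtime of $\VarEstimate[\sD_r\rightleftarrows \sG_{p,\eps,\delta}]$ in \cref{step:loop-phase-est} together with the subsequent $\Transform_q[\sD_r\to \sG_{p,\eps,\delta}]$ in \cref{step:loop-svt}; by \cref{thm:vrsvt}, both have expected runtime $O(\log(1/\delta)/\sqrt{q_j^{(r)}})$ on a $(\sD_r,\sG_{p,\eps,\delta})$-Jordan eigenstate with eigenvalue $q_j^{(r)}$, with the worst-case $O(\log(1/\delta)/\delta)$ branch occurring with probability at most $\delta$. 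It remains to control $\mathbb{E}[1/\sqrt{q_j^{(r)}}]$ under the joint distribution of $r$ and the Jordan outcome $j$ induced by the preceding steps.

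By \cref{claim:stratify-r,claim:tau-trace-distance}, up to $O(\sqrt{\delta}/p)$ trace distance and conditioned on $\RegB = 1$ and on measuring $r_i = r$, the input to \cref{step:loop-phase-est} equals $\widetilde{\dmw}^{(r,1)} = \Pi_r\widetilde{\dmw}^{(\sU,1)}\Pi_r/\widetilde{\zeta}_r$, with $r_i = r$ occurring with probability $\widetilde{\zeta}_r/\widetilde{\zeta}_R$. The key structural observation is that the pseudoinverse state $\widetilde{\dmw}^{(\sU,1)}$ lies entirely in $\image(\BProj{p,\eps,\delta})$ by construction. I then invoke the Pseudoinverse Lemma (\cref{lemma:pseudoinverse}) a \emph{second time}, now for the projector pair $(\Pi_r,\BProj{p,\eps,\delta})$: writing $\Pi_j^{\Jor,r}$ for the corresponding Jordan projectors, this yields $\Tr(\Pi_j^{\Jor,r}\widetilde{\dmw}^{(r,1)}) = (q_j^{(r)}/\widetilde{\zeta}_r)\,\Tr(\Pi_j^{\Jor,r}\widetilde{\dmw}^{(\sU,1)})$ with no mass on subspaces where $q_j^{(r)} = 0$, and the containment $\widetilde{\dmw}^{(\sU,1)}\in\image(\BProj{p,\eps,\delta})$ gives the identity $\sum_j q_j^{(r)}\Tr(\Pi_j^{\Jor,r}\widetilde{\dmw}^{(\sU,1)}) = \widetilde{\zeta}_r$.

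Assembling these pieces, the expected runtime of \cref{step:loop-phase-est,step:loop-svt} is, up to $\poly(\secp)$ factors,
\begin{equation*}
\sum_r \frac{\widetilde{\zeta}_r}{\widetilde{\zeta}_R}\sum_j \Tr(\Pi_j^{\Jor,r}\widetilde{\dmw}^{(r,1)})\cdot\frac{1}{\sqrt{q_j^{(r)}}} \;=\; \frac{1}{\widetilde{\zeta}_R}\sum_r \sum_j \sqrt{q_j^{(r)}}\,\Tr(\Pi_j^{\Jor,r}\widetilde{\dmw}^{(\sU,1)}).
\end{equation*}
Two applications of Cauchy--Schwarz---the inner one using the above identity together with $\sum_j \Tr(\Pi_j^{\Jor,r}\widetilde{\dmw}^{(\sU,1)}) \leq 1$, and the outer summing $\sqrt{\widetilde{\zeta}_r}$ over $r$---bound this by $\sqrt{|R|/\widetilde{\zeta}_R} = 1/\sqrt{\widetilde{p}_\sU}$. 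By \cref{claim:pseudoinverse-win-probability,claim:fake-p-inequality} combined with the abort condition $p \geq \secp k\sqrt{\delta}$ of \cref{step:variable-phase-est}, we obtain $\widetilde{p}_\sU = \Theta(p)$, so the bound becomes $\poly(\secp)/\sqrt{p} \leq \poly(\secp)/p$ as claimed. The main obstacle will be the bookkeeping: propagating the $O(\sqrt{\delta}/p)$ trace-distance perturbations between the actual input state and the idealized $\widetilde{\dmw}^{(r,1)}$, and controlling the $O(1/\delta)$ worst-case runtime branches from \cref{thm:vrsvt}, so that they contribute only a $\negl(\secp)$ additive term; choosing $\delta = 2^{-\secp}$ together with the $p \geq \secp k\sqrt{\delta}$ lower bound ensures these perturbations stay within the stated polynomial bound.
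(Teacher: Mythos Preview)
Your approach is essentially the paper's: isolate \cref{step:loop-phase-est,step:loop-svt} as the only variable-runtime substeps, pass to the idealized state $\widetilde{\dmw}^{(r,1)}$ via the trace-distance bounds of \cref{sec:extractor-analysis-pseudoinverse-state}, use the $(\sD_r,\sG_{p,\eps,\delta})$-Jordan weights of $\widetilde{\dmw}^{(\sU,1)}$ (your ``second Pseudoinverse Lemma'' is exactly the paper's direct computation $z_j = q_j y_j/\widetilde{\zeta}_r$), then average over $r$. The paper phrases the two concavity steps as Jensen rather than Cauchy--Schwarz, and it organizes the perturbation bookkeeping as an explicit three-hybrid argument (real $\to$ $\dmr'^{(\sC)}$ $\to$ $\widetilde{\dmwr}$), but the substance is identical. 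One point the paper makes explicit and you leave implicit: the collapsing measurement in \cref{step:loop-collapsing} sits between $\VarEstimate$ and $\Transform$, and can be ignored for the runtime analysis only because $\Transform_q$'s runtime depends solely on the classical parameter $q$ (property 4 of \cref{thm:vrsvt}), not on the intervening state.

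There is one accounting slip. Your displayed sum bounds the expected \emph{number of oracle calls} to $\sD_r$ and $\sG_{p,\eps,\delta}$ in the vrSVT, not the runtime. Each $\sG_{p,\eps,\delta}$ call is itself a $\Threshold^{\sU,\sC}_{p,\eps,\delta}$ with $\eps = p/(4k)$, so it costs $O(\log(1/\delta)\sqrt{p}/\eps) = \poly(\secp)/\sqrt{p}$, not $\poly(\secp)$. Multiplying your bound $1/\sqrt{\widetilde{p}_{\sU}}$ by this per-call cost gives $\poly(\secp)/\sqrt{p\,\widetilde{p}_{\sU}} = \poly(\secp)/p$, which is exactly the paper's conclusion; your intermediate claim of $\poly(\secp)/\sqrt{p}$ is too optimistic by a factor of $1/\sqrt{p}$. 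Since you then relax to $\poly(\secp)/p$ anyway, the final statement survives, but the inequality ``$\poly(\secp)/\sqrt{p} \leq \poly(\secp)/p$'' is doing no real work---the bound is actually tight at $\poly(\secp)/p$. Similarly, the trace-distance perturbations contribute $O(\sqrt{\delta}/p)\cdot O(k/\sqrt{\delta}) = O(k/p)$ to the expected runtime (via the global $O(k)/\sqrt{\delta}$ cutoff), which is $\poly(\secp)/p$ rather than $\negl(\secp)$; this is again harmless for the target bound but not for the reason you state.
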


\begin{proof}

We analyze the running time assuming that the collapsing measurement of~\cref{step:loop-collapsing} is not performed. This is without loss of generality; the steps following the (partial) collapsing measurement have a fixed runtime (as a function of previously computed parameters in the execution), so the collapsing measurement cannot affect the overall expected running time.\footnote{We only remove the collapsing measurement of the \emph{current} loop iteration; previous collapsing measurements are baked into the (arbitrary) state $\dmr_{\RegB,\RegH,\RegR}^{(\sC,\mathrm{init})}$.}

First, note that \cref{step:loop-reestimate,step:loop-amplify-C} run in a fixed $\poly(\secp)/\sqrt{p}$ time by \cref{thm:svt}. Thus, we focus on \cref{step:loop-phase-est,step:loop-svt}.

We bound the expected runtime of \cref{step:loop-phase-est,step:loop-svt} via the following hybrid argument.

\begin{itemize}
    \item $\mathsf{Hyb}_0$: This is the real procedure, assuming that \cref{step:loop-collapsing} is not performed.
    \item $\mathsf{Hyb}_1$: In this hybrid, the $\RegR$-measurement outcome and residual state $\dmh_{\RegB,\RegH}^{(3b)}$ is prepared differently:
    \begin{itemize}
        \item With probability $\alpha_0$, abort. Otherwise:
        \item The challenge $r$ is sampled with probability equal to $\zeta_r/\zeta_R$.
        \item If the string $r$ is sampled, initialize the state to $\ketbra{1}_{\RegB} \tensor \frac{\Pi_{V, r} \dmh_{\RegH}^{(\sU,1)} \Pi_{V, r}}{ \zeta_r}$.
    \end{itemize}
    This is an alternate description of the state $\dmr_{\RegB,\RegH,\RegR}'^{(\sC)}$.
    \item $\mathsf{Hyb}_2$: In this hybrid, the state at the beginning of \cref{step:loop-phase-est} (which is usually $\dmh_{\RegB,\RegH}^{(3b)}\tensor \ketbra{0}_{\RegW}$) is prepared differently:
    
    \begin{itemize}
    \item With probability $\alpha_0$, abort. Otherwise:
    \item A challenge is sampled so that each string $r$ occurs with probability $\widetilde \zeta_r/\widetilde \zeta_R$
    \item If the string $r$ is sampled, initialize the state to $\widetilde{\dmw}_{\RegH,\RegW}^{(r,1)}$. 
\end{itemize}
\end{itemize}

\begin{claim}
   The expected running times of $\mathsf{Hyb}_0$, $\mathsf{Hyb}_1$ differ by at most $O(k)$, and the expected running times of $\mathsf{Hyb}_1$ and $\mathsf{Hyb}_2$ differ by at most $O(k/p)$. 
\end{claim}

\begin{proof}
The \emph{worst-case} running time of $\mathsf{Ext}$ is bounded to be $k/\sqrt{\delta}$ by definition. We will combine this with trace distance bounds to prove the claim.

For $\mathsf{Hyb}_0$ and $\mathsf{Hyb}_1$, we note that the running time of \cref{step:loop-phase-est,step:loop-svt} can be viewed as a classical distribution over integers obtained via applying a CPTP map to the input state, which is either $\dmr_{\RegB,\RegH,\RegR}^{(\sC)}$ or $\dmr_{\RegB,\RegH,\RegR}'^{(\sC)}$. Since trace distance is contractive under CPTP maps, we conclude that these integer distributions are $2\sqrt{\delta}$-close in statistical distance. Since (as integers) they are $(k/\sqrt{\delta})$-bounded, we conclude that their expectations differ by $O(k)$.

The argument is similar for $\mathsf{Hyb}_1$ and $\mathsf{Hyb}_2$, except that the running time of \cref{step:loop-phase-est,step:loop-svt} can instead be viewed as a classical distribution obtained via a CPTP map from either $\dmwr$ or $\widetilde{\dmwr}$, which have trace distance at most $\frac{4\sqrt{\delta}}{p_{\mathsf{U}}} \leq \frac{4\sqrt{\delta}}{p}$. 
\end{proof}

\noindent Thus, it suffices to bound the expected runtime in the procedure $\mathsf{Hyb}_2$. 

Without~\cref{step:loop-collapsing}, we can view~\cref{step:loop-phase-est,step:loop-svt} as a variable-runtime $\mathsf{Transform}^{\sD_r\rightarrow \sG_{p, \eps, \delta}}$ with respect to the projectors $(\Pi_r, \Pi_{p, \epsilon})$, where $r$ is sampled from the above distribution. We first analyze the runtime of this procedure for a fixed value of $r$. 

Let $\dan = (\Pi_j^{\dan})_j$ denote the Jordan measurement corresponding to projections $(\Pi_r, \Pi_{p, \eps})$, and let $q_j$ denote the eigenvalue associated with $\Pi_j^{\dan}$. Define the \emph{Jordan weights} of $\widetilde{\dmw}_{\RegB,\RegH,\RegW}^{(\sU)}$ as the vector $(y^{\dan}_j)_j$ where
    \[ y^{\dan}_j \eqdef \Tr(\Pi^{\dan}_j \widetilde{\dmw}_{\RegB,\RegH,\RegW}^{(\sU)}).\]
    Then, the Jordan weights of $\widetilde{\dmw}_{\RegB,\RegH,\RegW}^{(r)}$ are $(z^{\dan}_j)_j$ where
    \[ z^{\dan}_j \eqdef q_j y^{\dan}_j/\widetilde{\zeta}_r.\]

\begin{claim}\label{lemma:svt-runtime-r}
    Given the string $r$ and state $\widetilde{\dmw}_{\RegB,\RegH,\RegW}^{(r)}$ as input, \cref{step:loop-phase-est,step:loop-collapsing,step:loop-svt} make an expected $\poly(\secp) \cdot  1 /\sqrt{\widetilde \zeta_r}$ calls to $\Pi_{p,\varepsilon}$ and $\Pi_r$.  
\end{claim}

\begin{proof}
    
    By \cref{thm:vrsvt}, the expected running time (in number of calls to $\Pi_r,\Pi_{p,\varepsilon}$) of \cref{step:loop-phase-est,step:loop-collapsing,step:loop-svt} on a state with Jordan weights $(q_j y^{\dan}_j/\widetilde{\zeta}_r)_j$ is
    \begin{align*} \sum_j \frac{q_j y^{\dan}_j}{\widetilde{\zeta}_r} \cdot \frac{\poly(\secp)}{\sqrt{q_j}} &\leq \poly(\secp) \sqrt{\sum_j \frac{q_j y^{\dan}_j}{\widetilde{\zeta}_r q_j} }\\
    &= \poly(\secp) \sqrt{ \frac{1}{\widetilde{\zeta}_r } }. 
    \end{align*}
    
    \noindent This completes the proof of \cref{lemma:svt-runtime-r}. 
\end{proof}

\noindent By \cref{lemma:svt-runtime-r}, along with the fact that $\Pi_{p, \eps}$ is implemented in a fixed $\poly(\secp)/\sqrt{p}$ time, the expected running time of~\cref{step:loop-phase-est,step:loop-collapsing,step:loop-svt} in $\mathsf{Hyb}_2$ is:
\begin{align}
    \sum_{r \in R} \frac{\widetilde{\zeta}_r}{\widetilde{\zeta}_R} \cdot  \frac{\poly(\secp)}{\sqrt{\widetilde{\zeta}_r  p}} &= \frac{\poly(\lambda)}{\sqrt{ p}} \sum_{r \in R} \frac{\widetilde{\zeta}_r}{\widetilde{\zeta}_R} \cdot  \frac 1 {\sqrt{\widetilde{\zeta}_r}} \nonumber \\
    &\leq \frac{\poly(\lambda)}{\sqrt{ p}} \sqrt{\sum_{r \in R} \frac{\widetilde{\zeta}_r}{\widetilde{\zeta}_R} \cdot  \frac 1 {\widetilde{\zeta}_r}} \nonumber\\
    &= \frac{\poly(\lambda)}{\sqrt{ p}} \sqrt{ \frac{\abs{R}}{\widetilde{\zeta}_R}} \nonumber\\
    &= \frac{\poly(\lambda)}{\sqrt{ p}} \sqrt{\frac 1 { \widetilde{p}_{\sU}}} \nonumber \\
    &\leq \frac{\poly(\lambda)}{ \sqrt{p (p - 2\sqrt{\delta})}} \nonumber \\ &\leq \frac{\poly(\secp)}{p}
\end{align} 
where the first inequality is an application of Jensen's inequality, the second inequality holds by \cref{claim:fake-p-inequality}, and the last inequality holds by the abort condition in \cref{step:variable-phase-est} ($p$ drops by a factor of at most $2$ in the entire process). 

This completes the proof of \cref{lemma:running-time-p}.

\end{proof}

Finally, combining \cref{lemma:running-time-p} with \cref{lemma:step-1-2} along with the fact that throughout the extraction procedure, the updated value of $p$ is at most a factor of $2$ smaller than the initial output of \cref{step:variable-phase-est}, we conclude that the overall expected running time of \cref{step:extract-transcript} is at most $\mathbb E[\poly(\secp)/p] \leq \poly(\secp)$, completing the proof of \cref{thm:expected-qpt}.

\subsubsection{Correctness of the repair step}
\label{subsec:repair-runtime}
In this section, we prove that $\Extract$ aborts with negligible probability (property (2) of \cref{thm:high-probability-extraction}). 

\begin{lemma}\label{lemma:negl-abort}
    The probability that the procedure aborts is negligible.
\end{lemma}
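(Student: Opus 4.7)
The plan is to union-bound over the explicit abort events: the abort in~\cref{step:variable-phase-est}, the $k$ potential aborts in~\cref{step:loop-reestimate}, the $k$ potential aborts in~\cref{step:loop-amplify-C}, and the global timeout after $O(k)/\sqrt{\delta}$ steps. The timeout is hit with probability at most $\sqrt{\delta} = \negl(\secp)$ by Markov's inequality applied to the $\poly(\secp)$ expected-runtime bound of~\cref{thm:expected-qpt}.

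For~\cref{step:variable-phase-est}, I will decompose the initial state as $\ket{\psi} \otimes \ket{+_R} = \sum_j \alpha_j \ket{v_{j,1}}$ in the $(\sU,\sC)$-Jordan basis. Conditioned on $\sC$ succeeding in~\cref{step:first-measurement} (which occurs with probability $\gamma = \sum_j |\alpha_j|^2 p_j$), the residual state is a Jordan mixture with weights $|\alpha_j|^2 p_j / \gamma$. By~\cref{lemma:varest}, $\VarEstimate$ outputs $p \in [p_j/4, p_j]$ with probability $1-\delta$ on a Jordan eigenvector, so the \emph{unconditional} probability of passing~\cref{step:first-measurement} and then aborting in~\cref{step:variable-phase-est} is at most $\sum_{j : p_j \leq 4\secp k \sqrt{\delta}} |\alpha_j|^2 p_j + \delta \leq 4\secp k\sqrt{\delta} + \delta = \negl(\secp)$.

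For the aborts within the main loop, I will prove by induction the invariant that at the start of each iteration $i$, conditioned on no prior abort, the state on $\RegH \otimes \RegR$ lies in $\image(\Pi_\sC)$ and has $(\sU,\sC)$-Jordan spectrum supported on $\{j : p_j \geq p_{\mathrm{current}}\}$, up to trace distance $O(i \sqrt{\delta})$. The base case follows from the previous paragraph. To step from iteration $i$ to $i+1$, I trace through the substeps. First, \cref{step:loop-reestimate} accepts with probability $1-\delta$ by~\cref{thm:svdisc}, and by~\cref{claim:reest-good-state} the post-state keeps the Jordan bound after the $p := p-\eps$ update. Next, \cref{step:loop-measure-r} is analyzed via the pseudoinverse state $\dmh^{(\sU,1)}_{\RegH}$ of~\cref{sec:extractor-analysis-pseudoinverse-state}, which by~\cref{lemma:pseudoinverse} has $(\sU,\sC)$-Jordan spectrum matching the pre-measurement state. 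Then \cref{step:loop-phase-est} outputs a classical $q$ that, with probability $1-\delta$, is a constant-factor lower bound on the corresponding $(\sD_r, \sG_{p,\eps,\delta})$-singular value, by~\cref{lemma:varest}. The partial collapsing measurement in~\cref{step:loop-collapsing} disturbs the state, but since the event ``the state after~\cref{step:loop-svt} lies in $\image(\Pi_{p,\eps,\delta})$'' is efficiently checkable through the coherent threshold unitary $U_{p,\eps,\delta}$, partial collapsing with respect to $f_i$ (\cref{def:partial-collapsing-protocol}) implies this success probability is perturbed by at most $\negl(\secp)$. Hence \cref{step:loop-svt} succeeds with probability $1-\delta-\negl(\secp)$ by~\cref{thm:svt} applied with the classical $q$, and \cref{step:loop-discard-w,step:loop-amplify-C} together fail with probability $O(\delta) + \negl(\secp)$, dominated by the $\Transform_p[\sU \to \sC]$ correctness bound. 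Summing these per-step failure probabilities across one iteration gives $O(\delta) + \negl(\secp)$, and over all $k$ iterations yields a total abort probability of $O(k\delta) + \negl(\secp) = \negl(\secp)$.

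The main obstacle will be handling~\cref{step:loop-collapsing} rigorously: partial collapsing is a computational statement, so the argument must explicitly express ``$\Transform$ lands in $\image(\Pi_{p,\eps,\delta})$'' as an efficiently computable predicate on the state entering~\cref{step:loop-svt}, and then invoke~\cref{def:partial-collapsing-protocol} by comparing the two experiments in which $f_i$ is or is not measured. This is precisely why the final iteration may skip the collapsing measurement via the ``If $i = k$, go to Step 4'' branch, so that the analysis only requires partial collapsing with respect to $f_1,\dots,f_{k-1}$, matching the hypothesis of~\cref{thm:high-probability-extraction}. A secondary subtlety is that the classical estimate $q$ produced in~\cref{step:loop-phase-est} is used as a parameter of the fixed-runtime $\Transform$ in~\cref{step:loop-svt}; I must argue that $q$ remains a valid lower bound on the relevant singular value even after the disturbance from~\cref{step:loop-collapsing}, which again reduces to the same computational indistinguishability against an efficiently checkable event.
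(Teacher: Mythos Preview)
Your per-iteration bound of ``$O(\delta)+\negl(\secp)$'' for \cref{step:loop-discard-w,step:loop-amplify-C} is not justified, and this is the crux of the lemma. Landing in $\image(\Pi_{p,\eps,\delta})$ after \cref{step:loop-svt} says nothing direct about the $(\sU,\sC)$-Jordan spectrum of the reduced state on $\RegH$ after \cref{step:loop-discard-w}: $\Pi_{p,\eps,\delta}$ lives on $\RegH\otimes\RegW$, while the Jordan projectors live on $\RegH\otimes\RegR$, and tracing out $\RegW$ does not convert one into the other. The paper's analysis (\cref{claim:abort-chosen-p}) instead uses the pseudoinverse lemma to identify the post-\cref{step:loop-svt} state, for each measured $r$, with $\Pi_{p,\eps,\delta}$ applied to a normalized pseudoinverse state $\widetilde{\dmw}^{(\sD,r,1)}=\widetilde{\dmh}^{(\sD,r,1)}\otimes\ketbra{0}_{\RegW}$. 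After applying $U_{p,\eps,\delta}$ and tracing out $\RegW$, the leakage into $\SProj[\Jor]{<p-\eps}$ is bounded by
\[
\frac{1}{\widetilde{\zeta}_R}\sum_{r\in R}\Tr\!\big(\SProj[\Jor]{<p-\eps}\,M_{p,\eps,\delta}\,\widetilde{\dmh}^{(\sD,r,1)}\,M_{p,\eps,\delta}^\dagger\big)\;\leq\;\frac{|R|\,\delta}{\widetilde{\zeta}_R}\;\approx\;\frac{\delta}{p}\,,
\]
and together with the accumulated trace-distance errors (notably \cref{claim:tau-trace-distance}, which contributes $O(\sqrt{\delta}/p)$), the per-iteration abort bound with \cref{step:loop-collapsing} removed is $O(\sqrt{\delta}/p)$, not $O(\delta)$. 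The paper then crucially invokes \cref{lemma:step-1-2}, i.e.\ $\mathbb{E}[X_1/p]=O(1)$, to convert this into an unconditional $k\cdot O(\sqrt{\delta})\cdot\mathbb{E}[X_1/p]=O(k\sqrt{\delta})$. Your argument lacks this averaging step; if you try to substitute the worst-case bound $p\geq\secp k\sqrt{\delta}$ from \cref{step:variable-phase-est}, you only get $O(\sqrt{\delta}/p)\leq O(1/(\secp k))$ per iteration and hence $O(1/\secp)$ overall, which is not negligible.

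Your handling of the collapsing step is also structurally off. Partial collapsing is a computational indistinguishability statement: it preserves probabilities of efficiently checkable events, but it does \emph{not} preserve your trace-distance invariant on the state. So you cannot carry the Jordan-concentration invariant through \cref{step:loop-collapsing} to the next iteration. The paper sidesteps this by first removing \emph{all} collapsing measurements, proving the statistical bound via \cref{claim:abort-chosen-p}, and then invoking collapsing once at the end on the global abort event (which is efficiently computable after truncating the procedure via \cref{thm:expected-qpt}).
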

\begin{proof}
    By \cref{thm:expected-qpt}, the probability that the procedure aborts because it ran for too long is $O(\poly(\secp)/\sqrt{\delta}) = \negl(\secp)$.

    By \cref{lemma:step-1-2}, $\mathbb{E}[1/p \cdot X] = O(1)$, where $X$ is the indicator for whether \cref{step:first-measurement} outputs $1$. Hence by \cref{claim:abort-chosen-p} (proven below) and \cref{cor:threshold-est-almost} (which implies that the first iteration of \cref{step:loop-reestimate} only aborts with negligible probability), the probability that any iteration of the loop aborts when we remove \cref{step:loop-collapsing} is at most
    \begin{equation*}
        k \cdot O(\sqrt{\delta}) \mathbb{E}[1/p \cdot X] = O(\sqrt{\delta}).
    \end{equation*}
    Then by the collapsing guarantee (applied to the measurements of $y_1, \hdots, y_{k-1}$; it is not necessary for $y_k$), and by \cref{thm:expected-qpt}, the probability that any iteration of the loop aborts is negligible.
\end{proof}

\begin{claim}
    \label{claim:abort-chosen-p}
    Let $\dmr \in \Hermitians{\RegH \otimes \RegR}$ be a state such that $\Tr(\BProj{\sC} \dmr_{\RegH,\RegR}) = 1$, and consider running two iterations of \cref{step:extract-transcript} in sequence on $\dmr_{\RegH,\RegR}$, with the following modifications:
    
    \begin{itemize}
        \item \cref{step:loop-collapsing} is not applied, and
        \item The $O(k)/\sqrt{\delta}$ runtime cutoff has been removed.
    \end{itemize}
    Then, for any choice of $p \in [0,1]$, the probability that the first iteration (with initial value $p$) does not abort in \cref{step:loop-reestimate} and the second iteration aborts in \cref{step:loop-reestimate} is at most $O(\sqrt{\delta}/p)$.
    
    Also, the probability that an iteration of \cref{step:extract-transcript} (where \cref{step:loop-collapsing} is not applied) does not abort in \cref{step:loop-reestimate} but does abort in \cref{step:loop-amplify-C} is at most $O(\sqrt{\delta}/p)$.
\end{claim}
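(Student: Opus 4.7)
The plan is to track the state through a single iteration of \cref{step:extract-transcript} (with \cref{step:loop-collapsing} removed) using the pseudoinverse framework from \cref{sec:extractor-analysis-pseudoinverse-state}, and to show that with probability $1 - O(\sqrt{\delta}/p)$ the state at the end of \cref{step:loop-discard-w} is approximately in $\image(\Pi^{\Jor}_{\geq p-\varepsilon})$ (with the twice-decremented $p$), which will then imply that \cref{step:loop-amplify-C} and the subsequent iteration's \cref{step:loop-reestimate} both succeed except with probability $\delta$.

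First I would condition on \cref{step:loop-reestimate} outputting $b = 1$; by \cref{claim:reest-good-state} the resulting state $\dmr^{(\sC)}_{\RegB,\RegH,\RegR}$ satisfies $\Tr(\Pi^{\Jor}_{\mathsf{Bad}} \dmr^{(\sC)}) \leq \delta$, and by gentle measurement it is within trace distance $2\sqrt{\delta}$ of a state $\dmr'^{(\sC)}$ supported on $\Pi^{\Jor}_{\geq p}$ (where $p$ has been updated). I would then pass to the pseudoinverse state $\dmr^{(\sU,1)} = \dmh^{(\sU,1)} \otimes \ketbra{+_R}$ via \cref{lemma:pseudoinverse}, which satisfies $p_{\sU} \geq p$ by \cref{claim:pseudoinverse-win-probability} and is supported on the same Jordan subspaces as $\dmr'^{(\sC)}$. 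Using \cref{claim:stratify-r} and \cref{claim:tau-trace-distance}, the joint state (on $\RegR$ recording $r$, and on $\RegH\otimes \RegW$) produced by \cref{step:loop-measure-r} is within trace distance $4\sqrt{\delta}/p$ of the mixture $\widetilde{\dmwr}$ in which each $r$ occurs with probability $\widetilde{\zeta}_r/\widetilde{\zeta}_R$ and the residual state is $\widetilde{\dmw}^{(r,1)}_{\RegH,\RegW}$ — a state supported on $\image(\Pi_r)$ whose Jordan value with respect to $(\sD_r, \sG_{p,\varepsilon,\delta})$ equals $\widetilde{\zeta}_r$.

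Conditional on a particular $r$, the variable-runtime transform implemented by \cref{step:loop-phase-est,step:loop-svt} takes $\widetilde{\dmw}^{(r,1)}$ to a state in $\image(\Pi_{p,\varepsilon,\delta})$ except with probability $\delta$, by \cref{thm:vrsvt} and the fact that $\widetilde{\dmw}^{(\sU,1)}$ lies in $\image(\Pi_{p,\varepsilon})$ so that the Jordan value $\widetilde{\zeta}_r$ is positive on the support of $\widetilde{\dmw}^{(r,1)}$. Applying $U_{p,\varepsilon,\delta}$ and discarding $\RegW$ in \cref{step:loop-discard-w} then leaves the $\RegH$-register (tensored with a fresh $\ket{+_R}_{\RegR}$) in $\image(\Pi^{\Jor}_{\geq p - \varepsilon})$ except with probability $\delta$, by \cref{cor:threshold-almost}. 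Combining these error terms ($2\sqrt{\delta}$ for the first gentle measurement, $4\sqrt{\delta}/p$ for the pseudoinverse swap, and $\delta$ for the transform and threshold guarantees) yields that the $\RegH$-register ends up in $\image(\Pi^{\Jor}_{\geq p - \varepsilon})$ (with the twice-decremented $p$) except with probability $O(\sqrt{\delta}/p)$.

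Finally, I would finish both parts by invoking the subspace-invariance guarantees of $\Transform$ and $\Threshold$. For Part 2, \cref{step:loop-amplify-C} applies $\Transform_p[\sU \to \sC]$ on a state of Jordan value $\geq p$ (with respect to $(\sU,\sC)$), so by \cref{thm:svt} it succeeds with probability $1 - \delta$; for Part 1, after the successful transform at \cref{step:loop-amplify-C} the state of iteration two lies in $\image(\sC) \cap \image(\Pi^{\Jor}_{\geq p})$ since $\Transform$ preserves Jordan subspaces, and the next iteration's \cref{step:loop-reestimate} runs $\Threshold^{\sC,\sU}_{p,\varepsilon,\delta}$, which by \cref{thm:svdisc} outputs $1$ with probability $1-\delta$. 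In both cases the total abort probability is $O(\sqrt{\delta}/p)$. The main obstacle I anticipate is ensuring that the subspace-invariance guarantees of $\Transform$ and $\VarEstimate$ in \cref{thm:vrsvt} combine correctly with the conditioning-via-measurement of $r$ at \cref{step:loop-measure-r}: the approximation $\widetilde{\dmw}^{(r,1)}$ must genuinely be supported on $(\sD_r, \sG_{p,\varepsilon,\delta})$-Jordan subspaces with eigenvalue $\widetilde{\zeta}_r$, and care is needed to verify that the (small) shift from $\dmw^{(\sU,1)}$ to its gentle-measurement variant $\widetilde{\dmw}^{(\sU,1)}$ does not invalidate the singular-value lower bound used inside the fast $\Transform$ algorithm.
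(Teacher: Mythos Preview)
Your outline tracks the paper correctly through the first pseudoinverse (\cref{lemma:pseudoinverse}), \cref{claim:stratify-r}, and \cref{claim:tau-trace-distance}, and the endgame invoking \cref{thm:svt} and \cref{thm:svdisc} is fine. But there is a genuine gap at \cref{step:loop-discard-w}.

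You claim that once the state on $\RegH\otimes\RegW$ is (approximately) in $\image(\Pi_{p,\varepsilon,\delta})$, applying $U_{p,\varepsilon,\delta}$ and tracing out $\RegW$ leaves $\RegH$ in $\image(\Pi^{\Jor}_{\geq p-\varepsilon})$ up to $\delta$ error, ``by \cref{cor:threshold-almost}.'' This does not follow. Membership in $\image(\Pi_{p,\varepsilon,\delta})=U_{p,\varepsilon,\delta}^\dagger(\RegH\otimes\RegW\otimes\ket{1}_{\RegB})$ only guarantees $\RegB=1$ after $U_{p,\varepsilon,\delta}$; it places no constraint on the $(\sU,\sC)$-Jordan spectrum of the reduced state on $\RegH$ once $\RegW$ is discarded. \cref{cor:threshold-almost} is about running $\Threshold$ on a state of $\RegH$ with a \emph{fresh} ancilla, not about tracing out an already-entangled $\RegW$ from an arbitrary vector in $\image(\Pi_{p,\varepsilon,\delta})$.

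The paper closes this gap with a \emph{second} application of the pseudoinverse lemma, now with respect to $(\sD_r,\sG_{p,\varepsilon,\delta})$. Using \cref{claim:jordan-rotate} to rotate the post-$\Transform$ state $\widetilde{\dmw}^{(3e,r,1)}$ exactly into $\image(\Pi_{p,\varepsilon,\delta})$ while preserving its Jordan spectrum, one obtains a state $\widetilde{\dmw}^{(\sG,r,1)}$ whose $(\sD_r,\sG)$-pseudoinverse $\widetilde{\dmw}^{(\sD,r,1)}$ lies in $\image(\Pi_r)$ and hence has the form $\widetilde{\dmh}^{(\sD,r,1)}\otimes\ketbra{0}_{\RegW}$. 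This is the crucial step: it writes the post-$\Transform$ state as $\Pi_{p,\varepsilon,\delta}(\widetilde{\dmh}^{(\sD,r,1)}\otimes\ketbra{0}_{\RegW})\Pi_{p,\varepsilon,\delta}/\widetilde{\zeta}_r$, so that \cref{step:loop-discard-w} becomes precisely $M_{p,\varepsilon,\delta}\,\widetilde{\dmh}^{(\sD,r,1)}\,M_{p,\varepsilon,\delta}^\dagger/\widetilde{\zeta}_r$ on $\RegH$. Only now can the $\Threshold$ guarantee on $\RegH$ be invoked, and it yields a per-$r$ bad-mass bound of $\delta/\widetilde{\zeta}_r$ (not $\delta$), which after averaging over $r$ with weight $\widetilde{\zeta}_r/\widetilde{\zeta}_R$ gives $|R|\delta/\widetilde{\zeta}_R=O(\delta/p)$. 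Without this second pseudoinverse-and-rotate step, the passage from ``in $\image(\Pi_{p,\varepsilon,\delta})$'' to ``in $\image(\Pi^{\Jor}_{\geq p-\varepsilon})$ on $\RegH$'' is unjustified.
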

\begin{proof}
    For a projector $\Pi$, we write $\Pi^{\RegB}$ to denote the projection
    \[
        \ketbra{0}_{\RegB} \otimes \Id + \ketbra{1}_{\RegB} \otimes \Pi.
    \]
    Let $\RegB$ store the output of $\Threshold$ in the first application of \cref{step:loop-reestimate}. Recall that in \cref{sec:extractor-analysis-pseudoinverse-state} we have defined the following states:
    \begin{itemize}
        \item $\dmr_{\RegB,\RegH,\RegR}^{(\sC)}$ denotes the state after applying \cref{step:loop-reestimate} (i.e., coherently applying $\Threshold$ to $\dmr_{\RegH,\RegR}$ where the output is stored on $\RegB$). The extraction procedure now re-defines/updates $p := p - \epsilon$. Note that $\Tr((\SProj[\Jor]{\geq p})^{\RegB} \dmr_{\RegB,\RegH,\RegR}^{(\sC)}) \geq 1 - \delta$ (\cref{claim:analysis-pi-jor-good}).
        \item $\dmr_{\RegH,\RegR}^{(\sC,b)} \eqdef \bra{b}_{\RegB} \dmr_{\RegB,\RegH,\RegR}^{(\sC)} \ket{b}_{\RegB}/q$ where $q \eqdef \Tr(\bra{b}_{\RegB} \dmr_{\RegB,\RegH,\RegR}^{(\sC)} \ket{b}_{\RegB})$. We may assume $q > 0$ or else the claim holds trivially.
        \item $\dmr_{\RegB,\RegH,\RegR}'^{(\sC)} \eqdef \frac{\left(\SProj[\Jor]{\geq p}\right)^{\RegB} \dmr_{\RegB,\RegH,\RegR}^{(\sC)} \left(\SProj[\Jor]{\geq p}\right)^{\RegB}}{\Tr(\left(\SProj[\Jor]{\geq p}\right)^{\RegB} \dmr_{\RegB,\RegH,\RegR}^{(\sC)} )}$ is the result of projecting $\dmr_{\RegB,\RegH,\RegR}^{(\sC)}$ onto eigenvalues $\geq p$ (for the Jordan decomposition corresponding to $\BProj{\sC},\BProj{\sU}$) when $\RegB = 1$.
        \item $\dmr_{\RegH,\RegR}'^{(\sC, 1)} \eqdef \bra{1}_{\RegB} \dmr_{\RegB,\RegH,\RegR}'^{(\sC)} \ket{1}_{\RegB}/q'$ where $q' \eqdef \Tr( \bra{1}_{\RegB} \dmr_{\RegB, \RegH,\RegR}'^{(\sC,1)} \ket{1}_{\RegB})$. Note that $\Tr(\BProj{\sC}\dmr_{\RegH,\RegR}'^{(\sC, 1)}) =1$.
        \item $\dmr_{\RegH,\RegR}^{(\sU,1)}$ denotes the pseudoinverse of $\dmr_{\RegH,\RegR}'^{(\sC, 1)}$ with respect to $(\sU,\sC)$ as guaranteed by the pseudoinverse lemma (\cref{lemma:pseudoinverse}); by definition, $\Tr(\BProj{\sU}\dmr_{\RegH,\RegR}^{(\sU, 1)}) =1$. Moreover $\Tr(\BProj{\sC}\dmr_{\RegH,\RegR}^{(\sU, 1)}) \geq p$ (\cref{claim:pseudoinverse-win-probability}) since all the $(\BProj{\sC},\BProj{\sU})$-Jordan-eigenvalues of $\dmr_{\RegH,\RegR}'^{(\sC, 1)}$ are at least $p$, which implies the same property holds for the pseudoinverse state.
        \item $\dmh_{\RegH}^{(\sU,1)} \eqdef \Tr_{\RegR}(\dmr_{\RegH,\RegR}^{(\sU,1)})$. Note that since $\Tr(\BProj{\sU}\dmr_{\RegH,\RegR}^{(\sU, 1)}) =1$, we have $\dmr_{\RegH,\RegR}^{(\sU, 1)} = \dmh_{\RegH}^{(\sU,1)} \otimes \ketbra{+}_{\RegR}$.
        \item $\dmr_{\RegH,\RegR}'^{(3b,1)} \eqdef \frac{\sum_{r} \BProj{V,r} \dmh_{\RegH}^{(\sU,1)} \BProj{V,r} \otimes \ketbra{r}}{|R| \cdot \Tr(\BProj{\sC} \dmr_{\RegH,\RegR}^{(\sU,1)})} $ is within trace distance $2\sqrt{\delta}$ of the state after measuring the $\RegR$ register in~\cref{step:loop-measure-r} but \emph{before} discarding $\RegR$ (\cref{claim:stratify-r}). 
        \item $\dmw_{\RegH,\RegW}^{(\sU, 1)} \eqdef \dmh_{\RegH}^{(\sU,1)} \tensor \ketbra{0}_{\RegW}$. We have that $\Tr(\Pi_{p, \eps, \delta}\dmw_{\RegH,\RegW}^{(\sU, 1)}) \geq 1-\delta $ because $\Threshold_{p,\eps,\delta}$ outputs $1$ on $\dmr_{\RegH,\RegR}^{(\sU,1)}$ with probability $1-\delta$ by the Jordan spectrum guarantee of \cref{lemma:pseudoinverse}.
        \item $\widetilde{\dmw}_{\RegH,\RegW}^{(\sU,1)} \eqdef \frac{\Pi_{p, \eps, \delta}\dmw^{(\sU, 1)}\Pi_{p, \eps, \delta}}{\Tr(\Pi_{p, \eps, \delta}\dmw^{(\sU, 1)} )}$. By the gentle measurement lemma (\cref{lemma:gentle-measurement}), we have $d(\dmh_{\RegH}^{(\sU,1)} \otimes \ketbra{0}_{\RegW}, \widetilde{\dmw}_{\RegH,\RegW}^{(\sU,1)}) \leq 2\sqrt{\delta}$.
        \item $\widetilde{\dmw}_{\RegH,\RegW}^{(r,1)} \eqdef \frac{\Pi_r \widetilde{\dmw}_{\RegH,\RegW}^{(\sU,1)} \Pi_r }{\widetilde \zeta_r}$ where $\widetilde \zeta_r \eqdef \Tr(\Pi_r \widetilde{\dmw}_{\RegH,\RegW}^{(\sU,1)})$ for all $r \in R$.
        \item $\widetilde{\dmwr}_{\RegH,\RegW,\RegR} \eqdef \sum_r \frac{\widetilde \zeta_r}{\widetilde \zeta_R} \widetilde{\dmw}_{\RegH,\RegW}^{(r,1)}\otimes \ketbra{r}$ for $\widetilde{\zeta}_R = \sum_r \widetilde{\zeta}_r$. By \cref{claim:tau-trace-distance}, we have that $\widetilde{\dmwr}_{\RegH,\RegW,\RegR}$ is within trace distance $\frac{4\sqrt{\delta}}{p}$ of the state $\dmwr = \dmr_{\RegH,\RegR}'^{(3b,1)} \otimes \ketbra{0}_{\RegW}$.
    \end{itemize}
    
    We now consider the application of the variable-runtime singular vector transform performed across \cref{step:loop-phase-est,step:loop-svt} (recall that we omit \cref{step:loop-collapsing} for this analysis). We consider applying these steps to the state $\widetilde{\dmw}_{\RegH,\RegW}^{(r,1)}$. Note that \cref{step:loop-phase-est,step:loop-svt} commute with $\Meas{\Jor}[\sD_r,\sG_{p,\varepsilon,\delta}]$. Hence writing $\widetilde{\dmw}_{\RegH,\RegW}^{(3e,r,1)}$ for the state after applying \cref{step:loop-phase-est,step:loop-svt} to $\widetilde{\dmw}_{\RegH,\RegW}^{(r,1)}$, we have
    \[ \Tr(\SProj[\dan]{j} \widetilde{\dmw}_{\RegH,\RegW}^{(3e,r,1)}) = \Tr(\SProj[\dan]{j} \widetilde{\dmw}_{\RegH,\RegW}^{(r,1)}) = \frac{q_j \Tr(\SProj[\dan]{j}\widetilde{\dmw}_{\RegH,\RegW}^{(\sU,1)})}{\widetilde{\zeta}_r}~ \]
    where $\SProj[\dan]{j}$ is the $j$-th element of $\Meas{\Jor}[\sD_r,\sG_{p,\varepsilon,\delta}]$. Let $\SProj[\dan]{\sG_{p,\varepsilon,\delta},\mathsf{stuck}}$ be defined (analogous to $\SProj[\Jor]{\mathsf{stuck}}$  in~\cref{claim:jordan-rotate}) as $\SProj[\Jor]{\mathsf{stuck}} \eqdef \sum_{j \in S} \SProj[\dan]{j}$ where $S$ is the set of all $j$ where $\RegS_j$ is a one-dimensional Jordan subspace $\RegS_j \in \image(\Id-\BProj{p,\varepsilon,\delta})$. We now invoke~\cref{claim:jordan-rotate} to ``rotate'' the state $\widetilde{\dmw}_{\RegH,\RegW}^{(3e,r,1)}$ into $\image(\BProj{p,\varepsilon,\delta})$ while preserving the Jordan spectrum, which is possible as long as the component of the state in $\SProj[\dan]{\sG_{p,\varepsilon,\delta},\mathsf{stuck}}$ is $0$. This is satisfied here because $\Tr(\SProj[\dan]{\mathsf{stuck}} \widetilde{\dmw}_{\RegH,\RegW}^{(3e,r,1)}) = \Tr(\SProj[\dan]{\mathsf{stuck}} \widetilde{\dmw}_{\RegH,\RegW}^{(\sU,1)}) = 0$ since $\widetilde{\dmw}_{\RegH,\RegW}^{(\sU,1)}$ was defined so that $\Tr(\BProj{p,\varepsilon,\delta}\widetilde{\dmw}_{\RegH,\RegW}^{(\sU,1)}) = 1$.
    
    Additionally, by the guarantee of \cref{thm:vrsvt}, $\Tr(\BProj{p,\varepsilon,\delta} \widetilde{\dmw}_{\RegH,\RegW}^{(3e,r,1)}) \geq 1 - \delta$. Hence by \cref{claim:jordan-rotate}, there exists a state $\widetilde{\dmw}_{\RegH,\RegW}^{(\sG,r,1)}$ with the same Jordan spectrum as $\widetilde{\dmw}_{\RegH,\RegW}^{(3e,r,1)}$, $\Tr(\ProjP \widetilde{\dmw}_{\RegH,\RegW}^{(\sG,r,1)}) = 1$ and $d(\widetilde{\dmw}_{\RegH,\RegW}^{(\sG,r,1)},\widetilde{\dmw}_{\RegH,\RegW}^{(3e,r,1)}) \leq \sqrt{\delta}$. Note that $\widetilde{\dmw}_{\RegH,\RegW}^{(\sG,r,1)}$ also has the same Jordan spectrum of $\widetilde{\dmw}_{\RegH,\RegW}^{(r,1)}$

    Consider the pseudoinverse state $\widetilde{\dmw}_{\RegH,\RegW}^{(\sD,r,1)}$ under $(\sD_r,\sG_{p,\varepsilon,\delta})$ of $\widetilde{\dmw}_{\RegH,\RegW}^{(\sG,r,1)}$. Since the Jordan spectrum of $\widetilde{\dmw}_{\RegH,\RegW}^{(\sG,r,1)} \in \image(\ProjP)$ is identical to the Jordan spectrum of $\widetilde{\dmw}_{\RegH,\RegW}^{(r,1)} \in \image(\BProj{r})$, and $\widetilde{\dmw}_{\RegH,\RegW}^{(\sU,1)}$ is the pseudoinverse under $(\sG_{p,\varepsilon,\delta},\sD_r)$ of $\widetilde{\dmw}_{\RegH,\RegW}^{(r,1)}$, it follows that 
    \[\Tr( \ProjP  \widetilde{\dmw}_{\RegH,\RegW}^{(\sD,r,1)}) = \Tr( \BProj{r} \widetilde{\dmw}_{\RegH,\RegW}^{(\sU,1)}) = \widetilde{\zeta}_r,\]
    and moreover $\widetilde{\dmw}_{\RegH,\RegW}^{(\sG,r,1)} = \ProjP \widetilde{\dmw}_{\RegH,\RegW}^{(\sD,r,1)} \ProjP/\widetilde{\zeta}_r$.
    
    Hence, if the state before \cref{step:loop-phase-est} is $\sum_r \frac{\widetilde{\zeta}_r}{\widetilde{\zeta}_R} \widetilde{\dmw}^{(r,1)}_{\RegH,\RegW}$ (which is $\frac{4\sqrt{\delta}}p$ close to the actual state before \cref{step:loop-phase-est}) then the state after \cref{step:loop-svt} is $O(\sqrt{\delta})$-close to the following state:
    \begin{equation*}
        \widetilde{\dmw}_{\RegH,\RegW}^{(3e,1)} \eqdef \frac{1}{\widetilde{\zeta}_R} \sum_{r} \ProjP \left( \widetilde{\dmw}_{\RegH,\RegW}^{(\sD,r,1)}\right) \ProjP.
    \end{equation*}
    Therefore, writing $\widetilde{\dmw}_{\RegH,\RegW}^{(\sD,r,1)} = \widetilde{\dmh}_{\RegH}^{(\sD,r,1)}\tensor \ketbra{0}_{\RegW} $ ($\widetilde{\dmw}_{\RegH,\RegW}^{(\sD,r,1)}$ has this form since $\widetilde{\dmw}_{\RegH,\RegW}^{(\sD,r,1)} \in \image(\BProj{r})$), the state at the end of \cref{step:loop-discard-w} is $O(\sqrt{\delta})$-close to
    \begin{equation*}
        \widetilde{\dmh}_{\RegH}^{(3f,1)} \eqdef  \frac{1}{\widetilde{\zeta}_R} \sum_{r\in R} M_{p,\varepsilon,\delta} \left(  \widetilde{\dmh}_{\RegH}^{(\sD,r,1)}\right) M_{p,\varepsilon,\delta}^{\dagger},
    \end{equation*}
    where $M_{p,\eps,\delta}$ is the measurement element of $\sT_{p,\eps,\delta}$ that corresponds to a $1$ outcome.
    
    By the guarantee of $\Threshold$ (\cref{thm:svdisc}), it holds that for all states $\dmh \in \Hermitians{\RegH}$,
    \begin{equation*}
        \Tr(\SProj[\Jor]{< p-\eps} M_{p,\eps,\delta} \dmh M_{p,\eps,\delta}^{\dagger}) \leq \delta,
    \end{equation*}
    and so by linearity,
    \begin{align}
        \Tr(\SProj[\Jor]{< p-\eps} \widetilde{\dmh}_{\RegH}^{(3f,1)}) &= \frac{1}{\widetilde{\zeta}_R} \sum_{r\in R} \Tr(\SProj[\Jor]{< p-\eps}M_{p,\varepsilon,\delta} \left(  \widetilde{\dmh}_{\RegH}^{(\sD,r,1)}\right) M_{p,\varepsilon,\delta}^{\dagger}) \nonumber \\ &\leq \frac 1 {\widetilde{\zeta}_R} \cdot |R| \cdot \delta \nonumber \\
        &\leq \frac{\delta}{p-2\sqrt{\delta}} \label{eq:fake-p}\\ 
        &= O(\delta/p) \nonumber,
    \end{align}
    where \cref{eq:fake-p} holds by \cref{claim:fake-p-inequality}. It follows from the guarantees of the fixed-runtime singular vector transform (\cref{thm:svt}) that the state at the end of \cref{step:loop-amplify-C} is $O(\delta)$-close to the state $\widetilde{\dmr}_{\RegH,\RegR}^{(3g,1)} = \Transform_{p-\eps}[\sU \to \sC](\widetilde{\dmh}_{\RegH}^{(3f,1)}\tensor \ketbra{+_R}_{\RegR})$, which has the property that
    \[ \Tr((\Id-\BProj{\sC}) \SProj[\Jor]{\geq p-\eps} (\widetilde{\dmh}_{\RegH}^{(3f,1)}\otimes \ketbra{+_R}_{\RegR})) \leq \delta.\]
    
    Combining this with $\Tr(\SProj[\Jor]{< p-\eps} \widetilde{\dmh}_{\RegH}^{(3f,1)}) = O(\delta/p)$, we conclude that if the state before \cref{step:loop-phase-est} is $\sum_r \frac{\widetilde{\zeta}_r}{\widetilde{\zeta}_R} \widetilde{\dmw}^{(r,1)}_{\RegH,\RegW}$, then the probability that \cref{step:loop-amplify-C} aborts is at most $O(\delta/p)$. Additionally, the guarantee of $\Threshold$ (\cref{thm:svdisc}) implies that in the next iteration of \cref{step:extract-transcript}, the probability that \cref{step:loop-reestimate} aborts on $\widetilde{\dmr}_{\RegH,\RegR}^{(3g,1)}$ is also at most $O(\delta/p)$. By a trace distance argument, the probability that \cref{step:loop-amplify-C} or the subsequent \cref{step:loop-reestimate} aborts in a \emph{real} execution of \cref{step:extract-transcript} (with the modifications as in the statement of~\cref{claim:abort-chosen-p}) when the first \cref{step:loop-reestimate} did not abort is at most $O(\sqrt{\delta}/p)$. This completes the proof of \cref{claim:abort-chosen-p}. 
\end{proof}

\subsubsection{Correctness of Transcript Generation}

Finally, we prove property (3) of \cref{thm:high-probability-extraction}. 

\begin{lemma}
    For every $\tau_{\mathrm{pre}} = (\vk, a)$, let $\gamma = \gamma_{\vk, a}$ denote the initial success probability of $P^*$ conditioned on $\tau_{\mathrm{pre}}$. Then, if $\gamma > \delta^{1/3}$, the distribution $D_k$ on  $(r_1, \hdots, r_k)$ (conditioned on $(\vk, a)$ and a successful first execution) is $O(1/\gamma)$-admissible (\cref{def:admissible-dist}).
\end{lemma}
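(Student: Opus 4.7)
The plan is to construct a sampling procedure $\Samp$ that internally simulates $\Extract^{P^*}$ but replaces each quantum sampling of a challenge $r_i$ (\cref{step:loop-measure-r} of the main loop) with a classical rejection-sampling procedure that draws from the uniform oracle $O_R$. Crucially, $\Samp$ need not be efficient---the admissibility definition only bounds the expected number of oracle queries, so $\Samp$ may maintain an arbitrary internal (quantum) simulation of the extractor's state.

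First I would set up the translation. By \cref{claim:stratify-r}, conditioned on reaching the $i$-th iteration without aborting, the distribution over $r_i$ (from measuring $\RegR$ after applying $\Transform^{\sU,\sC}$) is $\zeta_r^{(i)}/\zeta_R^{(i)}$ up to $O(\sqrt{\delta})$ trace distance, where $\zeta_r^{(i)} = \Tr(\Pi_{V,r} \dmh_{\RegH}^{(\sU,1,i)})$ is the success probability of the $i$-th iteration's pseudoinverse state on challenge $r$, and $\zeta_R^{(i)}/|R| = p_{\sU}^{(i)} \geq p^{(i)}$ (\cref{claim:pseudoinverse-win-probability}). The key observation is that this distribution can be sampled by the following equivalent procedure: repeatedly draw $r \gets O_R$, and for each sample, accept with probability $\zeta_r^{(i)}$ (tested by coherently simulating $\Pi_{V,r}$ on the internal pseudoinverse state), otherwise discard and try again. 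By the standard rejection-sampling argument, the output distribution on accepted $r$'s is exactly $\zeta_r^{(i)}/\zeta_R^{(i)}$, and the expected number of queries per iteration is $1/p_{\sU}^{(i)}$.

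Then $\Samp$ proceeds as follows. It simulates \cref{step:first-measurement,step:variable-phase-est} internally (no oracle queries used). For each $i \in [k]$, it simulates \cref{step:loop-reestimate} to update the estimate $p^{(i)}$; then it performs the rejection-sampling translation above to obtain $r_i$ (while also simulating the corresponding post-measurement state, i.e., the projection by $\Pi_{V,r_i}$); then it simulates \cref{step:loop-phase-est,step:loop-collapsing,step:loop-svt,step:loop-discard-w,step:loop-amplify-C} internally. The outputs are $r_1,\dots,r_k$ read directly from the oracle response stream, placing $\Samp$ in the form required by \cref{def:q-admissible-dist}.

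To finish, I would bound the expected oracle-query count. By the aborts in \cref{step:variable-phase-est,step:loop-reestimate,step:loop-amplify-C}, whenever the simulated extractor has not aborted we have $p^{(i)} \geq p^{(1)}/2$ (since $p$ decreases by at most $\varepsilon = p^{(1)}/(4k)$ per iteration, for a total decrease of at most $p^{(1)}/2$), and by \cref{lemma:varest} together with the hypothesis $\gamma > \delta^{1/3}$, the initial estimate satisfies $p^{(1)} = \Omega(\gamma)$ with overwhelming probability. Hence each iteration uses $O(1/\gamma)$ expected oracle queries conditioned on non-abort, and the total expected query count (summed over the $k = O(1)$ iterations and weighted by the event of reaching each iteration) is $O(k/\gamma) = O(1/\gamma)$. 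The total trace-distance error accumulated across the $k$ iterations of the pseudoinverse translation is $O(k \sqrt{\delta}/\gamma) = \negl(\secp)$ (using $\gamma > \delta^{1/3}$ and $\delta = 2^{-\secp}$), which is absorbed into the $\negl(\secp)$ slack in \cref{def:q-admissible-dist}.

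The main obstacle to formalizing this cleanly is making sure the rejection-sampling reinterpretation correctly tracks the residual quantum state across iterations, since the extractor's state evolves through a sequence of steps (amplification, repair, re-estimation) that $\Samp$ must faithfully reproduce. However, because $\Samp$ is allowed to be arbitrarily powerful, it can simply carry the complete density matrix along internally; the only delicate point is verifying that, at the start of each iteration, the state used to define $\zeta_r^{(i)}$ really is (up to negligible trace distance) the pseudoinverse state guaranteed by \cref{lemma:pseudoinverse}, which has already been established in \cref{sec:extractor-analysis-pseudoinverse-state}.
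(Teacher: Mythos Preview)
Your approach is essentially the same as the paper's: both build $\Samp$ by internally simulating the extractor and replacing the quantum $\RegR$-measurement in each iteration by classical rejection sampling from $O_R$, using the pseudoinverse characterization (\cref{claim:stratify-r}) to identify the target distribution $\zeta_r/\zeta_R$ and the bound $p_{\sU}\ge p$ to control the per-iteration query count. The paper packages the single-iteration argument into a separate claim (which ``clones'' the pseudoinverse state and repeatedly measures $\BProj{V,r}$ on fresh copies), but this is just your rejection-sampling step phrased differently.

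Two small corrections. First, your claim that ``$p^{(1)}=\Omega(\gamma)$ with overwhelming probability'' is not correct: after conditioning on the initial $\sC$-success, the state has weight $|\alpha_j|^2 p_j/\gamma$ on subspace $j$, so the estimate $p^{(1)}\approx p_j$ can be arbitrarily smaller than $\gamma$ with inverse-polynomial probability. What \emph{is} true (and what the paper uses, via \cref{lemma:step-1-2}) is that $\mathbb{E}[1/p^{(1)}\mid\text{initial success}]=O(1/\gamma)$; combined with $p^{(i)}\ge p^{(1)}/2$ this gives the desired expected query bound directly. Second, $k$ is not $O(1)$ in general (it is $\poly(\secp)$, e.g.\ for Kilian's protocol), but this is harmless since it is absorbed into the $\poly(\secp)$ factor allowed by \cref{def:admissible-dist}.
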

This follows by appealing to the following claim in each round, making use of the fact that the expectation of $1/p$ conditioned on an accepting initial execution is equal\footnote{Here (and elsewhere) we informally make use of the fact that the ``current'' value of $p$ in any iteration of \cref{step:extract-transcript} is always at least $p_0/2$, where $p_0$ is the \emph{initial} estimated $p$.} to $1/\gamma$; the $O(\sqrt{\delta})$-closeness from the claim also degrades to $O(\sqrt{\delta}/\gamma)$ when conditioning on an accepting initial execution.

\begin{claim}
    Consider the distribution $D$ supported on $R \cup \{\bot\}$ obtained running a single iteration of \cref{step:extract-transcript} with parameter $p$ on an arbitrary state $\dmr \in \Hermitians{\RegH \otimes \RegR}$ with $\Tr(\BProj{\sC} \dmr) = 1$ (where $r \eqdef \bot$ if $\Extract$ aborts). There exists a procedure $\Samp$ that makes expected $O(1/p)$ queries to uniform sampling oracle $O_R$ (but can otherwise behave arbitrarily and inefficiently) that outputs a distribution $O(\sqrt{\delta})$-close to $D$, and if the output of $\Samp$ is not $\bot$ then is one of the responses to its oracle queries.
\end{claim}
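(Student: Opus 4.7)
The plan is to construct $\Samp$ as an inefficient classical simulation of the iteration in which the challenge measurement of \cref{step:loop-measure-r} is replaced by classical rejection sampling against the uniform oracle $O_R$.

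First, I would invoke the analysis of \cref{sec:extractor-analysis-pseudoinverse-state} to describe $D$. Let $q$ denote the probability that \cref{step:loop-reestimate} (run on $\dmr$) outputs $1$; this is a function of $\dmr$ that $\Samp$ can compute exactly, though inefficiently. By \cref{claim:reest-good-state} and the gentle measurement lemma (\cref{lemma:gentle-measurement}), conditional on a successful threshold the post-measurement state is within trace distance $2\sqrt{\delta}$ of the idealized state $\dmr'^{(\sC)}$ obtained by projecting onto Jordan subspaces with eigenvalue $\geq p$. The pseudoinverse lemma (\cref{lemma:pseudoinverse}) defines a state $\dmh^{(\sU,1)}$ on $\RegH$ for which \cref{claim:stratify-r} asserts that measuring $\RegR$ of $\dmr'^{(\sC)}$ yields $r$ with probability exactly $\zeta_r/\zeta_R$, where $\zeta_r \eqdef \Tr(\Pi_{V,r}\dmh^{(\sU,1)})$ and $\zeta_R \eqdef \sum_r \zeta_r = |R|\cdot p_{\sU}$. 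By \cref{claim:pseudoinverse-win-probability}, $p_{\sU} \geq p$. For each $r \in R$, let $q_r^{\bot}$ denote the conditional probability that the remaining steps of the iteration (\cref{step:loop-phase-est} through \cref{step:loop-amplify-C}) abort, given that $r$ was measured at \cref{step:loop-measure-r}; these quantities are again (inefficiently) computable from $\dmr$ and $r$.

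Then I would define $\Samp$: with probability $1-q$, output $\bot$ and halt without querying $O_R$; otherwise run the rejection-sampling loop---repeatedly query $r \gets O_R$, compute $\zeta_r$ inefficiently, accept the sample with probability $\zeta_r$, and re-query otherwise. Once $r$ is accepted, $\Samp$ outputs $\bot$ with independent probability $q_r^{\bot}$ and otherwise outputs $r$.

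Finally, I would verify the output distribution and expected query count. By construction, $\Samp$ outputs each $r \in R$ with probability $q \cdot (\zeta_r/\zeta_R) \cdot (1-q_r^{\bot})$, which is exactly the probability the iteration would output $r$ were the post-\cref{step:loop-reestimate} state equal to $\dmr'^{(\sC)}$ rather than the true state. Since trace distance is contractive under the CPTP map implementing the remainder of the iteration, $\Samp$'s output distribution is within total variation $2\sqrt{\delta}$ of $D$. Each round of the rejection loop accepts with probability $\Expectation_{r \gets R}[\zeta_r] = p_{\sU}$, so conditional on entering the loop the expected number of queries is $1/p_{\sU}$, yielding total expected query count $q/p_{\sU} \leq 1/p$. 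Whenever $\Samp$ outputs a non-$\bot$ value, it is the last $r$ returned by $O_R$, satisfying the final requirement. The main work is the bookkeeping ensuring the $2\sqrt{\delta}$ gap is the only source of distributional error; once the pseudoinverse characterization is in hand the remaining argument is essentially classical rejection sampling.
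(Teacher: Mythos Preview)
Your proposal is correct and takes essentially the same approach as the paper: both construct $\Samp$ by first simulating the threshold outcome of \cref{step:loop-reestimate} and then replacing the measurement of $\RegR$ by classical rejection sampling against $O_R$ with acceptance probability $\zeta_r$ (computed from the pseudoinverse state $\dmh^{(\sU,1)}$), invoking \cref{claim:stratify-r} and \cref{claim:pseudoinverse-win-probability} to match the idealized distribution and bound the query count by $1/p_{\sU}\leq 1/p$. The one difference is that you additionally fold in the later-step abort probability $q_r^{\bot}$, whereas the paper's proof interprets $D$ as determined already at \cref{step:loop-measure-r} (the later aborts being handled separately in \cref{lemma:negl-abort}); for your contractivity step to be exact, make sure $q_r^{\bot}$ is defined with respect to the idealized post-threshold state $\dmr'^{(\sC)}$ rather than the true one.
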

\begin{proof}
    $\Samp$ initially behaves similarly to $\Extract$: apply $\Threshold_{p,\eps,\delta}$ to $\dmr$; if $\Threshold$ outputs $0$ then output $\bot$. Let $\dmr_{\RegB,\RegH,\RegR}^{(\sC)}$ be the state after applying $\Threshold$, and (as in $\Extract$) re-set $p := p-\eps$.
    
    As before, let $\dmr_{\RegB,\RegH,\RegR}'^{(\sC)} \eqdef \frac{\left(\SProj[\Jor]{\geq p}\right)^{\RegB} \dmr_{\RegB,\RegH,\RegR}^{(\sC)} \left(\SProj[\Jor]{\geq p}\right)^{\RegB}}{\Tr(\left(\SProj[\Jor]{\geq p}\right)^{\RegB} \dmr_{\RegB,\RegH,\RegR}^{(\sC)} )}$. Let  $\dmr_{\RegH,\RegR}^{(\sU,1)}$ be the pseudoinverse of $\dmr_{\RegH,\RegR}'^{(\sC, 1)} = \bra{1}_{\RegB} \dmr_{\RegB,\RegH,\RegR}'^{(\sC)} \ket{1}_{\RegB}/\Tr( \bra{1}_{\RegB} \dmr_{\RegB, \RegH,\RegR}'^{(\sC,1)} \ket{1}_{\RegB})$ as guaranteed by the pseudoinverse lemma (\cref{lemma:pseudoinverse}).
    
    We have by \cref{lemma:gentle-measurement,lemma:pseudoinverse} that $d(\dmr_{\RegB,\RegH,\RegR}^{\sC},\dmr_{\RegB,\RegH,\RegR}'^{\sC}) \leq 2\sqrt{\delta}$ and $\dmr_{\RegH,\RegR}'^{\sC,1} = \frac{\Pi_C \dmr_{\RegH,\RegR}^{(\sU,1)} \Pi_C }{\Tr(\BProj{\sC} \dmr_{\RegH,\RegR}^{(\sU,1)})}$. Finally, write $\dmr_{\RegH,\RegR}^{(\sU,1)} = \dmh_{\RegH}^{(\sU,1)} \otimes \ketbra{+}_{\RegR}$.
    
    $\Samp$ now behaves differently than $\Extract$. $\Samp$ ``clones'' $\dmh_{\RegH}^{(\sU,1)}$ (recall that $\Samp$ can be an arbitrary function) and repeats the following until $b = 1$: query $O_R$, obtaining $r \in R$; on a fresh copy of $\dmh_{\RegH}^{(\sU,1)}$, measure whether the verifier accepts on challenge $r$ (i.e., $\BMeas{\BProj{V,r}}$), obtaining bit $b$. Output $r$ if $b = 1$.
    
    Let $p_{\sU} \eqdef \Tr(\BProj{\sC} \dmh_{\RegH}^{(\sU,1)})$; we have that $p_{\sU} \geq p $ by \cref{claim:pseudoinverse-win-probability}. Hence, the expected number of queries $\Samp$ makes is $1/p$. Observe that $p_{\sU}$ is the probability that a uniform $r$ is accepted. Let $\zeta_r \eqdef \Tr(\BProj{V,r} \dmh_{\RegH}^{(\sU, 1)})$; $\zeta_r$ is the probability that $r$ is accepted. Then, for every $r^*\in R$,
    \begin{equation*}
        \Pr_{\mathsf{Samp}}[r = r^*] = \sum_{n=0}^{\infty} \Pr[r_1,\ldots,r_{n} \text{ rejected}] \Pr[r_{n+1} = r^*, r^* \text{ accepted}] = \sum_{n = 0}^{\infty} (1-p_{\sU})^n \cdot \frac{\zeta_{r^*}}{|R|} = \frac{\zeta_{r^*}}{p_{\sU} \cdot |R|}~.
    \end{equation*}
    
    Consider now the distribution on $r$ obtained by measuring $\RegR$ on state $\dmr_{\RegH,\RegR}'^{(\sC,1)}$: for every $r^*$,
    \begin{equation*}
        \Pr[r = r^*] = \Tr(\ketbra{r^*} \dmr_{\RegH,\RegR}'^{(\sC,1)}) = \frac{\Tr(\ketbra{r^*} \BProj{\sC} \dmr_{\RegH,\RegR}^{(\sU,1)} \BProj{\sC})}{p_{\sU}} = \frac{\Tr(\BProj{V,r^*} \dmh_{\RegH}^{(\sU,1)})}{p_{\sU}\cdot|R|} = \frac{\zeta_{r^*}}{p_{\sU} \cdot |R|}
    \end{equation*}
    since $(\Id_{\RegH}\tensor \ketbra{r^*} ) \BProj{\sC} = \BProj{V,r^*}\tensor \ketbra{r^*} $ and $\ketbra{r^*} \dmr_{\RegH,\RegR}^{(\sU,1)} \ketbra{r^*} = \frac{1}{|R|} \dmh_{\RegH}^{(\sU,1)} \tensor \ketbra{r^*}_{\RegR} $.
    
    Overall, $D$ is obtained by measuring $(\RegB,\RegR)$ on the state $\dmr_{\RegB,\RegH,\RegR}^{(\sC)}$, which is $O(\sqrt{\delta})$-close to $\dmr_{\RegB,\RegH,\RegR}'^{(\sC)}$; the claim follows by contractivity of trace distance.
\end{proof}

Having established properties (1), (2), and (3), we have proved \cref{thm:high-probability-extraction}!

\subsection{Obtaining Guaranteed Extraction}\label{sec:obtaining-guaranteed-extraction}
In this section, we combine the guarantees of \cref{thm:high-probability-extraction} with additional analysis to prove that all of the example protocols from \cref{sec:examples} have guaranteed extractors (additionally assuming partial collapsing where necessary). We remark that handling the graph isomorphism subroutine requires a slight modification of the \cref{thm:high-probability-extraction}, which we detail below.

We begin with a general-purpose corollary of \cref{thm:high-probability-extraction} for the case of protocols satisfying $(k, g)$-PSS (\cref{def:k-g-pss}) in addition to $f_1, \hdots, f_{k-1}$-partial collapsing (which was assumed in \cref{thm:high-probability-extraction}).

\begin{corollary}\label{cor:guaranteed-extractor-or-inconsistent}
  Let $(P_{\Sigma}, V_{\Sigma})$ be a 3- or 4- message public coin interactive argument with a consistency function $g: T \times (R \times \{0,1\}^*)^*\rightarrow \{0,1\}$, and let $f_1, \hdots, f_k$ be functions. Suppose that:
  \begin{itemize}
      \item The protocol is partially collapsing with respect to $f_1, \hdots, f_{k-1}$, and
      \item The protocol is $(k, g)$-PSS for some $k = \poly(\secp)$.
  \end{itemize}
  Then, one of the two following conclusions holds:
  \begin{enumerate}
      \item The extractor from \cref{thm:high-probability-extraction} composed with the PSS extractor $\PSSExtract$ satisfies \emph{guaranteed extraction}, OR
      \item The extractor from \cref{thm:high-probability-extraction} outputs a $k$-tuple of partial transcripts $(r_1, y_1, \hdots, r_k, y_k)$ such that $g(\prefix, r_1, y_1, \hdots, r_k, y_k) = 0$ (the transcripts are \emph{inconsistent}) with non-negligible probability. 
  \end{enumerate}
\end{corollary}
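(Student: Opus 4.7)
The plan is to define a candidate guaranteed extractor by composing the extractor $\Extract_V^{P^*}$ from~\cref{thm:high-probability-extraction} with $\PSSExtract_g$: run $\Extract_V^{P^*}$ to obtain partial transcripts $(\vk, \Co, r_1, y_1, \ldots, r_k, y_k)$, and then output $w \coloneqq \PSSExtract_g(\prefix, (r_i, y_i)_{i \in [k]})$. I first need to check that this composed procedure has the syntactic shape required by~\cref{def:high-probability-extraction-body}. This is direct: Steps~1--3 of \cref{thm:high-probability-extraction}'s extractor (sample $(\vk,\Co)$, apply $P^*$ coherently to $\sum_r \ket{r}$, measure the verifier's decision) coincide with the first three steps of \cref{def:high-probability-extraction-body}, and everything after the initial $V$-measurement — namely the remainder of $\Extract_V^{P^*}$ together with $\PSSExtract_g$ — is packaged into $\FindWitness^{P^*}$. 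Efficiency is inherited from part~(1) of~\cref{thm:high-probability-extraction} and the polynomial-time efficiency of $\PSSExtract_g$.

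The heart of the argument is correctness. Assume conclusion~(2) fails, i.e., the probability that the transcripts are inconsistent with respect to $g$ is negligible; I will show that conclusion~(1) then holds. Let $\mu$ denote the overall probability that the initial $V$-measurement returns $1$. If $\mu = \negl(\secp)$, there is nothing to prove, so assume $\mu \geq 1/\poly(\secp)$. By part~(2) of~\cref{thm:high-probability-extraction}, $\Extract_V^{P^*}$ aborts after the initial measurement only with negligible probability, so it suffices to bound the conditional probability that $\PSSExtract_g$ fails given that the initial measurement returns $1$.

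To invoke $(k,g)$-PSS, I need the conditional distribution of $(r_1,\ldots,r_k)$ given acceptance and consistency to be an admissible distribution over a support contained in $\Consistent_k$. Consistency holds with $1-\negl(\secp)$ conditional probability by the failure of conclusion~(2), so it suffices to argue admissibility of the marginal over $R^k$. Part~(3) of~\cref{thm:high-probability-extraction} gives admissibility per-prefix: for each $(\vk, \Co)$ with $\gamma_{\vk,\Co} \geq \delta^{1/3}$ there is a sampler $\Samp_{\vk,\Co}$ making an expected $O(1/\gamma_{\vk,\Co})$ queries to the uniform oracle $O_R$. I can lift these into a single sampler $\Samp$ for the conditional distribution by first sampling $(\vk,\Co)$ from its conditional distribution given acceptance (using no $O_R$ queries, since $\Samp$ may be arbitrary) and then running the corresponding $\Samp_{\vk,\Co}$. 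Prefixes with $\gamma_{\vk,\Co} < \delta^{1/3}$ contribute probability at most $\delta^{1/3}/\mu = \negl(\secp)$ to the conditional distribution and can be absorbed into the statistical closeness slack. The expected number of queries of $\Samp$ is at most
\[
    \Expectation_{(\vk,\Co) \mid \mathrm{accept}}\!\bigl[O(1/\gamma_{\vk,\Co})\bigr] \;=\; O\!\left(\frac{1}{\mu}\sum_{(\vk,\Co)} \Pr[(\vk,\Co)]\right) \;=\; O(1/\mu),
\]
which is polynomial by our assumption on $\mu$. Hence the marginal is admissible, and $(k,g)$-PSS yields a valid witness with overwhelming conditional probability.

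The main obstacle is this admissibility-lifting step: one has to verify carefully that mixing the per-prefix samplers, dealing with the low-$\gamma_{\vk,\Co}$ tail, and further conditioning on consistency together preserve admissibility with only a polynomial blow-up in expected $O_R$ queries. The remaining technicalities (propagating the trace-distance/statistical slack $\negl(\secp)$ through the conditioning steps, and combining the inconsistency and PSS failure events via a union bound) are routine.
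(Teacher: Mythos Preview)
Your approach is correct but proceeds differently from the paper. The paper proves the contrapositive in the other direction: assuming conclusion~(1) fails (failure probability $\geq 1/\lambda^c$ infinitely often), it uses an \emph{averaging argument} to find a $1/(2\lambda^c)$-fraction of prefixes $(\vk,a)$ with conditional failure probability $\geq 1/(2\lambda^c)$, observes that $\gamma_{\vk,a} \geq 1/(2\lambda^c)$ for these, applies PSS \emph{per-prefix}, and aggregates to get non-negligible overall inconsistency. By contrast, you assume conclusion~(2) fails and build a single global sampler by \emph{mixing} the per-prefix samplers, exploiting the cancellation $\Expectation_{(\vk,\Co)\mid\text{accept}}[1/\gamma_{\vk,\Co}] = 1/\mu$ to bound the expected query count, and then invoke PSS once. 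Your route is arguably cleaner (no averaging, one PSS call), but your dichotomy ``$\mu = \negl(\secp)$ or $\mu \geq 1/\poly(\secp)$'' is loose as stated since $\mu(\secp)$ need not fall into either class; this should be tightened by instead assuming both (1) and (2) fail and deriving a contradiction on the infinitely many $\secp$ where the failure probability (and hence $\mu$) is $\geq 1/\secp^c$. The paper's averaging step sidesteps this by working directly with the non-negligible failure assumption.
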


\begin{proof}
  Suppose that conclusion (1) is false, meaning that there exist infinitely many $\secp$ and a constant $c$ such that the extractor from \cref{thm:high-probability-extraction} has an accepting initial execution but the call to $\PSSExtract$ fails to produce a witness with probability at least $1/\secp^c$. We know that the \cref{thm:high-probability-extraction} extractor aborts with negligible probability, so we also assume that the extractor does not abort here. Then, by an averaging argument, with probability at least $\frac 1 {2\secp^c}$ over the distribution of $(\vk, a)$, the above event conditioned on $(\vk, a)$ holds with probability at least $\frac 1 {2\secp^c}$. This in particular implies that $\gamma_{\vk, a}$ (as defined in \cref{thm:high-probability-extraction}) is at least $\frac 1 {2\secp^c}$ for these choices of $(\vk, a)$. Then, property (3) of \cref{thm:high-probability-extraction} implies that the distribution of $(r_1, \hdots, r_k)$ is \emph{admissible} for these choices of $(\vk, a)$ (and choices of $\secp$). Thus, the $(k, g)$-PSS property of $(P_{\Sigma}, V_{\Sigma})$ implies that for every such $(\vk, a)$, the $k$-tuple of partial transcripts must be inconsistent with probability at least $\frac 1 {2 \secp^c}$ (as otherwise $\PSSExtract$ would succeed with $1-\negl$ probability). Therefore, assuming that conclusion (1) is false, the probability that the $k$-tuple of transcripts output by the \cref{thm:high-probability-extraction} extractor are inconsistent is at least $\frac 1 {4 \secp^{2c}}$ for infinitely many $\secp$, implying conclusion (2). \qedhere
\end{proof}

Finally, we apply \cref{cor:guaranteed-extractor-or-inconsistent} to obtain guaranteed extractors for all of the \cref{sec:examples} example protocols (along with a general result for $k$-special sound protocols).

\begin{corollary}\label{cor:collapsing-ss-extract}
  If $(P_{\Sigma}, V_{\Sigma})$ is (fully) collapsing and $k$-special sound, and $|R|= 2^{\omega(\log \secp)}$, then the protocol has guaranteed extraction.
\end{corollary}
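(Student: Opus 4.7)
The plan is to chain together the results already in place: first upgrade $k$-special soundness to the probabilistic variant that \cref{cor:guaranteed-extractor-or-inconsistent} requires, then instantiate the extractor with identity functions for $f_1,\ldots,f_{k-1}$ (which is permitted by full collapsing), and finally rule out the ``inconsistent transcripts'' alternative in \cref{cor:guaranteed-extractor-or-inconsistent} by observing that the extractor's outputs are accepting by construction.

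In detail, I would first invoke \cref{claim:kss-to-kfss} to obtain a consistency predicate $g$ (checking only that each partial transcript $(\prefix,r_i,y_i)$ is accepting when $y_i = z_i$) such that $(P_\Sigma,V_\Sigma)$ is $(k,g)$-special sound. Because $|R| = 2^{\omega(\log \secp)}$, \cref{k-g-ss-implies-k-g-pss} upgrades this to $(k,g)$-probabilistic special soundness. Next, since full collapsing (\cref{def:collapsing-protocol}) allows the distinguisher to detect a measurement of the entire response register $\RegZ$, it trivially implies partial collapsing (\cref{def:partial-collapsing-protocol}) with respect to the identity function on $z$: any partial-collapsing adversary is a special case of a collapsing adversary. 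Set $f_1 = \cdots = f_{k-1} = \mathrm{id}$ (and $f_k = \mathrm{id}$), so that the extractor of \cref{thm:high-probability-extraction} outputs \emph{full} transcripts $(r_1,z_1),\ldots,(r_k,z_k)$, and compose it with the $(k,g)$-PSS extractor $\SSExtract_g$ exactly as in \cref{cor:guaranteed-extractor-or-inconsistent}.

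It remains to eliminate alternative (2) of \cref{cor:guaranteed-extractor-or-inconsistent}, namely that the extractor outputs partial transcripts with $g(\prefix,r_1,z_1,\ldots,r_k,z_k) = 0$ with non-negligible probability. Under our choice of $g$, inconsistency simply means that some $(\prefix,r_i,z_i)$ is not accepting. This is the main (and only real) step. I would argue as follows: conditioned on not aborting, the state entering step 3(d) of the extractor in iteration $i$ lies in $\image(\Pi_{r_i})$ up to error $O(\sqrt{\delta})$, because step 3(a) applies $\Transform[\sU \to \sC]$ (placing the state in $\image(\BProj{\sC})$), step 3(b) measures $\RegR$ to $r_i$ (collapsing to $\image(\Pi_{r_i})$), and the variable-length $\VarEstimate[\sD_{r_i} \rightleftarrows \sG_{p,\eps,\delta}]$ of step 3(c) preserves this subspace with probability $1-\delta$ by \cref{thm:vrsvt}. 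Since $\Pi_{r_i}$ acts as $\Pi_{V,r_i} \otimes \ketbra{0}_{\RegW}$, measuring the full response in step 3(d) yields an accepting $z_i$ with probability $1 - O(\sqrt{\delta})$. A union bound over the $k$ iterations, combined with the negligible overall abort probability from \cref{lemma:negl-abort}, shows that all $k$ output transcripts are accepting with overwhelming probability, contradicting alternative (2).

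Therefore alternative (1) of \cref{cor:guaranteed-extractor-or-inconsistent} must hold: the composed procedure is a guaranteed extractor, with $\EQPTM$ running time inherited from \cref{thm:high-probability-extraction} and extraction correctness from the $(k,g)$-PSS extractor applied to the (admissibly distributed and w.h.p.~consistent) transcripts. I expect the only genuinely substantive point to be the transcript-consistency argument above; the rest is bookkeeping that chains together results already established in \cref{sec:gss,sec:high-probability-extractor}.
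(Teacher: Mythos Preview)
Your proposal is correct and follows essentially the same approach as the paper: both reduce to \cref{cor:guaranteed-extractor-or-inconsistent} with $f_i = \mathrm{id}$ and the trivial consistency predicate (via \cref{claim:kss-to-kfss} and \cref{k-g-ss-implies-k-g-pss}), then rule out alternative~(2) by observing that each measured $(r_i,z_i)$ is accepting because it arises from a state in $\image(\BProj{\sC})$ (equivalently $\image(\Pi_{r_i})$ after the $\RegR$-measurement). One minor inaccuracy: in the formal extractor of \cref{subsec:description-ext}, step~3(a) is $\Threshold^{\sC,\sU}$ rather than $\Transform[\sU\to\sC]$ (the state enters each iteration already in $\image(\BProj{\sC})$ via step~3(g) of the previous iteration or \cref{step:variable-phase-est}), but this does not affect your argument since $\Threshold$ with outcome~$1$ preserves $\image(\BProj{\sC})$ by \cref{thm:svdisc}.
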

\begin{proof}
  Since $(P_{\Sigma}, V_{\Sigma})$ is $k$-special sound and $|R|= 2^{\omega(\log \secp)}$, we know that the protocol is $(k, g)$-PSS for the ``trivial'' transcript consistency predicate $g$. Therefore, \cref{cor:guaranteed-extractor-or-inconsistent} applies to this protocol (where the extractor sets $f_1 = \hdots = f_k = \mathsf{Id}$). However, conclusion (2) of \cref{cor:guaranteed-extractor-or-inconsistent} cannot happen because the consistency predicate of $\PSSExtract$ in this case simply checks that the transcripts are accepting, which is guaranteed by the fact that $(r_i, y_i = z_i)$ was a measurement outcome of a state in $\Pi_C$.
\end{proof}

\begin{corollary}\label{cor:commit-and-open}
If $(P_{\Sigma}, V_{\Sigma})$ is a commit-and-open protocol (\cref{def:commit-and-open}) satisfying commit-and-open $k$-special soundness and $R = 2^{\omega(\log \secp)}$ (either natively or enforced by parallel repetition), and the commitment scheme is instantiated using a collapse-binding commitment \cite{EC:Unruh16}, then the protocol has a guaranteed extractor. 
\end{corollary}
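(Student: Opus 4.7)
My plan is to apply \cref{cor:guaranteed-extractor-or-inconsistent} with the following choices. Let $f_i: T \times R \times Z \to \{0,1\}^*$ be the function that, on input a transcript $(\prefix, r_i, z_i)$, parses $\prefix$ to recover the bitwise commitments and $(r_i, z_i)$ to recover the subset $S_i$ and the openings to $\{y_\ell\}_{\ell \in S_i}$, and then outputs the opened bits $\{y_\ell\}_{\ell \in S_i}$ (discarding the openings themselves). Let $g$ be the consistency predicate from \cref{sec:examples}: on input $(\prefix, (r_i, \{m_{i,\ell}\}_{\ell \in S_i})_{i \in [k]})$, output $1$ iff $m_{i,\ell} = m_{j,\ell}$ whenever $\ell \in S_i \cap S_j$ and $\Check(\{m_{i,\ell}\}_{\ell \in S_i}) = 1$ for all $i$. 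With these choices, commit-and-open $k$-special soundness is exactly $(k,g)$-special soundness, and by \cref{k-g-ss-implies-k-g-pss} (using $|R| = 2^{\omega(\log \secp)}$) it follows that the protocol is $(k,g)$-PSS.

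Next, I verify partial collapsing with respect to $f_1, \dots, f_{k-1}$. This is the standard hybrid argument: measuring $\{m_{i,\ell}\}_{\ell \in S_i}$ can be replaced, one index $\ell$ at a time, by the identity map; each swap is undetectable by collapse-binding of $\Com$ applied to the $\ell$-th bitwise commitment (appearing in $\prefix$). Polynomially many hybrids suffice since $|S_i| \leq \poly(\secp)$. Thus the hypotheses of \cref{cor:guaranteed-extractor-or-inconsistent} are satisfied, and it remains to rule out its conclusion (2), namely that the extractor from \cref{thm:high-probability-extraction} outputs an inconsistent $k$-tuple with non-negligible probability.

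The main obstacle, and the heart of the proof, is this inconsistency-ruling-out step. I plan to reduce to the unique-message binding property of $\Com$, which is implied by collapse-binding by \cref{lemma:collapse-binding-unique-message}. Suppose for contradiction that the extractor outputs a $k$-tuple violating $g$ with non-negligible probability $\varepsilon$. Since each output transcript is accepting by construction (the state at the end of each iteration of~\cref{step:loop-amplify-C} lies in $\image(\BProj{\sC})$, so the $\Check$ predicate holds on each component), the violation must be of the form $m_{i,\ell} \neq m_{j,\ell}$ for some $i < j$ and some $\ell \in S_i \cap S_j$. I construct $(\Adv_1,\Adv_2)$ for the unique-message binding game as follows: given a commitment key $\ck$, first guess $(i,j,\ell)$ with $1/\poly(\secp)$ success probability, then $\Adv_1$ simulates the extractor through~\cref{step:loop-collapsing} of round $i$, stopping \emph{after} measuring $y_i$ but \emph{before} applying $U_{r_i}^\dagger$. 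At this moment, the sub-register $\RegZ$ of $\RegH$ contains a (superposition of) valid openings to $m_{i,\ell}$ for the $\ell$-th bitwise commitment in $a$; $\Adv_1$ outputs that commitment, $m_1 \eqdef m_{i,\ell}$, and hands $\RegZ$ to the challenger. Since the state lies in the image of $\Pi_r$, the challenger's validity check succeeds with probability $1$ and does not disturb the state; $\RegZ$ is returned, and $\Adv_2$ applies $U_{r_i}^\dagger$ and continues the simulation through~\cref{step:loop-collapsing} of round $j$, outputs $m_2 \eqdef m_{j,\ell}$, and hands the corresponding $\RegZ$ register to the challenger.

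Conditioned on the guess being correct and the transcripts being inconsistent at $(i,j,\ell)$, we have $m_1 \neq m_2$ and both opening registers are valid, so $(\Adv_1,\Adv_2)$ wins with probability at least $\varepsilon/\poly(\secp)$, contradicting unique-message binding. Thus conclusion (2) of \cref{cor:guaranteed-extractor-or-inconsistent} cannot hold and conclusion (1) must, yielding guaranteed extraction. The only subtlety I anticipate is carefully tracking that the validity-check measurement inside the unique-message game truly does not disturb the intermediate state (so that the subsequent extraction proceeds identically to the real $\Extract^{P^*}$); this follows because the state at that point is a pure eigenstate of the validity projector with eigenvalue $1$.
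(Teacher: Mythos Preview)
Your proposal is correct and follows essentially the same route as the paper: instantiate \cref{cor:guaranteed-extractor-or-inconsistent} with $f$ outputting only the opened messages and $g$ the mutual-consistency predicate, then rule out conclusion (2) by reducing an inconsistency to a break of unique-message binding (\cref{lemma:collapse-binding-unique-message}). Two small points the paper handles that you gloss over: in the parallel-repetition case the paper does not claim the repeated protocol is still $(k,g)$-special sound but instead invokes \cref{lemma:parallel-repetition} to obtain $(k^2,g^t)$-PSS directly; and since your $(\Adv_1,\Adv_2)$ simulates the $\EQPTM$ extractor, you need to truncate it to strict $\poly(\secp)/\varepsilon$ time (via Markov) before appealing to unique-message binding against QPT adversaries.
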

\begin{proof}
Under the hypotheses of the corollary (along with \cref{claim:kss-to-kfss,k-g-ss-implies-k-g-pss}), the protocol satisfies either $(k, g)$-PSS (if it has a natively superpolynomial challenge space) or $(k^2\log^2(\secp), g)$-PSS (if parallel repeated; see~\cref{lemma:parallel-repetition}), where $g$ is a predicate that enforces the constraint that all opened messages are consistent with each other. We set $f_1 = \hdots = f_k = f$ where $f(z)$ outputs the substring of $z$ corresponding to the opened messages (and not the openings). Then, the \cref{thm:high-probability-extraction} extraction procedure does not violate $g$-consistency by the unique-message binding of the commitment scheme (shown in \cref{lemma:collapse-binding-unique-message}). Thus, \cref{cor:guaranteed-extractor-or-inconsistent} implies that $(P_{\Sigma}, V_{\Sigma})$ has a guaranteed extraction procedure. \qedhere

\end{proof}

\begin{corollary}
\label{corollary:kilian-guaranteed}
Kilian's succinct argument system \cite{STOC:Kilian92}, when instantiated using a collapsing hash function and a PCP of knowledge, has a guaranteed extraction procedure.
\end{corollary}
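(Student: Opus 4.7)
The plan is to mirror the argument used for \cref{cor:commit-and-open}, invoking \cref{cor:guaranteed-extractor-or-inconsistent} with the consistency predicate $g$ established in \cref{claim:kilian-pss}. Set $k = \ell\log(|\Sigma|)$ and define the partial-measurement functions $f_i(\prefix, r_i, z_i) \eqdef \pi[q_{r_i}]$ that extract only the PCP answers from the prover's response, discarding the associated Merkle openings. By \cref{claim:kilian-pss}, Kilian's protocol (instantiated with a PCP of knowledge) is $(k, g)$-probabilistic special sound, so two hypotheses of \cref{cor:guaranteed-extractor-or-inconsistent} are in place.

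The remaining hypothesis is that Kilian's protocol is partially collapsing with respect to $f_1,\dots,f_{k-1}$. The plan here is to argue that a Merkle tree built from a collapsing hash is a \emph{position-collapsing} vector commitment: given a set of accepting Merkle openings, measuring the opened \emph{values} (without the opening paths) is computationally undetectable. This is derived by a hybrid argument over the queried indices, reducing each hybrid step to the unique-message binding of the Merkle commitment --- which itself follows from the collapsing property of the underlying hash via \cref{lemma:collapse-binding-unique-message}.

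Once partial collapsing is established, applying \cref{cor:guaranteed-extractor-or-inconsistent} yields two possible outcomes; conclusion (1) is already the statement we want. To rule out conclusion (2), suppose with non-negligible probability the extractor outputs $(\prefix, (r_i, y_i)_{i\in[k]})$ with $g(\prefix, (r_i,y_i)_{i \in [k]}) = 0$. Condition (1) of $g$ (PCP acceptance of each $y_i$) cannot fail, because the $y_i$ are obtained by a partial measurement of the response $z_i$ on a state already projected by $\Pi_{V,r_i}$, and the Kilian verifier's acceptance subsumes the PCP verifier's acceptance on $(r_i, y_i)$. Thus the violation must be in condition (2): there exist $i \neq i'$ and $\ell \in q_{r_i} \cap q_{r_{i'}}$ where $y_i$ and $y_{i'}$ disagree. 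Measuring the corresponding Merkle openings inside $z_i$ and $z_{i'}$ then produces two valid openings of the committed Merkle root at position $\ell$ to distinct values, contradicting unique-message binding of the Merkle commitment and hence the collapsing property of the hash.

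The main obstacle is the partial collapsing step. The subtlety is that the standard full-collapsing guarantee for Kilian's protocol asserts undetectability of measuring \emph{both} values and openings, whereas $f_i$ measures only values. Bridging this gap is exactly the analogous issue addressed in \cref{sec:unique-message-collapsing} for commit-and-open protocols, and we expect the hybrid reduction to position-binding of the Merkle commitment to go through without additional assumptions beyond collapsing of $h$.
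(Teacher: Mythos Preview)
Your overall plan via \cref{cor:guaranteed-extractor-or-inconsistent} and \cref{claim:kilian-pss} is right, but the choice $f_i(\prefix,r_i,z_i) = \pi[q_{r_i}]$ creates work you don't need and leaves a real gap in the final step.

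The paper simply sets $f_1 = \cdots = f_k = \mathsf{Id}$. Kilian's protocol instantiated with a collapsing hash is \emph{fully} collapsing (this is the standard fact, cf.\ \cref{def:collapsing-protocol} and the references there), so the partial-collapsing hypothesis of \cref{cor:guaranteed-extractor-or-inconsistent} is immediate. Your ``main obstacle'' is therefore not one: full collapsing implies partial collapsing with respect to any efficiently computable $f$ by a two-line hybrid (compare ``do nothing'' to ``measure $z$'' to ``measure $f(z)$ then measure $z$'', the last two being identical). There is no need to develop a separate position-collapsing argument for the Merkle commitment.

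More importantly, your choice of $f_i$ breaks the argument for ruling out conclusion~(2). With $f_i$ returning only PCP answers, the extractor's output contains the $y_i$ but \emph{not} the Merkle openings; the responses $z_i$ are never fully measured, and after the repair steps of subsequent loop iterations the state no longer holds recoverable openings for earlier rounds. So your sentence ``Measuring the corresponding Merkle openings inside $z_i$ and $z_{i'}$'' has no referent. By contrast, with $f_i = \mathsf{Id}$ the extractor outputs full accepting responses $(r_i,z_i)$: each $z_i$ carries valid Merkle authentication paths (since it was obtained by measuring a state in $\image(\BProj{\sC})$), so an inconsistency in leaf values at some overlapping position directly yields a hash collision, contradicting collision-resistance (which collapsing implies). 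That is exactly the paper's argument, and it is both shorter and avoids your gap.
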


\begin{proof}
  We know from~\cref{claim:kilian-pss} the \cite{STOC:Kilian92} succinct argument system is (1) (fully) collapsing, and (2) $(k, g)$-PSS for $k = \poly(n, \secp)$ and $g$ defined so that when $z_i$ and $z_j$ contain overlapping leaves of the Merkle tree, the leaf values are equal. We set $f_1 = \hdots = f_k = \mathsf{Id}$, and observe that the \cref{thm:high-probability-extraction} extractor does not violate $g$-consistency, because if it output two transcripts $(r_1, z_1), (r_2, z_2)$ with inconsistent leaf values, since the transcripts are accepting (they were obtained by measuring a state in $\Pi_C$), this would violate the collision-resistance (implied by collapsing) of the hash family. Thus, by \cref{cor:guaranteed-extractor-or-inconsistent}, the protocol has a guaranteed extractor. 
\end{proof}

\begin{corollary}\label{cor:gni-guaranteed-extractor}
The one-out-of-two graph isomorphism subroutine has a guaranteed extraction procedure that extracts the bit $b$ (when $G_0$ and $G_1$ are not isomorphic). 
\end{corollary}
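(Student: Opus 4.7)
The plan is to apply Corollary~\ref{cor:guaranteed-extractor-or-inconsistent} with $k = 2$ and carefully chosen partial-measurement functions $(f_1, f_2)$, using the $(2, g')$-probabilistic special soundness of the subroutine established in Claim~\ref{claim:gni-pss}. First I would pin down what $f_1$ and $f_2$ are: a response to challenge $\mathbf{r} \in \{0,1\}^\secp$ consists, at each slot $\ell$, of a bit $c_\ell$ together with one or two permutations. I set $f_1(\prefix, \mathbf{r}, z)$ to output $\{c_\ell : r_\ell = 0\}$, and $f_2(\prefix, \mathbf{r}, z)$ to output $\{c_\ell : \ell \in [\secp]\}$. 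These are exactly the partial responses on which the consistency predicate $g'$ from Claim~\ref{claim:gni-pss} is defined.

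Next I would verify the partial-collapsing hypothesis required by Theorem~\ref{thm:high-probability-extraction}: that the subroutine is partially collapsing with respect to $f_1$ (no collapsing property is needed for $f_2$ since it is only used in the last round). At any slot $\ell$ with $r_\ell = 0$, the verifier checks $H_{0,\ell} = \sigma_{0,\ell}(G_{c_\ell})$ and $H_{1,\ell} = \sigma_{1,\ell}(G_{1-c_\ell})$, so acceptance together with $G_0 \not\simeq G_1$ forces $c_\ell$ to be a deterministic function of the classical first prover message $(H_{0,\ell}, H_{1,\ell})$. Measuring $c_\ell$ on an accepting response therefore has no effect on the state, giving partial collapsing with respect to $f_1$ information-theoretically --- in particular in the computational sense required by Definition~\ref{def:partial-collapsing-protocol}.

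Plugging these inputs into Corollary~\ref{cor:guaranteed-extractor-or-inconsistent}, either we already have the desired guaranteed extractor (conclusion (1)) or the Theorem~\ref{thm:high-probability-extraction} extractor outputs tuples $(r^{(1)}, y_1, r^{(2)}, y_2)$ violating $g'$ with non-negligible probability (conclusion (2)). I would rule out conclusion (2) by observing that each $(r_i, y_i)$ arises from measuring a state in $\image(\BProj{\sC})$, so the underlying full transcripts are accepting. But inspection of the verifier's predicate shows that acceptance at slot $\ell$ implies exactly the isomorphism conditions $g'$ demands: $(H_{0,\ell},H_{1,\ell}) \simeq (G_{c_\ell}, G_{1-c_\ell})$ when $r_\ell = 0$, and $H_{c_\ell,\ell} \simeq H$ when $r_\ell = 1$. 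Hence accepting partial transcripts are automatically $g'$-consistent, and conclusion (2) cannot hold. Composing the Theorem~\ref{thm:high-probability-extraction} extractor with the extractor of Claim~\ref{claim:gni-pss} (which finds a coordinate $\ell$ with $r^{(1)}_\ell = 0 \neq r^{(2)}_\ell$ and outputs $b = c^{(1)}_\ell \oplus c^{(2)}_\ell$) then yields the required guaranteed extraction procedure.

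The main subtlety is the reason $g'$ is defined the way it is rather than using the ``full'' consistency predicate $g$: we only have partial collapsing onto $c_\ell$ at the $r_\ell = 0$ positions, and attempting to measure $c_\ell$ at $r_\ell = 1$ positions in the first execution would disturb the prover's state (since there the response does not determine $c_\ell$ given only the first message). Dropping these bits from $f_1$ is what preserves partial collapsing, and the admissibility guarantee of Theorem~\ref{thm:high-probability-extraction}(3) is exactly what ensures via Claim~\ref{claim:gni-pss} that some coordinate still witnesses the XOR relation $c^{(1)}_\ell \oplus c^{(2)}_\ell = b$; without this interplay between $g'$ and the distribution of challenges produced by our extractor, the reduction would fail.
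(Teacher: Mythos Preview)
Your proposal is correct and matches the paper's proof essentially line for line: the same choice of $f_1$ (the $c_\ell$ bits at $r_\ell = 0$ positions) and $f_2$ (all $c_\ell$ bits), the same information-theoretic argument for $f_1$-partial collapsing from $G_0 \not\simeq G_1$, and the same observation that $g'$-consistency is automatic for partial measurements of a state in $\image(\BProj{\sC})$, so conclusion (2) of Corollary~\ref{cor:guaranteed-extractor-or-inconsistent} is ruled out. Your final paragraph explaining why $g'$ drops the $r_\ell = 1$ bits from the first transcript is a helpful elaboration that the paper leaves implicit.
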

\begin{proof}
By~\cref{claim:gni-pss}, this protocol is $(2,g')$-PSS where $g'$ is the following asymmetric function:

\begin{itemize}
    \item For the first partial transcript $(\prefix, r^{(1)}, c^{(1)})$, $g'$ checks that for all $i$ such that $r_i = 0$, $(H_{0,i}, H_{1,i})$ are isomorphic to $(G_{c^{(1)}_i}, G_{1-c^{(1)}_i})$.
    \item For the second partial transcript $(\prefix, r^{(2)}, c^{(2)})$, $g'$ \emph{additionally} checks that for all $i$ such that $r_i = 1$, $H_{c^{(2)}_i, i}$ is isomorphic to $H$.
\end{itemize}
We define the following pair of functions $f_1, f_2$:
\begin{itemize}
    \item $f_1(\prefix, r, z)$ outputs the following substring of $z$. For every $i$ such that $r_i = 0$, the substring includes the bit $c_i$ (where $z_i = (c_i, \sigma_{0,i}, \sigma_{1,i})$).
    \item $f_2(\prefix, r, z)$ outputs the substring $c$ (the distinguished single bit of each $z_i$). 
\end{itemize}
We note that the graph isomorphism subprotocol is $f_1$-collapsing; this follows from the fact that for any \emph{accepting} transcript $(\prefix, r, z)$, the bits $c_i$ (for $r_i = 0$) are information-theoretically determined as a function of $(G_0, G_1, H_{0,i}, H_{1,i})$. 

Thus, if we instantiate the \cref{thm:high-probability-extraction} extractor using $(f_1, f_2)$ (note that we require no properties of $f_2$) we have that \cref{cor:guaranteed-extractor-or-inconsistent} applies. Moreover, $g'$-consistency of the transcripts output by the extractor is not violated, because it is formally implied by the fact that they were obtained by partially measuring a state in $\Pi_C$ (any accepting partial transcript $(r_i, c_i)$ satisfies the condition checked by $g'$). Thus, we conclude that the protocol has a guaranteed extractor by \cref{cor:guaranteed-extractor-or-inconsistent}. 
\end{proof}


\newcommand{\Tape}{\mathbb{T}}

\section{Expected Polynomial Time for Quantum Simulators}\label{sec:eqpt}

We introduce a notion of efficient computation we call coherent-runtime expected quantum polynomial time ($\EQPTC$). We then formalize a new definition of post-quantum zero-knowledge with $\EQPTC$ simulation.

\subsection{Quantum Turing Machines}
\label{sec:qtms}

We recall the definition of a quantum Turing machine (QTM) of Deutsch \cite{Deutsch85}. A QTM is a tuple $(\Alphabet,\TStates,\Tx,\TStateInit,\TStateFinal)$ where $\Alphabet$ is a finite set of symbols, $\TStates$ is a finite set of states, $\Tx \colon \TStates \times \Alphabet \to \C^{\TStates \times \Alphabet \times \{-1,0,1\}}$ is a transition function, and $\TStateInit,\TStateFinal$ are the initial and final (halting) states respectively.

We fix registers $\RegState$ containing the state, $\RegHead$ containing the position of the tape head, and $\RegTape$ containing the tape. A configuration state of a Turing machine is a vector $\ket{\TState,i,\Tape} \in \RegState \otimes \RegHead \otimes \RegTape$ where $\TState \in \TStates$ is the current state, $i \in \Naturals$ is the location of the tape head, and $\Tape \in \Alphabet^*$ is the (finite) contents of the tape.

A transition is given by the map $U_{\delta}$, which acts on basis states as follows:
\begin{equation*}
    \ket{q,i,T} \mapsto \sum_{\TState' \in \TStates} \sum_{a \in \Alphabet} \sum_{d \in \{-1,0,1\}} \alpha_{\TState',a,d,b} \ket{q',i + d,\Tape_{i \to a}}
\end{equation*}
where $\delta(q,\Tape_i) = \sum_{q',a,d} \alpha_{q',a,d} \ket{q',a,d}$. $\delta$ is a valid transition function if and only if $U_{\delta}$ is unitary. The definition of QTMs generalises to multiple tapes in the natural way. We will consider QTMs having a separate input/output tape on register $\RegInput$ (with head position in $\RegInpHead$).

The execution of a $T$-bounded QTM proceeds as follows.
\begin{enumerate}[noitemsep]
    \item Initialize register $\RegState$ to $\ket{\TStateInit}$, $\RegHead,\RegInpHead$ to $\ket{0}$, and $\RegTape$ to the empty tape state $\ket{\varnothing}$.
    \item \label[step]{step:qtm-main-loop} Repeat the following for at most $T$ steps:
    \begin{enumerate}
        \item Apply the measurement $\BProj{f} = \BMeas{\ketbra{\TStateFinal}}$ to $\RegState$. If the outcome is $1$, halt and discard all registers except $\RegInput$.
        \item Apply $U_{\delta}$.
    \end{enumerate}
\end{enumerate}
The \defemph{output} $M(\DMatrix)$ of a QTM $M$ on input $\DMatrix \in \Hermitians{\RegInput}$ is the state on $\RegInput$ when the machine halts. The \defemph{running time} $t_M(\DMatrix)$ of $M$ on input $\DMatrix$ is the number of iterations of \cref{step:qtm-main-loop}. Note that both of these quantities are random variables. 

\begin{definition}
	The expected running time $E_M(n)$ of a QTM $M$ is the maximum over all $n$-qubit states $\DMatrix$ of $\Expectation[t_M(\DMatrix)]$. A $T$-bounded QTM $M$ (for some $T \leq \exp(n)$) is $\EQPTM$ and there exists a polynomial $p$ such that $E_M(n) \leq p(n)$ for all $n$.
\end{definition}

\subsection{Coherent-Runtime EQPT}

\newcommand{\CSymb}{\mathsf{D}}

\begin{definition}
	A $\CSymb$-circuit is a quantum circuit $C$ with special gates $\{ G_i,G_i^{-1} \})_{i=1}^{k}$ with the following restriction: for each $i$, there is a single $G_i$ gate and a single $G_i^{-1}$ gate acting on a designated register $\RegX_i$, where $G_i$ acts before $G_i^{-1}$. All other gates may act arbitrarily on $\RegY \otimes \bigotimes_{i=1}^{k} \RegX_i$, for some register $\RegY$. For any CPTP maps $\Phi_i \colon \Hermitians{\RegX_i} \to \Hermitians{\RegX_i}$, $C[\Phi_1,\ldots,\Phi_k] \colon \Hermitians{\RegY \otimes \bigotimes_{i=1}^k \RegX_i} \to \Hermitians{\RegY \otimes \bigotimes_{i=1}^k \RegX_i}$ is the superoperator defined as follows:
	\begin{enumerate}[noitemsep]
		\item For each $i$, $U_i$ be a unitary dilation of $\Phi_i$. That is, let $\RegZ_i$ be an ancilla Hilbert space and $U_{\Phi}$ unitary on $\RegX_i \otimes \RegZ_i$ such that $\Phi(\DMatrixW) = \Tr_{\RegZ_i}(U_i (\DMatrixW \otimes \ketbra{0}_{\RegZ_i}) U_i^{\dagger})$ for all $\DMatrixW \in \Hermitians{\RegX_i}$.
		\item Construct a circuit $C'$ on $\RegY \otimes \bigotimes_{i=1}^{k} (\RegX_i \otimes \RegZ_i)$ from $C$ by replacing $G_i$ with $U_i$ and $G_i^{-1}$ with $U_i^{\dagger}$ for each $i$.
		\item Let $C$ be the superoperator $\DMatrix \mapsto \Tr_{\RegZ}(C'(\DMatrix \otimes \bigotimes_{i=1}^{k} \ketbra{0}_{\RegZ_i}))$.
	\end{enumerate}
	Since all choices of $U_i$ are equivalent up to a local isometry on $\RegZ_i$, the map $C[\Phi_1,\ldots,\Phi_k]$ is well-defined.
\end{definition}

We are now ready to define our notion of \emph{coherent-runtime expected quantum polynomial time}.
\begin{definition}
\label{def:cr-eqpt}
	A sequence of CPTP maps $\{\Phi_{n}\}_{n \in \Naturals}$ is a \defemph{$\EQPTC$ computation} if there exist a uniform family of $\CSymb$-circuits $\{ C_n \}_{n \in \Naturals}$ and $\EQPTM$ computations $M_1,\ldots,M_k$ such that $C_n[M_1,\ldots,M_k] = \Phi_n$ for all $n$.
\end{definition}

\newcommand{\InitState}{\ket{\mathsf{init}}}

We show that any $\EQPTC$ computation can be approximated to any desired precision by a polynomial-size quantum circuit. We first show the following claim. Let $\InitState \eqdef \ket{\TStateInit}_{\RegState} \ket{0,0}_{\RegHead,\RegInpHead} \ket{\varnothing}_{\RegTape}$.

\begin{claim}\label{lemma:truncation}
    \label{claim:qtm-truncation}
    Let $M$ be a $T$-bounded QTM running in expected time $t$, and let $U$ be the unitary dilation of $M$ as in \cref{fig:coherent-qtm}. For all $\gamma \colon \Naturals \to (0,1]$, there is a uniform sequence of unitary circuits $\{ V_{n} \}_n$ of size $\poly(n)/\gamma(n)^2$ such that for every unitary $A$ on $\RegInput$ and state $\ket{\psi} \in \RegInput$:
    \begin{equation*}
        \norm{(U^{\dagger} (\Id \otimes A) U - V_n^{\dagger} (\Id \otimes A) V_n) \ket{\psi} \InitState \ket{0^T}_{\RegB}} \leq \gamma(n).
    \end{equation*}
\end{claim}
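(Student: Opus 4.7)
The plan is to define $V_n$ to be the truncation of $U$ to its first $T' \eqdef \lceil 4 t(n)/\gamma(n)^2 \rceil$ steps, where $t(n) = \poly(n)$ is the expected running time of $M$. Using the notation of \cref{fig:coherent-qtm}, write $U = V_T \cdots V_1$, where each $V_i$ first CNOT-copies the halt-state indicator from $\RegState$ onto the fresh qubit $\RegB_i$ and then applies $U_\delta$ controlled on $\RegB_i = 0$. Set $V_n \eqdef V_{T'} \cdots V_1$, extended by the identity on $\RegB_{T'+1} \otimes \cdots \otimes \RegB_T$. Since the head can visit at most $T'$ tape cells in $T'$ QTM steps, a standard efficient compilation of a QTM into a quantum circuit expresses $V_n$ as a uniform family of unitary circuits of size $\poly(n, T') = \poly(n)/\gamma(n)^2$.

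The key observation is that $U = U_{\mathrm{long}} V_n$ where $U_{\mathrm{long}} \eqdef V_T \cdots V_{T'+1}$, so
\begin{equation*}
    U^\dagger (\Id \otimes A) U - V_n^\dagger (\Id \otimes A) V_n \;=\; V_n^\dagger \bigl(U_{\mathrm{long}}^\dagger (\Id \otimes A) U_{\mathrm{long}} - (\Id \otimes A)\bigr) V_n.
\end{equation*}
Set $\ket{\Phi} \eqdef V_n \ket{\psi}\InitState\ket{0^T}_\RegB$ and decompose $\ket{\Phi} = \ket{\Phi_{\mathsf{halt}}} + \ket{\Phi_{\mathsf{unhalt}}}$ according to whether $\RegState$ contains $\ket{\TStateFinal}$. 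The crucial structural observation is that on the halted subspace, each $V_i$ for $i > T'$ merely flips $\RegB_i$ from $\ket{0}$ to $\ket{1}$ via the CNOT and skips the controlled $U_\delta$ — so $U_{\mathrm{long}}$ restricted to this subspace acts only on $\RegB_{T'+1}, \ldots, \RegB_T$ and, in particular, does not touch $\RegInput$. Since $A$ acts solely on $\RegInput$, $\Id \otimes A$ commutes with $U_{\mathrm{long}}$ on this subspace, and hence $U_{\mathrm{long}}^\dagger(\Id \otimes A) U_{\mathrm{long}} \ket{\Phi_{\mathsf{halt}}} = (\Id \otimes A) \ket{\Phi_{\mathsf{halt}}}$, i.e., the halted component of the difference above vanishes.

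Consequently,
\begin{equation*}
    \norm{(U^\dagger (\Id \otimes A) U - V_n^\dagger (\Id \otimes A) V_n)\ket{\psi}\InitState\ket{0^T}_\RegB} \;\leq\; 2 \norm{\ket{\Phi_{\mathsf{unhalt}}}},
\end{equation*}
and it remains to bound $\norm{\ket{\Phi_{\mathsf{unhalt}}}}^2$. By the principle of deferred measurement applied to \cref{fig:coherent-qtm}, $\norm{\ket{\Phi_{\mathsf{unhalt}}}}^2$ equals the probability that $M$ on input $\ket{\psi}$ has not halted within $T'$ steps, which by Markov's inequality is at most $t(n)/T' \leq \gamma(n)^2/4$. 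Combining, the total error is at most $\gamma(n)$, as claimed.

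The main obstacle is the careful verification of the commutation property on $\ket{\Phi_{\mathsf{halt}}}$: one must check that, because $\TStateFinal$ is absorbing under $U_\delta$ and the controlled-$U_\delta$ gates in $U_{\mathrm{long}}$ are syntactically disabled on halted branches, $U_{\mathrm{long}}$ genuinely acts as $\Id_{\RegInput}$ tensored with a unitary on $\RegB_{T'+1}, \ldots, \RegB_T$ when restricted to the halted subspace. Everything else (uniformity of the $V_n$-family, the Markov bound, and the QTM-to-circuit compilation) is fairly standard bookkeeping.
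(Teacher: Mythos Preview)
Your proposal is correct and takes essentially the same approach as the paper: truncate $U$ to $\Theta(t/\gamma^2)$ steps, observe that on the halted component the remaining gates only flip the fresh $\RegB$-qubits and hence commute with $\Id \otimes A$, and bound the unhalted component via Markov. The paper phrases the halted/unhalted split via the projector $\Pi = \ketbra{1^{T-\tau}}_{\RegB_{\tau+1},\ldots,\RegB_T}$ applied after the full $U$ rather than via the $\RegState$-register after $V_n$, but this is just a cosmetic difference in how ``halted by step $\tau$'' is detected.
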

\begin{proof}
Let $V$ be the unitary given by truncating $U$ to just after the $\tau$-th iteration of $U_{\delta}$, where $\tau \eqdef \lceil t/4\gamma^2 \rceil$. Let $\Pi \eqdef \ketbra{1^{T-\tau}}_{\RegB_{\tau+1},\ldots,\RegB_T}$. Observe that for every state $\ket{\psi}$,
\[ \Pi U_i \ket{\psi} \InitState \ket{0^T}_{\RegB} = \Pi_f V_i \ket{\psi} \InitState \ket{0^\tau 1^{T-\tau}}_{\RegB}, \]
because $\Pi$ projects on to computations that finish in at most $\tau$ steps, and once the computation finishes, the remaining $\mathsf{CNOT}_{\Pi_f}$ gates flip the corresponding $\RegB_i$ from $0$ to $1$.

Moreover, for every state $\ket{\phi} \in \RegA \otimes \RegState \otimes \RegW$ and $z \in \Bits^\tau$,
\[ U_i^{\dagger} \ket{\phi} \ket{z 1^{T-\tau}} = V_i^{\dagger} \ket{\phi} \ket{z 0^{T-\tau}}, \]
since the applications of $U_{\delta}$ controlled on $\RegB_{\tau+1},\ldots,\RegB_T$ act as the identity and the $\mathsf{CNOT}_{\Pi_f}$ gates acting on $\RegB_{\tau+1},\ldots,\RegB_T$ flip the corresponding register from $1$ to $0$. Hence
\[
    U^{\dagger}(I \otimes A) \Pi U \ket{\psi} \InitState \ket{0^T}_{\RegB} = V^{\dagger}(I \otimes A) \Pi_f V \ket{\psi}\InitState\ket{0^T}_{\RegB}.
\]

The claim follows since, by Markov's inequality,
\[ \norm{\Pi U_i \ket{\psi} \InitState \ket{0^T}_{\RegB}} \leq \sqrt{t/\tau} \leq \varepsilon(n)/2. \qedhere \]
\end{proof}

\begin{lemma}
	For any $\EQPTC$ computation $\{ \Phi_{n} \}_n$ and $\varepsilon \colon \Naturals \to (0,1]$, there is a uniform sequence of (standard) quantum circuits $\{ C_{n} \}_n$ of size $\poly(n,1/\varepsilon(n))$ such that $d(\Phi_{n}(\DMatrix),C_n(\DMatrix)) \leq \varepsilon(n)$ for all $\DMatrix$.
\end{lemma}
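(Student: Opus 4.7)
The plan is to approximate $\Phi_n = C_n[M_1,\ldots,M_k]$ by substituting, one at a time, each infinite-time unitary dilation $U_i$ of the $\EQPTM$ computation $M_i$ with the polynomial-size truncation $V_i$ guaranteed by \cref{claim:qtm-truncation}, and then bounding the accumulated error by a hybrid argument. First I would fix a uniform description of $C_n$ together with its $k$ special gate pairs $(G_i,G_i^{-1})$, and fix uniform unitary dilations $U_i$ of each $M_i$ in the canonical form of \cref{fig:coherent-qtm} (acting on $\RegX_i \otimes \RegZ_i$ with $\RegZ_i = \RegState \otimes \RegHead \otimes \RegTape \otimes \RegW \otimes \RegB$ initialized to $\InitState \ket{0^T}_{\RegB}$). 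Since $C_n$ has polynomial size $s(n)$, and since by definition of the $\CSymb$-circuit each register $\RegZ_i$ is used exactly between the paired $G_i$ and $G_i^{-1}$, the map $\Phi_n(\rho)$ is exactly $\Tr_{\RegZ}\bigl(C_n'(\rho \otimes \bigotimes_i \InitState\bra{\mathsf{init}}_{\RegZ_i} \otimes \ketbra{0^T}_{\RegB_i})C_n'^{\dagger}\bigr)$, where $C_n'$ is the circuit obtained from $C_n$ by substituting $U_i$ for $G_i$ and $U_i^{\dagger}$ for $G_i^{-1}$.

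I would then define the approximating circuit $\widetilde{C}_n$ by performing the same substitution with the truncations $V_i$ provided by \cref{claim:qtm-truncation}, applied with parameter $\gamma(n) \eqdef \varepsilon(n)/(2k)$. Since each $V_i$ has size $\poly(n)/\gamma(n)^2 = \poly(n,1/\varepsilon(n))$, the overall circuit size is $s(n) + k \cdot \poly(n,1/\varepsilon(n)) = \poly(n,1/\varepsilon(n))$, and the family is uniform because each ingredient is uniform.

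For the error analysis, I would run a hybrid over the substitutions. Hybrid $\Hybrid_i$ replaces $(U_1,U_1^{\dagger}),\ldots,(U_i,U_i^{\dagger})$ by $(V_1,V_1^{\dagger}),\ldots,(V_i,V_i^{\dagger})$ and keeps the remaining dilations intact; so $\Hybrid_0 = C_n'$ and $\Hybrid_k = \widetilde{C}_n'$. To bound $\|\Hybrid_{i-1} - \Hybrid_i\|$ on any input, I would observe that all of $\Hybrid_i$ can be written as $B \cdot U_i^{\dagger}(\Id \otimes A)U_i \cdot B'$ on the appropriate registers (and similarly for $\Hybrid_{i-1}$ with $V_i$), where $A,B,B'$ are fixed unitaries independent of $U_i/V_i$ and where no intermediate gate touches the ancilla $\RegZ_i$ (this is exactly the syntactic guarantee of a $\CSymb$-circuit). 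Since $\RegZ_i$ enters the computation in the prescribed state $\InitState \ket{0^T}_{\RegB_i}$, \cref{claim:qtm-truncation} applies with this choice of $A$ and yields $\|(\Hybrid_{i-1}-\Hybrid_i)\ket{\phi}\| \leq \gamma(n)$ for every purification $\ket{\phi}$ of the input $\rho$. Converting from state distance to trace distance of the induced CPTP maps and summing over the $k$ hybrids via the triangle inequality gives $d(\Phi_n(\rho),\widetilde{C}_n(\rho)) \leq k \cdot \gamma(n) \leq \varepsilon(n)$, as required.

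The main obstacle I anticipate is not the hybrid bookkeeping itself but the careful verification that \cref{claim:qtm-truncation} can be invoked inside an arbitrary hybrid. The claim is stated for a single ``outer'' unitary $A$ sandwiched between $U$ and $U^{\dagger}$, whereas in $\Hybrid_i$ the gates between $G_i$ and $G_i^{-1}$ may themselves contain other $G_j,G_j^{-1}$ pairs (already substituted for $j<i$, still dilated for $j>i$). I would handle this by noting that for the purpose of applying the claim at index $i$, all of those surrounding gates together form a fixed unitary on $\RegY \otimes \bigotimes_{j\neq i}(\RegX_j\otimes \RegZ_j)\otimes \RegX_i$ that acts as the identity on $\RegZ_i$; this is precisely the hypothesis of \cref{claim:qtm-truncation} once we absorb the other $\RegZ_j$ registers into the ``external'' space. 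The remaining details, such as uniformity of $\widetilde{C}_n$ and the fact that tracing out the ancillas $\RegZ_i$ is contractive under trace distance, are straightforward.
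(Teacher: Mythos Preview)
Your proposal is correct and follows essentially the same approach as the paper: both replace each dilation $U_i$ by the truncation $V_i$ from \cref{claim:qtm-truncation} with precision $\gamma = \Theta(\varepsilon/k)$, bound the error on purifications via a triangle-inequality/hybrid argument, and then pass to trace distance (the paper does this last step explicitly via the fidelity distance $d_F$ from \cite{STOC:Watrous06}). Your discussion of the nested-$G_j$ issue and why the ancilla $\RegZ_i$ remains untouched between $G_i$ and $G_i^{-1}$ is exactly the point the paper leaves implicit, and your resolution is correct.
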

\begin{proof}
    Let $D_{n}$ be a $\CSymb$-circuit and $M_1,\ldots,M_k$ such that $\Phi_n = D_n[M_1,\ldots,M_k]$, and let $U_n$ be the unitary circuit obtained by replacing each $G_i,G_i^{-1}$ with the corresponding coherent implementation of $M_i$ as in \cref{fig:coherent-qtm}. Let $U'_n$ be as $U_n$, but where the $G_i$-gates are replaced with unitaries $V_i$ as guaranteed by \cref{claim:qtm-truncation}, with $\gamma(n) \eqdef \varepsilon(n)/k$. The circuit $C_n$ is obtained by initializing the ancillas to $\InitState \ket{0^\tau}_{\RegB}$, applying $U'_n$, and then tracing out the ancillas.

    We make use of the fidelity distance $d_F$, defined in \cite{STOC:Watrous06} to be
    \begin{equation*}
        d_F(\DMatrix,\DMatrixW) \eqdef \inf \{ \norm{\ket{\psi} - \ket{\phi}} : \text{$\ket{\psi},\ket{\phi}$ purify $\DMatrix,\DMatrixW$, respectively} \}.
    \end{equation*}
    \cite{STOC:Watrous06} shows that $d_F(\DMatrix,\DMatrixW) \geq d(\DMatrix,\DMatrixW)$. We can choose the purifications $U_n \ket{\psi} \InitState \ket{0^T}$ of $\Phi_n$ and $U'_n \ket{\psi} \ket{0}$ of $C_n$. By \cref{claim:qtm-truncation}, and the triangle inequality, the distance between these states is at most $\varepsilon(n)$.
\end{proof}

\subsection{Zero Knowledge with $\normalfont{\textsf{EQPT}}_c$ Simulation}
\newcommand{\out}{\mathsf{out}}
Given our definition of $\EQPTC$ above, we now formally define zero-knowledge with $\EQPTC$ simulation for interactive protocols. 

For an interactive protocol $(P, V)$, let $\out_{V^*}\langle P,V^* \rangle$ denote the output of $V^*$ after interacting with $P$.
\begin{definition}
    An interactive argument is \emph{black-box} statistical (resp. computational) post-quantum zero knowledge if there exists an $\EQPTC$ simulator $\Sim$ such that for all polynomial-size quantum malicious verifiers $V^*$ and all $(x,w) \in R_L$, the distributions
    \begin{equation*}
        \out_{V^*}\langle P(x,w),V^* \rangle \quad \text{and} \quad \Sim^{V^*}(x)
    \end{equation*}
    are statistically (resp. quantum computationally) indistinguishable.
\end{definition}

\section{State-Preserving Extraction}\label{sec:state-preserving}

So far, we have constructed $\EQPTM$ \emph{guaranteed extractors} for various protocols of interest (\cref{sec:high-probability-extractor}) and established the $\EQPTC$ model that allows for \emph{state-preserving} extraction (\cref{sec:eqpt}). In this section, we prove a generalization of \cref{lemma:tech-overview-state-preserving-high-probability}, showing how to convert a $\EQPTM$ guaranteed extractor into a state-preserving $\EQPTC$ extractor. 

In \cref{sec:hp-to-sp}, we write down an explicit reduction from state-preserving extraction to guaranteed extraction and prove \cref{lemma:state-preserving-high-probability}, which gives a condition (\cref{def:witness-binding}) under which the reduction is valid (intuitively capturing ``computational uniqueness'' of the witness given the first message of the protocol). Then, in \cref{sec:state-preserving-examples}, we show examples to which \cref{lemma:state-preserving-high-probability} applies; namely, protocols for languages with unique (partial) witnesses and general commit-and-prove protocols. Finally, in \cref{sec:state-preserving-main-theorems}, we conclude \cref{thm:succinct-state-preserving,thm:state-preserving-wi}.

\subsection{From Guaranteed Extraction to State-Preserving Extraction}
\label{sec:hp-to-sp}

We first recall our definition of state-preserving proofs of knowledge (\cref{def:state-preserving-extraction}).

\begin{definition}
   An interactive protocol $\Pi$ is defined to be a \textdef{state-preserving argument (resp. proof) of knowledge} if there exists an extractor $\mathsf{Ext}^{(\cdot)}$ with the following properties:
   
   \begin{itemize}
        \item \textbf{Syntax}: For any quantum algorithm $P^*$ and auxiliary state $\ket{\psi}$, $\mathsf{Ext}^{P^*, \ket{\psi}}$ outputs a protocol transcript $\tau$, prover state $\ket{\psi'}$, and witness $w$. 
        \item \textbf{Extraction Efficiency}: If $P^*$ is a QPT algorithm, $E^{P^*, \ket{\psi}}$ runs in expected quantum polynomial time ($\EQPTC$).
        \item \textbf{Extraction Correctness}: the probability that $\tau$ is an accepting transcript but $w$ is an invalid $\mathsf{NP}$ witness is negligible.
        \item \textbf{State-Preserving}: the pair $(\tau, \ket{\psi'})$ is computationally (resp. statistically) indistinguishable from a transcript-state pair $(\tau^*, \ket{\psi^*})$ obtained through an honest one-time interaction with $P^*(\cdot, \ket{\psi})$ (where $\ket{\psi^*}$ is the prover's residual state). 
   \end{itemize}
\end{definition}

We now introduce the notion of ``witness-binding'' protocols, i.e., protocols that are collapse-binding to functions of the witness $w$. For an adversary $\mathsf{Adv}$ and an interactive protocols $(P,V)$ we define a witness-binding experiment $\mathsf{Exp}^{\mathsf{Adv}}_{\mathsf{wb}}(b,\mathsf{Pred},f,\lambda)$ parameterized by a challenge bit $b$, a predicate $\mathsf{Pred}$ and a function $f$.
\begin{enumerate}
    \item The challenger generates the first verifier message $\vk$ and sends it to $\Adv$; skip this step if the protocol is a 3-message protocol.
    \item $\Adv$ replies with a classical instance $x$, classical first prover message $a$, and a quantum state on registers $\RegW_{\mathrm{witness}} \otimes \RegY_{\mathrm{aux}}$.
    \item The challenger performs a binary-outcome projective measurement to learn the output of $\mathsf{Pred}(x,\vk,a,\cdot,\cdot)$ on $\RegW_{\mathrm{witness}}\otimes \RegY_{\mathrm{aux}}$. If the output is $0$, the experiment aborts.
    \item If $b = 0$, the challenger does nothing. If $b = 1$, the challenger initializes a fresh ancilla $\RegK$ to $\ket{0}_{\RegK}$, applies the unitary $U_f$ (acting on $\RegW_{\mathrm{witness}} \otimes \RegK$) that computes $f(\cdot)$ on $\RegW_{\mathrm{witness}}$ and XORs the output onto $\RegK$, measures $\RegK$, and then applies $U_f^\dagger$.
    \item The challenger returns the $\RegW_{\mathrm{witness}} \otimes \RegY_{\mathrm{aux}}$ registers to $\Adv$. Finally, $\Adv$ outputs a bit $b'$, which is the output of the experiment (if the experiment has not aborted).
\end{enumerate}
\begin{definition}[$(\mathsf{Pred},f)$-binding to the witness]\label{def:witness-binding}
A 3 or 4-message protocol is witness binding with respect to predicate $\mathsf{Pred}$ and function $f$ if for any computationally bounded quantum adversary $\mathsf{Adv}$,
\[
\Big| \Pr[\mathsf{Exp}^{\mathsf{Adv}}_{\mathsf{wb}}(0,\mathsf{Pred},f,\lambda) = 1]
- \Pr[\mathsf{Exp}^{\mathsf{Adv}}_{\mathsf{wb}}(1,\mathsf{Pred},f,\lambda) = 1] \Big|
\leq \negl(\secp).\] 
\end{definition}

Next, we write down a general-purpose reduction from state-preserving extraction to guaranteed extraction and show (\cref{lemma:state-preserving-high-probability}) that the reduction is valid under an appropriate witness-binding assumption.

\begin{lemma} \label{lemma:state-preserving-high-probability}
  Suppose that $(P_{\Sigma}, V_{\Sigma})$ is a post-quantum proof/argument of knowledge with guaranteed extraction. We optionally assume that the extractor $\Extract^{P^*}$ outputs some auxiliary information $y$ in addition to the witness $w$. We then make the following additional assumptions with respect to a predicate $\mathsf{Pred}$:
  
  \begin{itemize}
      \item The protocol $(P_{\Sigma}, V_{\Sigma})$ is $(\mathsf{Pred}, f = \mathsf{Id})$-witness binding, and
      \item The tuple $(w,y)$ output by the guaranteed extractor $\Extract^{P^*}$ satisfies $\mathsf{Pred}(\vk, x, a, w, y) = 1$ with $1-\negl$ probability. 
  \end{itemize}
  Then, $(P_{\Sigma}, V_{\Sigma})$ is a state-preserving proof/argument of knowledge with $\EQPTC$ extraction.
\end{lemma}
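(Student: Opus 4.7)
The plan is to give an explicit $\EQPTC$ extractor $\overline{\Extract}^{P^*}$ that fits the coherent-runtime template of \cref{fig:cr-eqpt-simple}, and then verify efficiency, correctness, and state preservation. Let $U$ denote a coherent implementation (in the sense of \cref{fig:coherent-qtm}) of the entire $\EQPTM$ guaranteed extractor $\Extract^{P^*}$, whose output is written on designated subregisters $\RegW_{\mathrm{witness}} \otimes \RegY_{\mathrm{aux}}$. It is important to take $U$ to be the dilation of the full $\Extract^{P^*}$ rather than of $\FindWitness^{P^*}$ alone, since the latter is only $\EQPTM$ \emph{as a subroutine} of $\Extract^{P^*}$. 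The procedure $\overline{\Extract}^{P^*}$ does the following: (i) run $P^*$ to obtain $(\vk,x,a)$, coherently apply $P^*$ to $\ket{+_R}$, and measure the verifier's decision bit; (ii) if the outcome is $0$, measure $(r,z)$ and output $(\vk,x,a,r,z,w{=}\bot)$ together with the residual verifier state; (iii) if the outcome is $1$, apply $U$ to fresh ancillas initialized to $\ket{0}$, measure $(w,y)$, apply $U^{\dagger}$, measure $(r,z)$, and output. This circuit has the shape $C_1 \circ U \circ C_2 \circ U^{\dagger} \circ C_3$ required by \cref{def:cr-eqpt}, so $\overline{\Extract}^{P^*}$ is $\EQPTC$.

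For extraction correctness, conditioned on the decision bit being $1$ in step~(i), the state fed into $U$ is exactly (up to renormalization) the state on which $\Extract^{P^*}$ would invoke $\FindWitness^{P^*}$. By the guaranteed extraction property of $\Extract^{P^*}$ together with the hypothesis that $(w,y)$ satisfies $\mathsf{Pred}(\vk,x,a,w,y)=1$ with probability $1-\negl(\secp)$, the measured $w$ in step~(iii) is a valid witness with overwhelming probability whenever the outputted transcript is accepting.

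For state preservation, I would run a hybrid argument. Let $\mathbf{H}_0$ be the output of $\overline{\Extract}^{P^*}$. Let $\mathbf{H}_1$ be identical except that the measurement of $(w,y)$ in the accepting branch is omitted, so that $U^{\dagger}U$ acts as the identity on the prover register. Let $\mathbf{H}_2$ be the honest one-shot interaction with $P^*$, reformatted so that $r$ is first sampled coherently and then measured. By construction, $\mathbf{H}_1 = \mathbf{H}_2$ on both branches: on the accepting branch the $U,U^{\dagger}$ pair cancels and the residual computation is exactly the honest measurement of $(r,z)$, and on the rejecting branch this was already true. The core step is $\mathbf{H}_0 \approx \mathbf{H}_1$, which I would reduce to $(\mathsf{Pred},\mathsf{Id})$-witness binding. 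Given a distinguisher $D$ between $\mathbf{H}_0$ and $\mathbf{H}_1$, I would construct an adversary $\Adv$ in $\mathsf{Exp}^{\Adv}_{\mathsf{wb}}(b,\mathsf{Pred},\mathsf{Id},\secp)$ that runs steps~(i) and (ii) of $\overline{\Extract}^{P^*}$ up through applying $U$, and then sends $(\vk,x,a)$ together with the $\RegW_{\mathrm{witness}}\otimes\RegY_{\mathrm{aux}}$ registers to the witness-binding challenger while retaining the remaining registers. By the $\mathsf{Pred}$-correctness hypothesis and gentle measurement (\cref{lemma:gentle-measurement}), the challenger's $\mathsf{Pred}$ check disturbs the joint state by only $\negl(\secp)$ in trace distance. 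Whether the challenger then measures the witness register ($b{=}1$) or not ($b{=}0$) corresponds exactly to $\mathbf{H}_0$ versus $\mathbf{H}_1$. Applying $U^{\dagger}$, measuring $(r,z)$, and running $D$ on the result gives $\Adv$ the same distinguishing advantage as $D$, up to a negligible loss. In the statistical (proof-of-knowledge) regime, the same reduction goes through with statistical witness binding.

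The main technical obstacle I expect is ensuring that the branching on the decision bit is compatible with the $\EQPTC$ formalism and that the coherent dilation $U$ really implements the full $\Extract^{P^*}$ as a single $\EQPTM$ computation, so that the pair $(U,U^{\dagger})$ in the middle of $\overline{\Extract}^{P^*}$ satisfies the access restrictions of \cref{def:cr-eqpt}. After this bookkeeping, the rest of the argument is a routine hybrid reduction to the witness-binding definition.
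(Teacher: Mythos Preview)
Your overall strategy matches the paper's, but there are two genuine gaps.

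First, in step~(iii) you measure $(w,y)$, but the $(\mathsf{Pred},f{=}\mathsf{Id})$-witness-binding game only measures the witness register $\RegW_{\mathrm{witness}}$; the auxiliary register $\RegY_{\mathrm{aux}}$ is returned to the adversary unmeasured. So your reduction from $\mathbf{H}_0 \approx \mathbf{H}_1$ to witness binding does not go through: the challenger never measures $y$, yet your $\mathbf{H}_0$ does. The paper's extractor is careful to measure only $w$ (and $x,\vk,a,b$), leaving the $y$-subregister untouched; the role of $y$ is solely to make the $\mathsf{Pred}$ check pass in the reduction.

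Second, your reduction adversary $\Adv$ applies $U$ and $U^{\dagger}$, but $U$ is the coherent implementation of an $\EQPTM$ computation and hence is an exponential-size circuit. The witness-binding definition requires a computationally bounded adversary, so this reduction is not valid as stated. The paper handles this by first rewriting $\overline{\Extract}^{P^*}$ so that only the $\FindWitness^{P^*}$ portion is run coherently (the preceding steps generating $\vk,x,a,b$ are done as ordinary measurements), and then---assuming a distinguisher with non-negligible advantage $\varepsilon$---replacing the coherent $U_{\mathsf{FW}}$ by a strict $\poly(\secp,1/\varepsilon)$-time truncation $U_{\mathsf{FW},\varepsilon}$. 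Only then is the adversary efficient, and the truncation costs at most $\varepsilon/2$ in distinguishing advantage. This truncation step is essential and is missing from your plan. (Relatedly, your description has $U$ be the full $\Extract^{P^*}$ yet also performs the initial execution separately in step~(i); these are inconsistent, and the paper's two-stage rewriting is what resolves the tension you flag in your final paragraph.)
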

\begin{remark}
This lemma is stated with respect to $f = \mathsf{Id}$ to match the state-preserving proof of knowledge abstraction; however, we also consider (\cref{cor:state-preserving-partial-witness}) versions of this reduction where $f\neq \mathsf{Id}$. 
\end{remark}

\begin{proof}

We want to show that $(P_{\Sigma},V_{\Sigma})$ is a state-preserving proof/argument of knowledge. We begin by describing our candidate state-preserving extractor $\overline{\Extract}^{P^*}$. 

\begin{construction}\label{construction:state-preserving-reduction}
Let $\Extract^{P^*}$ be a post-quantum guaranteed extractor (\cref{def:high-probability-extraction-body}). We present an $\EQPTC$ extractor $\overline{\Extract}^{P^*}$ that has the form of an $\EQPTC$ computation (see~\cref{fig:cr-eqpt-simple}) where the unitary $U$ is a coherent implementation of the following $\EQPTM$ computation on input register $\RegH \otimes \RegR \otimes \RegS$:
\begin{enumerate}
    \item Measure $\RegR \otimes \RegS$ with the projective measurement 
    \[\BMeas{\ketbra{+_R}_{\RegR} \otimes \ketbra{0}_{\RegS}}.\]
    If the output is $0$, abort. 
    \item If the output is $1$, we are guaranteed that $\RegR \otimes \RegS$ is $\ket{+_R}_{\RegR} \otimes \ket{0}_{\RegS}$. Run $\Extract^{P^*}$ on prover state $\RegH$ using $\RegR$ as the superposition of challenges (in~\cref{step:ge-run-coherently} of~\cref{def:high-probability-extraction-body}). We assume that the randomness $\Extract^{P^*}$ uses to sample a classical random $\vk$ is generated by applying a Hadamard to a subregister of $\RegS$.
    
    Write everything that is measured/obtained during the execution of $\Extract^{P^*}$ onto subregisters of $\RegS$. This includes the instance $x$, the first two messages of the 4-message protocol $(\vk,a)$, the bit $b$ indicating the verifier's decision (i.e., whether the prover succeeds when run on the uniform superposition of challenges), and the extracted output $(w,y)$ (if $b = 1$, $w$ is a valid witness for $x$ and $\mathsf{Pred}(x, \vk, a, w, y)=1$ with $1-\negl(\secp)$ probability).
\end{enumerate}
The fact that the above computation is in $\EQPTM$ follows from the fact that $\Extract^{P^*}$ is $\EQPTM$. Let $U$ denote its coherent implementation (as in~\cref{sec:eqpt}, $U$ is a unitary on $\RegH \otimes \RegR \otimes \RegS$ and an exponential-size ancilla register).

Our state-preserving $\EQPTC$ extractor $\overline{\Extract}^{P^*}$ takes as input a prover state on $\RegH$ and does the following.

$\overline{\Extract}^{P^*}:$
\begin{enumerate}[noitemsep]
    \item Initialize additional registers $\RegR \otimes \RegS$ to $\ket{+_R}_{\RegR} \ket{0}_{\RegS}$.
    \item Apply $U$.
    \item\label[step]{step:measure-extracted-w} Measure the subregister of $\RegS$ containing $(x,\vk,a,b,w)$ where $w = 0$ is interpreted as $\bot$. Note that $\RegS$ contains a subregister corresponding to $y$, but $y$ is not measured here.
    \item\label[step]{step:apply-U-dagger} Apply $U^\dagger$.
    \item\label[step]{step:run-P-again} Run the prover $P^*$ on first message $\vk$ to obtain $x,a$ (again). Then run $P^*$ on challenge $\RegR$. Measure $\RegR$ to obtain $r$, and measure the register of $\RegH$ corresponding to its output to obtain $z$. Output $(x,\vk,a,r,z,w)$ and $\RegH$.
\end{enumerate}

\end{construction}

First, we note that the above procedure is $\EQPTC$ by construction. To prove the extraction correctness guarantee, it suffices to show that when $b = 1$, the witness $w$ is valid with $1-\negl(\secp)$ probability, and that when $b = 0$, the extractor outputs a rejecting transcript. The former statement follows immediately from the assumption that $\Extract^{P^*}$ is a guaranteed extractor. For the latter, observe (using the definition of $\Extract^{P^*}$ and the fact that $U$ is a coherent implementation of $\Extract^{P^*}$) that when $b = 0$, the state on $\RegH \otimes \RegR$ after running $P^*$ to obtain $a$ in~\cref{step:run-P-again} corresponds to a rejecting execution, so the transcript measured in~\cref{step:run-P-again} will be rejecting.

It remains to argue that the state-preserving extractor satisfies the indistinguishability property. Observe that $\overline{\Extract}^{P^*}$ can be rewritten so that $\vk,x,a,b$ are no longer obtained by running $\Extract^{P^*}$ coherently as $U$ and then measuring those values afterwards, but instead by running those steps accroding to the standard $\EQPTM$ implementation of $\Extract^{P^*}$. Thus the only part of $\Extract^{P^*}$ that is written as a coherent implementation of a variable runtime procedure is the $\FindWitness^{P^*}$ subroutine; let $U_{\mathsf{FW}}$ denote the coherent implementation of $\FindWitness^{P^*}$. Note that while $\FindWitness^{P^*}$ is technically not $\EQPTM$ on its own (i.e., there exist inputs that could make it run for too long), the fact that $\Extract^{P^*}$ is $\EQPTM$ ensures that $U_{\mathsf{FW}}$ is only applied on inputs where it runs for expected polynomial time.

Given the above definitions, the output of $\overline{\Extract}^{P^*}$ is perfectly equivalent to the following:
\begin{enumerate}[noitemsep]
    \item Sample a random $\vk$, and run the prover $P^*$ to obtain $x,a$.
    \item Initialize $\RegR$ to $\ket{+_R}_{\RegR}$ and measure $\sC$ (this is the binary projective measurement on $\RegH \otimes \RegR$ defined in \cref{sec:ge-notation} that measures whether the verifier accepts when the prover with state $\RegH$ is run on the challenge $\RegR$).
    \item If $\sC =1$, apply $U_{\mathsf{FW}}$. Otherwise if $\sC = 0$, set $w = \bot$ and skip to~\cref{step:measure-r-z}.
    \item Measure the subregister corresponding to the part of the output of $U_{\mathsf{FW}}$ containing $w$. Note that there is also a subregister corresponding to $y$, but $y$ is not measured.
    \item Apply $U_{\mathsf{FW}}^\dagger$.
    \item\label[step]{step:measure-r-z} Measure $\RegR$ to obtain $r$ and run the prover $P^*$ on $r$ to obtain its response $z$.
    \item\label[step]{step:extractor-output-w} Output $(x,\vk,a,r,z,w)$ and $\RegH$.
\end{enumerate}

Let $\mathsf{Hybrid}_0$ be identical to $\overline{\Extract}^{P^*}$ except that~\cref{step:extractor-output-w} is modified to output $(x,\vk,a,r,z)$ and $\RegH$ (i.e., omitting $w$). To show computational indistinguishability, it suffices to show that the output of $\mathsf{Hybrid}_0$ is computational indistinguishable from $\mathsf{Hybrid}_1$ defined as follows:

\begin{enumerate}[noitemsep]
    \item Sample a random $\vk$, and run the prover $P^*$ to obtain $x,a$.
    \item Initialize $\RegR$ to $\ket{+_R}_{\RegR}$ and measure $\sC$ (this is the binary projective measurement on $\RegH \otimes \RegR$ defined in \cref{sec:ge-notation} that measures whether the verifier accepts when the prover with state $\RegH$ is run on the challenge $\RegR$).
    \item Measure $\RegR$ to obtain $r$ and run the prover $P^*$ on $r$ to obtain its response $z$.
    \item Output $(x,\vk,a,r,z)$ and $\RegH$.
\end{enumerate}

$\mathsf{Hybrid}_1$ corresponds to an honest execution of $P^*$ since the measurement of $\sC$ commutes with the measurement of $\RegR$. 

By assumption, in $\mathsf{Hybrid}_0$, the reduced density $\dm_{\RegS}$ of $\RegS$ satisfies $\Tr(\Pi_{\mathrm{Valid}} \dm_{\RegS}) = 1-\negl(\secp)$, where $\Pi_{\mathrm{Valid}}$ checks that either $b=0$ or (1) $w$ is a valid witness for $x$ and (2) $\mathsf{Pred}(\vk, x, a, w,y) = 1$. Therefore, the indistinguishability of $\mathsf{Hybrid}_0$ and $\mathsf{Hybrid}_1$ should intuitively follow from the witness-binding property, since if the measurement of $w$ is skipped, then $U_{\mathsf{FW}}$ cancels out with $U_{\mathsf{FW}}^\dagger$. However, to appeal to the guarantee that measuring $w$ is undetectable, we need to ensure that $U_{\mathsf{FW}}$ corresponds to an efficient operation.

We handle this by considering a fixed polynomial-time truncation of $U_{\mathsf{FW}}$. Suppose that a distinguisher can distinguish $\mathsf{Hybrid}_0$ from $\mathsf{Hybrid}_1$ with non-negligible advantage $\varepsilon(\secp)$. Then we can modify $\mathsf{Hybrid}_0$ to use $U_{\mathsf{FW},\varepsilon}$, a coherent implementation of a strict $\poly(\secp,1/\varepsilon)$-runtime algorithm that approximates $\FindWitness^{P^*}$ to precision $\varepsilon/2$. Now the same distinguisher must distinguish between $\mathsf{Hybrid}_{0,\varepsilon}$ and $\mathsf{Hybrid}_{1}$ with advantage $\varepsilon/2$, where $\mathsf{Hybrid}_{0,\varepsilon}$ is the following:

\begin{enumerate}[noitemsep]
    \item Sample a random $\vk$, and run the prover $P^*$ to obtain $x,a$.
    \item Initialize $\RegR$ to $\ket{+_R}_{\RegR}$ and measure $\sC$.
    \item If $\sC =1$, apply $U_{\mathsf{FW},\varepsilon}$. Otherwise if $\sC = 0$, set $w = \bot$ and skip to~\cref{step:measure-r-z-eps}.
    \item Measure a subregister of the output register of $U_{\mathsf{FW},\varepsilon}$ to obtain $w$.
    \item Apply $U_{\mathsf{FW},\varepsilon}^\dagger$.
    \item\label[step]{step:measure-r-z-eps} Measure $\RegR$ to obtain $r$ and run the prover $P^*$ on $r$ to obtain its response $z$.
    \item Output $(x,\vk,a,r,z)$ and $\RegH$.
\end{enumerate}

Since $\varepsilon(\lambda)$ is at least $1/\lambda^c$ for some constant $c$ for infinitely many $\lambda$, it follows that $U_{\mathsf{FW},\varepsilon}$ and $U_{\mathsf{FW},\varepsilon}^\dagger$ are $\poly(\secp)$-runtime algorithms for infinitely many $\lambda$. Then a distinguisher that distinguishes between $\mathsf{Hybrid}_{0,\varepsilon}$ and $\mathsf{Hybrid}_1$ contradicts the witness-binding property of $(P, V)$. \qedhere
 
\end{proof}

\subsection{Applying \cref{lemma:state-preserving-high-probability}}\label{sec:state-preserving-examples}

We now show that the witness-binding hypotheses in \cref{lemma:state-preserving-high-probability} are satisfied in two cases of interest: protocols for unique-witness (or partial witness) languages (\cref{cor:state-preserving-unique-witness}), and commit-and-prove protocols (\cref{cor:commit-and-prove}).

\begin{corollary}\label{cor:state-preserving-unique-witness}
   Let $L\in \mathsf{UP}$ be a language with unique $\mathsf{NP}$ witnesses. Then, if $L$ has a post-quantum proof of knowledge with guaranteed extraction, it also has a post-quantum state-preserving proof of knowledge.
\end{corollary}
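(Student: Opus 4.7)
The plan is to apply \cref{lemma:state-preserving-high-probability} with the witness-checking predicate, exploiting uniqueness to make the witness-binding condition trivial. Specifically, I will set the auxiliary information $y$ to be empty and define
\[ \mathsf{Pred}(\vk,x,a,w) = 1 \iff w \text{ is a valid } \mathsf{NP} \text{ witness for } x. \]
The guaranteed extractor $\Extract^{P^*}$ for $(P_\Sigma, V_\Sigma)$, which exists by hypothesis, then satisfies condition (2) of \cref{lemma:state-preserving-high-probability} directly: its correctness guarantee says that whenever the initial coherent execution accepts, the extracted $w$ is a valid $\mathsf{NP}$ witness for $x$ with overwhelming probability, so $\mathsf{Pred}(\vk,x,a,w) = 1$ with overwhelming probability.

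The main thing to check is therefore that $(P_\Sigma, V_\Sigma)$ is $(\mathsf{Pred}, f = \mathsf{Id})$-witness binding. This is where the $L \in \mathsf{UP}$ assumption comes in: for every $x \in L$ there is at most one valid witness $w^*(x)$, so the projector onto $\mathsf{Pred}(\vk,x,a,\cdot) = 1$ (after the challenger's first measurement in $\mathsf{Exp}^{\mathsf{Adv}}_{\mathsf{wb}}$) is the rank-one projector $\ketbra{w^*(x)}_{\RegW_{\mathrm{witness}}} \otimes \Id_{\RegY_{\mathrm{aux}}}$. Conditioned on this measurement returning $1$, the $\RegW_{\mathrm{witness}}$ register is already in the classical state $\ket{w^*(x)}$, so applying the computational-basis measurement unitary $U_f$ (with $f = \mathsf{Id}$), measuring the ancilla $\RegK$, and uncomputing $U_f$ has no effect on the joint state of $\RegW_{\mathrm{witness}} \otimes \RegY_{\mathrm{aux}}$. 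Hence the outputs of $\mathsf{Exp}^{\mathsf{Adv}}_{\mathsf{wb}}(0,\mathsf{Pred},\mathsf{Id},\secp)$ and $\mathsf{Exp}^{\mathsf{Adv}}_{\mathsf{wb}}(1,\mathsf{Pred},\mathsf{Id},\secp)$ are \emph{identically} distributed for every (even unbounded) adversary $\mathsf{Adv}$, so witness-binding holds unconditionally.

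With both hypotheses of \cref{lemma:state-preserving-high-probability} verified, the lemma produces an $\EQPTC$ state-preserving extractor $\overline{\Extract}^{P^*}$ (given explicitly in \cref{construction:state-preserving-reduction}) for $(P_\Sigma, V_\Sigma)$, establishing that it is a post-quantum state-preserving proof of knowledge. There is no real obstacle here beyond noting that uniqueness is exactly what collapses the witness measurement to a no-op; in particular, unlike the general case treated in \cref{sec:state-preserving-examples}, we do not need any commit-and-prove compilation or super-polynomial assumptions.
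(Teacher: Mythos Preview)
Your proof is correct and takes essentially the same approach as the paper—invoke \cref{lemma:state-preserving-high-probability} and use witness uniqueness to make the witness-binding condition hold trivially. The paper instead sets $\mathsf{Pred} = 1$ (the always-true predicate), but as literally written that choice does not constrain $\RegW_{\mathrm{witness}}$ at all in the witness-binding game (an adversary could place $\ket{+}$ there and detect the measurement); your choice $\mathsf{Pred}(\vk,x,a,w) = [w \text{ is a valid witness for } x]$ is the precise one that makes the argument go through cleanly.
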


\begin{proof}
  This follows immediately from the fact that any protocol for a $\mathsf{UP}$ language is $(\mathsf{Pred}, f)$-witness binding for $\mathsf{Pred} = 1$ (the trivial predicate) and $f = \mathsf{Id}$ (because there is a unique valid witness). Since $\mathsf{Pred} = 1$, any guaranteed extractor also satisfies the $\mathsf{Pred}$-hypothesis of \cref{lemma:state-preserving-high-probability}, so we are done. \qedhere
\end{proof}

We briefly state how \cref{cor:state-preserving-unique-witness} can be extended to languages $L$ with unique \emph{partial} witnesses, provided that the extractor only measures a function $f(w)$ that is a deterministic function of the instance $x$.

\begin{corollary}\label{cor:state-preserving-partial-witness}
  Let $L\in \mathsf{NP}$, and let $f$ be an efficient function such that for all instances $x\in L$ and all witnesses $w\in R_x$, $f(x,w) = g(x)$ is equal to some fixed (possibly inefficient) function of $x$. 
  
  Suppose that $L$ has a proof/argument of knowledge $(P_{\Sigma}, V_{\Sigma})$ with guaranteed extraction. Then, a modified variant of $\overline{\Extract}$ (\cref{construction:state-preserving-reduction}), in which only $f(x,w)$ is measured instead of $w$, is a state-preserving proof/argument of knowledge extractor for $(P_{\Sigma}, V_{\Sigma})$ that outputs $g(x)$. \qedhere
\end{corollary}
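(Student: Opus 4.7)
The plan is to invoke a mild generalization of \cref{lemma:state-preserving-high-probability} (allowing $f \neq \mathsf{Id}$) with the predicate $\mathsf{Pred}(\vk, x, a, w, y)$ that outputs $1$ iff $w$ is a valid $\mathsf{NP}$-witness for $x$, and with the function $f$ given by the corollary. This predicate is efficiently checkable since $L \in \mathsf{NP}$, and the guaranteed extractor satisfies the $\mathsf{Pred}$-hypothesis by the correctness of guaranteed extraction: whenever the verifier's decision bit equals $1$, the extracted $w$ is a valid witness with probability $1 - \negl(\secp)$.

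The modified state-preserving extractor $\overline{\Extract}$ is defined exactly as in \cref{construction:state-preserving-reduction}, except that the measurement of the $w$-subregister in \cref{step:measure-extracted-w} is replaced by the following sequence: initialize a fresh ancilla register $\RegK$ to $\ket{0}_{\RegK}$, apply the efficient unitary $U_f$ that XORs $f(x,w)$ into $\RegK$ (with $x$ read from its classical subregister of $\RegS$), measure $\RegK$ to obtain a value $v$, and apply $U_f^{\dagger}$. On the branch where the verifier accepts and $w$ is a valid witness, the measured $v$ deterministically equals $g(x)$. Everything else (the subsequent $U^{\dagger}$, the re-run of $P^*$, and the transcript measurement) is unchanged.

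The key observation, which is the only substantive content of the proof, is that this modified extractor satisfies the $(\mathsf{Pred}, f)$-witness binding property of \cref{def:witness-binding} \emph{statistically} (indeed, perfectly). The reason: after projecting $\RegW_{\mathrm{witness}} \otimes \RegY_{\mathrm{aux}}$ onto $\mathsf{Pred} = 1$, the state is supported entirely on valid witnesses for the classical instance $x$, and by hypothesis $f(x, w) = g(x)$ for every such $w$. Therefore applying $U_f$ produces a state of the form $\ket{\cdot}_{\RegW_{\mathrm{witness}}, \RegY_{\mathrm{aux}}} \otimes \ket{g(x)}_{\RegK}$, so measuring $\RegK$ and uncomputing $U_f$ has zero effect on the remaining registers. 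In particular, witness binding holds against unbounded adversaries.

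Given this, the rest of the proof follows the proof of \cref{lemma:state-preserving-high-probability} essentially verbatim. Extraction correctness is inherited: the rejecting branch still outputs a rejecting transcript, and on the accepting branch the measured $v$ equals $g(x)$ with overwhelming probability, so the extractor indeed outputs $g(x)$. For state preservation, one truncates $U_{\mathsf{FW}}$ to a strict-polynomial-time $\eps$-approximation $U_{\mathsf{FW}, \eps}$ exactly as in the original proof, and observes that the only remaining difference between the real extraction and an honest execution is precisely a $(\mathsf{Pred}, f)$-witness binding experiment, which we have just shown to be undetectable. I do not anticipate any significant obstacle in this proof; the main point is to identify the correct choice of $(\mathsf{Pred}, f)$ and observe that the constancy of $f(x, \cdot)$ on $R_x$ trivializes the witness-binding check.
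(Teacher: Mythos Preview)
Your proposal is correct and follows essentially the same route as the paper: invoke the $f \neq \mathsf{Id}$ variant of \cref{lemma:state-preserving-high-probability} and observe that the constancy of $f(x,\cdot)$ on $R_x$ makes the witness-binding condition hold statistically. The paper phrases this with the trivial predicate $\mathsf{Pred} = 1$ (mirroring the proof of \cref{cor:state-preserving-unique-witness}), whereas you take $\mathsf{Pred}$ to be ``$w$ is a valid witness for $x$''; your choice is arguably cleaner, since it makes $(\mathsf{Pred},f)$-witness binding hold for \emph{all} adversaries rather than only the one produced by the reduction (which happens to submit states supported on valid witnesses), but the substance is identical.
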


This holds by the same reasoning as \cref{cor:state-preserving-unique-witness}: the hypothesis of \cref{cor:state-preserving-partial-witness} implies that any protocol for $L$ is $(\mathsf{Pred} = 1, f)$-witness binding, and so the reduction from \cref{lemma:state-preserving-high-probability} applies (when $f(x,w)$ is measured rather than $w$). \qedhere

\subsubsection{Commit-and-Prove Protocols}

Let $(P_\Sigma, V_\Sigma)$ denote a post-quantum proof/argument of knowledge with guaranteed extraction (\cref{def:high-probability-extraction-body}). Recall that \cref{def:high-probability-extraction-body} has been designed to capture (first-message) adaptive soundness, in which the prover $P^*$ can adaptively choose the instance $x$ as it sends its first message.

Then, we consider a \emph{commit-and-prove} compiled protocol $(P_{\Com}, V_{\Com})$ using $(P_\Sigma, V_\Sigma)$ and a commitment scheme $\Com$. $(P_{\Com}, V_{\Com})$ is executed as follows:

\begin{itemize}
    \item $V_{\Com}$ sends a first message for $(P_{\Sigma}, V_{\Sigma})$ (if the protocol has four messages). Moreover, if $\Com$ is a two-message commitment scheme, $V_{\Com}$ sends a commitment key $\ck$. 
    \item $P_{\Com}$ then sends:
    \begin{itemize}
        \item A commitment $\com = \Com(\ck, w)$ to a witness $w$ for the underlying language $L$, and
        \item A first prover message for an execution of $(P_{\Sigma}, V_{\Sigma})$ for the statement ``$\exists w, r$ such that $\com = \Com(\ck, w; r)$ and $w$ is an $\mathsf{NP}$-witness for $x\in L$. 
    \end{itemize}
    \item $P_{\Com}$ and $V_{\Com}$ then complete the execution of $(P_{\Sigma}, V_{\Sigma})$. 
\end{itemize}

\begin{corollary}\label{cor:commit-and-prove}
 If $(P_{\Sigma}, V_{\Sigma})$ is a post-quantum proof/argument of knowledge with guaranteed extraction for all $\mathsf{NP}$ languages and $\Com$ is a collapse-binding commitment scheme, then the commit-and-prove compiled protocol is a state-preserving proof/argument of knowledge. 
\end{corollary}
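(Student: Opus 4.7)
The plan is to apply \cref{lemma:state-preserving-high-probability} directly. View the compiled protocol $(P_{\Com}, V_{\Com})$ as an execution of $(P_\Sigma, V_\Sigma)$ for the compound $\mathsf{NP}$ language
\[
L' = \{ (x, \vk, \com) : \exists (w, r) \text{ s.t. } \Com(\ck, w; r) = \com \text{ and } w \in R_L(x) \},
\]
where $\ck$ is either embedded in $\vk$ or absent, and the first prover message of $(P_\Com, V_\Com)$ consists of $(\com, a_\Sigma)$ with $a_\Sigma$ the first $\Sigma$-protocol message. Let $\Extract^{P^*}$ denote the guaranteed extractor for $(P_\Sigma, V_\Sigma)$ applied to $L'$. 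By assumption, on accepting initial executions, $\Extract^{P^*}$ produces a valid $L'$-witness $(w, r)$ with all but negligible probability; we interpret $w$ as the witness and $r$ as the auxiliary string $y$ that the lemma permits the extractor to output alongside $w$.

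Define the predicate
\[
\mathsf{Pred}(\vk, x, (\com, a_\Sigma), w, r) = 1 \iff \Com(\ck, w; r) = \com \text{ and } w \in R_L(x).
\]
Extraction correctness of $\Extract^{P^*}$ immediately gives that $\mathsf{Pred} = 1$ with probability $1 - \negl(\secp)$ on accepting initial executions, verifying the second hypothesis of \cref{lemma:state-preserving-high-probability}.

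The main step is verifying that $(P_\Com, V_\Com)$ is $(\mathsf{Pred}, f = \mathsf{Id})$-witness binding, where in the notation of \cref{def:witness-binding} the witness register $\RegW_{\mathrm{witness}}$ holds $w$ alone and the opening $r$ resides inside the auxiliary register $\RegY_{\mathrm{aux}}$. We reduce this directly to the collapse-binding of $\Com$: given a witness-binding adversary $\Adv$ with advantage $\varepsilon$, we build a collapse-binding adversary $\Adv'$ that forwards its received $\ck$ into a simulated $\vk$, runs $\Adv$ to obtain $(x, (\com, a_\Sigma))$ together with a state on $\RegW_{\mathrm{witness}} \otimes \RegR \otimes \RegY_{\mathrm{aux}}'$, and performs the efficient projective measurement corresponding to $\mathsf{Pred}$, aborting on outcome $0$. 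On success, $\Adv'$ sends $\com$ and the registers $\RegW_{\mathrm{witness}} \otimes \RegR$ to its collapse-binding challenger as the message and opening registers. The $\mathsf{Pred}$-projection guarantees that the joint state on $\RegW_{\mathrm{witness}} \otimes \RegR$ is supported on pairs $(w, r)$ with $\Com(\ck, w; r) = \com$, so $\Adv'$ is a \emph{valid} collapse-binding adversary. After the challenger returns the registers, $\Adv'$ runs the second half of $\Adv$ and outputs its guess; the $b = 0$ versus $b = 1$ branches of the collapse-binding experiment exactly mirror the witness-binding measurement of $\RegW_{\mathrm{witness}}$, so $\Adv'$ inherits advantage $\varepsilon$.

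Having verified both hypotheses, \cref{lemma:state-preserving-high-probability} yields the desired $\EQPTC$ state-preserving extractor $\overline{\Extract}^{P^*}$ for $(P_\Com, V_\Com)$, which outputs a valid $\mathsf{NP}$-witness $w$ for $L$ together with a transcript and residual prover state that are computationally indistinguishable from those of an honest one-shot interaction. The main technical subtlety is the decision to place the opening $r$ into the auxiliary slot $y$ rather than into the witness register: this ensures that the $f = \mathsf{Id}$ measurement of \cref{def:witness-binding} collapses only $w$, matching \emph{standard} collapse-binding of $\Com$ and avoiding any appeal to strong collapse-binding or strict binding.
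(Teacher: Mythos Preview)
Your proof is correct and follows essentially the same approach as the paper: view the compiled protocol as a guaranteed-extraction protocol for the compound language, set the auxiliary output $y$ to be the commitment opening, define $\mathsf{Pred}$ to check that $(w,r)$ validly opens $\com$, and invoke \cref{lemma:state-preserving-high-probability} after reducing $(\mathsf{Pred},\mathsf{Id})$-witness binding to collapse-binding. The only differences are that the paper's $\mathsf{Pred}$ omits the (harmless) extra condition $w\in R_L(x)$, and that you spell out the collapse-binding reduction in more detail than the paper, which simply asserts it.
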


\begin{proof}

We first remark that since $(P_{\Sigma}, V_{\Sigma})$ is a post-quantum proof/argument of knowledge with guaranteed extraction, the commit-and-prove composed protocol is also immediately a post-quantum proof/argument of knowledge with guaranteed extraction. Namely, $\Extract^{P^*}$ interprets the cheating prover as an adaptive-input cheating prover for $(P_{\Sigma}, V_{\Sigma})$ with respect to the language
\[ L_{\ck, \com} = \big\{ (w, \omega): w \in R_x \text{ and } \Com(\ck, w; \omega) = \com\big\}
\]
and runs the guaranteed extractor for $(P_{\Sigma}, V_{\Sigma})$. Morevoer, this extraction procedure outputs \emph{both} an $\mathsf{NP}$-witness $w$ \emph{and} commitment randomness $\omega$ such that $\com = \Com(\ck, w; \omega)$; we treat $\omega$ as auxiliary information $y$.

We then define $\mathsf{Pred}(x,(\ck, \vk), (\com, a), w, \omega)$ to output $1$ if and only if $\Com(\ck, w; \omega) = \com$. Then, we observe that the commit-and-prove protocol is $(\mathsf{Pred}, \mathsf{Id})$-witness binding (for the language $L$) by the collapse-binding of the commitment scheme $\Com$. Moreover, the correctness property of $\Extract^{P^*}$ further guarantees that $\mathsf{Pred}(x, (\ck, \vk), (\com, a), w, \omega) = 1$ with probability $1-\negl(\secp)$.

Thus, we conclude that \cref{lemma:state-preserving-high-probability} applies, and so the commit-and-prove protocol has a state-preserving extractor. \qedhere

\end{proof}

\subsection{Concluding \cref{thm:succinct-state-preserving,thm:state-preserving-wi}}\label{sec:state-preserving-main-theorems}

Finally, we describe how to conclude the results of \cref{thm:succinct-state-preserving,thm:state-preserving-wi}. We begin with \cref{thm:succinct-state-preserving}, re-stated below.

\begin{theorem}[\cref{thm:succinct-state-preserving}]
Assuming collapsing hash functions exist, there exists a 4-message public-coin state-preserving succinct argument of knowledge for $\mathsf{NP}$.
\end{theorem}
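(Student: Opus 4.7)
The plan is to combine Kilian's succinct argument with the commit-and-prove compiler, invoking Corollary~\ref{corollary:kilian-guaranteed} to obtain a guaranteed extractor and then Corollary~\ref{cor:commit-and-prove} to upgrade it to a state-preserving one, all while preserving the 4-message public-coin structure.

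Concretely, I would fix a family of collapsing hash functions $h$ and use it simultaneously to (i) instantiate the Merkle tree inside Kilian's protocol and (ii) build a non-interactive collapse-binding commitment scheme $\Com$ (via the standard Unruh-style construction from a collapsing hash). Fixing any PCP of knowledge for $\NP$ with polynomial proof length and polylogarithmic query complexity, the protocol for $x \in L$ proceeds as follows: (1) the verifier sends $h$; (2) the honest prover computes $\com \eqdef \Com(w;\omega)$ for $\omega \gets \Bits^\secp$, builds a PCP $\pi$ for the augmented $\NP$ statement $L' \eqdef \{(x,\com) : \exists\,(w,\omega)\ \text{such that } w\text{ is an $\NP$-witness for } x \text{ and } \com = \Com(w;\omega)\}$, computes the Merkle root $\rt$ of $\pi$ under $h$, and sends $(\com,\rt)$; (3) the verifier sends uniform PCP queries $r$; (4) the prover answers $\pi[q_r]$ together with Merkle authentication paths. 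Because $h$ serves as both the Merkle key and the commitment key, no separate commitment-key message is required, and bundling $\com$ with $\rt$ keeps the protocol at 4 messages. It is public-coin (the verifier's messages are the random $h$ and $r$), and succinct since $|\com|,|\rt|,$ and the final response are all $\poly(\secp,\log|x|)$.

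It remains to exhibit the state-preserving extractor. Viewing the protocol as Kilian's argument applied to $L'$, Corollary~\ref{corollary:kilian-guaranteed} produces a guaranteed extractor whose output is a pair $(w,\omega)$ that simultaneously witnesses membership of $x$ in $L$ and opens $\com$. This is exactly the data required by the commit-and-prove template of Section~\ref{sec:hp-to-sp}, with the collapse-binding of $\Com$ providing the required computational uniqueness. Applying Corollary~\ref{cor:commit-and-prove} then yields an $\EQPTC$ state-preserving argument of knowledge that outputs $w$.

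I do not anticipate any serious technical obstacle; the work lies in the prior sections and this theorem is essentially a direct assembly of Corollaries~\ref{corollary:kilian-guaranteed} and~\ref{cor:commit-and-prove}. The main point to verify carefully is the bookkeeping on message complexity and succinctness, namely that reusing $h$ as the commitment key together with bundling $(\com,\rt)$ really does keep the protocol at four public-coin messages and that introducing a $\poly(\secp)$-size commitment does not destroy Kilian's succinctness. Completeness and (adaptive) soundness of the composed protocol follow in the standard way from the soundness of Kilian's protocol applied to $L'$ combined with the binding of $\Com$.
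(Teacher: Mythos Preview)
Your proposal is correct and follows essentially the same approach as the paper: instantiate Kilian's protocol with a collapsing hash function (invoking Corollary~\ref{corollary:kilian-guaranteed} for guaranteed extraction), then apply the commit-and-prove compiler of Corollary~\ref{cor:commit-and-prove} with a collapse-binding commitment built from the same hash key, so that the protocol stays at four public-coin messages and remains succinct. The only cosmetic difference is that the paper takes the commitment to be simply $\Com(h,m) = h(m)$ (noting explicitly that hiding is irrelevant here and only succinctness and collapse-binding are needed), whereas you describe a generic Unruh-style collapse-binding commitment; both choices work for the same reason.
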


\begin{proof}
  Given a collapsing hash function family $\mathsf{H}$, we construct a state-preserving succinct argument of knowledge for $\mathsf{NP}$ as follows:
  
  \begin{itemize}
      \item First, we define Kilian's succinct argument system (see \cref{sec:kilian}) with respect to $\mathsf H$. By \cref{corollary:kilian-guaranteed}, this argument system is a post-quantum argument of knowledge with guaranteed extraction.
      \item Next, we apply the commit-and-prove compiler (\cref{cor:commit-and-prove}) using the collapse-binding commitment scheme $\Com(\ck = h, m) = h(m)$. This commitment scheme does not formally satisfy any hiding property, but it is \emph{succinct}, which is what is relevant for \cref{thm:succinct-state-preserving}.
  \end{itemize}
  
  \cref{cor:commit-and-prove} tells us that the resulting composed protocol is a state-preserving argument of knowledge for $\mathsf{NP}$. Moreover, it satisfies all of the properties (4-message, public-coin, succinct) claimed in the theorem statement. \qedhere
\end{proof}

Next, we prove \cref{thm:state-preserving-wi}, re-stated below.

\begin{theorem}\label{thm:state-preserving-wi-body}
Assuming collapsing hash functions or \emph{super-polynomially secure} one-way functions, there exists a 4-message public-coin state-preserving witness-indistinguishable argument (in the case of collapsing) or proof (in the case of OWFs) of knowledge. Assuming \emph{super-polynomially secure} non-interactive commitments, there exists a 3-message PoK achieving the same properties.
\end{theorem}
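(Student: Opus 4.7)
The plan is to instantiate \cref{thm:state-preserving-wi-body} via the commit-and-prove compiler (\cref{cor:commit-and-prove}) applied to standard parallel-repeated commit-and-open $\Sigma$-protocols for $\NP$, such as Blum's Hamiltonicity or GMW 3-coloring. Concretely, the honest prover sends a commitment $\com = \Com(\ck, w; \omega)$ to the $\NP$-witness together with the first message $a$ of a $\Sigma$-protocol proving the conjunctive statement ``there exist $(w, \omega)$ such that $\com = \Com(\ck, w; \omega)$ and $w$ is a valid $\NP$-witness for $x$''. The inner parallel-repeated commit-and-open $\Sigma$-protocol is $(k^2, g)$-probabilistic-special-sound by \cref{lemma:parallel-repetition}, and when instantiated with any collapse-binding commitment it is partially collapsing with respect to the opened messages. \cref{cor:commit-and-open} then yields a guaranteed $\EQPTM$-extractor, and \cref{cor:commit-and-prove} lifts this to a state-preserving $\EQPTC$-extractor for the outer protocol, provided the outer $\Com$ is also collapse-binding. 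What remains is to choose instantiations for the three regimes and to verify that the compilation preserves witness indistinguishability.

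For the three regimes I would proceed as follows. In the collapsing-hash case, I take $\Com(m; r) \eqdef h(m \,\|\, r)$ with $r$ drawn from a long enough domain that the leftover-hash lemma makes $\Com$ statistically hiding; this commitment is collapse-binding and also unique-message binding by \cref{lemma:collapse-binding-unique-message}, and using the same commitment inside the base $\Sigma$-protocol makes the latter statistically witness-indistinguishable. The soundness of the compiled protocol is computational, so one obtains a 4-message public-coin \emph{argument} of knowledge. In the super-polynomial OWF case, the inner $\Sigma$-protocol uses Naor's 2-round statistically-binding and collapse-binding commitment (giving statistical soundness), and $\Com$ is instantiated with a 2-round statistically-hiding commitment built from super-polynomially secure UOWHFs; the two commitment-setup messages are sent together as the single first verifier message, keeping the round count at four and yielding a \emph{proof} of knowledge. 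In the non-interactive commitment case, both $\Com$ and the $\Sigma$-protocol's commitments are the assumed super-polynomially secure non-interactive commitments, eliminating the verifier's setup message and producing a 3-message public-coin PoK.

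Witness indistinguishability will then be shown by the standard hybrid in which one swaps the committed witness from $w_0$ to $w_1$, correspondingly replacing the $\Sigma$-protocol witness $(w_0, \omega_0)$ with $(w_1, \omega_1)$. In the collapsing instantiation the hybrid transitions are statistical, since $\Com$ is statistically hiding and the inner $\Sigma$-protocol is statistically WI. In the two super-polynomial instantiations $\Com$ is only computationally hiding, and the reduction to the hiding of $\Com$ is the subtle step: it must simulate the PoK transcript without an opening of the challenge commitment $\com^*$, which is accomplished by SHVZK-simulating the $\Sigma$-protocol after super-polynomially ``looking ahead'' at the malicious verifier's challenge (for example, by guessing $r$ from $R$ or by brute-force rewinding), and this is tolerated precisely because $\Com$ is assumed super-polynomially hiding. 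The main obstacle will be formalizing this super-polynomial SHVZK-based simulation against a quantum malicious verifier and interfacing it cleanly with the commit-and-prove PoK; this will be handled by first truncating the $\EQPTC$ transcript-production of the compiled PoK to a fixed polynomial size with negligible error via \cref{lemma:truncation}, and then composing with a super-polynomial quantum rewinding/guessing argument in the same spirit as the projector-indistinguishability tools developed in \cref{sec:tech-overview-pqzk-gk}.
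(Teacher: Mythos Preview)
Your high-level approach matches the paper's: apply the commit-and-prove compiler (\cref{cor:commit-and-prove}) to a parallel-repeated commit-and-open $\Sigma$-protocol. Two concrete issues, however.

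\textbf{OWF instantiation.} Your outer $\Com$ in the OWF case is a statistically-hiding commitment built from super-polynomial UOWHFs. Such a commitment is only computationally binding and is not known to be collapse-binding, so \cref{cor:commit-and-prove} does not apply and you lose state-preserving extraction. The paper instead uses the \emph{same} commitment for the inner $\Sigma$-protocol and the outer commit-and-prove: a Naor-style statistically-binding (hence collapse-binding) commitment whose computational hiding is $2^t\cdot\poly(\secp)$-secure. The super-polynomial assumption is on the \emph{hiding}, not the binding.

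\textbf{WI argument.} Your final paragraph conflates the WI proof (honest prover versus malicious verifier) with the extraction machinery. There is no extractor, no $\EQPTC$ runtime, and no need for \cref{lemma:truncation} or the projector-indistinguishability tools in the WI hybrids. The paper's argument is direct: the partial simulator that guesses the challenge $r\in\{0,1\}^t$ and runs $\SHVZK.\Sim$ succeeds with probability $2^{-t}$ \emph{independent of the verifier's quantum state}, so Watrous's rewinding lemma yields a full post-quantum simulator running in time $2^t\cdot\poly(\secp)$. The WI hybrid is then: real proof with $w_0$ $\to$ simulated proof $\to$ swap $\com$ from $\Com(w_0)$ to $\Com(w_1)$ (using $2^t\cdot\poly(\secp)$-hiding of $\Com$, which can absorb the super-polynomial simulator) $\to$ simulated proof $\to$ real proof with $w_1$. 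Your ``guess $r$ or brute-force rewind'' gestures at this but does not invoke Watrous, which is the step that makes the quantum reduction go through.

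Minor: $\Com(m;r)=h(m\|r)$ is not generically statistically hiding for a collapsing $h$ (leftover hash needs $2$-universality, which collapsing does not give). The paper uses $\Com(h,m;r,s)=(h(r),s,\langle r,s\rangle\oplus m)$, which is statistically hiding because $h$ is compressing.
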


\begin{proof}
All three variants of this theorem are proved via the same approach: combining commit-and-prove with a (strong) witness-indistinguishable $\Sigma$-protocol.

Formally, let $\Com$ denote a (possibly keyed) non-interactive commitment scheme. We use $\Com$ to instantiate a commit-and-open $\Sigma$-protocol (\cref{def:commit-and-open}) such as the \cite{C:GolMicWig86} protocol for graph $3$-coloring or the (potentially modified) \cite{Blum86} protocol for Hamiltonicity. We do a sufficient parallel repetition of the commit-and-open protocol so that its challenge space satisfies $|R| = 2^t$ for $t\leq \poly(\secp)$\footnote{Using \cite{Blum86}, one can set $t = \poly(\log \secp)$.} and it achieves $\negl(\secp)$ soundness error. Then, \cref{cor:commit-and-open} tells us that this protocol is a post-quantum proof/argument of knowledge (depending on whether $\Com$ is statistically or collapse-binding) with guaranteed extraction.

Next, we additionally assume (as is the case for \cite{C:GolMicWig86,Blum86}) that the $\Sigma$-protocol satisfies special honest-verifier zero knowledge (\cref{def:shvzk}). In fact, we assume that it satisfies SHVZK against quantum adversaries that run in time $2^t \cdot \poly(\secp)$, which holds (for these examples) provided that $\Com$ is computationally hiding against $2^t\cdot \poly(\secp)$-time adversaries.

Under this assumption, Watrous' rewinding lemma \cite{STOC:Watrous06} implies that the $\Sigma$-protocol has a time $2^t\cdot\poly(\secp)$ malicious verifier post-quantum simulator.

We now plug this $\Sigma$-protocol into the commit-and-prove compiler (\cref{cor:commit-and-prove}), again making use of the commitment scheme $\Com$ (for simplicity of the proof, we assume here that a different commitment key is used, although this is not necessary). \cref{cor:commit-and-prove} tells us that the resulting protocol is a state-preserving proof/argument of knowledge (again depending on whether $\Com$ is statistically binding).

It remains to show WI of the commit-and-prove protocol. That is, we want to show that for every malicious verifier $V^*$ (and maliciously chosen commitment key $\ck$), a commitment $\com = \Com(\ck, w_1)$ and the view of $V^*$ in an execution of the $\Sigma$-protocol is computationally indistinguishable from the analogous state when a second witness $w_2$ is instead used. This is argued via the usual hybrid argument:

\begin{itemize}
    \item Define $\mathsf{Hybrid}_{0,b}$ to be $\Com(\ck, w_b)$ along with the actual $\Sigma$-protocol view of $V^*$.
    \item Define $\mathsf{Hybrid}_{1,b}$ to consist of $\com = \Com(\ck, w_b)$ along with a $2^t\cdot \poly(\secp)$-time \emph{simulated} view of $V^*$ on input $(\ck, \com)$. We have that $\mathsf{Hybrid}_{1,b}\approx_c \mathsf{Hybrid}_{0,b}$ by the super-polynomial time simulatability of the $\Sigma$-protocol (as discussed above).
    \item Finally, we have that $\mathsf{Hybrid}_{1,0}\approx_c \mathsf{Hybrid}_{1,1}$ by the (already assumed) $2^t\cdot \poly(\secp)$-hiding of $\Com$.
\end{itemize}

\noindent To conclude the theorem statement, it suffices to instantiate $\Com$ in three ways: 

\begin{itemize}
    \item Assuming $2^t\cdot \poly(\secp)$-secure non-interactive commitments (e.g. \cite{C:BarOngVad03,TCC:GHKW17,ePrint:LomSch19}), one obtains the claimed $3$-message protocol.
    \item Assuming $2^t\cdot \poly(\secp)$-secure one-way functions, one obtains the OWF-based $4$-message protocol.
    \item Assuming polynomially-secure collapsing hash functions, one obtains the collapsing-based $4$-message protocol by defining $\Com(h, m; r, s) = (h(r), s, \langle r, s\rangle \oplus m)$. This commitment scheme is \emph{statistically} hiding (i.e. hiding against unbounded adversaries), and so WI of the commit-and-prove protocol holds unconditionally, while the AoK property relies on collapsing. 
\end{itemize}

\noindent This completes the proof of \cref{thm:state-preserving-wi}.\qedhere 
\end{proof}

\newcommand{\GIComm}{\mathsf{GIComm}}
\newcommand{\IDC}{\mathsf{IDC}}
\newcommand{\PoK}{\mathsf{PoK}}

\section{The~\cite{FOCS:GolMicWig86} GNI Protocol is Post-Quantum Zero Knowledge}

In this section, we show that our state-preserving extraction results imply the post-quantum ZK of the graph non-isomorphism protocol, proving \cref{thm:szk}. We begin by giving a description of the GNI protocol in \cref{fig:gni}. Our description achieves soundness error $1/2$ (as does the original~\cite{FOCS:GolMicWig86}), but can be extended to the negligible soundness case (without increasing the number of rounds) with essentially the same proof of (post-quantum) ZK.

\begin{figure}[!ht]
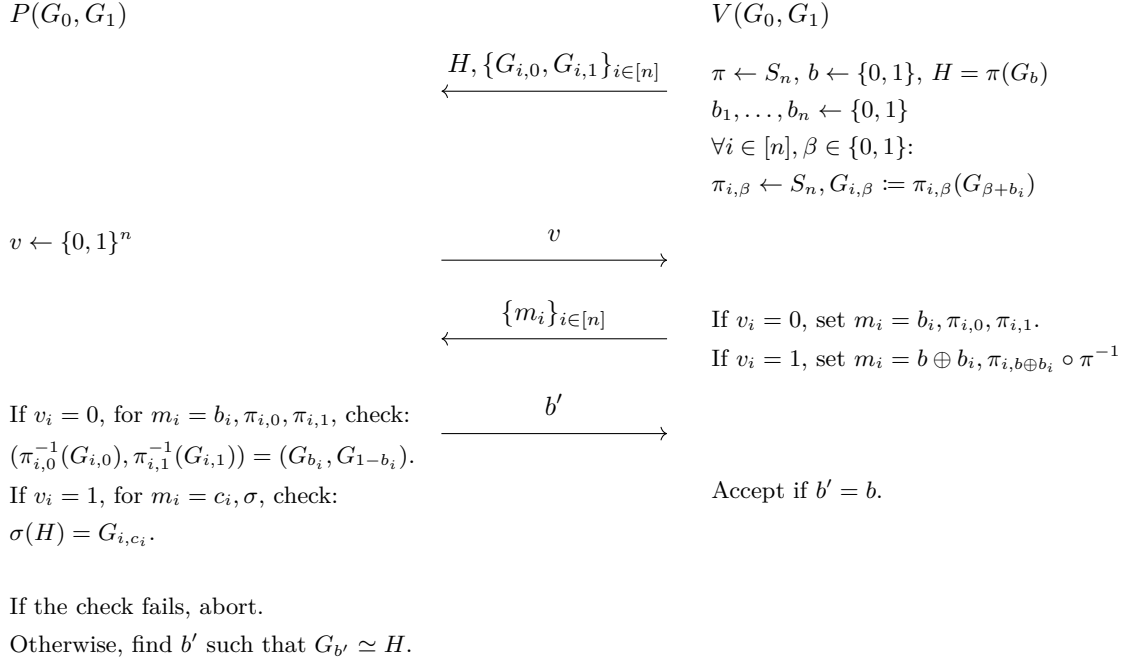

\centering
	\procedure[]{}{
		P (G_0,G_1) \> \> V(G_0,G_1) \\ 
		 \> \sendmessageleft*[3cm]{H,\{G_{i,0},G_{i,1}\}_{i \in [n]}} \> \begin{subprocedure}
			\pseudocode[mode=text]{$\pi \gets S_n$, $b \gets \{0,1\}$, $H = \pi(G_b)$ \\ $b_1, \hdots, b_n \gets \{0,1\}$ \\ 
			$\forall i \in [n], \beta \in \{0,1\}$:\\ $\pi_{i,\beta} \gets S_n, G_{i,\beta} \coloneqq \pi_{i,\beta}(G_{\beta + b_i})$}
		\end{subprocedure}\\ 
		\begin{subprocedure}
			\pseudocode[mode=text]{$v \gets \{0,1\}^n$}
		\end{subprocedure}
		\> \sendmessageright*[3cm]{v} \>  \\
		\> \sendmessageleft*[3cm]{\{m_i\}_{i \in [n]}} \> \begin{subprocedure}
			\pseudocode[mode=text]{If $v_i = 0$, set $m_i = b_i, \pi_{i,0},\pi_{i,1}$.\\
			If $v_i = 1$, set $m_i = b\oplus b_i, \pi_{i,b \oplus b_i} \circ \pi^{-1}$}
		\end{subprocedure} \\
		\begin{subprocedure}
			\pseudocode[mode=text]{If $v_i = 0$, for $m_i = b_i, \pi_{i,0},\pi_{i,1}$, check: \\ $(\pi_{i, 0}^{-1}(G_{i,0}), \pi_{i, 1}^{-1}(G_{i,1})) = ( G_{b_i}, G_{1-b_i})$.\\
			If $v_i = 1$, for $m_i = c_i, \sigma$, check:\\
			$\sigma(H) = G_{i, c_i}$.
			\\ 
			\\ If the check fails, abort. 
			\\ Otherwise, find $b'$ such that $G_{b'} \simeq H$.
			}
		\end{subprocedure}
			\> \sendmessageright*[3cm]{b'} \>  \begin{subprocedure}
			\pseudocode[mode=text]{\\ \\ Accept if $b' = b$.}
			\end{subprocedure}
	}
	\caption{The Zero Knowledge Proof System for Graph Non-Isomorphism.}
	\label{fig:gni}
\end{figure}

Next, we give a slightly more abstract description of the protocol using instance-dependent commitments \cite{BelMicOst90,JC:ItoOhtShi97,C:MicVad03}. 

\begin{construction}
    Fix a language $L$, let $\IDC$ be a non-interactive instance-dependent commitment\footnote{That is, when $x\in L$, a commitment $\Com(x, m)$ statistically hides the message $m$. When $x\not\in L$, a commitment $\Com(x, m)$ statistically binds the committer to $m$.} \cite{BelMicOst90,JC:ItoOhtShi97,C:MicVad03} for $L$, and let $\PoK$ be a statistically witness-indistinguishable proof of knowledge of the committed bit for $\IDC$. Then, we define the following interactive proof system for the complement language $\overline{L}$.
    \begin{enumerate}
        \item The verifier commits to a bit $b \in \{0,1\}$ using $\IDC$ and sends it to the prover.
        \item The prover and verifier engage in $\PoK$ where the verifier proves knowledge of $b$.
        \item If the prover accepts in $\PoK$, then it sends $b'$ as determined by the verifier's commitment.
        \item The verifier accepts if $b' = b$.
    \end{enumerate}
\end{construction}

\cite{FOCS:GolMicWig86} instantiates this framework for the language $L$ consisting of pairs of isomorphic graphs (and so $\overline{L}$ consists of pairs of non-isomorphic graphs, up to well-formedness of the string $x$). 

Let $\GIComm$ be the following instance-dependent commitment scheme: $\GIComm((G_0,G_1),b;\pi) = \pi(G_b):= H$. Observe that if $G_0,G_1$ are isomorphic then this commitment is perfectly hiding, and if they are not then it is perfectly binding. Moreover, this commitment scheme admits a proof of knowledge of the committed bit as follows.

\begin{enumerate}
    \item The prover chooses $b_1,\ldots,b_{\secp} \in \Bits$ uniformly at random and sends commitments $C_{i,0} \eqdef \GIComm((G_0,G_1),b_i;\sigma_{i,0})$ and $C_{i,1} \eqdef \GIComm((G_0,G_1),b_i \oplus 1;\sigma_{i,1})$.
    \item The verifier sends a random string $v \in \Bits^{\secp}$.
    \item The prover sends $b_i$ and opens $C_{i,0},C_{i,1}$ for all $i$ such that $v_i = 0$. The prover sends $c_i \eqdef b \oplus b_i$, $\tau_i \eqdef \sigma_{i,c_i} \circ \pi^{-1}$ for all $i$ such that $v_i = 1$.
    \item The verifier accepts if the received openings are valid when $v_i = 0$, and $C_{i,c_i} = \tau_i(H)$ when $v_i = 1$, where $H$ is the commitment graph.
\end{enumerate}

Classically, we obtain $b$ by rewinding to find two accepting transcripts with $v_i \neq v_i'$; then $b = c_i \oplus b_i$.

\begin{lemma}\label{lemma:gi-state-preserving}
    If $(G_0,G_1)$ are non-isomorphic then the above protocol is a statistically state-preserving proof of knowledge of the committed message for $\GIComm$.
\end{lemma}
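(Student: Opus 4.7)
The plan is to combine our guaranteed extraction result for the one-out-of-two graph isomorphism subroutine with the partial-witness variant of our state-preserving compiler. First I would observe that the protocol in the lemma statement is, up to syntactic relabeling, exactly the one-out-of-two graph isomorphism subroutine analyzed in~\cref{sec:examples}: the commitment graphs $C_{i,0},C_{i,1}$ play the role of $(H_{0,i},H_{1,i})$, the cheating prover is committing to the bit $b$ via $H = \pi(G_b)$, and the $v_i = 0$ / $v_i = 1$ cases match the two branches of the subroutine exactly. Hence~\cref{cor:gni-guaranteed-extractor} supplies a post-quantum guaranteed extractor $\mathsf{Ext}^{P^*}$ that, with probability $1-\negl(\secp)$ conditioned on an accepting initial execution, outputs the committed bit $b$.

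Next I would apply~\cref{cor:state-preserving-partial-witness} to promote this to a state-preserving extractor. The hypothesis it demands is that the extracted quantity be a (possibly inefficient) deterministic function of the instance alone, and this is exactly where the non-isomorphism assumption comes in: when $G_0 \not\simeq G_1$, the commitment $\GIComm((G_0,G_1),\cdot)$ is perfectly binding, so the instance $x = (G_0,G_1,H)$ pins down the unique bit $g(x) = b$ satisfying $H \simeq G_b$. Setting $f(x,b) = b$, I would invoke~\cref{cor:state-preserving-partial-witness} to obtain an $\EQPTC$ extractor that outputs $b$ together with a transcript and a residual prover state.

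The main obstacle is strengthening the conclusion from computational to \emph{statistical} state-preservation, since~\cref{lemma:state-preserving-high-probability} is stated only with computational indistinguishability. Inspecting its proof, the only place indistinguishability is invoked is in the hybrid argument showing that measuring the extracted value in the output register of the coherent $\mathsf{FindWitness}$ routine $U_{\mathsf{FW}}$ is undetectable; the computational $(\mathsf{Pred},f)$-witness-binding assumption there is used to absorb a runtime-truncation of $U_{\mathsf{FW}}$ via a super-polynomial-time reduction. In our setting, however, the perfect binding of $\GIComm$ means that on accepting executions the subregister containing $b$ is, up to the negligible error built into the guaranteed extractor, the classical deterministic value $g(x)$. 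So a direct application of gentle measurement (\cref{lemma:gentle-measurement}) replaces the truncation hybrid and shows that the coherent measure--uncompute sandwich on this register disturbs the joint state by only negligible trace distance. The same observation applies to the partial-collapsing step underlying~\cref{cor:gni-guaranteed-extractor}, since the relevant bits $c_i$ are likewise information-theoretically determined by $(G_0,G_1,H_{0,i},H_{1,i})$, yielding the claimed statistical state-preservation.
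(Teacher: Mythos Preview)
Your proposal is correct and follows essentially the same approach as the paper: invoke \cref{cor:gni-guaranteed-extractor} for guaranteed extraction, then apply \cref{cor:state-preserving-partial-witness} using the fact that when $G_0 \not\simeq G_1$ the bit $b$ is information-theoretically determined by the instance $(G_0,G_1,H)$. Your treatment is in fact more careful than the paper's brief proof on one point: the paper simply cites \cref{cor:state-preserving-partial-witness} without addressing why the state-preservation is \emph{statistical}, whereas you correctly trace through the proof of \cref{lemma:state-preserving-high-probability} and observe that the truncation/witness-binding hybrid is unnecessary here because the extracted bit is deterministic, so gentle measurement directly gives negligible trace-distance disturbance. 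Your closing remark about the partial-collapsing step is accurate but not actually needed for state-preservation (that step lives inside $U$ and is undone by $U^\dagger$ regardless); it matters only for the correctness of the guaranteed extractor, which \cref{cor:gni-guaranteed-extractor} already handles.
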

\begin{proof}
    We have already shown (\cref{cor:gni-guaranteed-extractor}) that this protocol has a guaranteed extractor, because when $G_0$ and $G_1$ are not isomorphic, this protocol is collapsing onto the $b_i$ part of $0$-challenge responses (as $b_i$ is fixed by the commitments $C_{i,0}, C_{i,1}$) and the protocol is $(2, g)$-probabilistically special sound (where $g$ checks (for the first challenge-partial response pair) the correctness of the $0$ challenge response bits $b_i$ for $v_i = 0$ and (for the second challenge-partial response pair) the correctness of all $b_i$ ($v_i = 0$) and $c_i$ ($v_i = 1$)). 
    
    Moreover, the language $L_{G_0, G_1} = \{ H: \exists (b, \pi) \text{ such that } \pi G_b \simeq H\}$ has partial unique witnesses: for any $H\in L_{G_0, G_1}$, the bit $b$ is uniquely determined (given that $G_0$ and $G_1$ are not isomorphic). Thus, the state-preserving reduction of \cref{lemma:state-preserving-high-probability} applies (see \cref{cor:state-preserving-partial-witness}), so this protocol has a state-preserving extractor.
\end{proof}

Finally, we note that \cref{lemma:gi-state-preserving} immediately implies that the GNI protocol is post-quantum (statistical) zero knowledge. We assume without loss of generality that the cheating verifier $V^*$ has a ``classical'' first message by replacing $V^*$ (with auxiliary state $\ket{\psi}$) with $(V^*, \dm)$ for the mixed state $\dm$ obtained by running $U_{V^*}$ on $\ket{\psi}$ to generate a first message, measuring it, and running $U_{V^*}^\dagger$. 

The simulator is then described as follows:

\begin{itemize}
    \item Given cheating verifier $V^*$ with classical first message $(\mathsf{com}, \mathsf{pok}_1)$, run the state-preserving $\mathsf{PoK}$ extractor on $V^*$ (which now acts as a $\mathsf{PoK}$ cheating prover).
    \item If the transcript generated by the state-preserving extractor is accepting, then output the bit $b$ in the ``partial witness'' slot of the extractor's output. Otherwise, send an aborting message.
\end{itemize}

The (statistical) zero knowledge property of this simulator follows immediately from the state-preserving property of the extractor. Moreover, the simulator inherits the $\EQPTC$ structure directly from the extractor (with additional fixed polynomial-time pre- and post-processing). This completes the proof of \cref{thm:szk}.

\section{The~\cite{STOC:FeiSha90} Protocol is Post-Quantum Zero Knowledge}\label{sec:feige-shamir}
We recall the Feige-Shamir 4-message zero knowledge argument system for $\mathsf{NP}$. This protocol uses three primitives as building blocks:

\begin{itemize}
    \item A non-interactive commitment scheme $\mathsf{Com}$.
    \item The 3-message WI argument of knowledge $\mathsf{AoK}$  constructed in \cref{sec:state-preserving-main-theorems}. We note that $\mathsf{AoK}$ is public-coin. 
    \item A 3-message delayed-witness WI argument of knowledge $\mathsf{dAoK}$. 
\end{itemize}

We will argue security using the particular instantiations of $\mathsf{AoK},\mathsf{dAoK}$ due to subtleties arising from the concurrent composition. Unlike $\mathsf{AoK}$, we do \emph{not} require that $\mathsf{dAoK}$ is state-preserving.

\noindent The protocol is executed as follows. 

\begin{itemize}
    \item The verifier sends the following strings as its first message:
    \begin{itemize}
        \item Two commitments $c_0, c_1$ generated as $c_i = \Com(0; r_i)$ for i.i.d. random strings $r_i$. For the post-quantum variant, following \cite{EC:Unruh12,EC:Unruh16}, we additionally include commitments $c'_i = \Com(r_i; \rho_i)$ to the two random strings $r_0, r_1$. 
        \item A first (prover) message of $\mathsf{AoK}$ corresponding to the statement ``$\exists i, r_i, \rho_i$ such that $c_i = \Com(0; r_i)$ and $c'_i = \Com(r_i; \rho_i)$.'' By default, the verifier uses $(b, r_b, \rho_b)$ as its witness for a randomly chosen bit $b$. 
    \end{itemize}
    \item The prover sends two strings as its first message:
    \begin{itemize}
        \item A second (verifier) message of $\mathsf{AoK}$ (which is a uniformly random string).
        \item A first (prover) message of $\mathsf{dAoK}$ corresponding to the statement ``$x\in L$ or $\exists i, r_i, \rho_i$ such that $c_i = \Com(0; r_i)$ and $c'_i = \Com(r_i; \rho_i)$.'' No witness is required.
    \end{itemize}
    \item The verifier sends two strings as its second message:
    \begin{itemize}
        \item A third (prover) message of $\mathsf{AoK}$, computed using $(b, r_b, \rho_b)$.
        \item A second (verifier) message of $\mathsf{dAoK}$ (which is a uniformly random string).
    \end{itemize}
    \item Finally, the prover sends the third message of $\mathsf{dAoK}$.  The prover uses a witness $w$ for $x\in L$ to generate this message. 
\end{itemize}

\subsection{Building Block: Delayed-Witness Proofs of Knowledge}

In order to instantiate the Feige-Shamir protocol, we need a post-quantum instantiation of $\mathsf{dAoK}$. In particular, we need:

\begin{lemma}\label{lemma:delayed-witness-wi}
   Assume that post-quantum non-interactive commitments exist. Then, there exists a delayed-witness $\Sigma$-protocol for $\mathsf{NP}$ that is witness indistinguishable against quantum verifiers and is a post-quantum proof of knowledge with negligible knowledge error.
\end{lemma}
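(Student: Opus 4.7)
The plan is to use Blum's classical Hamiltonicity $\Sigma$-protocol directly, instantiated with a post-quantum perfectly-binding non-interactive commitment (which we obtain without loss of generality from the assumed post-quantum non-interactive commitment scheme). Recall that on an $n$-vertex graph $G$, Blum's protocol has the prover sample a random permutation $\pi$ of $[n]$ and send commitments to the adjacency matrix of $\pi(G)$; on a binary challenge $r$, the prover either opens all commitments and reveals $\pi$ (if $r = 0$), or opens only the commitments corresponding to the edges of $\pi(C)$ for a Hamiltonian cycle $C$ in $G$ (if $r = 1$). The key observation is that the first message depends only on $G$ and $\pi$, never on the witness $C$, so Blum's protocol is already delayed-witness. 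To achieve negligible soundness error and, simultaneously, a super-polynomial challenge space, I would parallel-repeat this protocol $t = \omega(\log \secp)$ times.

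The proof-of-knowledge property then follows from \cref{cor:commit-and-open}. The single-shot Blum protocol fits into the commit-and-open framework of \cref{def:commit-and-open} and satisfies commit-and-open $2$-special soundness: from openings on the two distinct challenges $0$ and $1$, one learns $\pi$ together with a cycle in $\pi(G)$, and hence recovers a Hamiltonian cycle in $G$ by applying $\pi^{-1}$. Because the underlying commitment is perfectly binding, it is trivially collapse-binding --- any superposition over valid (message, opening) pairs has an information-theoretically determined message --- and therefore \cref{cor:commit-and-open} yields a guaranteed extraction procedure in $\EQPTM$ for the $t$-fold parallel repetition. This extractor outputs a valid witness with overwhelming probability on any cheating prover whose initial success probability is non-negligible, which is exactly the form of knowledge extraction needed for the Feige--Shamir soundness argument.

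Witness indistinguishability against quantum verifiers reduces to the post-quantum computational hiding of the commitment by a standard hybrid argument: moving one parallel slot and challenge-position at a time, replace commitments generated from one witness $w_1$ with commitments generated from a second witness $w_2$, using the fact that the unopened commitments are hiding. No quantum rewinding is needed for this argument, so parallel composition is unproblematic in the quantum setting. The main technical point requiring care is formalizing the consistency predicate $g$ from the commit-and-open analysis so that it captures entrywise consistency of the revealed portions of the adjacency matrix across transcripts with different challenges, and checking that the extracted partial transcripts satisfy it; this is entirely analogous to the examples worked out in \cref{sec:examples} and does not introduce any new obstacles specific to the delayed-witness setting.
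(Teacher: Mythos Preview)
There is a genuine gap in your proposal: Blum's Hamiltonicity protocol does \emph{not} fit the commit-and-open framework of \cref{def:commit-and-open}, so \cref{cor:commit-and-open} cannot be invoked. In \cref{def:commit-and-open}, the verifier's challenge $r$ encodes a subset $S \subset [N]$ and the prover's response consists solely of openings to $\{y_i\}_{i\in S}$; the subset is determined by $r$ alone. In Blum's protocol, the challenge-$1$ response specifies the positions of the cycle $\pi(C)$ to be opened, and these positions depend on the witness $C$, not on the challenge. Similarly, the challenge-$0$ response includes the permutation $\pi$, which is not committed data. The paper's own footnoted ``commit-and-open variant'' of Blum fixes both issues by additionally committing to $\pi$ and to the permuted cycle $\pi \circ C$ --- but the latter commitment depends on the witness, destroying the delayed-witness property. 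This tension is fundamental: a commit-and-open response is determined by the first message together with the challenge, so if the first message is witness-independent (delayed-witness) then the entire transcript is witness-independent, which cannot be sound. Your WI argument inherits the same confusion: you describe ``replacing commitments generated from one witness $w_1$ with commitments generated from a second witness $w_2$,'' but in Blum the commitments do not depend on the witness at all.

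The paper resolves this by using the \cite{C:LapSha90} ``reverse Hamiltonicity'' protocol, which commits to random cycle graphs (independent of both instance and witness) and is \emph{not} collapsing. Rather than invoking the commit-and-open machinery, the paper observes that this protocol is \emph{partially} collapsing on $0$-challenges (the opened cycle graph is information-theoretically determined by the perfectly-binding commitments) and gives a direct modification of Unruh's two-transcript rewinding: partially measure the response on the first random challenge (only the committed bits at $0$-positions, which is undetectable), rewind, fully measure on a second random challenge, and extract from the pair whenever some index has $r_i=0, r'_i=1$. This yields knowledge error $\varepsilon^3 - \negl(\secp)$, which suffices for the Feige--Shamir soundness reduction. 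WI is obtained via \cite{STOC:Watrous06} (single-repetition ZK implies WI, which is preserved under parallel repetition), not by a commitment-hiding hybrid.
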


\cref{lemma:delayed-witness-wi} does not immediately follow from extraction techniques such as~\cite[Lemma 7]{EC:Unruh12} or \cite{FOCS:CMSZ21} because the canonical delayed-witness $\Sigma$-protocol \cite{C:LapSha90} is not collapsing, and these works only give results for collapsing protocols. Nonetheless, we show that (similar to the one-out-of-two graph isomorphism subprotocol of \cite{FOCS:GolMicWig86}) making use of a variant $(2, g)$-PSS (\cref{def:k-g-pss}), a simple modification of Unruh's rewinding technique~\cite{EC:Unruh12} suffices to prove \cref{lemma:delayed-witness-wi}. 

\subsubsection{The \cite{C:LapSha90} Protocol}

We begin by recalling the \cite{C:LapSha90} $\Sigma$-protocol for graph Hamiltonicity.  The protocol uses a non-interactive commitment scheme $\Com$ as a building block, and is executed as follows. 

\begin{itemize}
    \item The prover, given as input the security parameter $1^\secp$ and an input length $1^n$,\footnote{Note that the prover does not even need to know the instance $x$ to compute this message; however, we consider an a priori fixed statement $x$ to make sense of the proof-of-knowledge property.} sends $\secp$ commitments $\mathsf{com}_i$ to adjacency matrices of i.i.d. random cycle graphs on $n$ vertices (i.e., graphs $H_i = \sigma_i C_n$ that are random permutations of a fixed cycle graph on $n$ vertices). 
    \item The verifier sends a uniformly random string $r \gets \{0,1\}^\secp$.
    \item For the third round, the prover is given a graph $G$ and a fixed $n$-cycle represented by a permutation $\pi$ mapping $C_n$ to $G$. The prover then sends the following messages.
    
   \begin{itemize}
    \item For each $i$ such that $r_i = 0$, the prover sends a full opening of the $i$th commitment $\mathsf{com}_i$. 
    
    \item For each $i$ such that $r_i = 1$, the prover sends $\sigma_i \pi^{-1}$ and opens the substring of $\mathsf{com}_i$ consisting of commitments to each non-edge of $\sigma_i \pi^{-1}(G)$. 
   \end{itemize}
    
    \item For each $i$ such that $r_i = 0$, the verifier checks that $\mathsf{com}_i$ was correctly opened to the adjacency matrix of a cycle graph. For each $i$ such that $r_i = 1$, the verifier checks that every matrix entry opened is a valid decommitment to $0$.
\end{itemize}
By the perfect binding of $\mathsf{Com}$, we know that this protocol satisfies $2$-special soundness. In fact, it is the parallel repetition of a protocol satisfying $2$-special soundness: for any index $i$, a commitment string $a_i$ along with a valid response $z_0$ to $r_i = 0$ and a valid response $z_1$ to $r_i = 1$ can be used to compute a Hamiltonian cycle in $G$. Indeed, it satisfies a variant special soundness (implicitly related to $(2, g')$-PSS) described here: 

\begin{claim}
 There exists an extractor $\SSExtract(a, r_1, z_{1,i}^{(1)}, r_2, z_{2,i})$ for the \cite{C:LapSha90} protocol such that $\SSExtract$ outputs a valid $\mathsf{NP}$ witness under the following conditions:
  \begin{itemize}
      \item $r_{1,i} = 0, r_{2,i} = 1$.
      \item $(a_i, r_{2,i}, z_{2,i})$ is an accepting transcript.
      \item There \emph{exists} a response $z_{1,i}$ xwith prefix $z_{1, i}^{(1)}$ such that $(a_i, r_{1,i}, z_{1,i})$ is an accepting transcript.
  \end{itemize}
  Here, $z^{(1)}$ denotes the part of a response $z$ consisting of the \emph{messages} opened (but not the commitment randomness).
\end{claim}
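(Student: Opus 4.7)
The plan is to construct $\SSExtract$ by directly reading off a Hamiltonian cycle of $G$ from the two partial transcripts, using the statistical binding of $\Com$ to resolve the asymmetry created by only having $z_{1,i}^{(1)}$ (messages without openings). At a high level, the $r_{1,i} = 0$ transcript fixes the cycle graph $H_i$ committed to in $a_i$, while the $r_{2,i} = 1$ transcript fixes a permutation $\tau_i = \sigma_i \pi^{-1}$ together with the information that all ``non-edges'' of $\tau_i(G)$ in $H_i$ are zero. Composing these two facts recovers a Hamiltonian cycle in $G$.

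More concretely, $\SSExtract(a, r_1, z_{1,i}^{(1)}, r_2, z_{2,i})$ proceeds as follows. From $r_{2,i} = 1$ and the response $z_{2,i}$, read off the explicitly-sent permutation $\tau_i$. From $z_{1,i}^{(1)}$, read off the $n \times n$ matrix of claimed message bits (these are the messages that would be opened in a $z_{1,i}$ of which $z_{1,i}^{(1)}$ is a prefix). Interpret this matrix as the adjacency matrix of a graph $H_i$, and output the cycle $\tau_i^{-1}(H_i)$ as a Hamiltonian cycle of $G$.

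For correctness I would argue as follows. Because $\Com$ is statistically binding, each commitment in $a_i$ uniquely determines its committed message. The hypothesis asserts the existence of a \emph{full} accepting response $z_{1,i}$ whose message prefix is $z_{1,i}^{(1)}$; such a $z_{1,i}$ would open every commitment in $a_i$ to the bits listed in $z_{1,i}^{(1)}$, and the verifier in the $r_{1,i}=0$ branch only accepts if the resulting adjacency matrix describes an $n$-cycle. Hence $H_i$, as reconstructed from $z_{1,i}^{(1)}$, equals the unique graph committed to by $a_i$ and is an $n$-cycle. Similarly, because $(a_i, r_{2,i}=1, z_{2,i})$ is accepting, every commitment at a non-edge position of $\tau_i(G)$ is validly opened to $0$ in $z_{2,i}$, and binding upgrades this to: every commitment at a non-edge of $\tau_i(G)$ commits to $0$. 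Combining, all edges of $H_i$ lie inside $\tau_i(G)$, so $\tau_i^{-1}(H_i) \subseteq G$ is a Hamiltonian cycle.

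The only mildly delicate point in the plan is to be careful that the extractor never needs to look at any opening randomness from the first transcript; it only needs the message bits $z_{1,i}^{(1)}$ and the permutation $\tau_i$ from the second transcript, and relies on the \emph{existence} of $z_{1,i}$ (guaranteed by hypothesis) together with statistical binding to justify that the reconstructed $H_i$ is the true committed cycle. This is exactly the structural feature that lets this variant play the same role as ordinary $2$-special soundness inside the $(2,g')$-PSS framework used to prove~\cref{lemma:delayed-witness-wi}, where $g'$ is tailored so that admissible sampling forces a first transcript with $r_{1,i}=0$ and a second with $r_{2,i}=1$ at some common index $i$.
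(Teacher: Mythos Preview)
Your proposal is correct and gives precisely the standard extraction argument for the \cite{C:LapSha90} protocol: read the cycle graph $H_i$ from the message bits $z_{1,i}^{(1)}$, read the permutation $\tau_i$ from $z_{2,i}$, and output $\tau_i^{-1}(H_i)$; binding of $\Com$ ensures the non-edge openings in $z_{2,i}$ agree with the bits of $H_i$, so the cycle lands inside $G$. The paper does not provide an explicit proof of this claim---it is stated as essentially immediate from the protocol description and the (perfect) binding of $\Com$---so your write-up is simply a fleshed-out version of what the paper leaves implicit. One minor point: the paper invokes \emph{perfect} binding here (consistent with the claim being stated as a deterministic guarantee), whereas you say ``statistically binding''; the argument is the same either way, but with statistical binding the conclusion would carry a ``with overwhelming probability over $\ck$'' caveat.
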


Moreover, we note that the protocol is partially collapsing on $0$-challenges: given a tuple $(x, a, r)$ and a state $\ket{\phi} = \sum_z \alpha_z \ket{z}$, any accepting response $z_i$ such that $r_i = 0$ can be \emph{partially} measured --- namely, the committed bits (but not the openings) can be measured --- without disturbing $\ket{\phi}$. This is sufficient to prove \cref{lemma:delayed-witness-wi}.

\subsubsection{Proof of \cref{lemma:delayed-witness-wi}}

The fact that this protocol is witness indistinguishable follows from the fact that it is a parallel repetition of a post-quantum ZK protocol \cite{STOC:Watrous06}. What remains is to establish the proof-of-knowledge property.

We consider the following variant of Unruh's approach to knowledge extraction~\cite{EC:Unruh12}:

\begin{enumerate}
    \item Given a cheating prover $P^*$, first generate a (classical) first message $a$ from $P^*$. Let $\ket{\psi}$ denote the internal state of $P^*$ at this point.
    \item Sample a uniformly random challenge $r$, compute the $P^*$ unitary $U_r \ket{\psi}$, which writes its response onto some register $\RegZ$. Apply the one-bit measurement $(\Pi_{V, r}, \Id - \Pi_{V, r})$ that checks whether $V(x, a, r, z) = 1$.
    \item If the measurement returns $1$, additionally measure every register $\RegZ^{(1)}_i$ (the opened messages, but not the commitment randomness) corresponding to $r_i = 0$.
    \item Apply $U_r^\dagger$ to the prover state.
    \item Sample an independent random challenge $r'$ and apply $U_{r'}$. Apply the one-bit measurement $(\Pi_{V, r'}, \Id - \Pi_{V, r'})$.
    \item If the measurement returns $1$, additionally measure the \emph{entire} response $\RegZ$. 
    \item If both measurements returned $1$, and there exists an index $i$ such that $r_i = 0$ and $r'_i = 1$, compute $\SSExtract(x, \mathsf{com}_i, 0, z_i^{(1)}, 1, z'_i)$ where $z_i^{(1)}$ is the first partially measured response in location $i$ and $z'_i$ is the second measured response in location $i$. Otherwise, abort.
\end{enumerate}

To show that this extraction procedure works, we first consider the variant in which no response measurements are applied (Step 3 and Step 6 are omitted). Then, by Unruh's rewinding lemma \cite[Lemma 7]{EC:Unruh12}, if $U_r\ket{\psi}$ produces an accepting response with probability at least $\epsilon$ (over the randomness of $r$), then the two binary measurements applied above will \emph{both} return $1$ with probability at least $\epsilon^3$. Then, by the fact that the protocol is partially collapsing on $0$-challenges, this continues to hold even if the measurement in Step 3 is applied. 

Finally, since the probability that i.i.d. uniform strings $r, r'$ do not have an index $i$ such that $r_i = 0$ and $r'_i = 1$ is $(3/4)^\secp = \negl(\secp)$, we conclude that with probability $\epsilon^3 - \negl(\secp)$, the above extractor produces partial accepting response $z_i^{(1)}$ and accepting response $z'_i$ for some $i$ such that $r_i = 0$ and $r'_i = 1$, and so $\SSExtract$ successfully outputs a witness. If $P^*$ is convincing with initial non-negligible probability $\gamma$, then with probability at least $\frac \gamma 2$, $\ket{\psi}$ is at least $\frac \gamma 2$-convincing, and so $\SSExtract$ outputs a valid witness with probability at least $\Omega(\gamma^3)$. This completes the proof of \cref{lemma:delayed-witness-wi}. 

\subsection{Proof of Security for the \cite{STOC:FeiSha90} protocol}

We now prove the security of the Feige-Shamir protocol using suitable building blocks $(\Com, \mathsf{AoK}, \mathsf{dAoK})$.

\begin{theorem}\label{thm:fs-body}
Suppose that:

\begin{itemize}
    \item $\Com$ is a post-quantum non-interactive commitment scheme,
    \item $\mathsf{AoK}$ is the $3$-message state-preserving WI proof of knowledge for $\mathsf{NP}$ (with $\EQPTC$ extraction) from \cref{sec:state-preserving-main-theorems}.
    \item $\mathsf{dAoK}$ is the argument system from \cref{lemma:delayed-witness-wi}. 
\end{itemize}
Then, the Feige-Shamir protocol is both sound and zero-knowledge against QPT adversaries. The zero-knowledge simulator is $\EQPTC$. 
\end{theorem}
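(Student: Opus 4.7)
I would define the post-quantum simulator $\Sim^{V^*}(x)$ as follows. First, run $V^*$ to obtain its first message, which packages the commitments $c_0, c_1, c'_0, c'_1$ together with the $\mathsf{AoK}$ first message $a_{\mathsf{AoK}}$. Next, sample a $\mathsf{dAoK}$ first prover message $a_{\mathsf{dAoK}}$; this is possible because $\mathsf{dAoK}$ is delayed-witness, so $a_{\mathsf{dAoK}}$ does not depend on $w$. Then invoke the state-preserving $\mathsf{AoK}$ extractor, treating $V^*$ as the $\mathsf{AoK}$ prover. Inside the extractor, each interaction with $V^*$ sends both the $\mathsf{AoK}$ second message and $a_{\mathsf{dAoK}}$, and $V^*$'s response packages the $\mathsf{AoK}$ third message together with the $\mathsf{dAoK}$ second message $r_{\mathsf{dAoK}}$, all of which are recorded as part of the extractor's transcript. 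The extractor outputs an accepting $\mathsf{AoK}$ transcript, a trapdoor witness $(b, r_b, \rho_b)$, and $V^*$'s residual state. Finally, $\Sim$ uses the extracted trapdoor as the witness for the $\mathsf{dAoK}$ third message and outputs the concatenated view.

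To prove zero-knowledge, I would use a two-hybrid argument. Hybrid $H_0$ is the real interaction, with the honest prover using an NP-witness $w$ for $x$. Hybrid $H_1$ is identical except that the $\mathsf{AoK}$ execution is performed by the state-preserving extractor (the trapdoor output is ignored and $w$ is still used to compute the $\mathsf{dAoK}$ third message); indistinguishability from $H_0$ is immediate from the state-preserving property applied to $V^*$ with its current auxiliary state. Hybrid $H_2$ is the simulator: it differs from $H_1$ only in using the extracted trapdoor witness in place of $w$ when computing the $\mathsf{dAoK}$ third message. Indistinguishability between $H_1$ and $H_2$ follows from the post-quantum witness-indistinguishability of $\mathsf{dAoK}$ (\cref{lemma:delayed-witness-wi}), applied to $V^*$ together with its residual state after $H_1$. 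Efficiency is immediate: the simulator makes one call to the $\EQPTC$ $\mathsf{AoK}$ extractor sandwiched by fixed-polynomial-time pre- and post-processing, so $\Sim$ is $\EQPTC$ by the closure properties of $\EQPTC$ established in \cref{sec:eqpt}.

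Soundness follows a standard Feige--Shamir-style argument. If a QPT prover $P^*$ convinces the honest verifier with non-negligible probability on some $x \notin L$, applying the (non-state-preserving) $\mathsf{dAoK}$ proof-of-knowledge extractor from \cref{lemma:delayed-witness-wi} yields a witness for the OR statement with non-negligible probability; since $x \notin L$, this witness must be a trapdoor $(b, r_b, \rho_b)$ for one of the verifier's commitments. A short sequence of hybrids that switches $c_{1-b}, c'_{1-b}$ from commitments to $0$ to commitments to $1$ (using hiding of $\Com$), and that switches the $\mathsf{AoK}$ witness used by the verifier to the remaining index (using WI of $\mathsf{AoK}$), then produces a game in which a trapdoor witness on index $1-b$ would break binding of $\Com$, contradicting the assumption that $P^*$ still convinces the verifier with non-negligible probability in the final hybrid.

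The main technical obstacle is the concurrent composition of $\mathsf{AoK}$ (from $V^*$ to the simulator) and $\mathsf{dAoK}$ (from the simulator to $V^*$): the two protocols share message rounds in both directions, so the state-preserving $\mathsf{AoK}$ extractor must operate even though $V^*$'s per-round outputs carry $\mathsf{dAoK}$ payload and $V^*$'s internal state becomes entangled with the $\mathsf{dAoK}$ first message. The key observation, which I plan to verify carefully, is that the state-preserving extractor treats $V^*$ as an arbitrary quantum $\mathsf{AoK}$ prover with arbitrary auxiliary state, so folding the $\mathsf{dAoK}$ messages into $V^*$'s ``auxiliary'' input and into its ``response'' register is consistent with the hypotheses of \cref{def:state-preserving-extraction}. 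Given this, the state-preservation guarantee applies to all of $V^*$'s joint state, which is what makes the $H_0 \approx H_1$ step go through and thereby unlocks the subsequent WI-based step.
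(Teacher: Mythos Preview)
Your overall approach matches the paper's, but the $H_1 \approx H_2$ step conceals a real issue. To invoke WI of $\mathsf{dAoK}$ you must build a QPT cheating $\mathsf{dAoK}$-verifier, and that verifier has to internally run the state-preserving $\mathsf{AoK}$ extractor in order to produce the $\mathsf{dAoK}$ challenge and $V^*$'s residual state. That extractor is only $\EQPTC$, so the reduction is not polynomial-time as written and the WI hypothesis does not apply. The paper patches this by first assuming a distinguisher with non-negligible advantage $\eps$, then \emph{truncating} the extractor to a strict $\poly(\secp,1/\eps)$-size circuit (losing $\eps/4$ accuracy), and only then constructing the WI adversary. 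You also omit a second ingredient the paper uses at this step: because $\mathsf{AoK}$ is the specific commit-and-prove construction of \cref{sec:state-preserving-main-theorems}, the tuple $(c_0,c_0',c_1,c_1',\mathsf{aok}_1)$ \emph{uniquely determines} the only trapdoor $\mathsf{td}$ the extractor can ever output; the paper hardwires this $\mathsf{td}$ as non-uniform advice into $V^*$'s state so that the two WI witnesses $(w,\mathsf{td})$ are fixed in advance rather than depending on the extractor's internal randomness.

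Your soundness outline is in the right spirit but diverges from and underspecifies the paper's argument. The paper does not run a hybrid sequence ending in a binding break; instead it (i) wraps $P^*$ into a $\mathsf{dAoK}$-prover by having the reduction play the honest verifier's $\mathsf{AoK}$ role with a \emph{fixed} $\mathsf{aok}_3$ that is reused across both of Unruh's rewinds, (ii) extracts $(i,r_i,\rho_i)$ with probability $\Omega(\eps^3)$, (iii) uses WI of $\mathsf{AoK}$ to show $\Pr[i \neq b] = \Omega(\eps^3)$ where $b$ is the verifier's random witness index, and (iv) breaks \emph{hiding} of $\Com$ directly, since when $c_{1-b}$ is switched to $\Com(1)$ no witness with $i = 1-b$ can exist. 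Your sketch does not address the concurrent-rewinding wrinkle in step (i) --- that the $\mathsf{dAoK}$ extractor must vary only the $\mathsf{dAoK}$ challenge while holding the simultaneously-transmitted $\mathsf{aok}_3$ constant --- which is exactly why the theorem is stated for these specific instantiations rather than for black-box $\mathsf{AoK}$ and $\mathsf{dAoK}$.
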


\noindent Combining \cref{thm:fs-body} with the results of \cref{sec:state-preserving} implies \cref{thm:feige-shamir}.

We remark that the theorem is non-generic with respect to $\mathsf{AoK},\mathsf{dAoK}$ due to complications in the security proof coming from the fact that $\mathsf{AoK}$ and $\mathsf{dAoK}$ are executed simultaneously. 

\begin{proof} We first prove soundness, followed by ZK.

\vspace{10pt} \noindent \textbf{Proof of Soundness.}
Suppose that $x\not\in L$ and $P^*$ is a QPT prover that convinces $V$ with non-negligible probability. Given such a $P^*$, we define a cheating prover $ P^*_{\mathsf{dAoK}}$ for the underlying $\mathsf{dAoK}$ that is given as additional auxiliary input strings $(c_0, c'_0, c_1, c'_1, b, r_b, \rho_b)$ such that $c_b = \Com(0; r_b)$ and $c'_b = \Com(r_b; \rho_b)$. $ P^*_{\mathsf{dAoK}}$ simply emulates $P^*$ while generating $\mathsf{AoK}$ messages using its auxiliary input. That is:

\begin{itemize}
    \item $ P^*_{\mathsf{dAoK}}$ generates a message $\mathsf{aok}_1$ using its auxiliary input and calls $P^*$ on $(c_0, c'_0, c_1, c'_1, \mathsf{aok}_1)$. This results in a $P^*$-message $(\mathsf{aok}_2, \mathsf{daok}_1)$. $ P^*_{\mathsf{dAoK}}$ returns $\mathsf{daok}_1$.
    \item Upon receiving a verifier challenge $r$, $ P^*_{\mathsf{dAoK}}$ computes an honestly generated message $\mathsf{aok}_3$ (deterministic\footnote{If randomness is required to generate this message, let it be fixed in advance in $P^*_{\mathsf{dAoK}}$'s internal state.} and independent of $r$) using its auxiliary input and calls $P^*$ on $(\mathsf{aok}_3, r)$. This results in a $P^*$-message $\mathsf{daok}_3$, which $ P^*_{\mathsf{dAoK}}$ outputs. 
\end{itemize}

If the auxiliary input $(c_0, c'_0, c_1, c'_1, b, r_b, \rho_b)$ is sampled from the correct distribution, $ P^*_{\mathsf{dAoK}}$ perfectly emulates the interaction of $P^*$ and the honest Feige-Shamir verifier, so $ P^*_{\mathsf{dAoK}}$ is convincing with non-negligible probability $\eps$ by assumption. Thus, the $\mathsf{dAoK}$ knowledge extractor from \cref{lemma:delayed-witness-wi} outputs a valid witness for the statement ``$\exists i, r_i, \rho_i$'' such that $c_i = \Com(0; r_i)$ and $c'_i = \Com(r_i; \rho_i)$'' with probability at least $\Omega(\eps^3)$. 

\begin{claim}
  The probability that the $\mathsf{dAoK}$ extractor succeeds and $i \neq b$ is also $\Omega(\eps^3)$.
\end{claim}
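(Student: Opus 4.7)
The plan is to argue that, conditioned on extraction succeeding, the extracted index $i$ is (up to negligible error) computationally independent of the bit $b$, so that by symmetry each of the events $i = b$ and $i \neq b$ occurs with probability $\tfrac12\Pr[A] = \Omega(\eps^3)$, where $A$ denotes successful extraction. Since $x \notin L$, any witness output by the $\mathsf{dAoK}$ extractor must be a trapdoor witness, i.e., a triple $(i, r_i^\ast, \rho_i^\ast)$ with $c_i = \Com(0; r_i^\ast)$ and $c_i' = \Com(r_i^\ast; \rho_i^\ast)$. The $\mathsf{dAoK}$ extractor from \cref{lemma:delayed-witness-wi} is strict polynomial time, so the combined procedure ``emulate $P^*$, then run the $\mathsf{dAoK}$ extractor, then output $(A, i)$'' can be used as an efficient distinguisher in the WI game for $\mathsf{AoK}$.

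First I would set up the symmetry. Let $p_{w}^{j} \coloneqq \Pr[A \wedge i = j \mid \text{$V$ uses witness $w$ in $\mathsf{AoK}$}]$ where $w \in \{(0, r_0, \rho_0), (1, r_1, \rho_1)\}$ and $j \in \{0,1\}$, with the remaining randomness $(r_0, r_1, \rho_0, \rho_1, \text{coins of $P^*$})$ sampled honestly. Because $(c_0, c_0', r_0, \rho_0)$ and $(c_1, c_1', r_1, \rho_1)$ are i.i.d., swapping their labels gives a perfect bijection on the probability space, and hence
\[
p_{(0, r_0, \rho_0)}^{0} = p_{(1, r_1, \rho_1)}^{1}
\quad\text{and}\quad
p_{(0, r_0, \rho_0)}^{1} = p_{(1, r_1, \rho_1)}^{0}.
\]

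Next I would invoke witness-indistinguishability of $\mathsf{AoK}$. The reduction samples $r_0, r_1, \rho_0, \rho_1$, forms the $c_i, c_i'$, sends them to $P^*$, forwards the $\mathsf{AoK}$ messages received from the WI challenger (who uses one of the two witnesses above), collects $P^*$'s responses including $\mathsf{daok}_1, \mathsf{daok}_3$, and finally runs the $\mathsf{dAoK}$ extractor on the resulting transcript. The predicate $[A \wedge i = j]$ is efficiently testable on the extractor's output, so the corresponding probabilities are negligibly close across the two witnesses:
\[
\bigl|\, p_{(0, r_0, \rho_0)}^{j} - p_{(1, r_1, \rho_1)}^{j} \,\bigr| = \negl(\secp) \qquad (j \in \{0,1\}).
\]
Combining with the symmetry equalities yields $p_{(0, r_0, \rho_0)}^{0} \approx p_{(0, r_0, \rho_0)}^{1}$ (up to $\negl(\secp)$), and likewise for the other witness.

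Finally I would put the pieces together. In the real game $b$ is uniform and $V$ uses witness $(b, r_b, \rho_b)$, so
\[
\Pr[A \wedge i = b] = \tfrac12\bigl(p_{(0, r_0, \rho_0)}^{0} + p_{(1, r_1, \rho_1)}^{1}\bigr),
\qquad
\Pr[A \wedge i \neq b] = \tfrac12\bigl(p_{(0, r_0, \rho_0)}^{1} + p_{(1, r_1, \rho_1)}^{0}\bigr).
\]
The symmetry and WI estimates above give $\Pr[A \wedge i = b] - \Pr[A \wedge i \neq b] = \negl(\secp)$, while their sum equals $\Pr[A] = \Omega(\eps^3)$, and therefore each is $\Omega(\eps^3)$, as claimed. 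The main thing to be careful about is that the $\mathsf{dAoK}$ extractor is genuinely efficient enough to be folded into a WI distinguisher for $\mathsf{AoK}$; this is exactly what \cref{lemma:delayed-witness-wi} delivers, since its extractor is strict polynomial time (Unruh-style rewinding applied to the partially collapsing \cite{C:LapSha90} protocol), so no new efficiency overhead is introduced.
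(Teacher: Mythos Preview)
Your overall approach—reduce to the WI property of $\mathsf{AoK}$ via an efficient distinguisher that runs the $\mathsf{dAoK}$ extractor and looks at the extracted index—is exactly the paper's. However, the symmetry step you insert is not correct, and (fortunately) is not needed.

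\textbf{Why the symmetry fails.} The equalities $p_{(0,r_0,\rho_0)}^{0}=p_{(1,r_1,\rho_1)}^{1}$ and $p_{(0,r_0,\rho_0)}^{1}=p_{(1,r_1,\rho_1)}^{0}$ do not hold for a general $P^*$. The cheating prover receives the commitments as an \emph{ordered} tuple $(c_0,c_0',c_1,c_1')$; swapping $(r_0,\rho_0)\leftrightarrow(r_1,\rho_1)$ changes that input to $(c_1,c_1',c_0,c_0')$, which is a different experiment from the point of view of $P^*$. Concretely, a $P^*$ that only ever produces a $\mathsf{dAoK}$ proof relative to slot $0$ would have $p_b^{1}=0$ for both $b$, so $p_{(1,r_1,\rho_1)}^{1}=0$ while $p_{(0,r_0,\rho_0)}^{0}$ can be $\Omega(\eps^3)$.

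\textbf{Why it doesn't matter.} Your WI step alone already gives the conclusion without symmetry:
\[
\Pr[A\wedge i=b]-\Pr[A\wedge i\neq b]=\tfrac12\bigl(p_0^{0}-p_1^{0}\bigr)+\tfrac12\bigl(p_1^{1}-p_0^{1}\bigr)=\negl(\secp),
\]
since each bracket is bounded by the WI advantage of the efficient distinguisher. This is precisely the paper's argument, phrased there as a single WI distinguisher $V^*_{\mathsf{dAoK}}$ that outputs $i$ on success and a random bit otherwise.

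\textbf{One imprecision in the reduction.} You write that the reduction ``collects $P^*$'s responses including $\mathsf{daok}_1,\mathsf{daok}_3$ and runs the $\mathsf{dAoK}$ extractor on the resulting transcript.'' The extractor from \cref{lemma:delayed-witness-wi} does not act on a single transcript: it rewinds, running $P^*$ on two independent $\mathsf{dAoK}$ challenges $r,r'$. For this to be implementable inside an $\mathsf{AoK}$-WI reduction, one must observe (as the paper does) that the honest verifier's $\mathsf{aok}_3$ is generated independently of the $\mathsf{dAoK}$ challenge, so the single $\mathsf{aok}_3$ received from the WI challenger can be \emph{reused} to run $P^*$ on both $r$ and $r'$. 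Without this observation the reduction is not well-defined.
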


\begin{proof}
  If this is not the case, then we obtain an algorithm breaking the WI property of $\mathsf{AoK}$. For a fixed statement $(c_0, c_0', c_1, c_1')$, the algorithm $V^*_{\mathsf{dAoK}}$, given an honestly generated message $\mathsf{aok}_1$, calls $(\mathsf{aok}_2, \mathsf{dAoK}_1) \gets P^*(c_0, c_0', c_1, c_1', \mathsf{aok}_1)$ and returns the message $\mathsf{aok}_2$. Given a \emph{fixed} response $\mathsf{aok}_3$, $V^*_{\mathsf{dAoK}}$ emulates the  $\mathsf{dAoK}$ extractor from \cref{lemma:delayed-witness-wi} by sampling i.i.d. strings $r, r'$ for $\mathsf{dAoK}$ and re-using the message $\mathsf{aok}_3$. Then, if the extractor returns a valid witness $(i, r_i, \rho_i)$, $V^*_{\mathsf{dAoK}}$ returns the bit $i$. If not, $V^*_{\mathsf{dAoK}}$ guesses at random.
  
  Since this faithfully emulates the execution of the $\mathsf{dAoK}$ extractor on $P^*_{\mathsf{dAoK}}$ and we assumed that it succeeds with probability $\Omega(\eps^3)$, we conclude that the WI property of $\mathsf{AoK}$ with respect to $V^*_{\mathsf{dAoK}}$ implies the claim. 
\end{proof}

However, this implies that the $\mathsf{dAoK}$ extractor breaks the computational hiding property of $\Com$. This is because if $c_{1-b}$ were instead sampled as $\Com(1; r_{1-b})$ and $c_{1-b}'$ sampled as $\Com(r_{1-b}; \rho_{1-b})$, it is information theoretically impossible for the $\mathsf{dAoK}$ extractor to output a witness such that $i \neq b$. This concludes the proof of soundness.

\vspace{10pt} \noindent \textbf{Proof of ZK.} We assume without loss of generality that the cheating verifier $V^*$ has a ``classical'' first message $(c_0, c_0', c_1, c_1', \mathsf{aok}_1)$ by replacing $V^*$ (with auxiliary state $\ket{\psi}$) with $(V^*, \dm)$ for the mixed state $\dm$ obtained by running $U_{V^*}$ on $\ket{\psi}$ to generate a first message, measuring it, and running $U_{V^*}^\dagger$. 

By the construction of $\mathsf{AoK}$ (see \cref{cor:commit-and-prove,thm:state-preserving-wi-body}) we know that the tuple $(c_0, c_0', c_1, c_1', \allowbreak \mathsf{aok}_1)$ \emph{uniquely} determines a witness $\mathsf{td} = (b, r_b, \rho_b)$ that the $\mathsf{AoK}$ extractor can ever output (if such a witness exists; otherwise, we define $\mathsf{td}$ to be $\bot$). We non-uniformly include $\mathsf{td}$ in the description of the $V^*$ state $\dm$ without loss of generality (this does not affect the simulator, only the analysis). 

Our black-box zero-knowledge simulator is defined in \cref{fig:feige-shamir-simulator}:

\begin{figure}
\begin{mdframed}

\begin{itemize}
   \item Construct a first message $\mathsf{daok}_1$ using the honest $\mathsf{dAoK}$ prover algorithm.
   \item For fixed classical strings $(c_0, c_0', c_1, c_1', \mathsf{aok}_1, \mathsf{daok}_1)$, define an $\mathsf{AoK}$ cheating prover $P^*_{\mathsf{AoK}}$ with the following description:
   \begin{itemize}
       \item Send $\mathsf{aok}_1$
       \item On challenge $s$, call $V^*$ on $(s, \mathsf{daok}_1)$. Upon receiving $(\mathsf{aok}_3, r)$, return $\mathsf{aok}_3$.
   \end{itemize}
   \item Run the state-preserving extractor $\Extract^{P^*_{\mathsf{AoK}}, \mathsf{daok}_1, \dm}$, outputting the (unique possible) witness $\mathsf{td}$ along with a $P^*_{\mathsf{AoK}}$-view (which includes a $V^*$-view in it).
   \item If the output witness is $\bot$, send an aborting final message. Otherwise, compute $\mathsf{daok}_3$ using $\mathsf{td}$. 
\end{itemize}

\end{mdframed}
\caption{The Feige-Shamir protocol simulator}
\label{fig:feige-shamir-simulator}
\end{figure}

We claim that this achieves negligible simulation accuracy. We prove this via a hybrid argument:

\begin{itemize}
    \item $\mathsf{Hyb}_0$: This is the simulated view of $V^*$.
    \item $\mathsf{Hyb}_1$: This is the same as $\mathsf{Hyb}_0$,  \emph{except} that $\mathsf{daok}_3$ is computed using an $\mathsf{NP}$-witness $w$ for $x$.
    \item $\mathsf{Hyb}_2$: This is the real view of $V^*$.
\end{itemize}

The indistinguishability of $\mathsf{Hyb}_2$ and $\mathsf{Hyb}_1$ follows immediately from the state-preserving property of $\mathsf{AoK}$, as the view of $P^*_{\mathsf{AoK}}$ contains an entire correctly emulated view of $V^*$.

The indistinguishability of $\mathsf{Hyb}_1$ and $\mathsf{Hyb}_0$ follows from the witness indistinguishability of $\mathsf{dAoK}$. To prove this, we assume for the sake of contradiction that $\mathsf{Hyb}_1$ and $\mathsf{Hyb}_0$ are distinguishable by a polynomial-time distinguisher $D$ with non-negligible advantage $\eps$. Then, we construct the following two additional hybrids: 

\begin{itemize}
    \item $\mathsf{Hyb}'_0$: This is simulated view of $V^*$, \emph{except} that $\Extract$ is replaced by a $\poly(\secp, 1/\epsilon)$-size oracle algorithm that achieves accuracy $\frac{\eps}4$. 
    \item $\mathsf{Hyb}'_1$: This is the same as $\mathsf{Hyb}_0'$ \emph{except} that $\mathsf{daok}_3$ is computed using an $\mathsf{NP}$-witness $w$ for $x$.

\end{itemize}

By a hybrid argument, we conclude that $D$ also distinguishes $\mathsf{Hyb}'_0$ and $\mathsf{Hyb}'_1$ with advantage $\eps/2$. We claim that this breaks the witness indistinguishability of $\mathsf{dAoK}$. Define a $\mathsf{dAoK}$ verifier $V^*_{\mathsf{dAoK}}$ operating as follows

\begin{itemize}
    \item $V^*_{\mathsf{dAoK}}$ has the state $\dm$ as auxiliary input (including $c_0, c_0', c_1, c_1', \mathsf{aok}_1, \mathsf{td})$. $V^*_{\mathsf{dAoK}}$ wants to distinguish between proofs using witness $w$ and proofs using witness $\mathsf{td}$.
    \item $V^*_{\mathsf{dAoK}}$ receives $\mathsf{daok}_1$ from the prover. It then calls (the $\epsilon/4$-truncated) $\Extract^{P^*_{\mathsf{AoK}}, \mathsf{daok}_1, \dm}$, which returns a $P^*_{\mathsf{AoK}}$-view. $V^*_{\mathsf{dAoK}}$ sends the challenge $r$ from the $P^*_{\mathsf{AoK}}$-view to the prover.
    \item Finally, upon receiving $\mathsf{daok}_3$ from the prover, $V^*_{\mathsf{dAoK}}$ outputs the emulated $V^*$ view.
\end{itemize}

$V^*_{\mathsf{dAoK}}$ has been constructed to be (aux-input) QPT, and (along with the distinguisher $ D$) violates the WI property of $\mathsf{dAoK}$, giving the claimed contradiction.

We conclude that the Feige-Shamir protocol is ZK, as desired. We note that the zero-knowledge simulator inherits the $\EQPTC$ structure of the $\mathsf{AoK}$ state-preserving extractor (with some additional fixed poly-time pre- and post-processing). \qedhere 

\end{proof}

\section{The~\cite{JC:GolKah96} Protocol is Post-Quantum Zero Knowledge}
\label{sec:gk}

In this section we show that the Goldreich--Kahan constant-round proof system for $\NP$ is post-quantum zero knowledge by giving an $\EQPTC$ simulator. In \cref{sec:distinguishability} we give a technical lemma about the distinguishability of certain purifications that will be of central importance in the proof. In \cref{sec:gk-simulator} we describe our quantum simulator.

\subsection{Indistinguishability of Projections onto Indistinguishable States}
\label{sec:distinguishability}

Consider the states $\ket{\tau_b} \eqdef \sum_{x} \ket{x}_{\RegX} \ket{D_b(x)}_{\RegY}$ where $D_0,D_1$ are computationally indistinguishable (w.r.t. quantum adversaries) efficiently sampleable \emph{classical} distributions with random coins $x$ (in a slight abuse of notation, $D_b$ denotes both the distribution and the sample). If we are only given access to $\RegY$, then distinguishing $\ket{\tau_0}$ from $\ket{\tau_1}$ is clearly hard since $\Tr_{\RegX}(\ketbra{\tau_b})$ is equivalent to a random classical sample from $D_b$.

In this subsection, we show that this indistinguishability generically extends to the setting where \emph{we additionally give the distinguisher access to the projection $\ketbra{D_b}$ on $\RegX \otimes \RegY$}. This is formalized by giving the distinguisher an additional one-qubit register $\RegO$ and black-box access (see \cref{subsec:blackbox}) to the unitary $U_{b}$ and its inverse acting on $\RegX \otimes \RegY \otimes \RegO$ defined as
\[ U_b \eqdef \ketbra{D_b}_{\RegX,\RegY} \otimes \mathbf{X}_{\RegB} + (\Id_{\RegX,\RegY} - \ketbra{D_b}_{\RegX,\RegY}) \otimes \Id_{\RegB},\]
where $\mathbf{X}_{\RegB}$ denotes the bit-flip operator on $\RegB$. In particular, it is no longer the case that access to $\ket{\tau_b}$ is equivalent to a random classical sample from $D_b$, since the distinguisher's access to $U_b$ means the $\RegX$ is no longer independent of its view. Nevertheless, we prove the following.

\begin{lemma}
    \label{lemma:proj-indist}
    If there exists a polynomial-time quantum oracle distinguisher $S^{U_b}$ without direct access to $\RegX$ achieving
    \begin{equation*}
        \left|\Pr[S^{U_0}(\ket{D_0}_{\RegX,\RegY}) = 1] - \Pr[S^{U_1}(\ket{D_1}_{\RegX,\RegY}) =1]\right| \geq 1/\poly(\secp),
    \end{equation*}
    then there exists a polynomial-time quantum algorithm $S$ that distinguishes classical samples from the distributions $D_0$ and $D_1$
    
\end{lemma}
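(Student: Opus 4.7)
The plan is to construct a reduction $S'$ that takes a classical sample from $D_b$ and determines $b$ with non-negligible advantage. The main technical challenge is that $S$ requires as input the coherent purification $\ket{D_b}_{\RegX,\RegY}$ and oracle access to the rank-one projection $U_b$, neither of which can be prepared directly from a single classical sample.

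The first key step is to express the experiment in terms of the \emph{reversible sampler} $V_b$. Since $D_b$ is efficiently sampleable, there exists an efficient unitary $V_b$ with $V_b \ket{x,0}_{\RegX,\RegY} = \ket{x,D_b(x)}_{\RegX,\RegY}$. Then $\ket{D_b} = V_b(\ket{+}_\RegX \ket{0}_\RegY)$ and $U_b = V_b \widetilde{U} V_b^\dagger$, where $\widetilde{U}$ flips the $\RegO$ register controlled on $\RegX \otimes \RegY$ being $\ket{+}\ket{0}$, and is \emph{independent} of $b$. Thus the experiment $S^{U_b}(\ket{D_b})$ is equivalent to running a derived distinguisher $\widetilde{S}$ that has black-box access to $V_b$ and $V_b^\dagger$, starting from a fixed $b$-independent state and using inline access to the fixed $\widetilde{U}$.

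Given this rewriting, it suffices to show that quantum oracle access to $V_b$ and $V_b^\dagger$ can be used to distinguish classical samples from $D_b$. I will do this via a standard hybrid argument over the $q = \poly(\secp)$ queries of $\widetilde{S}$: define hybrids $H_0,\ldots,H_q$ in which the first $j$ queries use $V_0,V_0^\dagger$ and the remaining use $V_1,V_1^\dagger$. The assumed non-negligible distinguishing advantage for $\widetilde{S}$ implies that some consecutive hybrids $H_j, H_{j+1}$ are distinguishable with advantage $\Omega(1/q)$; these differ only in a single oracle query. The reduction $S'$ guesses the critical index $j$ (incurring only a $1/q$ loss), simulates $\widetilde{S}$ using its hardwired knowledge of $V_0$ and $V_1$, and uses its classical sample $y$ as a substitute for the output of the $j$-th query on an appropriately measured query input. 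Since $S$ (and therefore $\widetilde{S}$) never touches $\RegX$ directly, measuring $\RegX$ of the query input in the computational basis before the critical query is undetectable, after which the oracle $V_b$ behaves classically (mapping $\ket{x,0} \mapsto \ket{x,D_b(x)}$) and can be emulated using a single classical sample from $D_b$.

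The main obstacle will be rigorously justifying that the computational-basis measurement of the query input's $\RegX$-register is undetectable to $\widetilde{S}$ in the hybrid experiment. The intuition is clear---$S$ has no access to $\RegX$---but because prior oracle queries can entangle $\RegX$ with $\widetilde{S}$'s workspace (and with $\RegY$ through $V_b$), the measurement must be shown to commute, up to the appropriate isomorphism, with the remaining operations performed by $\widetilde{S}$. I expect this to require a careful decomposition of $\widetilde{S}$'s queries into a ``computational-basis-preserving'' part acting on $\RegX$ and an arbitrary part acting on $\RegY$ and the workspace, together with the observation that the initial state $\ket{+}_\RegX \ket{0}_\RegY$ is left invariant on $\RegX$ outside of $V_b$ applications. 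Once this commutativity is established, the single-query hybrid reduces to an efficient distinguishing experiment on classical samples from $D_b$, completing the reduction.
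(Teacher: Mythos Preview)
Your rewriting via the sampler unitary $V_b$ is correct, and the observation that the experiment becomes $\widetilde{S}$ with black-box access to $V_b,V_b^\dagger$ and inline access to the fixed $\widetilde{U}$ is fine. The gap is in the hybrid step: the plan to measure $\RegX$ in the computational basis before the critical query does not work, and the obstacle you flag is fatal rather than merely technical.

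The inline operation $\widetilde{U}$ flips $\RegO$ controlled on $\RegX\otimes\RegY$ being in the state $\ket{+}\ket{0}$; this is as far from computational-basis-preserving on $\RegX$ as possible, since it explicitly tests for full coherence over $\RegX$. Concretely, if the pre-$\widetilde{U}$ state is $\ket{+}_{\RegX}\ket{0}_{\RegY}\ket{0}_{\RegO}$ then $\widetilde{U}$ flips $\RegO$ with probability $1$, whereas after measuring $\RegX$ to a basis state $\ket{x}$ the overlap $\langle +,0\,|\,x,0\rangle=|X|^{-1/2}$ is tiny and $\widetilde{U}$ is essentially the identity. So the decomposition you hope for (a computational-basis-preserving action on $\RegX$) does not exist, and the measurement is detectable through any subsequent oracle call. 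A second, independent issue: a classical sample from $D_b$ is a value $y$ with the underlying randomness hidden; even if $\RegX$ could be collapsed to some $x$ correlated with the rest of the state, you cannot use a fresh sample to simulate $V_b$ on that specific $x$.

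The paper's proof avoids both problems. Rather than collapsing $\RegX$, it applies a uniformly random \emph{permutation} $\pi$ on $\RegX$. This is genuinely undetectable to $S$ because $S$ never touches $\RegX$ and the unitary on $\RegX$ conjugates $(\ket{D_b},U_b)$ to $(\ket{D_{b,\pi}},U_{b,\pi})$, where $\ket{D_{b,\pi}}=\sum_x\ket{x}\ket{D_b(\pi(x))}$. Both the permuted state and the permuted projection can be built from quantum oracle access to the classical function $D_b\circ\pi$. Then Zhandry's random-permutation/random-function indistinguishability and his reduction from oracle access to classical samples finish the argument. The random-permutation trick is exactly the missing idea: it gives an \emph{undetectable} transformation on $\RegX$ that nonetheless makes the state and projection efficiently constructible from the right kind of classical resource.
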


Our proof will make use of two results by Zhandry~\cite{C:Zhandry12,Zhandry15}, which we restate here for convenience. In the following, quantum oracle access to a function $f: X \rightarrow Y$ refers to black-box access to the unitary that maps $\ket{x}\ket{y} \rightarrow \ket{x} \ket{f(x) \oplus y}$ for all $x,y$.

\begin{theorem}[Theorem 1.1 of \cite{C:Zhandry12}]
Let $D_0$ and $D_1$ be efficiently sampleable distributions on a set $Y$, and let $X$ be some other set. Let $O_0$ and $O_1$ be the distributions of functions from $X$ to $Y$ where for each $x \in X$, $O_b(x)$ is chosen independently according to $D_b$. Then if $A$ is an efficient quantum algorithm that can distinguish between quantum access to the oracle $O_0$ from quantum access to the oracle $O_1$, we can construct an efficient quantum algorithm $B$ that distinguishes classical samples from $D_0$ and $D_1$.
\end{theorem}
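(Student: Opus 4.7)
The plan is to reduce quantum-oracle distinguishing to single-sample classical distinguishing via Zhandry's small-range distribution technique combined with a standard hybrid argument. Fix an efficient $q$-query quantum distinguisher $A$ with advantage $\varepsilon = 1/\poly(\secp)$ between $O_0$ and $O_1$.

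First, I would invoke the small-range distribution lemma of \cite{C:Zhandry12}: for any parameter $r$, define the ``small-range'' oracle $\tilde O_b^{(r)}$ by drawing independent samples $y_1,\dots,y_r \gets D_b$, drawing a uniformly random $g : X \to [r]$, and setting $\tilde O_b^{(r)}(x) \eqdef y_{g(x)}$. Zhandry's lemma guarantees that $\tilde O_b^{(r)}$ is $O(q^3/r)$-indistinguishable from $O_b$ to any $q$-query quantum algorithm. Choosing $r = \Theta(q^3/\varepsilon) = \poly(\secp)$, the distinguishing advantage of $A$ between $\tilde O_0^{(r)}$ and $\tilde O_1^{(r)}$ remains at least $\varepsilon/2$.

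Next, I would run a standard hybrid over $r+1$ intermediate small-range distributions in which the first $i$ of the $r$ samples are drawn from $D_1$ and the remaining $r-i$ from $D_0$, so that the extremes coincide with $\tilde O_0^{(r)}$ and $\tilde O_1^{(r)}$. By averaging, some consecutive pair of hybrids is $\varepsilon/(2r)$-distinguishable. The reduction $B$, on input a classical sample $y^*$ drawn from $D_b$, proceeds as follows: sample $i^* \gets [r]$; draw $y_j \gets D_1$ for $j < i^*$ and $y_j \gets D_0$ for $j > i^*$; plant $y_{i^*} \eqdef y^*$; simulate the oracle $x \mapsto y_{g(x)}$ to $A$ by using a $2q$-wise independent function in place of the truly random $g$ (which is perfectly indistinguishable from uniform to any $q$-query quantum algorithm by Zhandry's quantum $k$-wise independence theorem); and output $A$'s decision. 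Since $B$ draws $r = \poly(\secp)$ samples from known distributions and the $2q$-wise independent function admits a succinct $\poly(q,\log|X|)$ description (e.g.\ via a random degree-$(2q-1)$ polynomial over an appropriate field), $B$ runs in strict polynomial time and distinguishes single samples from $D_0$ and $D_1$ with advantage $\varepsilon/(2r) = 1/\poly(\secp)$.

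The main obstacle is the small-range approximation lemma itself, whose proof is the essential non-trivial ingredient: it analyzes the ``balls and bins'' structure of the oracle's image and bounds the trace distance between $O_b$ and $\tilde O_b^{(r)}$ by $O(q^3/r)$ against $q$ quantum queries. Granting that lemma together with Zhandry's $k$-wise independence result, the hybrid reduction described above is routine and incurs only a polynomial loss $r = \poly(\secp)$ in the distinguishing advantage, which is acceptable since $\varepsilon$ is inverse-polynomial to begin with.
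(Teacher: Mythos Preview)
The paper does not prove this theorem at all: it is quoted verbatim as Theorem~1.1 of \cite{C:Zhandry12} and used as a black box in the proof of \cref{lemma:proj-indist}. There is therefore no ``paper's own proof'' to compare against.

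That said, your sketch is a faithful outline of Zhandry's original argument. The small-range distribution lemma (replacing a random function $X\to Y$ sampled pointwise from $D_b$ by one that factors through a random $g:X\to[r]$ followed by $r$ i.i.d.\ $D_b$-samples, with $O(q^3/r)$ quantum indistinguishability) is indeed the key technical ingredient, and once it is granted the hybrid over the $r$ planted samples together with $2q$-wise independent simulation of $g$ is exactly the reduction Zhandry gives. Your parameter choices and the resulting $\varepsilon/(2r)$ advantage are correct. The only thing worth flagging is that you are right to call the small-range lemma the ``main obstacle'': its proof is the entire content of the theorem, and your proposal defers it to \cite{C:Zhandry12} just as the present paper does.
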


\begin{theorem}[\cite{Zhandry15}]
An efficient quantum algorithm cannot distinguish between quantum access to an oracle $f$ implementing a random function $X \rightarrow X$ and an oracle $\pi$ implementing a random permutation $X \rightarrow X$.
\end{theorem}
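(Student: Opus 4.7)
The plan is to prove this via Zhandry's \emph{small-range distribution} technique, reducing indistinguishability between random functions and random permutations to indistinguishability between random small-range functions and random small-range injections, which in turn follows from an elementary statistical argument.

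First, I would set up small-range distributions. Let $q$ be the number of queries made by the adversary $A$ (so $q = \poly(\secp)$). For a parameter $r$ to be chosen later, define $\Dist^{\mathsf{fun}}_r$ to be the distribution over functions $X \to X$ obtained by: (1) sampling a uniformly random partition of $X$ into $r$ (labelled) parts $P_1,\ldots,P_r$, and (2) sampling a uniformly random function $g \colon [r] \to X$, then outputting the function $f$ defined by $f(x) = g(i)$ where $x \in P_i$. Similarly define $\Dist^{\mathsf{inj}}_r$ with $g$ replaced by a uniformly random injection $[r] \hookrightarrow X$.

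Next, I would invoke Zhandry's small-range indistinguishability theorem (the main technical result of \cite{Zhandry15}), which says that any $q$-query quantum algorithm's distinguishing advantage between a truly random function $X \to X$ and a function drawn from $\Dist^{\mathsf{fun}}_r$ is at most $O(q^3/r)$; an analogous bound $O(q^3/r)$ holds between a truly random permutation of $X$ and $\Dist^{\mathsf{inj}}_r$ (this second fact follows by essentially the same argument, replacing the use of a random function on $[r]$ with a random injection, or by combining with a standard injection-versus-permutation hybrid when $r \leq |X|$). Setting $r := q^4$ makes both of these advantages negligible in $\secp$, so it suffices to produce a QPT distinguisher between classical samples from $D_0$ and $D_1$ — wait, rather, to bound the advantage between $\Dist^{\mathsf{fun}}_r$ and $\Dist^{\mathsf{inj}}_r$.

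Finally, I would argue that $\Dist^{\mathsf{fun}}_r$ and $\Dist^{\mathsf{inj}}_r$ are statistically $\negl(\secp)$-close, since the only difference between them is whether $g \colon [r] \to X$ is a uniformly random function or a uniformly random injection, and by the birthday bound this has statistical distance at most $\binom{r}{2}/|X| = O(r^2/|X|)$. The theorem is only nontrivial when $|X|$ is superpolynomial in $\secp$ (otherwise the adversary can make $|X|$ many queries and fully determine the oracle, in which case a random function is trivially distinguishable from a random permutation), so with $r = q^4 = \poly(\secp)$ this distance is negligible. Combining the three steps with a hybrid argument yields a negligible overall advantage. The main technical obstacle is the invocation of Zhandry's small-range theorem itself, but since the statement we are proving is attributed to \cite{Zhandry15}, this is the intended tool; everything else is a routine hybrid and birthday calculation.
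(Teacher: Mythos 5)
First, a point of reference: the paper does not prove this statement at all --- it is imported verbatim from \cite{Zhandry15} as a black-box citation, so there is no in-paper proof to compare against. Your reconstruction follows the right general architecture (and it is, in outline, how Zhandry actually proves it), but as written it has one genuine gap, and the gap sits exactly where all the difficulty of the theorem lives. The steps ``random function $\approx_{O(q^3/r)} \Dist^{\mathsf{fun}}_r$'' (the i.i.d.\ small-range theorem of \cite{C:Zhandry12}) and ``$\Dist^{\mathsf{fun}}_r \approx_{O(r^2/|X|)} \Dist^{\mathsf{inj}}_r$'' (birthday bound applied to the inner map $g$) are solid. The problem is the claim that ``an analogous bound $O(q^3/r)$ holds between a truly random permutation of $X$ and $\Dist^{\mathsf{inj}}_r$,'' justified only by ``essentially the same argument'' or ``a standard injection-versus-permutation hybrid.'' The small-range theorem you are invoking is proved for oracles whose outputs at distinct points are i.i.d.\ samples from a fixed distribution; a random injection $[r] \hookrightarrow X$ does not have i.i.d.\ outputs, and a random permutation of $X$ certainly does not, so the theorem does not apply as a black box. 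Nor is there a ``standard hybrid'' between a function whose image has size $r \ll |X|$ and a permutation of $X$: as oracle distributions these are statistically very far apart, and their $q$-query indistinguishability is precisely the nontrivial content you are trying to establish.

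The gap can be closed, but it requires the more general form of Zhandry's technique rather than the i.i.d.\ corollary. The underlying theorem of \cite{C:Zhandry12} says that the output distribution of a $q$-query algorithm depends only (and linearly) on the $2q$-wise marginals of the oracle distribution; hence if those marginals are polynomials of degree $O(q)$ in a parameter $\lambda$, the advantage in distinguishing $\mathcal{D}_\lambda$ from $\mathcal{D}_0$ is at most $\poly(q)\cdot\lambda$ by the Markov brothers' inequality. One then verifies that the $2q$-wise marginals of $\Dist^{\mathsf{inj}}_r$ are polynomials in $\lambda = 1/r$ (the partition of the $2q$ query points induced by the outer random map $h \colon X \to [r]$ occurs with probability of the form $r(r-1)\cdots(r-k+1)/r^{2q}$, a polynomial in $1/r$), and that at $\lambda = 0$ the outer map is injective on any fixed $2q$ points, so the marginals collapse to those of a uniformly random injection $X \to X$, i.e.\ a random permutation. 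With that lemma in place, your hybrid chain, the choice $r = q^4$, and the observation that the statement is vacuous unless $|X|$ is superpolynomial do yield the theorem. So the plan is right, but the permutation-side step must be derived from the polynomial-method statement; it is not an instance of the i.i.d.\ small-range theorem, and it is the step that carries the entire weight of the result.
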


\begin{proof}
    By \cite[Theorem 1.1]{C:Zhandry12}, it suffices for us to show that if there exists a distinguisher $S^{U_b}$ that distinguishes $\ket{D_0}_{\RegX,\RegY}$ from $\ket{D_1}_{\RegX,\RegY}$ can without directly accessing the $\RegX$ register, then there is an algorithm to distinguish between quantum oracle access to $D_0 \circ f$ and $D_1 \circ f$ (where $D_b \circ f$ is the composed function $D_b(f(\cdot))$) where $f: X \rightarrow X$ is a random function.
    
    By~\cite{Zhandry15}, we observe that it suffices to show that $S^{U_b}$ implies an algorithm to distinguish between quantum oracle access to $D_0 \circ \pi$ and $D_1 \circ \pi$ for a random \emph{permutation} $\pi: X \rightarrow X$.
    
    Given quantum oracle access to $D_b \circ \pi$, we can implement a unitary $V_{b,\pi}$ that maps $\ket{0}_{\RegX,\RegY}$ to the state $\ket{D_{b,\pi}} \coloneqq \sum_x \ket{x} \ket{D_b(\pi(x))}$ as follows: apply a Hadamard to $\RegX$, then apply the $D_b \circ \pi$ oracle to $\RegX \otimes \RegY$.
    
    We can hence use $S$ to distinguish $b = 0$ from $b = 1$ as follows. We prepare the state $\ket{D_{b,\pi}}_{\RegX,\RegY}$ using $V_{b,\pi}$. Using $V_{b,\pi}$ we can also implement the operation
    \[
        U_{b,\pi} \eqdef \ketbra{D_{b,\pi}} \otimes \mathbf{X}_{\RegB} + (\Id - \ketbra{D_{b,\pi}}) \otimes \Id_{\RegB}
    \]
    as follows: apply $V_{b,\pi}^{\dagger}$ to $\RegX \otimes \RegY$, apply $\ketbra{0}_{\RegX,\RegY} \otimes \mathbf{X}_{\RegB} + (\Id-\ketbra{0}_{\RegX,\RegY}) \otimes \Id_{\RegB}$, then apply $U_{b,\pi}$. We can therefore run $S^{U_{b,\pi}} \ket{D_{b,\pi}}$.
    
    Since $S^{U_b}$ does not act on $\RegX$ except via its oracle, and $\ket{D_b}$ is related to $\ket{D_{b,\pi}}$ by a unitary acting on $\RegX$ only, it holds that
    \[
        \Tr_{\RegX}(S^{U_{b,\pi}}(\ketbra{D_{b,\pi}})) = \Tr_{\RegX}(S^{U_{b}}(\ketbra{D_{b}})),
    \]
    which completes the proof.
\end{proof}

\subsection{Quantum Simulator}
\label{sec:gk-simulator}

We begin by describing a variable-runtime $\EQPTM$ estimation procedure that will be a useful subroutine in our quantum zero-knowledge simulator for~\cite{JC:GolKah96}. Following~\cref{thm:vrsvt}, let $\VarEstimate$ and $\Transform$ be the first and second stages of the variable-runtime singular vector transform (vrSVT). For binary projective measurements $\MeasA,\MeasB,\MeasC$ on $\RegA$, we define a ``estimate-disturb-transform'' procedure $\mathsf{EDT}[\MeasA,\MeasB,\MeasC]$. Intuitively, this procedure first uses $\VarEstimate$ to compute an upper bound on the running time of $\Transform[\MeasA \rightarrow \MeasB]$, but then disturbs the state with the measurement $\sC$ before running $\Transform[\MeasA \rightarrow \MeasB]$. However, to ensure that $\VarEstimate$ does not run for unbounded time, the input is first ``conditioned'' by applying $\MeasB$ followed by $\MeasA$, and only proceeding if both measurements return $1$.

Formally, the procedure takes an input state on $\RegA$ and does the following:

\medskip \begin{minipage}{0.9\textwidth}
\noindent $\mathsf{EDT}[\MeasA,\MeasB,\MeasC]$:
\begin{enumerate}[nolistsep]
    \item Apply $\MeasB$ to $\RegA$, obtaining outcome $b_1$.
    \item Apply $\MeasA$ to $\RegA$, obtaining outcome $b_2$.
    \item If $b_1 = 0$ or $b_2 = 0$, stop and output $(0,\bot)$.
    \item Otherwise, run $\VarEstimate_{\delta}[\MeasA \rightleftarrows \MeasB]$ on $\RegA$, obtaining classical output $y$.
    \item Apply $\MeasC$ to $\RegA$, obtaining outcome $c$.
    \item Run $\Transform_y[\MeasA \rightarrow \MeasB]$ on $\RegA$.
    \item Output $\RegA$ and $(1,c)$.
\end{enumerate}
\end{minipage}\medskip

\noindent Let $\widehat{\mathsf{EDT}}[\MeasA,\MeasB,\MeasC]$ denote a coherent implementation of this procedure.

\begin{claim}
    \label{claim:vrsvt-mreqpt}
    For any efficient measurements $\MeasA,\MeasB,\MeasC$, $\mathsf{EDT}[\MeasA,\MeasB,\MeasC]$ is $\EQPTM$.
\end{claim}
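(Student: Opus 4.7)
Steps 1, 2, 3 and 5 of $\mathsf{EDT}[\MeasA,\MeasB,\MeasC]$ each take a fixed polynomial amount of time (since $\MeasA,\MeasB,\MeasC$ are efficient), so the only parts that require analysis are the variable-runtime calls to $\VarEstimate_{\delta}[\MeasA \rightleftarrows \MeasB]$ in Step 4 and $\Transform_y[\MeasA \rightarrow \MeasB]$ in Step 6, which are only executed if $b_1 = b_2 = 1$. The plan is to fix an arbitrary input state $\ket{\psi}$ on $\RegA$, decompose it in the Jordan basis for $(\ProjA,\ProjB)$, and show that the probability of reaching each Jordan subspace $\Subspace_j$ precisely cancels the inverse-singular-value running time incurred there. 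We will choose $\delta = 2^{-\secp}$ so that the worst-case ``failure branches'' contribute only polynomially in expectation.

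Write $\ket{\psi} = \sum_j (\gamma_j^{(1)} \JorKetB{j}{1} + \gamma_j^{(0)} \JorKetB{j}{0})$. Conditioned on $b_1 = b_2 = 1$ (which happens with probability $p_{12} = \sum_j |\gamma_j^{(1)}|^2 p_j$, where $p_j = s_j^2$), Steps 1–2 produce the state $(1/\sqrt{p_{12}})\sum_j \gamma_j^{(1)} \sqrt{p_j}\,\JorKetA{j}{1}$, which lies in $\image(\ProjA)$ and is a mixture over the eigenvectors $\JorKetA{j}{1}$ with probability $|\gamma_j^{(1)}|^2 p_j / p_{12}$. By \cref{lemma:varest}, $\VarEstimate_{\delta}[\MeasA \rightleftarrows \MeasB]$ runs in time $O(\log(1/\delta)/\sqrt{p_j})$ with probability $1-\delta$ on $\JorKetA{j}{1}$, and in time $O(\log(1/\delta)/\delta)$ with probability $1$. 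By linearity, its unconditional expected runtime (i.e., accounting for the probability $p_{12}$ of reaching Step 4) is at most
\[
p_{12} \cdot \sum_j \frac{|\gamma_j^{(1)}|^2 p_j}{p_{12}}\cdot O\!\left(\tfrac{\log(1/\delta)}{\sqrt{p_j}} + \delta \cdot \tfrac{\log(1/\delta)}{\delta}\right)
= O(\log(1/\delta)) \cdot \Big(\sum_j |\gamma_j^{(1)}|^2 \sqrt{p_j} + p_{12}\Big) = O(\log(1/\delta)),
\]
where the final inequality uses $\sum_j |\gamma_j^{(1)}|^2 \leq 1$. With $\delta = 2^{-\secp}$ this is polynomial in $\secp$.

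For Step 6, the crucial observation (and the whole reason $\mathsf{EDT}$ is designed this way) is \cref{thm:vrsvt} property 4: once $(y,\ket{\psi}) \gets \VarEstimate(\JorKetA{j}{1})$, the running time of $\Transform_y[\MeasA \rightarrow \MeasB]$ on \emph{any} state $\ket{\psi'}$ is $O(\log(1/\delta)/\sqrt{p_j})$ with probability $1-\delta$ and at most $1/\delta$ always. This means the intervening $\MeasC$ measurement in Step 5 — which may disturb the state arbitrarily — cannot affect the running time of Step 6. Invariance of the Jordan subspaces under $\VarEstimate$ (\cref{thm:vrsvt} property 1) ensures that the classical output $y$ is (with probability $1-\delta$) a multiplicative estimate of $\sqrt{p_j}$ indexed by the same $j$ that was ``selected'' by the Jordan measurement implicit in Steps 1–2. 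Hence the identical calculation as above bounds the unconditional expected runtime of Step 6 by $O(\log(1/\delta)) = \poly(\secp)$.

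Summing the contributions of all six steps gives a total expected runtime of $\poly(\secp)$ on every input state $\ket{\psi}$, and the analysis is unchanged for mixed inputs by linearity. The procedure is described in the QTM model: the classical outcomes $b_1,b_2,c$ can be measured (or, for a coherent implementation, written into dedicated ancilla registers) and the variable-runtime $\VarEstimate$ and $\Transform$ subroutines are themselves $\EQPTM$ by inspection of the implementations in \cref{sec:vrsvt}. Thus $\mathsf{EDT}[\MeasA,\MeasB,\MeasC]$ is $\EQPTM$, completing the proof. The main obstacle in the argument is the one resolved by property 4 of \cref{thm:vrsvt}: without the guarantee that the running time of $\Transform$ is determined by the Jordan eigenvalue at which $\VarEstimate$ was invoked (rather than by the state actually fed to $\Transform$), the disturbance caused by $\MeasC$ in Step 5 could inflate the expected runtime arbitrarily, just as in the analogous classical issue discussed in \cref{subsubsec:gk-issue}.
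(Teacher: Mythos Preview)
The proposal is correct and takes essentially the same approach as the paper: decompose via the $(\ProjA,\ProjB)$-Jordan subspaces, observe that the probability of passing Steps 1--2 within $\Subspace_j$ is at most $p_j$, and use that this cancels the $O(\log(1/\delta)/\sqrt{p_j})$ runtime of the variable-length subroutines. Your treatment is in fact more explicit than the paper's about why the $\MeasC$ measurement does not affect the runtime (you cite property~4 of \cref{thm:vrsvt}, whereas the paper simply asserts ``$\MeasC$ does not affect the running time of $\Transform_y$'').
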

\begin{proof}
Since $\mathsf{EDT}[\MeasA,\MeasB,\MeasC]$ commutes with $\MeasJor[\MeasA,\MeasB]$, it suffices to analyze its running time for states contained within a single Jordan subspace. Let $\ket{\psi_j} \eqdef \alpha \JorKetB{j}{1} + \beta \JorKetB{j}{0}$. Then
    \begin{equation*}
        \Pr[b_1 = b_2 = 1] = |\alpha|^2 \Pr[\MeasA(\JorKetB{j}{1}) \to 1] \leq p_j.
    \end{equation*}
    Note that $\MeasC$ does not affect the running time of $\Transform_y$. Hence the expected running time of this procedure on $\ket{\psi_j}$ is
    \begin{equation*}
        O((p_j \cdot \log(1/\delta)/\sqrt{p_j} + 1) \cdot (t_{\MeasA} + t_{\MeasB})) = O(\log (1/\delta) \cdot (t_{\MeasA} + t_{\MeasB})).
    \end{equation*}
    It follows that this procedure is $\EQPTM$.
\end{proof}

We define the states and measurements used in the simulator.
\begin{itemize}[noitemsep]
\item For $\Ch \in R$, let $\ket{\Sim_\Ch} \eqdef \frac{1}{\sqrt{2^\secp}} \sum_{\mu} \ket{\mu} \ket{\SHVZK.\Sim(\Ch;\mu)}$.
\item Let $\Meas{\Sim} \eqdef \BMeas{\BProj{\Sim}}$, where $\BProj{\Sim} \eqdef \sum_{\Ch} \ketbra{\Ch} \otimes \ketbra{\Sim_\Ch} \otimes \Id$.
\item Let $\Meas{\RegR} \eqdef (\BProj{r})_{r \in R}$, where $\BProj{r} \eqdef U_{V^*}^{\dagger} \ketbra{r}_{\RegR} U_{V^*}$.
\item Let $\Meas{\com} \eqdef \BMeas{\BProj{\com}}$, where \[\BProj{\com} \eqdef \sum_{\substack{r,\omega \\ \Commit(\ck,r,\omega) = \com}} \ketbra{r,\omega}~.\]
\end{itemize}

\begin{minipage}{0.9\textwidth}
\noindent $\Sim^{V^*}$:
\begin{enumerate}[nolistsep]
    \item Run $V^*(\ck)$ for $\ck \gets \Gen(1^\secp)$ to obtain a commitment $\com$.
    \item Generate the state $\ket{0}_{\RegR'} \ket{\Sim_0}_{\RegM,\RegA,\RegZ}$.
    \item \label[step]{step:gk-svt-forward} Apply $\widehat{\mathsf{EDT}}[\Meas{\com},\Meas{\Sim},\Meas{\RegR}]$, obtaining outcome $(b,\Ch)$ (in superposition). Measure $b$.
    \item \label[step]{step:gk-replace-state} If $b = 1$, measure $r$ and replace the state on $\RegR',\RegM,\RegA,\RegZ$ with $\ket{\Ch}_{\RegR'}\ket{\Sim_\Ch}_{\RegM,\RegA,\RegZ}$.
    \item \label[step]{step:gk-svt-reverse} Apply $\widehat{\mathsf{EDT}}[\Meas{\com},\Meas{\Sim},\Meas{\RegR}]^{\dagger}$.
    \item \label[step]{step:gk-final-meas} Measure register $\RegA$, obtaining outcome $\Co$. Apply $U_{V^*}$ and measure $\RegR,\RegW$ to obtain $(\Ch',\omega)$; if $\Commit(\Ch',\omega) \neq \com$, stop and output the view of $V^*$. Otherwise, measure $\RegZ$, obtaining outcome $\Resp$. Send $z$ to $V^*$ and output the view of $V^*$.
\end{enumerate}
\end{minipage}

\begin{lemma}
    If $\Com$ is a collapse-binding commitment then $\Sim^{V^*}(\DMatrix)$ is computationally indistinguishable from $\out_{V^*} \langle P,V^* \rangle$. $\Sim^{V^*}$ is an $\EQPTC$ algorithm.
\end{lemma}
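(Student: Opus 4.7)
The plan is to prove computational indistinguishability via two hybrid simulators $H_1, H_2$ that interpolate between $\Sim^{V^*}$ and the real interaction, following the classical Goldreich--Kahan hybrid structure. In $H_1$, every occurrence of $\ket{\Sim_0}$ used in \cref{step:gk-svt-forward} is replaced with the honest prover state $\ket{P}_{\RegM,\RegA,\RegZ} = \sum_\mu \ket{\mu}\ket{P_{x,w}(\mu)}\ket{0}$; correspondingly, the forward $\widehat{\mathsf{EDT}}$ uses $\Meas{P} \eqdef \BMeas{\Id \otimes \ketbra{P}}$ in place of $\Meas{\Sim}$, and \cref{step:gk-replace-state} still overwrites with $\ket{\Sim_\Ch}$. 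In $H_2$, I additionally replace $\ket{\Sim_\Ch}$ with $\ket{P_\Ch}_{\RegM,\RegA,\RegZ} = \sum_\mu \ket{\mu}\ket{P_{x,w}(\mu)}\ket{P_{x,w}(\Ch;\mu)}$, and the reverse $\widehat{\mathsf{EDT}}^\dagger$ now uses the measurement $\Meas{P_\Ch}$. My goal is then to show $\Sim^{V^*} \approx_c H_1 \approx_c H_2 \approx_c \out_{V^*}\langle P, V^*\rangle$.

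For $\Sim^{V^*} \approx_c H_1$, the only difference is the measurement used by the forward/reverse $\widehat{\mathsf{EDT}}$, namely $\Meas{\Sim}$ vs.\ $\Meas{P}$. Since $\widehat{\mathsf{EDT}}$ is only $\EQPTM$-coherent and the distinguisher is supposed to be polynomial-size, the first step is to invoke the truncation lemma (\cref{lemma:intro-truncation}/\cref{lemma:truncation}) to replace $\widehat{\mathsf{EDT}}$ (and its inverse) with a strict $\poly(\secp,1/\varepsilon)$-size unitary approximation, at the cost of $\varepsilon$ error. After truncation, any distinguisher can be repackaged as a polynomial-time oracle algorithm $S^{U_b}$ making black-box use of the reflection about $\ket{\Sim_0}$ (for $b=0$) or $\ket{P}$ (for $b=1$), with no direct access to the $\RegM$-register containing the random coins. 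This is exactly the setting of \cref{lemma:proj-indist}, so it reduces to distinguishing classical samples from $\SHVZK.\Sim(0)$ versus $P_{x,w}$, which is ruled out by SHVZK (\cref{def:shvzk}). The argument for $H_1 \approx_c H_2$ is analogous but uses SHVZK on the measured challenge $\Ch$; the distinguisher is given $\Ch$ as a classical input (obtained in \cref{step:gk-replace-state}), and the reduction hard-wires a single challenge at the cost of a union bound over $\Ch \in R$ (or, more carefully, the sample is produced jointly).

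For $H_2 \approx_c \out_{V^*}\langle P,V^*\rangle$, I will argue that at the end of $H_2$ the state on $(\RegR',\RegM,\RegA,\RegZ)$ is (approximately) $\ket{\Ch}_{\RegR'} \ket{P_\Ch}$ entangled with the post-interaction verifier state, so that measuring $\RegA,\RegZ$ in \cref{step:gk-final-meas} produces exactly the honest prover's transcript on challenge $\Ch$. Computational binding of $\Com$ (which, by \cref{lemma:collapse-binding-unique-message}, follows from collapse-binding) guarantees that the $(\Ch',\omega)$ extracted from $V^*$ in \cref{step:gk-final-meas} satisfies $\Ch' = \Ch$ with overwhelming probability whenever the commitment opens, matching the real distribution up to negligible error. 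The main technical obstacle across all three hybrids is making the projector indistinguishability application go through cleanly in the presence of the reverse $\widehat{\mathsf{EDT}}^\dagger$: I need to make sure the ``oracle'' exposed to the SHVZK distinguisher truly touches the $\RegM$ register only through the projection $\ketbra{\Sim_\Ch}$ (or $\ketbra{P}$, etc.), which is exactly how $\widehat{\mathsf{EDT}}$ was set up via $\Meas{\Sim}, \Meas{P}, \Meas{P_\Ch}$.

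Finally, for the $\EQPTC$ claim, I will observe that $\Sim^{V^*}$ matches the template of \cref{fig:cr-eqpt-simple}: \cref{step:gk-svt-forward,step:gk-replace-state,step:gk-svt-reverse} together form one coherent application of a unitary $U$ (the coherent implementation of the $\EQPTM$ procedure $\mathsf{EDT}[\Meas{\com},\Meas{\Sim},\Meas{\RegR}]$, which is $\EQPTM$ by \cref{claim:vrsvt-mreqpt}), an intermediate polynomial-size circuit $C_2$ (measuring $b$ and, conditioned on $b=1$, measuring $\Ch$ and overwriting $\RegR',\RegM,\RegA,\RegZ$ with $\ket{\Ch}\ket{\Sim_\Ch}$), and then $U^\dagger$; all remaining operations (running $V^*$, measuring classical outcomes, sending $z$) are fixed polynomial-size quantum circuits on the outer register. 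By \cref{def:cr-eqpt}, the whole procedure is $\EQPTC$.
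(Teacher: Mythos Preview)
Your hybrids $H_1,H_2$ and the arguments for $\Sim^{V^*}\approx_c H_1\approx_c H_2$ via truncation plus \cref{lemma:proj-indist} match the paper's approach, and your $\EQPTC$ argument is fine. The gap is in the final step $H_2\approx_c \out_{V^*}\langle P,V^*\rangle$.

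In $H_2$ the forward $\widehat{\mathsf{EDT}}$ uses $\Meas{P}$ while the reverse uses $\Meas{P_\Ch}$; these are different projections, so the coherent implementation and its inverse do not cancel, and your assertion that the post-$H_2$ state is ``approximately $\ket{\Ch}\ket{P_\Ch}$ entangled with the post-interaction verifier state'' is not justified as written. The paper handles this with an additional hybrid $H_3$: one observes that $\ket{P_\Ch}=T_\Ch\ket{P}$ for an efficient local isometry $T_\Ch\colon\RegM\to\RegM\otimes\RegZ$, notes that since the state after the forward $\widehat{\mathsf{EDT}}$ is (up to $\sqrt{\delta}$) in $\image(\ketbra{P})$ the ``overwrite with $\ket{P_\Ch}$'' is equivalent to applying $T_\Ch$, and then commutes $T_\Ch$ through the reverse $\widehat{\mathsf{EDT}}^\dagger$, which conjugates $\Meas{P_\Ch}$ back to $\Meas{P}$. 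In $H_3$ both forward and reverse use $\Meas{P}$, and the intermediate step reduces to ``if $b=1$, measure $\Ch$''.

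The second and more serious omission is the use of collapse-binding. Even in $H_3$, the intermediate measurement of $\Ch$ prevents $\widehat{\mathsf{EDT}}$ and $\widehat{\mathsf{EDT}}^\dagger$ from cancelling to just $\Meas{\com}$. The key observation you are missing is that this measurement is exactly a collapse-binding experiment: the state at that point contains (in superposition) a valid opening $(\Ch,\omega)$ of $\com$, and measuring the message component $\Ch$ must be computationally undetectable by the definition of collapse-binding. This is a \emph{direct} application of the collapsing property, separate from the unique-message-binding consequence (\cref{lemma:collapse-binding-unique-message}) that you correctly invoke for $\Ch'=\Ch$. Without this step you have no argument that the simulated verifier state matches the real one.
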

\begin{proof}
    By \cref{claim:vrsvt-mreqpt}, $P[\Meas{\com},\Meas{\Sim},\Meas{\RegR}]$ is $\EQPTM$, and so $\Sim^{V^*}$ is $\EQPTC$.
    
    We consider three hybrid simulators $H_1,H_2,H_3$, as follows. All three are provided with some witness $w$ such that $(x,w) \in \Relation$. We first define $H_1$.
    
    \medskip \begin{minipage}{0.9\textwidth}
        $H_1^{V^*}(x,w)$:
        \begin{enumerate}[nolistsep]
            \item Run $V^*(\ck)$ for $\ck \gets \Gen(1^\secp)$ to obtain a commitment $\com$.
            \item Generate the state $\ket{P} \eqdef \sum_{\mu} \ket{\mu}_{\RegM} \ket{P_{\Sigma}(x,w;\mu)}_{\RegA}$.
            \item \label[step]{step:gk-hybrid-1-svt-forward} Let $\Meas{P} \eqdef \BMeas{\ketbra{P}}$. Apply $\widehat{\mathsf{EDT}}[\Meas{\com},\Meas{P},\Meas{\RegR}]$, obtaining outcome $(b,\Ch)$ (in superposition). Measure $b$.
            \item[4-6.] As in $\Sim$.
        \end{enumerate}
    \end{minipage} \medskip
    
    $H_1$ is indistinguishable from $\Sim$ by \cref{lemma:proj-indist}: $H_1$ is obtained from $\Sim$ by replacing $\ket{\Sim_0}$ and $\Meas{\Sim}$ with $\ket{P}$ and $\Meas{P}$ and interacts only with the $\RegA$ register, and the distributions on $a$ induced by $(a,z) \gets \SHVZK.\Sim(0;\mu)$ and $a \gets P_{\Sigma}(x,w;\mu')$ are computationally indistinguishable.
    
    \medskip \begin{minipage}{0.9\textwidth}
        $H_2^{V^*}(x,w)$:
        \begin{enumerate}[nolistsep]
            \item[1-3.] As in $H_1$.
            \setcounter{enumi}{3}
            \item \label[step]{step:gk-hybrid-2-replace} If $b = 1$, measure $r$ and replace the state on $\RegM,\RegA,\RegZ$ with
            \[ \ket{P_\Ch} = \sum_{\mu} \ket{\mu}_{\RegM} \ket{P_{\Sigma}(x,w;\mu)}_{\RegA} \ket{P_{\Sigma}(x,w,\Ch;\mu)}_{\RegZ}. \]
            \item Let $\Meas{P,\Ch} \eqdef \BMeas{\ketbra{P_\Ch}}$. Apply $\widehat{\mathsf{EDT}}[\Meas{\com},\Meas{P,\Ch},\Meas{\RegR}]^{\dagger}$.
            \item As in $\Sim$.
        \end{enumerate}
    \end{minipage} \medskip
    
    By the SHVZK guarantee, the distributions on $(a,z)$ given by $a \gets P_{\Sigma}(x,w;\mu), z \gets P_{\Sigma}(x,w,\Ch;\mu)$ and $(a,z) \gets \SHVZK.\Sim(\Ch;\mu')$ are computationally indistinguishable.

    Hence by \cref{lemma:proj-indist}, $H_1$ and $H_2$ are computationally indistinguishable.
    
    By the correctness guarantee of $Q$, if $b = 1$ then the state at the beginning of \cref{step:gk-hybrid-2-replace} has $\Tr(\ketbra{P} \DMatrix) \geq 1 - \delta$. Note that $\ket{P}$ and $\ket{P_r}$ are related by an efficient local isometry $T_r \colon \RegM \to \RegM \otimes \RegZ$. Hence \cref{step:gk-hybrid-2-replace} is $\sqrt{\delta}$-close in trace distance to an application of this isometry. Switching to this state, we can commute the isometry through $\widehat{\mathsf{EDT}}[\Meas{\com},\Meas{P,r},\Meas{\RegR}]^{\dagger}$, which conjugates it to $\widehat{\mathsf{EDT}}[\Meas{\com},\Meas{P},\Meas{\RegR}]^{\dagger}$. This leads to the third hybrid, below.
    
    \medskip \begin{minipage}{0.9\textwidth}
        $H_3^{V^*}(x,w)$:
        \begin{enumerate}[nolistsep]
            \item[1-3.] As in $H_2$.
            \setcounter{enumi}{3}
            \item \label[step]{step:gk-hybrid-3-collapse} If $b = 1$, measure $\Ch$.
            \item Apply $\widehat{\mathsf{EDT}}[\Meas{\com},\Meas{P},\Meas{\RegR}]^{\dagger}$.
            \item Apply $U_{V^*}$ and measure $\RegR,\RegW$ to obtain $(r',\omega)$. If $\Commit(r',\omega) \neq \com$, stop and output the view of $V^*$. Otherwise, apply $T_{r'}$ to $\RegM$ and measure $\RegZ$, obtaining outcome $z$. Send $z$ to $V^*$ and output the view of $V^*$.
        \end{enumerate}
    \end{minipage} \medskip
    
    $H_3$ is statistically close to $H_2$ provided that $\Pr[r = r'] = 1 - \negl(\secp)$. Moreover, the collapsing property of the commitment implies that \cref{step:gk-hybrid-3-collapse} is computationally undetectable. If this step is removed then the effect of \cref{step:gk-svt-forward,step:gk-svt-reverse} is simply to apply $\Meas{\com}$; the output is then precisely the view of $V^*$ in a real execution.
    
    Finally, we have that by $\Ch = \Ch'$ with all but negligible probability by the unique message-binding of the commitment scheme~(\cref{lemma:collapse-binding-unique-message}). \qedhere

\end{proof}

\bibliographystyle{alpha}

\section*{Acknowledgments}

We thank Zvika Brakerski, Ran Canetti, Yael Kalai, Vinod Vaikuntanathan, and Mark Zhandry for helpful discussions. NS is supported by DARPA under Agreement No. HR00112020023.

{\small\bibliography{references,custom,abbrev3,crypto}}

\appendix


\end{document}